\newtheorem{corollary}{Corollary}[section]
 \newcommand{{\HC}}{{\textsc{Push \& Pull}}}
\newtheorem{theorem}{Theorem}[section]
 \newtheorem{lemma}[theorem]{Lemma}
\def\whitebox{{\hbox{\hskip 1pt
    \vrule height 6pt depth 1.5pt
    \lower 1.5pt\vbox to 7.5pt{\hrule width
        3.2pt\vfill\hrule width 3.2pt}%
    \vrule height 6pt depth 1.5pt
    \hskip 1pt } }}
\def\qed{\ifhmode\allowbreak\else\nobreak\fi\hfill\quad\nobreak
   \whitebox\medbreak}
\newcommand{\ls}[1]
    {\dimen0=\fontdimen6\the\font
     \lineskip=#1\dimen0
     \advance\lineskip.5\fontdimen5\the\font
     \advance\lineskip-\dimen0
     \lineskiplimit=.9\lineskip
     \baselineskip=\lineskip
     \advance\baselineskip\dimen0
     \normallineskip\lineskip
     \normallineskiplimit\lineskiplimit
     \normalbaselineskip\baselineskip
     \ignorespaces}
\newcommand{\stackeq}[1]{
\stackrel{#1}{=}
}
\let\oldmarginpar\marginpar\renewcommand\marginpar[1]
\def\alg {\tt SARA}
\def\dlm {\tt DLM}
\def\gup {\tt VRCSC}
\newcommand{\BibTeX}{{\rm B\kern-.05em{\sc i\kern-.025em b}\kern-.08em
    T\kern-.1667em\lower.7ex\hbox{E}\kern-.125emX}}
\title{TECHNICAL REPORT\\ Sensor Activation and Radius Adaptation (SARA) in Heterogeneous Sensor Networks}
\author{NOVELLA BARTOLINI, TIZIANA CALAMONERI\\``Sapienza'' University of Rome, Italy\\
TOM LA PORTA\\Pennsylvania State University, USA \and
CHIARA PETRIOLI, SIMONE SILVESTRI\\``Sapienza'' University of Rome, Italy}
\begin{abstract}
In this paper we address the problem of prolonging the lifetime of wireless sensor networks (WSNs) deployed to monitor an area of interest.
In this scenario, a helpful approach is to reduce coverage redundancy and therefore the energy expenditure due to coverage.

We introduce the first algorithm which reduces coverage redundancy by means of Sensor Activation and sensing Radius Adaptation ($\alg$)
 in a general applicative scenario with
two classes of devices: sensors that can adapt their sensing range (adjustable sensors) and sensors that cannot (fixed sensors).
In particular, $\alg$ activates only a subset of all the available sensors and reduces the sensing range of the adjustable sensors that have been activated.
In doing so,  $\alg$ also  takes possible heterogeneous coverage capabilities of sensors belonging to the same class into account.
It specifically addresses device  heterogeneity by modeling the coverage problem in the Laguerre geometry through Voronoi-Laguerre diagrams.

$\alg$ executes quickly and is guaranteed  to terminate.  It provides  a configuration of the active set of sensors
that  
{\color{black} meets} 
lifetime and coverage requirements of demanding WSN applications, not met by current solutions.

 By means of extensive simulations we show that $\alg$ 
achieves a network lifetime that is significantly superior to that obtained by previous algorithms in all the considered scenarios.

\end{abstract}
\begin{document}
  
\category{C.2.1}{Computer-Communication Networks}{Network Architecture and Design}%
[Distributed networks; network topology]

\terms{Algorithms, Design, Performance}

\keywords{Area coverage, wireless sensor networks, heterogeneous devices, variable radii}
\setcounter{page}{111}
\begin{bottomstuff}
Author's address: Novella Bartolini, Tiziana Calamoneri, Chiara Petrioli and Simone Silvestri, Department of Computer Science, ``Sapienza'' University of Rome, Via Salaria 113, 198 Rome, Italy . E-mail: \{novella, calamo, petrioli, silvestris\}@di.uniroma1.it\newline
Tom La Porta, Pennsylvania State University, 
University Park, PA, USA. E-mail:tlp@cse.psu.edu\\
%%CPCP: added reference to GENESI 
This work has been partially supported by the Italian Ministry of Education and University
 PRIN project COGENT (COmputational and GamE-theoretic aspects of 
uncoordinated NeTworks) and by the EC FP7 project GENESI (Green sEnsor NEtworks
for Structural monItoring.)
\end{bottomstuff}
\maketitle

%----------s-------------------------------------------------------------------
%---            INTRODUCTION
%-----------------------------------------------------------------------------
%\ls{2}

\section{Introduction}\label{sec:introduction}
As large collections of networked, inexpensive devices, Wireless Sensor Networks (WSNs) are the technology of choice 
for applications requiring seamless and pervasive coverage of geographic areas, buildings and public or private spaces 
and structures.
Critical applications such as access control and intrusion/hazard detection as well as less critical tasks of 
which wildlife monitoring and precision agriculture are typical examples, are best served by the {\color{black} 
infrastructure-less}
\marginpar{ {\color{black} was: infrastructureless, il vocab non la passa}}
 and unobtrusive nature of WSNs.

Since sensor nodes typically have limited battery power, meeting coverage requirements with minimal energy 
expenditure is a primary issue.
For years this problem has been tackled by designing protocol stacks that are energy efficient, implicitly 
assuming that the culprit of most of the energy consumption of a node is the communication circuitry.
As a consequence, solutions that enhance network performance (lifetime, capacity, etc.) have been proposed 
that are  based on methods
 that reduce communication costs:
Data fusion and filtering techniques (for limiting the number of transmissions)~\cite{NakamuraLF07}, new and advanced forms of energy provisioning~\cite{SharmaMJG10,MoserTBB10,KansalHZS07}, clever exploitation of the mobility of network components~\cite{BasagniCP08} as well as optimized protocol design~\cite{YickMG08}.
However, the level of improvement that energy-efficient techniques for communication can produce is starting to plateau because of the 
inevitable trade-offs that they impose (e.g., energy conservation {\color{black}versus}
 \marginpar{  {\color{black} was: vs.}}
latency).
At the same time, the sensing devices mounted on the wireless node have become more numerous and more sophisticated.
Along with the cheap sensors, e.g., those for temperature and humidity, it is now common to endow even small nodes with cameras and active sensors such as radars and sonars, which demand non-negligible energy from the node.
Therefore, for providing critical enhancement to network performance, it is no longer possible to focus only on reducing communication costs. Careful consideration must be also given to the sensory component of the node.
We also note that, unlike``on-off'' sensors, like those for temperature, light, and humidity, more sophisticated sensors consume energy 
depending on their sensing range.
Therefore, similar to communication power control, \emph{sensing coverage control} becomes an important element in the overall 
WSN performance optimization process.
In particular, \emph{sensor activation and radius adaptation}, the ability of selecting which sensor to activate%
\footnote{\ With \emph{sensor activation} we indicate the turning on of the sensing and communication units of a node. When this happens, 
the sensor is \emph{awake}. A sensor goes to \emph{sleep} by turning off (or by switching to low power mode) both its sensing and 
communication units. 
}
\marginpar{\color{black} NNN: Tolta la def. di dead dalla footnote duplicata nella sez sperimentale}

and to what level of coverage, are necessary new ingredients for the design of durable and reliable WSNs.

In this paper we present a new solution for the joint problem of dynamically scheduling the activation of different subsets of  sensor nodes and of tuning their sensing radii (if their technology allows)
for prolonging the network lifetime while ensuring the coverage of the given Area of Interest (AoI)\footnote{As usually done in the literature, we assume that the AoI is a convex region.}.
Sensor activation as a research area has received considerable attention in the recent past. In particular, two selective activation algorithms have been proposed that have been shown to outperform other solutions for the problem: The Distributed Lifetime Maximization ($\dlm$) scheme~\cite{Kaskebar2009}, and the Variable Radii Connected Sensor Cover ($\gup$)~\cite{Zou2009}.

In this paper we propose an algorithm called $\alg$, standing for \emph{Sensor Activation and Radius Adaptation}
that, to the best of the authors' knowledge, is the first algorithm working in the general scenario of heterogeneous networks.
Our algorithm follows an original approach to solve the coverage problem, as it makes use 
of the Voronoi diagrams in the Laguerre geometry to determine the coverage {\color{black}responsibility} \marginpar{ {\color{black} was: extent }} of each node.

$\alg$ achieves the following desirable properties (theoretically and experimentally proven in the following).

\begin{itemize}

\item It ensures maximum sensing coverage at all times, i.e.,  activated nodes are able to cover the same area that would be covered if 
all nodes that are still operational 
{\color{black}were} \marginpar{{\color{black} was: and}}
activated with their maximum transmission range.

%%% SB: If needed this is a first attempt to a formal definition
%\footnote{\ Given a set $\mathcal{M} = \{\langle s_i, r_i^\texttt{max} \rangle : s_i \in \mathcal{S}, r_i^\texttt{max} \in \mathbf{R}\}$, where $s_i$ is a sensor node and $r_i^\texttt{max}$ is its maximum transmission radius (see Section~\ref{sec.approach}), let $\mathcal{A}= \{\langle s_i, r_i \rangle : s_i \in \mathcal{S}, r_i \leq r_i^\texttt{max}, r_i \in \mathbf{R}\}$ denote a possible activation set of the sensors, where each sensor $s_i$ has been activated with actual transmission radius $r_i$. The activation set $\mathcal{A}$ is said to provide maximum sensing coverage if the area covered by its sensors is the same that would be covered by the sensors in $\mathcal{M}$.}

\item It accommodates WSNs comprised of  heterogeneous nodes, endowed with active and passive 
sensors with fixed or adjustable sensing radius.

\item It is Pareto optimal, {\color{black} unlike} 
\marginpar{\color{black}was: unlike from} 
$\dlm$ and $\gup$. (This property constitutes 
a necessary requirement for a sensor activation and radius adaptation policy to be optimal.)

\item {\color{black} It} \marginpar{\color{black}was: $\alg$}  is robust with respect to different definitions of coverage requirements and network lifetime.

\end{itemize}

{\color{black} 
 \marginpar{{\color{black}NNN: I tried to fill the TO BE COMPLETED parts, this conclusion needs some more details from the experiments.}}
The performance of $\alg$ has been evaluated by means of simulation experiments on WSNs with heterogeneous nodes. 
The results of our experiments show that $\alg$ is able to quickly configure the network in a way that ensures low energy consumption and long lifetime.
We also conducted a comparative performance evaluation of $\alg$ with $\dlm$ and $\gup$, which revealed the superiority 
of $\alg$ in terms of coverage extension and network lifetime in a wide range of operative settings, including the ones for which those 
previous solutions were specifically designed.
}

{\color{black} 
 \marginpar{{\color{black}NNN: Added the sketch of the paper.}}
%%%CPCP Slightly changed to make it more focus
The paper is organized as follows.
Section \ref{sec:problem_formulation}
introduces the problem of radius adaptation and sensor activation.
% as a way
%to prolong the network lifetime while ensuring the required level of coverage of the AoI.
Section \ref{sec:motivation_and_preliminaries} 
 motivates the use of the Voronoi-Laguerre measure
 to address device heterogeneity and provides the notions of computational geometry
needed to fully understand the proposed solution.
 In Sections \ref{sec:algo}  and \ref{sec:properties}
  we describe  $\alg$ and prove its Pareto optimality, convergence and termination.
Section \ref{sec:two_approaches} briefly describes the protocols selected as
benchmarks:  $\dlm$ and $\gup$. A thorough performance evaluation of  $\alg$
is then provided in Section \ref{sec:results}, including a comparison between
$\alg$, $\dlm$ and $\gup$ performance. 
Finally, Section \ref{sec:related_works} surveys the literature on related topics, while
Section~\ref{sec:conclusions} concludes the paper.
}

%-----------------------------------------------------------------------------
%---            Problem Formulation
%-----------------------------------------------------------------------------
\section{Problem Formulation}
\label{sec:problem_formulation}
\label{sec.approach}

In this paper we consider heterogeneous WSNs, where the nodes are endowed with several kinds of sensing technologies.
In particular, we focus on the use of two types of sensors, namely, those with adjustable sensing radius and those with fixed radius.
The capability to adjust the sensing range is typical of devices based on active sensing technologies, such as those equipped with radars and sonars.
The power consumption of this kind of sensor depends on the extent of the sensing radius.
For this type of sensors setting the sensing range to the minimum necessary for coverage decreases energy consumption.
Although not all commercial active devices allow radius to be adapted, some sensors with variable sensing ranges are already commercially available~ \cite{Osi2008,Kompis2008}.
By contrast, for sensors {\color{black} based on passive sensing technologies} (e.g., those equipped with piezoelectric sensors or thermometers) the monitoring activity typically consists in taking single point measures. For these devices the sensing range is typically fixed and so is their energy consumption.
An exception is the case of low power CMOS cameras, based on a passive sensing approach, where the {\color{black}depth of field} \marginpar{\color{black} was: resolution} can be adjusted
to guarantee a given quality of monitoring at certain distances.

We consider a  set $\mathcal{S}=\mathcal{S}_\texttt{adjustable} \cup \mathcal{S}_\texttt{fixed}$ of $|S|=N$ sensors, where $\mathcal{S}_\texttt{adjustable}$ contains the nodes with adjustable sensing radius 
{\color{black} (hereby shortly called {\em adjustable sensors})} 
and $\mathcal{S}_\texttt{fixed}$ those with
a fixed radius {\color{black} (shortly called {\em fixed sensors})}  .
If a node $s_i$ belongs to the set $\mathcal{S}_\texttt{adjustable}$ its sensing radius $r_i$
can be set to any value from $0$  to $r_i^\texttt{max}$.
For a node $s_j \in \mathcal{S}_\texttt{fixed}$ the sensing radius $r_j$ is either 0{\color{black}, meaning that the sensing unit is turned off,} or $r_j^\texttt{fixed}${\color{black}, when the sensing unit is turned on}.
The sensors of the two sets can also have heterogeneous transmission radii $r^\texttt{tx}_i$, $i=1, \ldots, N$.
We assume that the transmission radii are such that
any two sensors with intersecting or tangential sensing circles are connected to each other.
Therefore, complete coverage implies also that the WSN is connected, and no sensor should be kept awake if it is not necessary for coverage.

\marginpar{Modifico in base all'unica ref che abbiamo, lasciamo la questione del 3 (vedi commento di Stefano) alla parte sperimentale, qui non \'e necessaria.}

{\color{black} An exact model of the relationship between the energy consumed by a node for sensing and the extent of its sensing radius cannot be
given as it is dependent on the sensing technology and electronic circuitry for detection. For the purpose of our work, 
%%%CPCP removed the following -let's downplay this a bit
%such a detailed knowledge is not necessary and 
we 
%can 
refer
to a general approximate model also used in \cite{Pattem2003,Zou2009} according to which if sensor $s_{i}$ 
has sensing radius $r_{i}$ the energy consumption per time unit is given by }
\begin{equation}
 E_\texttt{sensing} (r_i)=  a \cdot r^c_i + b.
\label{eq:e_active_sensing}
\end{equation}
%s

\marginpar{
{\color{black}
was: 
The constant $a$ is a device specific constant that can be $0$, as in the case of some fixed radius sensing devices.
The parameter $b$ is a constant that takes into account the energy consumption of the sensing circuitry.
The value of  $c$ is generally greater than or equal to $3$ in the case of adjustable radius sensors.
}}

{\color{black}
The parameters $a$ and $b$ are device specific constants. The parameter $c$ 
is related to the sensing technology in use 
%%CPCP I think it is confusing to state c is in [0,2] for passive. I removed it. We should
%%discuss the cases when this terms is meaningful for passive devices and see whether we 
%%can be more precise
and typically varies 
%in the range 
%$[0,2]$ in the case of sensors based on passive sensing technologies, and 
in the range $[2,4]$ in case of sensors adopting an active sensing technology.
}

The energy consumption due to communications is  also dependent on the specific type of device being considered.
It is typically
an increasing function of the transmission radius, which
takes into account all the energy consuming activities related to radio communications, namely
transmissions, receptions and idle listening to the radio channel.
In this paper we consider the energy cost model of Telos nodes \cite{Telos}.

{\color{black} The problem addressed in this paper is the following: }
{\em
Given a WSNs each sensor $s_i \in S$ has to decide whether to activate itself or not at any given time and, if active,
 how to set its sensing radius $r_i$ at that time.
The objective is guaranteeing maximum sensing coverage while prolonging the network lifetime as much as possible.
}

Here we define the network lifetime as the time during which the network is able to guarantee the coverage of a given percentage $p$ of the AoI.
For instance, if $p=100\%$ the network lifetime is the time at which the first coverage hole appears.
If $p=x\%$ the network lifetime is the first time at which less than $x\%$ of the AoI is covered
  \footnote{
  {\color{black} Definitions of lifetime based on the percentage of alive nodes \cite{Blough2002}  
can be adopted as well. Although more commonly used in the literature, these
different notions of lifetime are less suitable than our when the applicative task is coverage of an AoI.}}.
{\color{black}
\marginpar{messo questo pezzetto perch\'e in passato Tom aveva chiesto un rif sulla def di lifetime. 
Sfortunatamente la maggior parte dei lavori si preoccupa solo della perc di sensori vivi, cosa che non 
va bene per lo scenario ma che non altera l'algo} }
%%%CPCP Moved to a footnote the following sentence. With Stefano we have found papers on different
%%definitions of lifetime so far proposed. I will check whether we have a reference to use.

%-----------------------------------------------------------------------------
%---            Motivation and preliminaries
%-----------------------------------------------------------------------------

\section{Preliminaries on Voronoi Laguerre diagrams and on their use to determine and reduce coverage redundancy}

\label{sec:motivation_and_preliminaries}

Prior works on sensor networks very often rely  on the use of Voronoi diagrams
  to model coverage, such as in~\cite{LaPorta06} for mobile sensors, in~\cite{Ammari2008} for energy aware routing,
or in~\cite{Zou2009} for selective activation.
Voronoi diagrams can  be used to model the coverage problem only in the case of sensors endowed with
equal sensing radii as discussed in ~\cite{Noi_ICNP2009}.
In order to address the problem of coverage in the presence of heterogeneous devices, namely devices with different sensing ranges and different capability to adapt their setting, 
in this section we introduce the notion of Voronoi diagrams in {\em Laguerre geometry}.
We also discuss how these diagrams can be exploited to decrease coverage redundancy (and thus the energy consumption due to sensing) while preserving network coverage and connectivity. 

In a Voronoi diagram, we call the axis generated by two sensors which is equidistant from them and perpendicular to their connecting segment the {\em Vor line} .
This line divides the plane into two halves.
In the case of sensors with the same sensing radius the Vor line properly delimits the responsibility regions of the two sensors as it is the
symmetry axis between the two.
If the sensors have heterogeneous radii, the Vor line may not determine the responsibility region correctly,
as depicted in Fig.~\ref{fig:vor_vorlag}. \marginpar{tolto il rif alla boldness della linea Vor, sono bold tutte e due.} 
Indeed, according to a Voronoi-based partition of coverage responsibilities, {\color{black} the sensor positioned in} $\textrm{C}_1$ has the responsibility to sense anything to
the left of the Vor line, and  {\color{black} the sensor positioned in} $\textrm{C}_2$ should sense anything to the right.
In particular, the grey areas in the figure
would incorrectly be assigned to {\color{black} the sensor in}  $\textrm{C}_1$, whereas they are covered only by {\color{black} the sensor in} $\textrm{C}_2$.
The line which correctly delimits the responsibility regions of the two sensors is the one that is equidistant from $\textrm{C}_1$ and $\textrm{C}_2$ in Laguerre geometry. In Figure~\ref{fig:vor_vorlag} this line is  called {\em VorLag}.

\begin{figure}[h]
{\color{black}
\centering
\begin{tabular}{cc}
{\scalebox{0.30}{\includegraphics[]{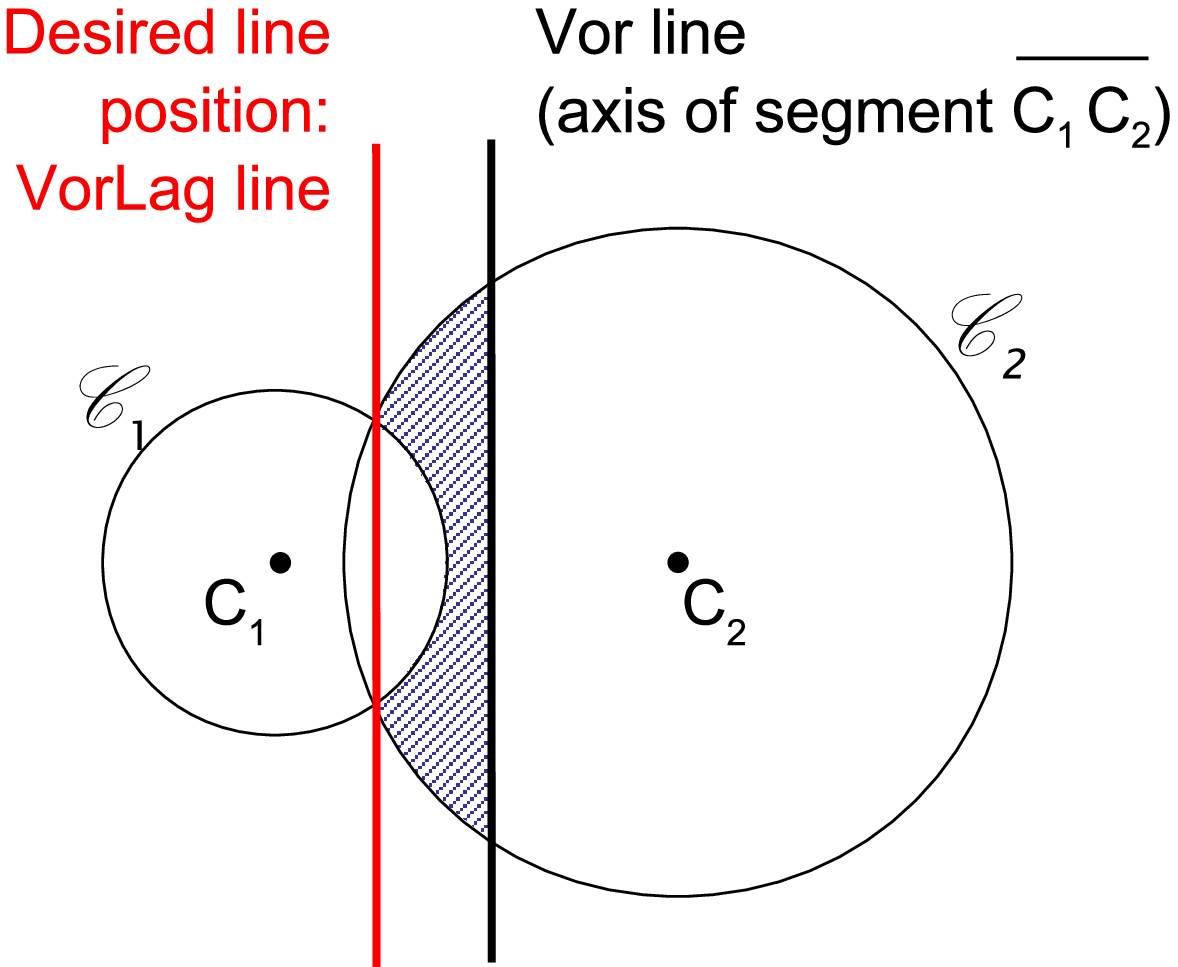}}}
&
{\scalebox{0.30}{\includegraphics[]{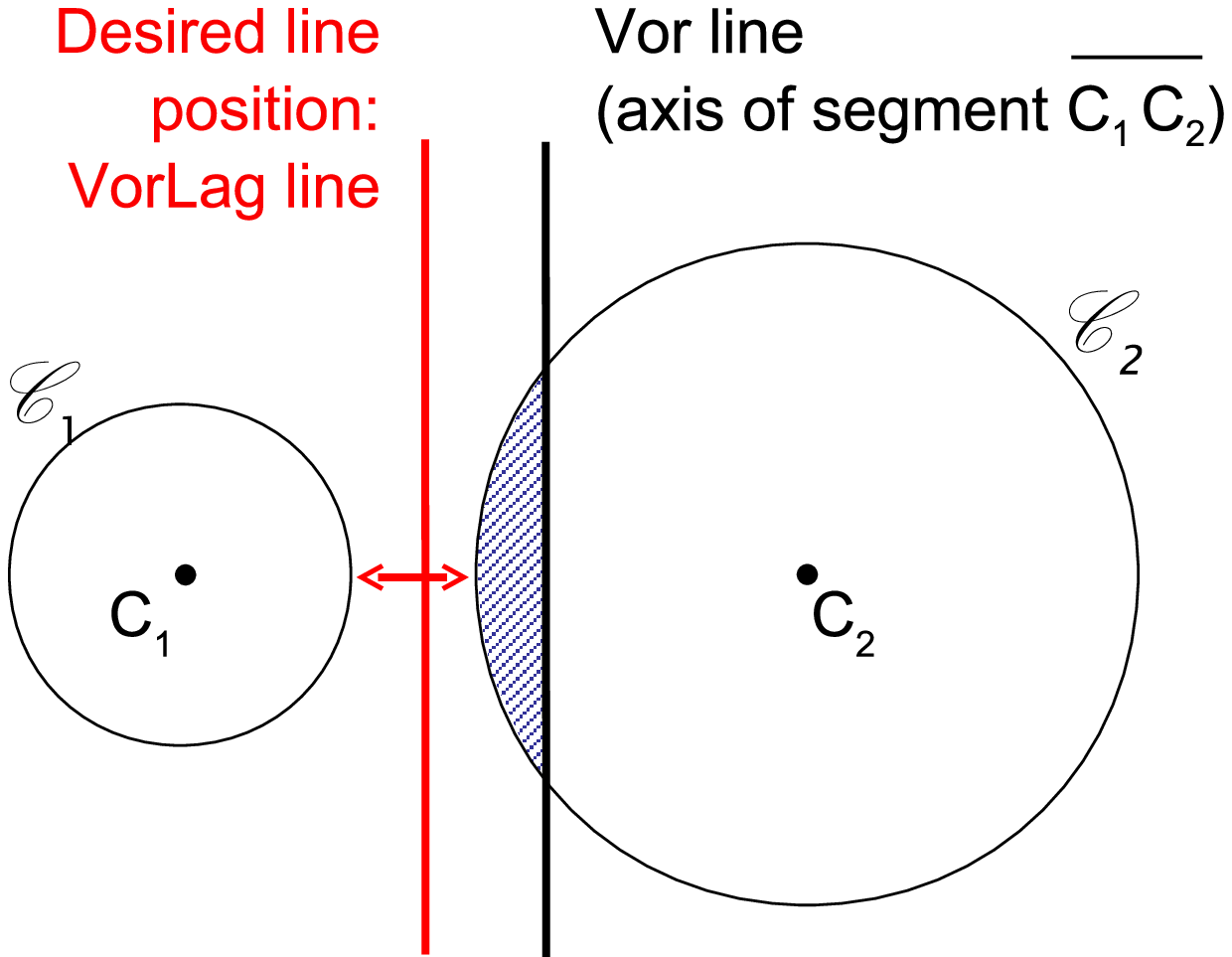}}}
\\
{\footnotesize{(a)}}&{\footnotesize{(b)}}
\end{tabular}
 \caption{\color{black}Different positions of the line equidistant from $\textrm{C}_1$ and $\textrm{C}_2$ according to the Euclidean (Vor)
and to the Laguerre (VorLag) distance in the case of intersecting~(a) and non-intersecting circles~(b).}
\label{fig:vor_vorlag}
}
\end{figure}

{\color{black}
Formally, given a circle $\mathscr{C}$ with center $\textrm{C}=(x_\textrm{C}, y_\textrm{C})$ and radius $r_\textrm{C}$,
and a point $\textrm{P}=(x_\textrm{P}, y_\textrm{P})$ in the plane $\Re^2$, the Laguerre distance $d_L(\mathscr{C}, \textrm{P})$ between $\mathscr{C}$
and $\textrm{P}$ is  defined  as follows:
}
{\color{black}
\begin{equation}
d^2_L(\mathscr{C},\textrm{P})=d^2_E(\textrm{C},\textrm{P})- r_\textrm{C}^2,
\label{eq:lag_dist}
\end{equation}
where  $d_E(\textrm{C},\textrm{P})$ is  the Euclidean distance between the points
$\textrm{C}$ and $\textrm{P}$.
In Laguerre geometry, given two circles
with distinct centers and possibly different radii, the locus of the points
equally distant from them
is a line, called  {\em VorLag line}, that is perpendicular to the segment
connecting the centers.
\marginpar{
{\color{black} NNN: mi sembra che il fatto che l'asse passi per le intersezioni non sia stato detto e dia invece l'intuizione su cosa implichi l'uso di Laguerre.}
}
{\color{black} If the two circles intersect each other, their VorLag line crosses their intersection points, as in Fig.~\ref{fig:vor_vorlag} (a)}
~\cite{IIM85}.

{\color{black}
Given $N$ circles $\mathscr{C}_i$ 
with centers $\textrm{C}_i=(x_i, y_i)$ and radii $r_i$, $i=1, \ldots, N$,
the {\em Voronoi-Laguerre polygons} $V(\mathscr{C}_i)$ for the circles $\mathscr{C}_i$ are defined as
$$V(\mathscr{C}_i)=  \{ \textrm{P} \in \Re^2 | d^2_L(\mathscr{C}_i, \textrm{P}) \leq d^2_L(\mathscr{C}_j, \textrm{P}) \textrm{,  }\forall j \neq i \}, i=1, \ldots, N.$$

}

\marginpar{
{\color{black}
(\$): da qui  spostate due frasi, vedi sotto in verde}
}
A Voronoi-Laguerre polygon 
is always convex.
{\color{black}A tessellation of the plane into Voronoi-Laguerre polygons  is called a  {\em Voronoi-Laguerre diagram}.}
\marginpar{{\color{black}Ridotta la frase}}
{\color{black}
Obviously, if $r_i=r_j$ for all $i,j =1, \ldots, N$, the Voronoi-Laguerre diagram reduces to the ordinary Voronoi diagram.
Notice that it may happen that the Voronoi-Laguerre polygon $V(\mathscr{C}_i)$ does not contain any point of the plane.
This happens when the half-planes generated by the VorLag lines formed by  $\mathscr{C}_i$ and its nearby circles have no overlap.
In this case, $V(\mathscr{C}_i)$ is called a {\em null polygon}.
{\color{black}The occurrence of null polygons is specific of Voronoi-Laguerre diagrams and reflects a situation of complete redundancy that 
is not captured by traditional Voronoi diagrams for which the generated polygons are always not null.}

\marginpar{{\color{black}Riportato da (\$), con qualche cambiamento}}
{\color{black} In the following the sensor $s_{i}$ whose sensing circle $\mathscr{C}_i$ generates the polygon $V(\mathscr{C}_i)$ is called the
      {\em generator} of $V(\mathscr{C}_i)$; the vertices of the same polygon are hereby shortly referred to as {\em Voronoi-Laguerre vertices}.
}

Two sensors are {\em Voronoi-Laguerre neighbors} if their polygons have one edge in common. Given a sensor $s_i \in S$, the
set of its Voronoi-Laguerre neighbors is hereafter referred to as $\mathcal{N}_{S} (s_i)$.
Furthermore, we refer to $\mathcal{N}_{S}^{\emptyset} (s_i)$ as the set of
sensors with null polygons which have a sensing overlap
with the sensor $s_i$:
{\small 
$$ \mathcal{N}_{S}^\emptyset(s_i) = \{ s_j \in S: d_E(s_i,s_j)\leq (r_i+r_j) \wedge V(\mathscr{C}_j) = \emptyset  \}. $$
}
The reason why Voronoi Laguerre diagrams perfectly model the coverage problem in the case of heterogeneous sensors is  their capability to partition the area of interest into polygonal regions which in fact represent the responsibility regions of the deployed sensors.
Indeed, a fundamental property of the Voronoi diagrams in the Laguerre geometry is the following:

\begin{theorem} (
\cite{Noi_ICNP2009})
\label{th:lag_coverage}
Let  us consider $N$ circles $\mathscr{C}_i$, with centers $\textrm{C}_i=(x_i, y_i)$ and radii $r_i$, $i=1, \ldots , N$,
and let $V(\mathscr{C}_i)$ be the Voronoi-Laguerre polygon of the circle $\mathscr{C}_i$.
 For all $k,j=1,2, \ldots, N$,
  $V(\mathscr{C}_k) \cap \mathscr{C}_j \subseteq \mathscr{C}_k $.
\end{theorem}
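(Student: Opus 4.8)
The plan is to reduce this geometric containment claim to a single sign comparison, exploiting the fact that the squared Laguerre distance of a point to a circle is non-positive exactly when the point lies inside (or on) that circle. Concretely, from the definition \eqref{eq:lag_dist} one has, for any circle $\mathscr{C}$ with center $\textrm{C}$ and radius $r_\textrm{C}$,
\begin{equation*}
d^2_L(\mathscr{C},\textrm{P}) = d^2_E(\textrm{C},\textrm{P}) - r_\textrm{C}^2 \leq 0 \iff d_E(\textrm{C},\textrm{P}) \leq r_\textrm{C} \iff \textrm{P} \in \mathscr{C},
\end{equation*}
where $\mathscr{C}$ on the right denotes the closed disk bounded by the circle. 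This equivalence is the only fact I would isolate first, since it is the bridge between the metric used to build the diagram and the coverage region the diagram is meant to describe.

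With that in hand the argument is a short chain of inequalities. I would fix the indices $k$ and $j$ and take an arbitrary point $\textrm{P} \in V(\mathscr{C}_k) \cap \mathscr{C}_j$. Membership in $V(\mathscr{C}_k)$ gives, through the defining inequality of the Voronoi-Laguerre polygon specialized to the index $j$,
\begin{equation*}
d^2_L(\mathscr{C}_k,\textrm{P}) \leq d^2_L(\mathscr{C}_j,\textrm{P}).
\end{equation*}
Membership in the disk $\mathscr{C}_j$ gives, by the equivalence above, $d^2_L(\mathscr{C}_j,\textrm{P}) \leq 0$. Combining the two yields $d^2_L(\mathscr{C}_k,\textrm{P}) \leq 0$, and one further application of the equivalence (now for $k$) gives $\textrm{P} \in \mathscr{C}_k$. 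Since $\textrm{P}$ was arbitrary, $V(\mathscr{C}_k) \cap \mathscr{C}_j \subseteq \mathscr{C}_k$, as required.

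I expect no genuine obstacle here: the statement looks geometric, but the difficulty evaporates once one observes that the squared Laguerre distance is precisely the power of the point with respect to the circle, whose sign records disk membership. The only items worth stating explicitly are the convention that $\mathscr{C}_j$ on the right-hand side denotes the closed disk (so that the boundary case $d^2_L = 0$ is absorbed by the non-strict inequalities throughout) and the harmless degenerate case $j = k$, for which the claim $V(\mathscr{C}_k) \cap \mathscr{C}_k \subseteq \mathscr{C}_k$ is trivial and the specialization of the defining inequality is not needed. Everything else is definition-chasing.
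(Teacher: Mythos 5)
Your proof is correct. Note, however, that the paper itself contains no proof of this theorem: it is imported verbatim from the cited reference \cite{Noi_ICNP2009}, so there is no in-paper argument to compare yours against. Your argument is the standard one and it is complete: the sign of $d^2_L(\mathscr{C},\textrm{P})$ (the power of the point $\textrm{P}$ with respect to the circle $\mathscr{C}$) characterizes membership in the closed disk, the defining inequality of $V(\mathscr{C}_k)$ specialized to the index $j$ gives $d^2_L(\mathscr{C}_k,\textrm{P}) \leq d^2_L(\mathscr{C}_j,\textrm{P}) \leq 0$, and the characterization applied to $k$ closes the chain. You also correctly flag the two points a careless write-up would gloss over: the closed-disk convention (so the boundary case $d^2_L = 0$ causes no trouble) and the trivial case $j=k$, for which the polygon's defining inequality is not even needed. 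Everything you use is available in the paper's Section 3 (the definition of $d_L$ in Eq.~(2) and the definition of the Voronoi-Laguerre polygon), so the proof is self-contained relative to this paper.
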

Less formally, if a point $P$ of the area of interest is covered by at least one sensor, it is certainly covered 
also by the sensor $s_i$ that generates the Voronoi-Laguerre
 polygon $V(\mathscr{C}_i)$ that includes $P$.
 
 %%%%%%%%%%%%%%%%%%%%%%%%%%%%%%%%%%%%%%%%%%%%%%%%%%%%%%%%
 %%%%%%%% CHARACTERIZATION  OF COVERAGE REDUNDANCY %%%%%%%%%%%%
 %%%%%%%%%%%%%%%%%%%%%%%%%%%%%%%%%%%%%%%%%%%%%%%%%%%%%%%%
 \marginpar{NNN:CONTROLLARE CHE NON SIA DEF DUE VOLTE}
 {\color{black}
 \subsection{Characterization of coverage redundancy}
 \label{sec:redundancy}
 We define as {\em redundant} any sensor $s_i \in \mathcal{S}$ such that the sensing circle $\mathscr{C}_i$ is completely
 covered by other sensors, namely $\mathscr{C}_i \subseteq \cup_{s_j \in \mathcal{S}, j \neq i} \mathscr{C}_j $.
 The following corollaries \ref{co:3.1}, \ref{co:3.2} and \ref{th:test} of Theorem \ref{th:lag_coverage} show the criteria 
 to decide whether $s_i$  is redundant.

 \begin{corollary}
 If a sensor $s_i$ does not cover any point of its Voronoi-Laguerre polygon $V(\mathscr{C}_i)$,
 then its sensing circle $\mathscr{C}_i$ is completely covered by other sensors in $\mathcal{S}$.
 Therefore $s_i$ is redundant.
 \label{co:3.1}
 \end{corollary}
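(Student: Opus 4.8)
The plan is to reduce the corollary to a pointwise application of Theorem~\ref{th:lag_coverage}. First I would restate the hypothesis and the goal set-theoretically. Saying that $s_i$ covers no point of its polygon means exactly $\mathscr{C}_i \cap V(\mathscr{C}_i) = \emptyset$, while ``$s_i$ is redundant'' is, by the definition given just above the corollary, the inclusion $\mathscr{C}_i \subseteq \bigcup_{j \neq i} \mathscr{C}_j$. Thus it suffices to show that every point $P \in \mathscr{C}_i$ is covered by some sensor other than $s_i$.

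The crucial structural fact I would use is that the Voronoi--Laguerre polygons tessellate the whole plane. For any $P$, the index $k$ minimizing $d^2_L(\mathscr{C}_k, P)$ over the finitely many circles satisfies $d^2_L(\mathscr{C}_k, P) \le d^2_L(\mathscr{C}_j, P)$ for all $j \neq k$, so $P \in V(\mathscr{C}_k)$; hence every point lies in at least one (necessarily non-null) polygon. Now fix an arbitrary $P \in \mathscr{C}_i$ and let $V(\mathscr{C}_k)$ be a polygon containing it.

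I claim $k \neq i$. Indeed, if $k = i$ then $P \in V(\mathscr{C}_i)$ and $P \in \mathscr{C}_i$, so $P \in \mathscr{C}_i \cap V(\mathscr{C}_i)$, contradicting the hypothesis that this intersection is empty. Hence $k \neq i$, and applying Theorem~\ref{th:lag_coverage} to the pair of indices $k$ and $i$ gives $P \in V(\mathscr{C}_k) \cap \mathscr{C}_i \subseteq \mathscr{C}_k$, so $P$ is covered by the sensor $s_k$ with $k \neq i$. Since $P$ was an arbitrary point of $\mathscr{C}_i$, this yields $\mathscr{C}_i \subseteq \bigcup_{j \neq i}\mathscr{C}_j$, i.e.\ $s_i$ is redundant.

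The only delicate point, and hence the step I would treat most carefully, is the covering-of-the-plane property together with the correct handling of null polygons: I must be sure that the point $P \in \mathscr{C}_i$ I pick genuinely lies in \emph{some} polygon, and that this polygon is not the possibly empty $V(\mathscr{C}_i)$. The min-distance argument settles existence (the minimizing polygon contains $P$ and is therefore non-null), while the hypothesis itself forbids that polygon from being $V(\mathscr{C}_i)$; everything else is a one-line invocation of Theorem~\ref{th:lag_coverage}.
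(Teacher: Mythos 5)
Your proof is correct and follows essentially the same route as the paper's: every point of $\mathscr{C}_i$ lies outside $V(\mathscr{C}_i)$ by hypothesis, hence belongs to some other polygon $V(\mathscr{C}_k)$, and Theorem~\ref{th:lag_coverage} then gives $P \in V(\mathscr{C}_k) \cap \mathscr{C}_i \subseteq \mathscr{C}_k$. The only difference is that you make explicit the tessellation fact (via the minimizing-index argument) and the exclusion $k \neq i$, both of which the paper leaves implicit.
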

 \begin{proof}
  Since by hypothesis $V(\mathscr{C}_i) \cap \mathscr{C}_i = \emptyset$, 
 $\mathscr{C}_i$ contains only points that are external to its polygon. 
 Therefore, if $P\in \mathscr{C}_i$ then $P$ is covered by the generating sensor of the polygon to which it belongs (for Theorem \ref{th:lag_coverage}).
  \end{proof}

  Corollary \ref{co:3.1} affirms that if $s_i$ does not cover its polygon, it can be turned off without affecting coverage.

 \begin{corollary}
 Given a sensor $s_i$ which covers only a portion of its polygon $V(\mathscr{C}_i)$,  let $\ell$ be a circular segment 
 on the intersection between the boundary of $\mathscr{C}_i$ with the polygon $V(\mathscr{C}_i)$.
 All the points on $\ell$ which are not on edges of  $V(\mathscr{C}_i)$ are covered only by $s_i$. 
 \label{co:3.2}
 \end{corollary}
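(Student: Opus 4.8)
The plan is to argue by contradiction, working directly from the definition of the Laguerre distance. I would fix a point $\textrm{P}$ on $\ell$ that does not lie on any edge of $V(\mathscr{C}_i)$, so that $\textrm{P}$ sits in the interior of $V(\mathscr{C}_i)$ while lying on the bounding circle $\partial\mathscr{C}_i$; the latter gives $d_E(\textrm{C}_i,\textrm{P})=r_i$ and hence $d_L^2(\mathscr{C}_i,\textrm{P})=0$ by Eq.~\eqref{eq:lag_dist}. Since $\textrm{P}$ lies on $\partial\mathscr{C}_i\subseteq\mathscr{C}_i$ it is covered by $s_i$, so it remains only to show that no other sensor covers $\textrm{P}$.

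First I would translate $\textrm{P}\in V(\mathscr{C}_i)$ into its defining inequalities: $d_L^2(\mathscr{C}_i,\textrm{P})\le d_L^2(\mathscr{C}_j,\textrm{P})$ for every $j\ne i$, which with $d_L^2(\mathscr{C}_i,\textrm{P})=0$ gives $d_L^2(\mathscr{C}_j,\textrm{P})\ge 0$, i.e. $d_E(\textrm{C}_j,\textrm{P})\ge r_j$. Then I would suppose, toward a contradiction, that some $s_j$ with $j\ne i$ covers $\textrm{P}$, so $\textrm{P}\in\mathscr{C}_j$ and $d_E(\textrm{C}_j,\textrm{P})\le r_j$, equivalently $d_L^2(\mathscr{C}_j,\textrm{P})\le 0$. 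The two bounds together force $d_L^2(\mathscr{C}_j,\textrm{P})=0=d_L^2(\mathscr{C}_i,\textrm{P})$, so $\textrm{P}$ is Laguerre-equidistant from $\mathscr{C}_i$ and $\mathscr{C}_j$.

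The crux is to turn this equidistance into membership on an edge of $V(\mathscr{C}_i)$. I would observe that the constraint $d_L^2(\mathscr{C}_i,\cdot)\le d_L^2(\mathscr{C}_j,\cdot)$ that helps define $V(\mathscr{C}_i)$ holds with equality at $\textrm{P}$, so $\textrm{P}$ lies on the VorLag line separating $\mathscr{C}_i$ and $\mathscr{C}_j$; since the whole polygon lies in the closed half-plane bounded by that line, any polygon point on the line is a boundary point of $V(\mathscr{C}_i)$, hence lies on one of its edges (possibly at a vertex). This contradicts the choice of $\textrm{P}$, so no $s_j$ with $j\ne i$ covers it and $\textrm{P}$ is covered only by $s_i$.

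I expect the main obstacle to be precisely this last conversion: arguing cleanly that equality in one of the half-plane constraints at a point of the polygon places that point on the relative boundary rather than in the interior, which rests on the VorLag locus being a genuine separating line (guaranteed here since $\mathscr{C}_i$ and $\mathscr{C}_j$ have distinct centers). A secondary point to handle carefully is the open/closed disk convention: because $\textrm{P}$ lies on $\partial\mathscr{C}_i$, the statement implicitly treats sensing disks as closed, and I would use this consistently both when asserting that $s_i$ covers $\textrm{P}$ and when writing the covering condition for $s_j$ as $d_E(\textrm{C}_j,\textrm{P})\le r_j$.
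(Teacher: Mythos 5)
Your proof is correct, and it takes a genuinely different route from the paper's. The paper argues geometrically and leans on the partial-coverage hypothesis: by Theorem~\ref{th:lag_coverage} the region $V(\mathscr{C}_i)\setminus\mathscr{C}_i$ is uncovered, so every $\epsilon$-neighbourhood of $\textrm{P}$ contains uncovered points; this forces any other covering circle $\mathscr{C}_j$ to reach $\textrm{P}$ only with its boundary, tangentially to $\mathscr{C}_i$ and with $r_j<r_i$, and only then does the paper conclude that the Voronoi-Laguerre edge generated by $s_i$ and $s_j$ would pass through $\textrm{P}$, contradicting the fact that $\textrm{P}$ is not on an edge. You reach the same final contradiction but bypass the neighbourhood and tangency analysis entirely: coverage by $s_j$ is exactly the sign condition $d_L^2(\mathscr{C}_j,\textrm{P})\le 0$ on the power of the point, membership in $V(\mathscr{C}_i)$ together with $\textrm{P}\in\partial\mathscr{C}_i$ gives $d_L^2(\mathscr{C}_j,\textrm{P})\ge d_L^2(\mathscr{C}_i,\textrm{P})=0$, and the resulting equality puts $\textrm{P}$ on the VorLag line of the pair, hence on the boundary of the polygon. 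Your version is shorter, purely algebraic, and never uses the hypothesis that $s_i$ covers only part of its polygon, so it actually proves a slightly more general fact (when $V(\mathscr{C}_i)\subseteq\mathscr{C}_i$ the claim is vacuous, since every point of $\partial\mathscr{C}_i\cap V(\mathscr{C}_i)$ is then already a boundary point of the polygon). What the paper's argument buys is the explicit geometric picture---a would-be second coverer must be internally tangent at $\textrm{P}$---which is the intuition reused later in the discussion of strict and loose farthest vertices; both arguments share the same implicit exclusion of superimposed sensors, which you at least flag explicitly by requiring distinct centers for the VorLag line to be a genuine line.
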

 
 \begin{proof}
  By hypothesis, the region $V(\mathscr{C}_i) \setminus \mathscr{C}_i$  is not covered by the generating sensor of the polygon to which it belongs (that is the sensor $s_i$). Therefore, due to theorem \ref{th:lag_coverage}, it is not covered by any sensor.
  Consider any circular segment $\ell$ on the boundary of  $\mathscr{C}_i$ and inside $V(\mathscr{C}_i)$ 
 (see Fig. \ref{fig:dim} in which $\ell$ is the arc $\wideparen{DF}$) and a point $P$ on $\ell$ but not on the edges of $V(\mathscr{C}_i)$.
 We want to show that $s_i$ is the only sensor which covers $P$.
   Since $P$ is not on the edges of the polygon, it is possible to find a value of $\epsilon$ arbitrarily small,
  such that the $\epsilon$-surrounding of $P$  is internal to $V(\mathscr{C}_i)$.
 The intersection of this $\epsilon$-surrounding with the region  $V(\mathscr{C}_i) \setminus \mathscr{C}_i$ (that in Fig. \ref{fig:dim} is delimited by the segments $\overline{EF}$, $\overline{DE}$ and by the arc $\wideparen{DF}$) is obviously uncovered.
 
 We now proceed by contradiction.  Let us assume that there is another sensor $s_j \in \mathcal{S}$ such that $P$ is also covered by $s_j$.
  Since, by construction, any $\epsilon$-surrounding of $P$ contains an uncovered region,
 the circle $\mathscr{C}_i$ can cover $P$ only with its boundary. Furthermore,   
 since $s_j$ cannot cover points of $V(\mathscr{C}_i) \setminus \mathscr{C}_i$, then $s_j$ must be tangential to 
 $\mathscr{C}_i$ in $P$, and must have a lower  sensing radius  $r_j < r_i$. 
 \marginpar{Here we are implicitly excluding superimposed nodes, for which the Voronoi-Laguerre polygon does not exist}
  However this implies that  $P$ would be crossed by the Voronoi-Laguerre edge formed by $s_i$ and $s_j$, and the portion of $V(\mathscr{C}_i)$ on the opposite side of this edge with respect to $\mathscr{C}_i$ could not belong to $V(\mathscr{C}_i)$, contradicting our construction.
  \end{proof}
  
\begin{figure}[h]
\vspace{1.2cm}
{\color{black}
\centering
{\scalebox{0.4}{\includegraphics[]{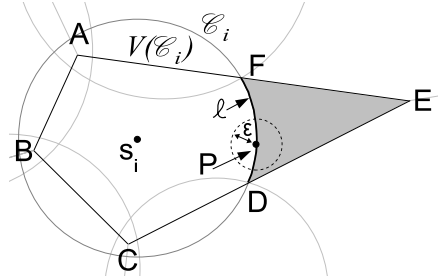}}}
 \caption{\color{black}Voronoi-Laguerre polygon partially covered by its generating sensor.}
\label{fig:dim}
}
\end{figure}

Corollary \ref{co:3.2} states that if $s_i$ only partially covers its polygon, it cannot reduce its sensing radius without affecting coverage.

{\color{black}
\begin{corollary}
\label{th:test}
Let us consider a sensor $s_i$, with sensing circle $\mathscr{C}_{i}$ and Voronoi-Laguerre polygon 
$V(\mathscr{C}_i)$.
Let  $P$ be a point that is covered by $s_i$ and is internal to its polygon, that is $P \in V(\mathscr{C}_{i}) \cap \mathscr{C}_i$.
If $P$ is  covered  by a sensor in $\mathcal{S}$ other than $s_i$, then 
there exists a sensor $s_k \in \mathcal{N}_{\mathcal{S}} (s_i)\cup \mathcal{N}_{\mathcal{S}}^{\emptyset} (s_i)$ 
%
%If $s_i$ is redundant, for each point 
%$P \in V(\mathscr{C_i})$ there exists a sensor $s_k \in \mathcal{N}_{\mathcal{S}} (s_i)\cup \mathcal{N}_{\mathcal{S}}^{\emptyset} (s_i)$ 
%such that $P$ is also covered by $s_k$.
%In other words, if $s_i$ is redundant each point of  its polygon is certainly covered at least by one of its Voronoi-Laguerre 
%neighbors or by a sensor with null polygon.
such that $P$ is also covered by $s_k$.
In other words, 
any point of $V(\mathscr{C}_i)$ that is covered by more than one sensor, is certainly covered at least by 
the generating sensor $s_i$ and by one of its Voronoi-Laguerre 
neighbors or a sensor with null polygon.
\end{corollary}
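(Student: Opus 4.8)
The plan is to translate everything into Laguerre distances and argue by a walking (continuity) argument along a segment that stays inside the covering disk. Recall that $s_m$ covers $P$ exactly when $d^2_L(\mathscr{C}_m,\textrm{P})\le 0$, and that $P\in V(\mathscr{C}_i)$ means $d^2_L(\mathscr{C}_i,\textrm{P})\le d^2_L(\mathscr{C}_m,\textrm{P})$ for every $m$. Fix a sensor $s_j\neq s_i$ covering $P$. First I would dispose of the easy case: if $V(\mathscr{C}_j)=\emptyset$, then since $s_i$ and $s_j$ both cover $P$ their circles overlap, because $d_E(s_i,s_j)\le d_E(\textrm{C}_i,\textrm{P})+d_E(\textrm{P},\textrm{C}_j)\le r_i+r_j$; hence $s_j\in\mathcal{N}_{S}^{\emptyset}(s_i)$ and $s_j$ itself is the required witness. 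So from now on I may assume $V(\mathscr{C}_j)\neq\emptyset$.

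Next I would run the walking argument. Consider the segment $\sigma=\overline{\textrm{P}\,\textrm{C}_j}$. Since $P\in\mathscr{C}_j$ and $\textrm{C}_j$ is the center, the squared Euclidean distance to $\textrm{C}_j$ is nonincreasing along $\sigma$, so every point of $\sigma$ lies in $\mathscr{C}_j$ and is covered by $s_j$. Traversing $\sigma$ starting from $P$, the path begins in $V(\mathscr{C}_i)$; let $Q$ be the first point at which it leaves $V(\mathscr{C}_i)$, crossing into some non-null neighbouring cell $V(\mathscr{C}_k)$, so that by definition $s_k\in\mathcal{N}_{S}(s_i)$. (If $\sigma$ never leaves $V(\mathscr{C}_i)$ then $\textrm{C}_j\in V(\mathscr{C}_i)\cap\mathscr{C}_j\subseteq\mathscr{C}_i$ by Theorem~\ref{th:lag_coverage}, a degenerate subcase I would treat separately by showing $s_j$ is then forced to be null and overlapping, reducing to the case above.) Because $Q\in\mathscr{C}_j$ and $Q\in V(\mathscr{C}_k)$, Theorem~\ref{th:lag_coverage} gives $Q\in V(\mathscr{C}_k)\cap\mathscr{C}_j\subseteq\mathscr{C}_k$: the neighbour $s_k$ certainly covers the crossing point $Q$.

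The hard part will be upgrading ``$s_k$ covers $Q$'' to ``$s_k$ covers $P$,'' which is what the statement actually asserts. The natural handle is that $t\mapsto d^2_L(\mathscr{C}_k,X(t))-d^2_L(\mathscr{C}_i,X(t))$ is \emph{affine} along $\sigma$ (the quadratic terms cancel) and vanishes at $Q$; together with $d^2_L(\mathscr{C}_i,\textrm{P})\le 0$ this determines $d^2_L(\mathscr{C}_k,\textrm{P})$ up to the increment accumulated between $P$ and $Q$. This alone does not force $d^2_L(\mathscr{C}_k,\textrm{P})\le 0$, so the delicate point is to exploit the remaining hypotheses: that $P$ lies on the $s_i$-side of the $i$--$j$ bisector, i.e. $d^2_L(\mathscr{C}_i,\textrm{P})\le d^2_L(\mathscr{C}_j,\textrm{P})$, that $s_k$ is the \emph{first} cell met before $s_j$'s, and the convexity of the disk $\mathscr{C}_i$ and of the polygons. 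I expect the clean formulation to be a dichotomy: either the first non-null neighbour $s_k$ already covers $P$, or else the covering sensor $s_j$ is itself ``buried'' and its polygon is null — which is precisely why the conclusion must allow a witness in $\mathcal{N}_{S}^{\emptyset}(s_i)$, and which throws us back to the easy case of the first paragraph.

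Finally I would collect the cases: $s_j$ null gives $s_j\in\mathcal{N}_{S}^{\emptyset}(s_i)$ covering $P$; otherwise the walk produces a non-null Voronoi–Laguerre neighbour $s_k\in\mathcal{N}_{S}(s_i)$ covering $P$. A careful write-up must also handle ties on the bisectors (taking $Q$ as the infimum of parameters leaving the closed cell and using closedness of $\mathscr{C}_k$) and the coincidence of centres (excluded, as in Corollary~\ref{co:3.2}, since superimposed sensors have no Voronoi–Laguerre polygon). The single genuine obstacle is the coverage-transfer step of the third paragraph; everything else is bookkeeping on top of Theorem~\ref{th:lag_coverage}.
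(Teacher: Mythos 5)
Your opening moves are sound: the null-polygon case (via the triangle inequality through $P$) is correct, and Theorem~\ref{th:lag_coverage} does give that the first cell $V(\mathscr{C}_k)$ entered by the segment $\sigma$ has a generator covering the crossing point $Q$. But the gap you yourself flag---transferring coverage from $Q$ back to $P$---is fatal, and the dichotomy you propose as a repair (``either the first non-null neighbour $s_k$ covers $P$, or $s_j$ has a null polygon'') is provably false. Take $s_i$ centered at $(0,0)$ with $r_i^2=4$, $s_j$ at $(4,0)$ with $r_j^2=10.24$, $s_k$ at $(16,5)$ with $r_k^2=249.8$, and $P=(1,0)$. Then
\begin{equation*}
d^2_L(\mathscr{C}_i,P)=-3,\qquad d^2_L(\mathscr{C}_j,P)=-1.24,\qquad d^2_L(\mathscr{C}_k,P)=0.2,
\end{equation*}
so $P$ is interior to $V(\mathscr{C}_i)$, is covered by both $s_i$ and $s_j$, and is \emph{not} covered by $s_k$. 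The bisector of $s_i,s_j$ is the line $x=1.22$; that of $s_i,s_k$ is $16x+5y=17.6$, which meets the segment $\overline{PC_j}$ at $(1.1,0)$; and $d^2_L(\mathscr{C}_k,\cdot)-d^2_L(\mathscr{C}_j,\cdot)=-24x-10y+25.44$ is negative on the segment for $x>1.06$. Hence your walk leaves $V(\mathscr{C}_i)$ at $(1.1,0)$, enters $V(\mathscr{C}_k)$, and stays there all the way to $C_j$: the only witness it produces is $s_k$, which does not cover $P$. At the same time $V(\mathscr{C}_j)\neq\emptyset$ (it contains $(1.3,-1)$, where the three Laguerre distances are $-1.95<-1.31<2.29$), so your fallback to the null case is unavailable. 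The corollary is still true in this configuration---the witness is $s_j$ itself, whose cell shares an edge with $V(\mathscr{C}_i)$ along $x=1.22$---but no walk toward $C_j$ certifies it.

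The paper's proof sidesteps the transfer problem entirely, and the contrast is instructive. It deletes $s_i$ and considers the diagram $\mathscr{D}'$ generated by $\mathcal{S}'=\mathcal{S}\setminus\{s_i\}$. In $\mathscr{D}'$ the point $P$ lies in some cell $V'(\mathscr{C}_k)$, and since $P$ is covered by a sensor of $\mathcal{S}'$ (namely $s_j$), Theorem~\ref{th:lag_coverage} applied \emph{in $\mathscr{D}'$ at the point $P$ itself} immediately yields that this generator $s_k$ covers $P$; no auxiliary point and hence no transfer step is ever needed. It then only remains to observe that such an $s_k$ must be a Voronoi--Laguerre neighbour of $s_i$ or have a null polygon in $\mathscr{D}$: otherwise its cell would be delimited exclusively by bisectors not involving $s_i$, so $V'(\mathscr{C}_k)=V(\mathscr{C}_k)$, and $P\in V'(\mathscr{C}_k)$ would force $P\in V(\mathscr{C}_k)$, contradicting that $P$ is interior to $V(\mathscr{C}_i)$. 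In short, the correct move is not to follow a path out of $V(\mathscr{C}_i)$ but to recompute the cell of $P$ after removing $s_i$: deletion is what converts ``$P$ is covered by somebody else'' into ``$P$ is covered by its local generator,'' which is exactly the assertion to be proved. (In the example above, removing $s_i$ places $P$ in $V'(\mathscr{C}_j)$, and the paper's argument returns the correct witness $s_j$.)
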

}
\begin{proof}
Let $\mathscr{D}$ be the Voronoi-Laguerre diagram  generated by  $\mathcal{S}$ and
 $\mathscr{D}'$ be the diagram generated by $\mathcal{S'}=\mathcal{S} \setminus \{s_i\}$.
In the diagram $\mathscr{D}$, $P \in V(\mathscr{C}_i)$.
By contrast, in the diagram $\mathscr{D}'$, the sensor $s_i$ is not present. 

Since by the hypothesis, $P$ is covered by a sensor in $\mathcal{S}'$, thanks to Theorem \ref{th:lag_coverage} we
can affirm that $P$ is also covered by the generating sensor $s_k$ of the polygon, such that $P \in  V'(\mathscr{C}_k)$ defined in  $\mathscr{D}'$.
 Obviously, $V'(\mathscr{C}_k) \neq  V(\mathscr{C}_k)$.
 Let us assume, for sake of contradiction, that 
 $s_k \notin {N}_{\mathcal{S}} (s_i)\cup \mathcal{N}_{\mathcal{S}}^{\emptyset} (s_i)$
If the sensor $s_k$ is not a Voronoi-Laguerre neighbor of $s_i$ and it has not a null polygon in  $\mathscr{D}$, 
its polygon in $\mathscr{D}'$ would be the same as in $\mathscr{D}$, because it would be delimited by edges formed by sensors other than $s_i$.
Therefore it would be $V'(\mathscr{C}_k)=V(\mathscr{C}_k)$, which is a contradiction.

\end{proof}

Corollary \ref{th:test} states that in order to decide whether $s_i$ can reduce its radius or be turned off 
it is sufficient to evaluate the coverage of the sensors in $\mathcal{N}_{\mathcal{S}} (s_i)\cup \mathcal{N}_{\mathcal{S}}^{\emptyset} (s_i)$.

 }

%--------------------------------------------------------------------
%    Determining the redundancy of adjustable sensors
%--------------------------------------------------------------------

\subsection{Reducing the redundancy of sensors with adjustable sensing radius}
%%%CPCP da qui sul file preso su dropboc il 9/7 pomeriggio

%
{\color{black}
%According to Theorem \ref{th:lag:coverage}, 
The corollaries \ref{co:3.1}, \ref{co:3.2} and \ref{th:test} let us determine whether an adjustable sensor $s_i$ 
can reduce its sensing radius or turn itself off. In particular: (1) if the sensor $s_i$ does not cover any point of its polygon, $s_i$ 
can be turned off (in consequence of Corollary \ref{co:3.1}); (2) if $s_i$ covers its polygon only partially, $s_i$ must stay awake and work with its current radius (in consequence of Corollary \ref{co:3.2});
(3) if $s_i$ covers its polygon completely, it may reduce its sensing radius of an extent that can be determined on the basis of the coverage of its neighbors  (in consequence of Corollary \ref{th:test}).

We now address the third situation more in detail.
If $s_i$ covers its polygon completely, it}
can shrink its sensing radius to the distance between 
$s_i$ and the farthest vertex $f(V(\mathscr{C}_i))$ of its polygon,
 without compromising maximum sensing coverage.
%\begin{figure}[h]
%{\color{black}
%\centering
%\begin{tabular}{cc}
%{\scalebox{0.35}{\includegraphics[]{immagini/farthest/uno1}}}
%\hspace{0.35cm}
%&
%\hspace{0.35cm}
%{\scalebox{0.35}{\includegraphics[]{immagini/farthest/due1}}}
%\\
%{\footnotesize{(a)}}\hspace{0.35cm}&\hspace{0.35cm}{\footnotesize{(b)}}\\
%{\scalebox{0.35}{\includegraphics[]{immagini/farthest/tre1}}}
%\hspace{0.35cm}&
%\hspace{0.35cm}
%{\scalebox{0.35}{\includegraphics[]{immagini/farthest/quattro1}}}
%\\
%{\footnotesize{(c)}}\hspace{0.35cm}&\hspace{0.35cm}{\footnotesize{(d)}}\\
%\end{tabular}
% \caption{\color{black}Iterative reduction of the sensing radius of sensor $s_1$ to the farthest vertex of its Voronoi-Laguerre polygon.}
%\label{fig:shrink}
%}
%\end{figure}

\begin{figure}[h]
{\color{black}
\centering
\begin{tabular}{ccc}
&&\\
&&\\
\hspace{-2cm}
{\scalebox{0.35}{\includegraphics[]{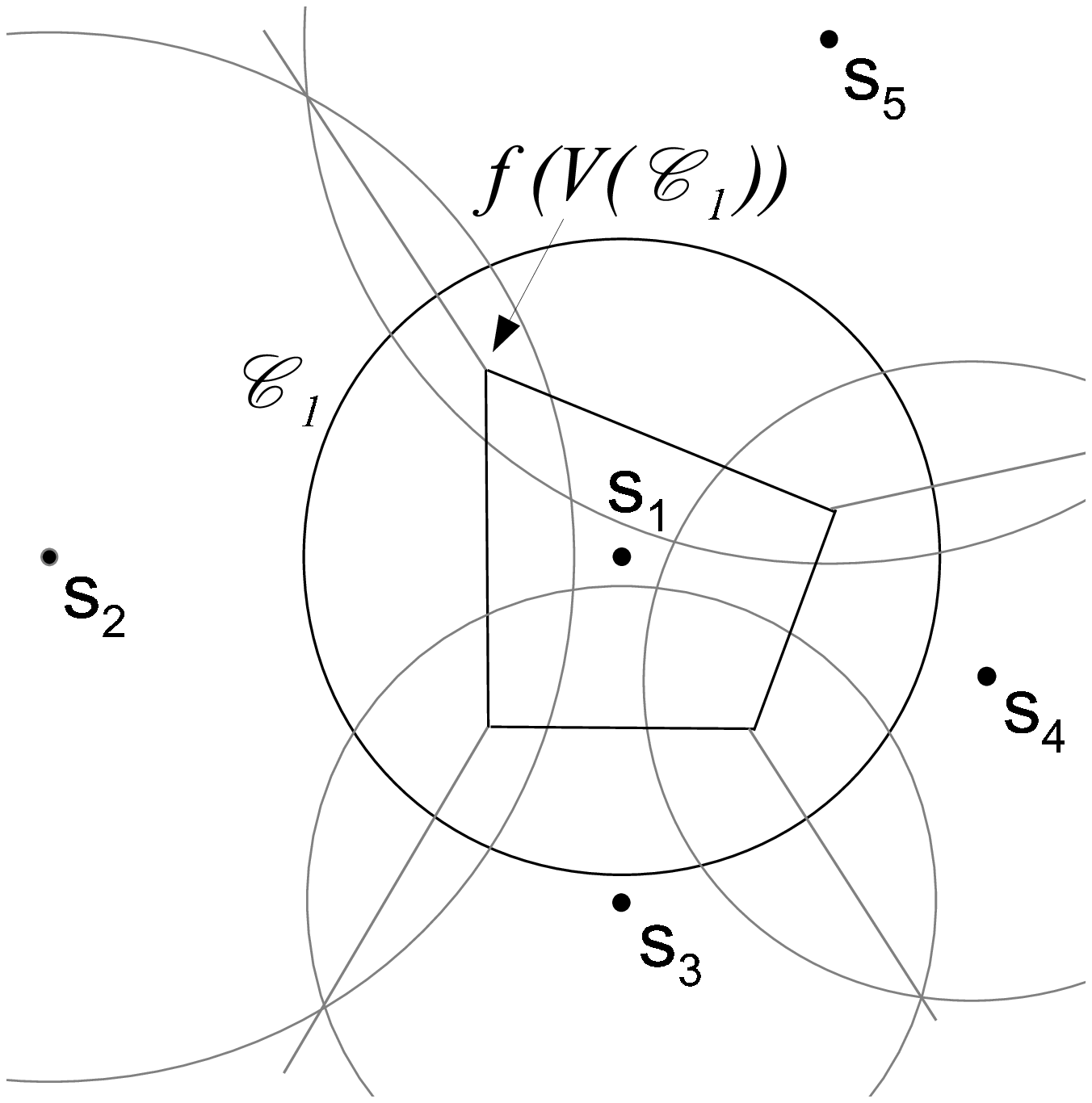}}}
&\hspace{3cm}
&
{\scalebox{0.35}{\includegraphics[]{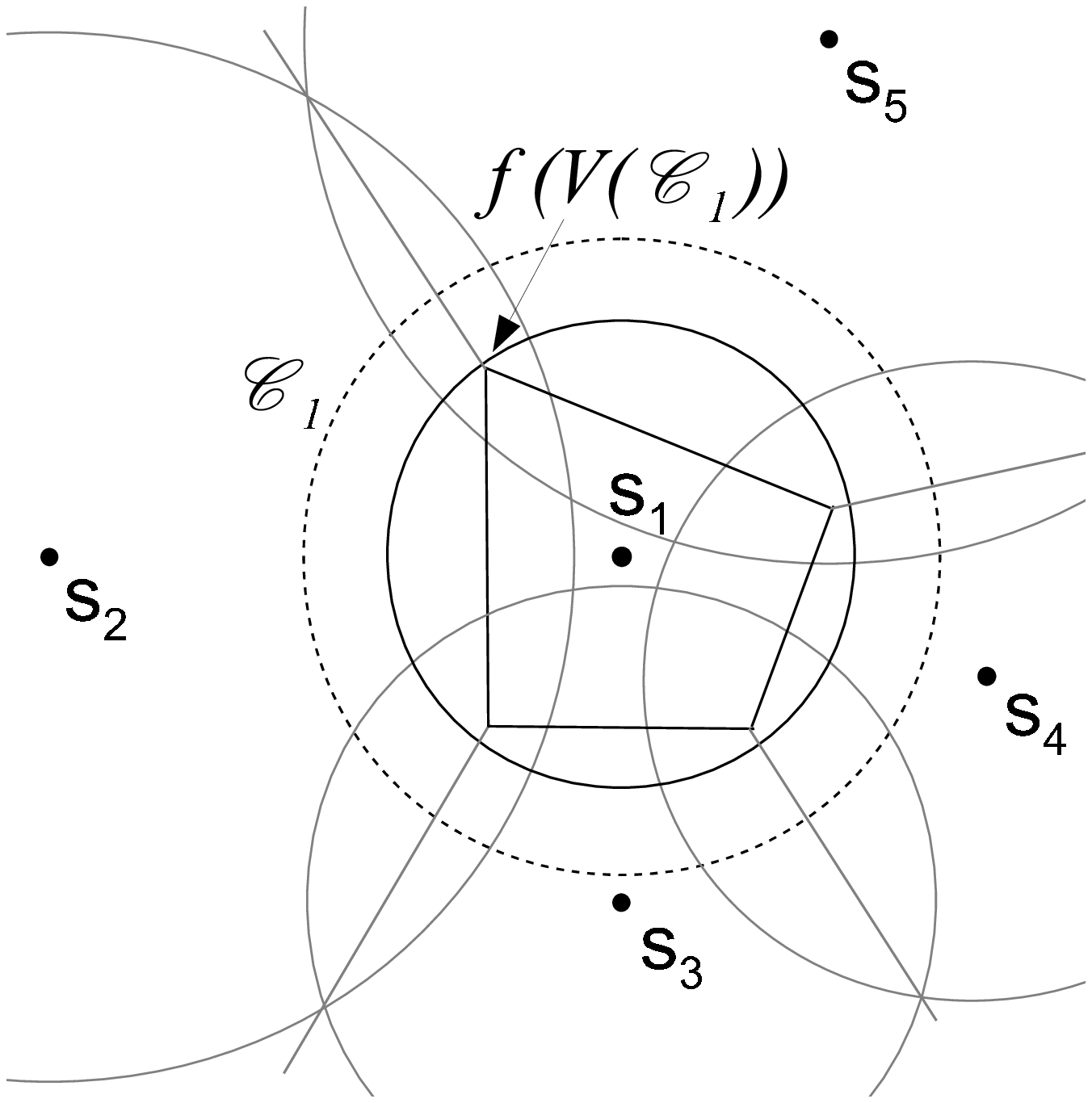}}}
\\
{\footnotesize{(a)}}&\hspace{3cm}&{\footnotesize{(b)}}\\
&&\\
&&\\
&&\\
&&\\
\hspace{-2cm}{\scalebox{0.35}{\includegraphics[]{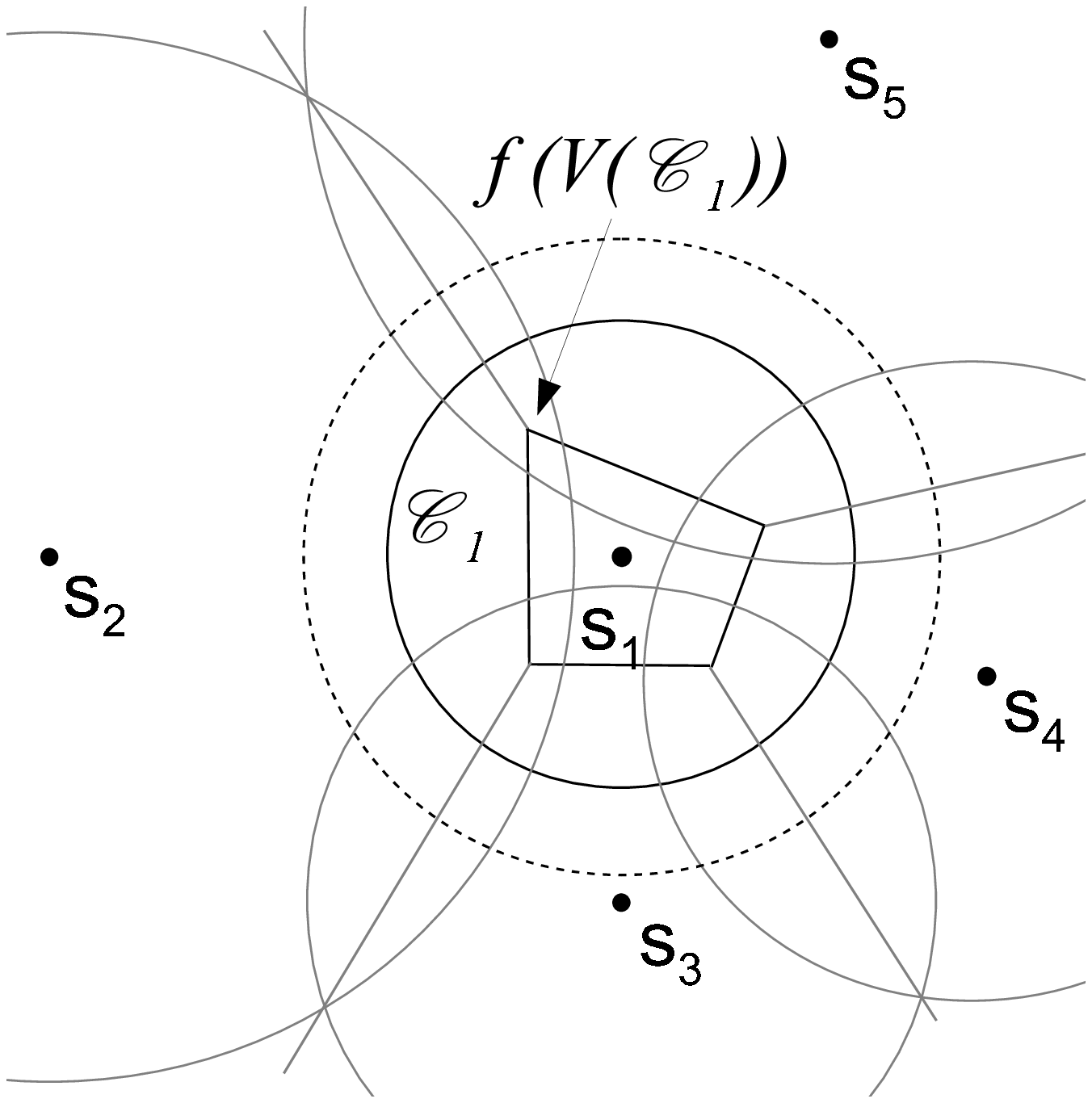}}}
&\hspace{3cm}
&
{\scalebox{0.35}{\includegraphics[]{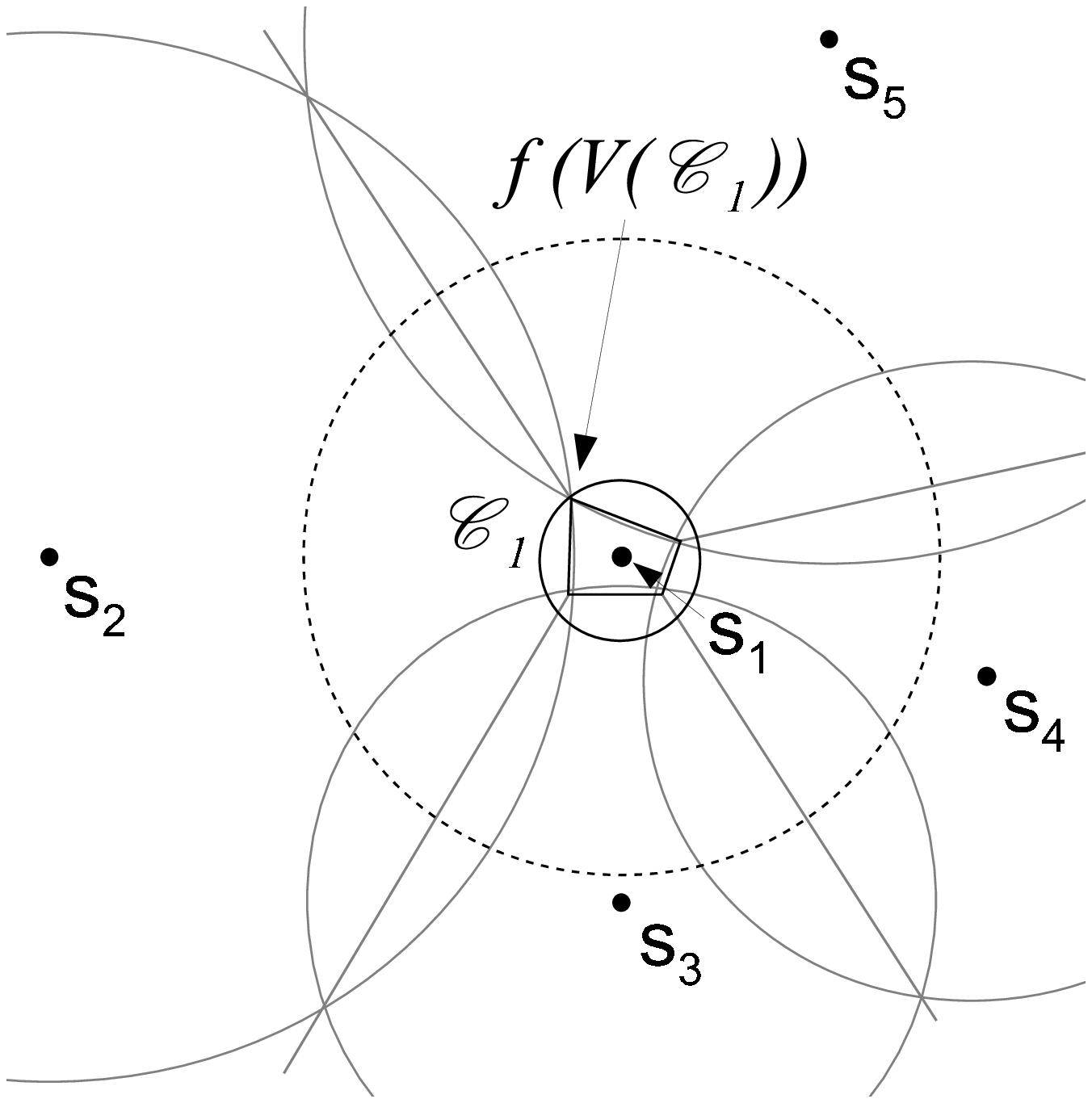}}}
\\
{\footnotesize{(c)}}
&\hspace{3cm}
&
\hspace{0.35cm}{\footnotesize{(d)}}\\
\end{tabular}
 \caption{\color{black}Iterative reduction of the sensing radius of sensor $s_1$ to the farthest vertex of its Voronoi-Laguerre polygon.}
\label{fig:shrink}
}
\end{figure}

As an example of  sensing radius reduction, let us consider the sensor $s_1$  in Figure \ref{fig:shrink}.
In Figure \ref{fig:shrink}(a) the farthest vertex of $V(\mathscr{C}_1)$ is at a distance from $s_1$ which 
is smaller than its radius.
Because of Theorem \ref{th:lag_coverage} we can assert that all the points that are internal to 
$\mathscr{C}_1$ but do not belong to $V(\mathscr{C}_1)$ are covered 
by the sensors generating the Voronoi-Laguerre polygon to which they belong.
 Therefore $s_1$ redundantly covers 
the region within its circle that is external to its polygon and it can reduce its radius to cover {\color{black} no farther than}
 $f(V(\mathscr{C}_1))$, maintaining
full coverage of its responsibility region.
Such  a reduction of the  sensing radius of $s_1$ is shown in Figure \ref{fig:shrink}(b).
Changing the  sensing radius of $s_1$ requires the Voronoi Laguerre polygons of $s_1$ and its Laguerre neighbors to be recomputed, as shown in Fig. \ref{fig:shrink}(c). 
\reversemarginpar
\marginpar{Rimosso: The resulting Voronoi-Laguerre tessellation (shown in Figure \ref{fig:shrink}(c)) is still able to provide full coverage. }
\normalmarginpar
This reduction step can be repeated until the radius of the sensor $s_1$
is such that the farthest vertex of the polygon $V(\mathscr{C}_1)$
is on the circle $\mathscr{C}_1$ and the radius cannot be reduced any more (see Figure \ref{fig:shrink}(d)). 
A convergence proof is given in  Section \ref{sec:properties}, Theorem \ref{le:optimalVariable}.

This repeated reduction of the sensing radius is at the basis of $\alg$, where sensing radii of adjustable sensors 
are reduced until even a single radius reduction would leave a coverage hole.
Note that this process may even lead some sensors to shrink their sensing range to zero (in case of redundant sensors),
which means that such sensors are deactivated.

%

%%----------------------------------------------------------------------
%% LOOSE and STRICT farthest vertices
%%----------------------------------------------------------------------

\subsubsection{On a characterization of boundary farthest vertices: Loose and strict farthest vertices}
\label{sec:skippable}
%\begin{figure}[h]
% 
%{\color{black}
%\centering
%\vspace{0.7cm}
%\hspace{-1cm}\begin{tabular}{ccc}
%{\scalebox{0.35}{\includegraphics[]{immagini/farthest/strict1}}}
%&\hspace{0.4cm} &
%{\scalebox{0.35}{\includegraphics[]{immagini/farthest/1}}}
%\\ 
%{\footnotesize{(a)}}& &{\footnotesize{(b)}}
%\\
%\end{tabular}
% \caption{\color{black}Strict (a) and loose (b)  farthest vertices}
%\label{fig:strict_and_loose}
%}
%\end{figure}
%NUOVE IMMAGINI SOTTO:\\
\begin{figure}[h]
 
{\color{black}
\centering
\vspace{0.7cm}
\hspace{-2cm}\begin{tabular}{ccc}
{\scalebox{0.4}{\includegraphics[]{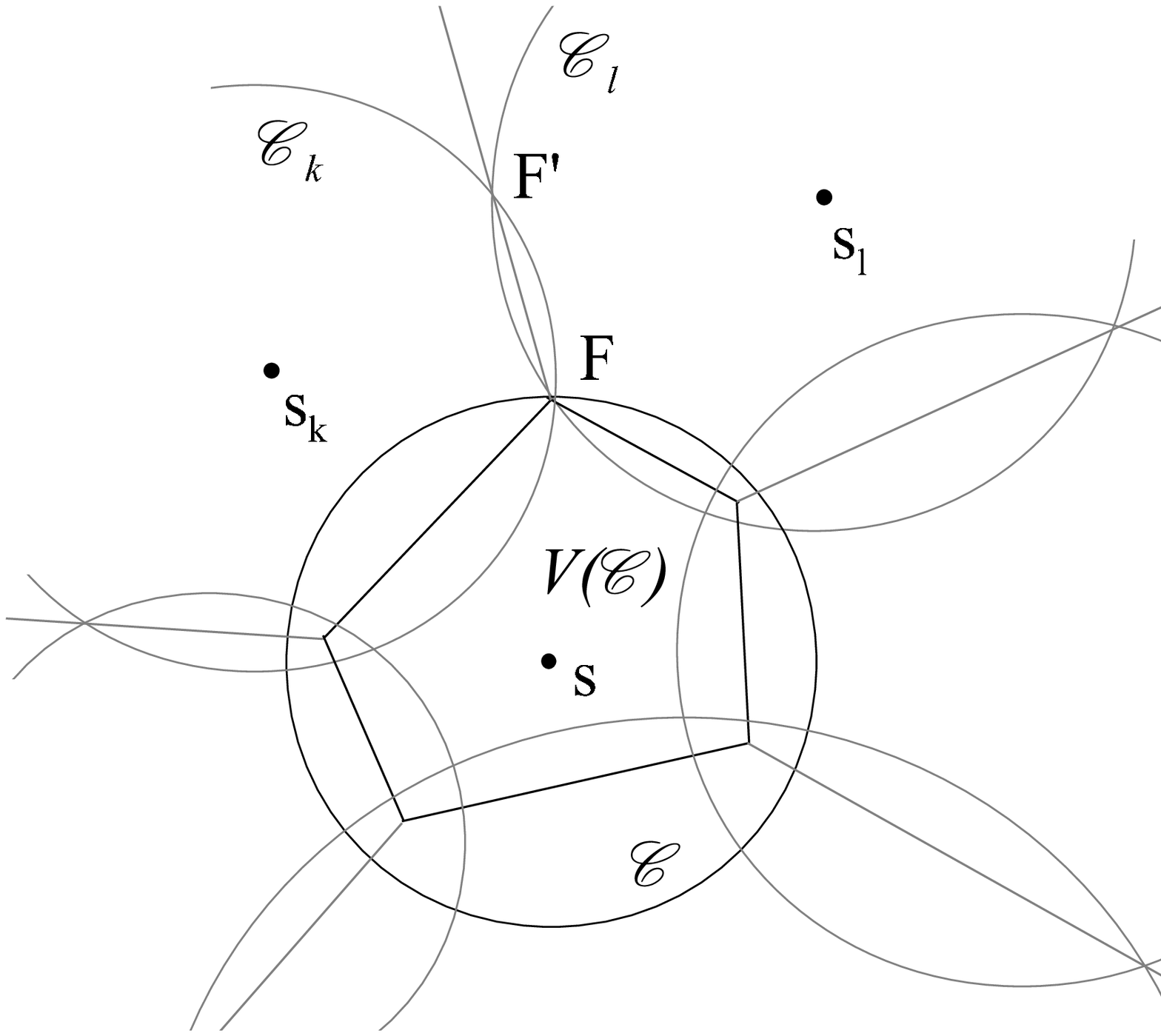}}}
&\hspace{1.5cm} &
{\scalebox{0.4}{\includegraphics[]{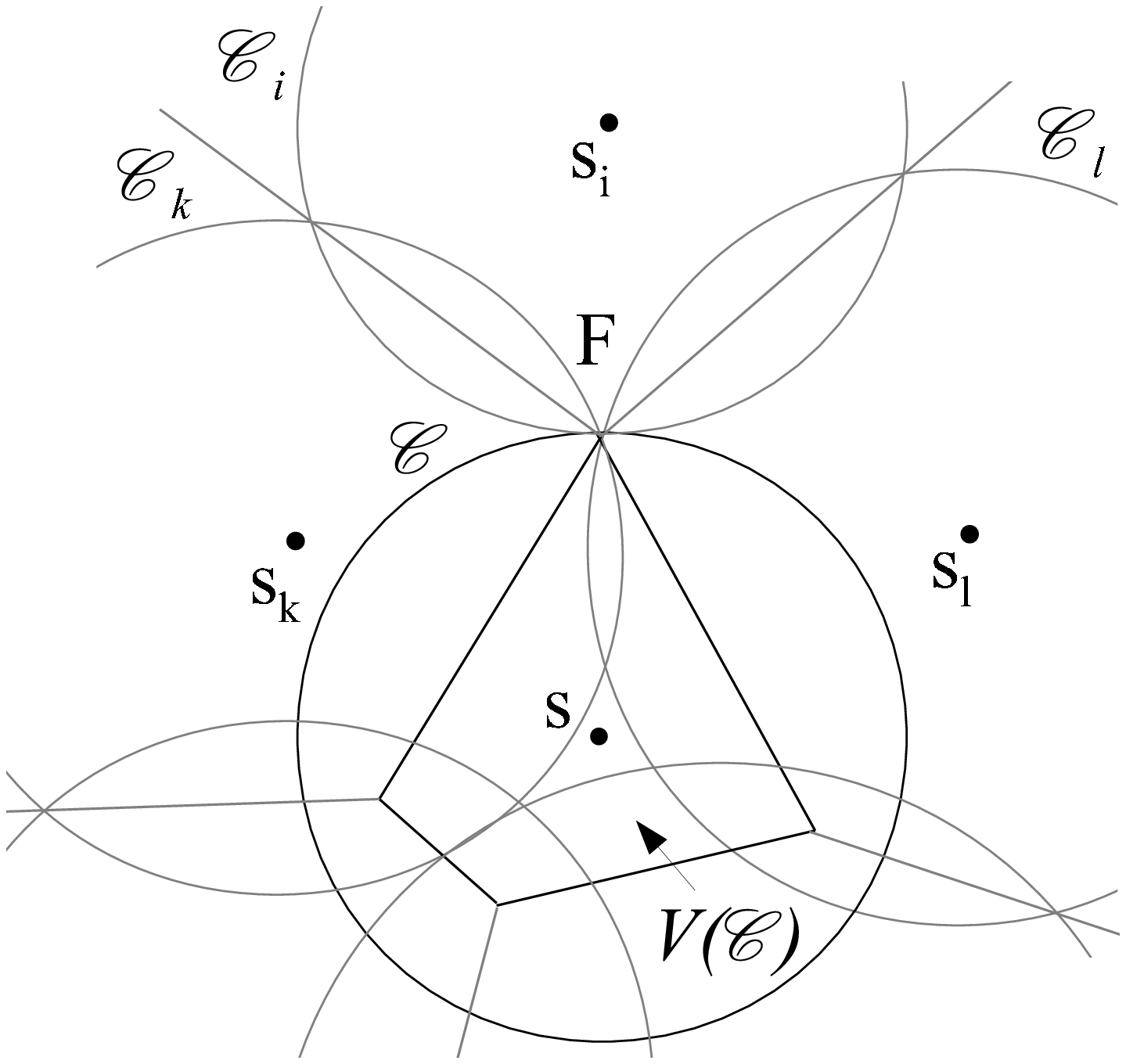}}}
\\ 
{\footnotesize{(a)}}&\hspace{0.8cm} &{\footnotesize{(b)}}
\\
\end{tabular}
 \caption{\color{black}Strict (a) and loose (b)  farthest vertices}
\label{fig:strict_and_loose}
}
\end{figure}

\marginpar{Questa frase mi continuava a dare l'impressione che non fosse raro il fatto di avere farthest che non sono strict ma il fatto di poterlo risolvere}
{\color{black}
$\alg$ typically considers the distance to the  farthest vertex of a Voronoi-Laguerre polygon as a lower bound for the reduction of the sensing radius of the generating sensor.
If the radius is reduced below this threshold, there is a loss of coverage in almost all cases. Nevertheless in some extremely rare configurations\footnote{In the experiments we obtained such a situation 
only by construction.} it is possible to reduce the radius below this distance without any coverage loss, by enforcing an ordering in the radius reduction of neighbor sensors.
}

%The reduction of the radius of all the adjustable sensors 
%to the {\color{black}minimum value needed to cover the} farthest vertex of their Voronoi-Laguerre polygon 
%is the best we can do to minimize redundancy while preserving coverage if we do not
%make any  assumptions on the ordering according to which sensor nodes
%adjust their radius. 
%
%If such ordering is enforced it is possible   to do better than this in some special configurations (extremely
%rare in practice)\footnote{In the experiments we obtained such a situation 
%only by construction.}. 

%Indeed it is theoretically possible to have some redundancy left, as we explain in the following. 

Given  a sensor $s_i$, let $f(V(\mathscr{C}_i))$ be the farthest vertex of its Voronoi-Laguerre polygon $V(\mathscr{C}_i)$.
We call that 
the sensor $s_i$ is called the 
 {\em generating sensor} of the farthest vertex $f(V(\mathscr{C}_i))$ and we call
$f(V(\mathscr{C}_i))$  a {\em boundary farthest} if it lies on the boundary  of $\mathscr{C}_i$.
 
A boundary vertex is the intersection point of three circles and of their three Voronoi-Laguerre axes, 
and therefore is a boundary
 vertex for at least three sensors. In the following we say that
%We distinguish two situations: 
the  boundary farthest vertex of a sensor $s_{i}$ is 
%called 
a {\em strict farthest} if the radius of $s_{i}$ cannot be reduced without leaving a coverage hole. 
Otherwise such a vertex is called 
a {\em loose farthest}.
%
%Nevertheless in order to cover all the possible scenarios, as required to prove the correctness and
%properties of SARA we address this case explicitly.}.
 %
 An example of strict and loose boundary farthest vertex 
is given in 
%for a sensor $s$ is given by point $F$ in 
Fig.\ref{fig:strict_and_loose} (a) and (b), respectively. In the example all sensor nodes have reduced their radius 
to their farthest vertex which is therefore a boundary farthest vertex. This is when it makes the difference
whether a farthest vertex is loose or strict.
Let us focus on point $\mathrm{F}$ which is a common boundary 
farthest vertex for the three generating sensors $s$, $s_l$ and $s_k$.
As Fig.\ref{fig:strict_and_loose}}} (b) shows, $\mathrm{F}$ is a loose boundary farthest for sensor $s$, in fact,  $s$ 
can significantly reduce its sensing radius without compromising coverage.
However, a common farthest that is loose for a generating sensor is not necessarily loose for the others.
Point $\mathrm{F}$ is a strict farthest for the three other sensors $s_i$, $s_l$ and $s_k$ which 
cannot  reduce their radius. 
%To see why $\mathrm{F}$ is a loose boundary farthest for sensor $s$ note that 

In general, if $s$ is the only generating sensor for which a  
boundary farthest is loose, it can reduce its radius 
without creating any coverage hole: The other generating sensors cannot perform any 
concurrent reduction since their farthest vertex is strict.
% they are already in a strict farthest situation. 
In this case, in order to calculate its new radius, 
$s$ has to subtract from its responsibility region $V(\mathscr{C})$ all the areas covered by the other generating 
sensors and guarantee to cover the farthest point of the remaining region $\overline{V}(\mathscr{C}) \triangleq V(\mathscr{C}) \setminus ( \mathscr{C}_k \cup \mathscr{C}_l)$. 
Fig. \ref{fig:loose_radius_reduction} (a) shows how the sensor $s$ seen in Figure \ref{fig:strict_and_loose} (b) 
can reduce its radius to the minimum needed to cover  the
farthest point B of the region ABCD =$\overline{V}(\mathscr{C})$, shaded in the figure.
 After this radius reduction, $s$ needs to recalculate its Voronoi-Laguerre polygon and possibly perform 
a further radius reduction, as in Fig. \ref{fig:loose_radius_reduction} (b). 

Although it is very unlikely to occur, it is theoretically possible for a boundary farthest vertex to be loose 
for two or more generating sensors.
In such a case, a concurrent radius reduction of the two or more sensors having a loose farthest vertex might result in
 a coverage hole.
For this reason we introduce a simple
{\em decision serialization scheme for loose farthest vertices}. This can be easily implemented
by means of  either a back-off policy or a leader election and a leader arbitrated sensor nodes
radius reduction.
 As there are many well established techniques to solve the problem of serializing decisions in a distributed 
computing setting, 
%%% NBNB sono tecniche standard di un corso base di sistemi operativi /sistemi distribuiti
%%%CPCP si lo so ma metterei una citazione cosi' non ci rompono
for the sake of simplicity and brevity, we do not address this aspect in the presentation 
of the algorithm.
We refer the reader to the Appendix of this paper for the details of the simple geometrical rules
sensors adopt to determine if their boundary farthest vertex
 is strict or loose. 

%\begin{figure}[h]
%{\color{black}
%\centering
%\vspace{0.7cm}
%\hspace{-1cm}\begin{tabular}{ccc}
%{\scalebox{0.35}{\includegraphics[]{immagini/farthest/2}}}
%&\hspace{0.5cm} &
%{\scalebox{0.35}{\includegraphics[]{immagini/farthest/3}}}
%\\ 
%{\footnotesize{(a)}}& &{\footnotesize{(b)}}
%\\
%\end{tabular}
% \caption{\color{black}Reduction of the sensing radius in a situation of loose boundary farthest vertex.}
% \label{fig:loose_radius_reduction}
%}
%\end{figure}
%} 
%
%NUOVE IMG SOTTO\\
\begin{figure}[h]
{\color{black}
\centering
\vspace{0.7cm}
\hspace{-2cm}
\begin{tabular}{ccc}
{\scalebox{0.4}{\includegraphics[]{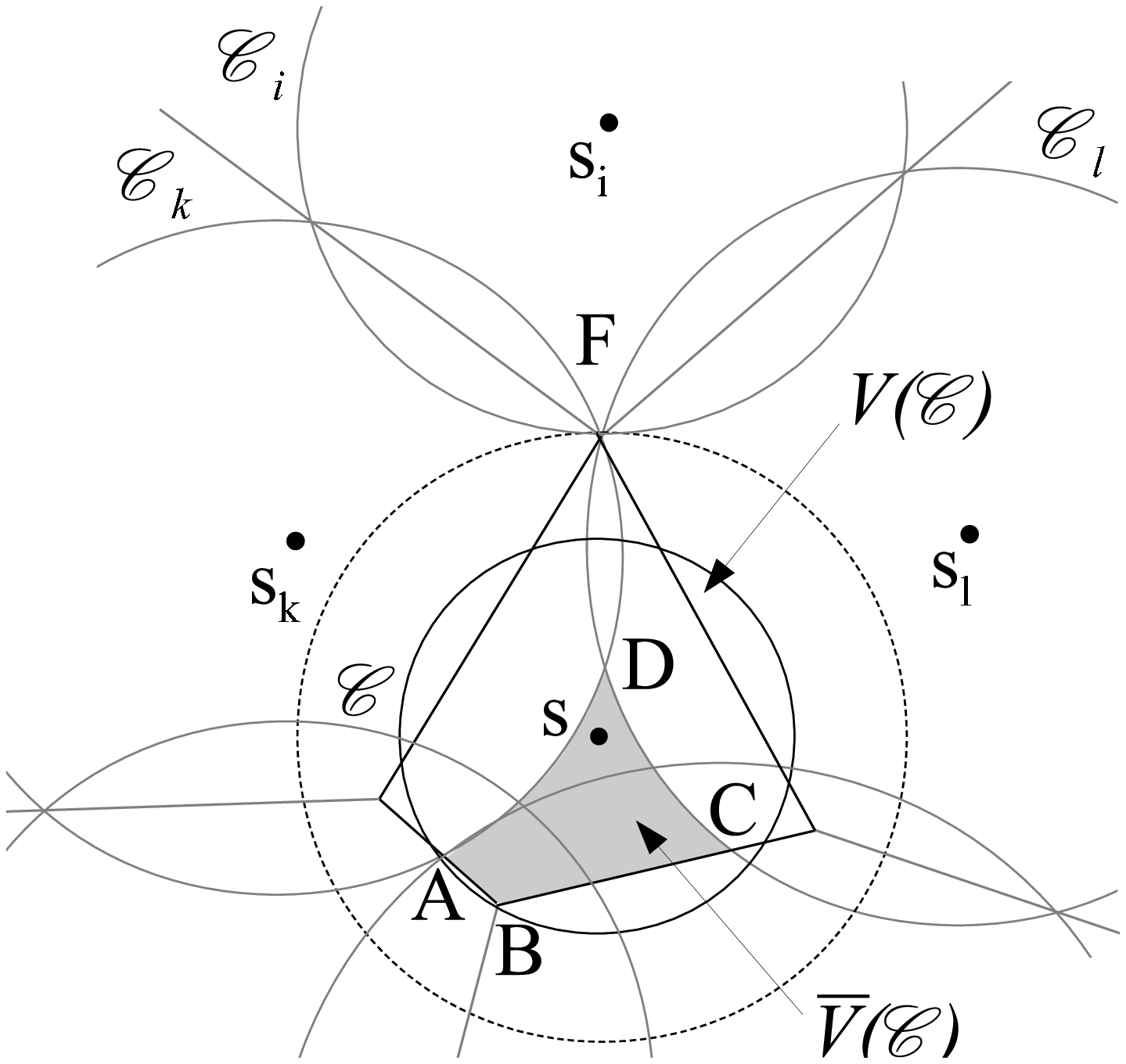}}}&
\hspace{2cm}
&
{\scalebox{0.4}{\includegraphics[]{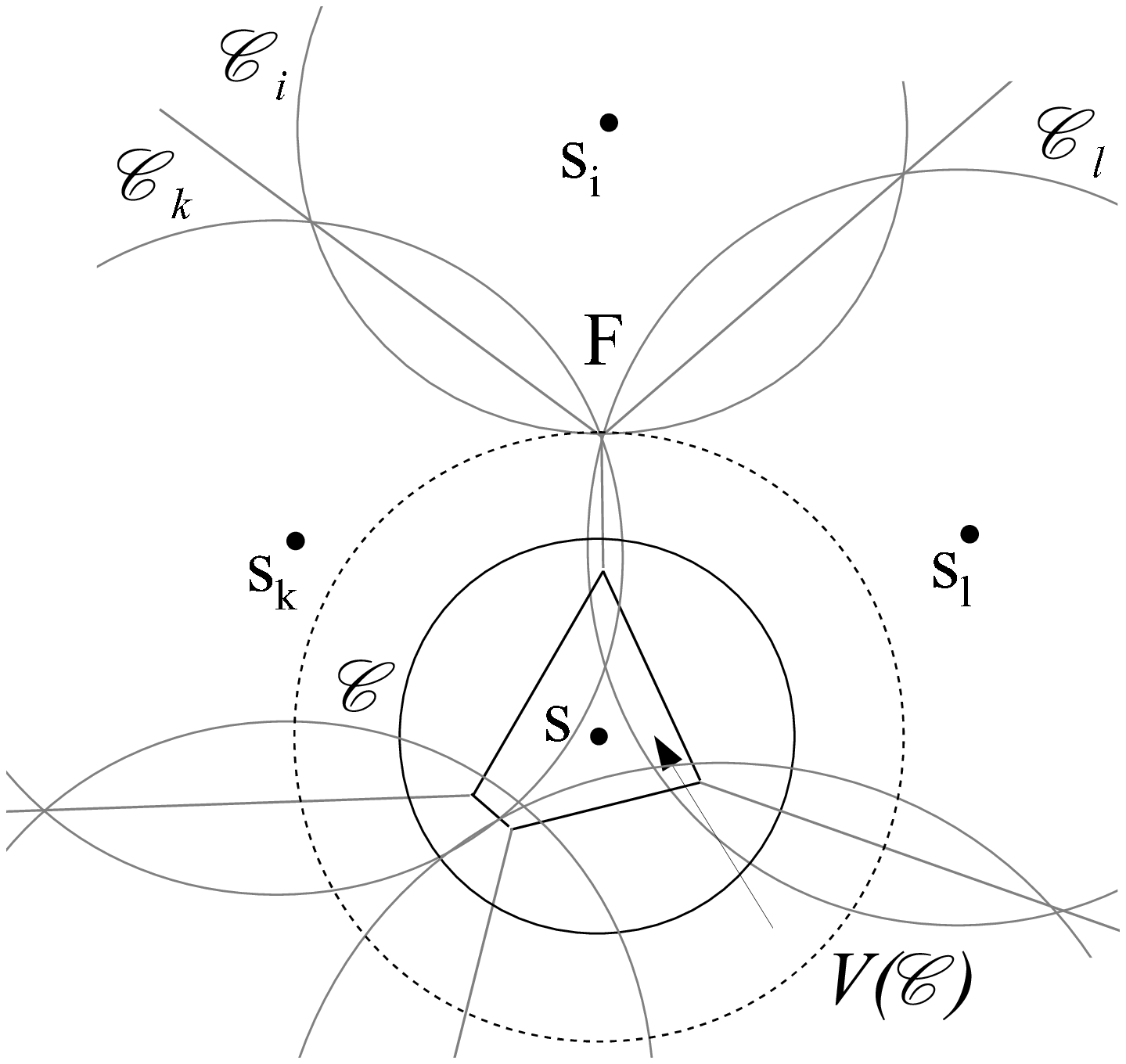}}}
\\ 
{\footnotesize{(a)}} &\hspace{2cm}&{\footnotesize{(b)}}
\\
\end{tabular}
 \caption{\color{black}Reduction of the sensing radius in a situation of loose boundary farthest vertex.}
 \label{fig:loose_radius_reduction}
}
\end{figure}

%----------------------------------------------------------------------
% ----- Determining the redundancy of fixed sensors.
%----------------------------------------------------------------------
{\color{black}\subsection{Turning off sensors with fixed sensing radius}}
\label{sec:after_skip}

Not having the capability of tuning the extent of its sensing radius,
 the only way that a node with fixed radius has to save energy is to go to sleep when it is {\em redundant}.
 \marginpar{rimosso: , namely its sensing circle is 
 completely covered by other sensors.}
Therefore, the approach we take for selecting which node with fixed radius should deactivate is based on a greedy algorithm run by each node $s$ that, after a local exchange of information, determines if neighboring nodes can completely cover for $s$, and if $s$ is the ``best'' node for deactivation, i.e., the node that allows us to obtain the most energy conservation.
 
The extent of information needed by a node for deciding whether or not to deactivate can be kept 
significantly low by exploiting the Voronoi-Laguerre tessellation,
{\color{black} in agreement with Corollaries \ref{co:3.1}, \ref{co:3.2} and \ref{th:test}.}
\marginpar{NNN}
{\color{black}
Three cases may occur: (1) the sensing circle $\mathscr{C}$ of $s$ does not cover any point of its Voronoi-Laguerre polygon $V(\mathscr{C})$, (2)
the sensing circle  $\mathscr{C}$ only partially covers $V(\mathscr{C})$, (3) the sensing circle $\mathscr{C}$ completely covers the polygon $V(\mathscr{C})$.

In case (1),  Corollary \ref{co:3.1} states that
$s$ is certainly redundant.
In case (2), Corollary \ref{co:3.2} states that sensor $s$ cannot be turned off.
In case (3)  the sensor $s$ must evaluate the coverage of its Voronoi-Laguerre neighbors and of the sensors intersecting $V(\mathscr{C})$ which have null polygons, and determine its redundancy on the basis of Corollary \ref{th:test}. The mentioned corollaries set the limit to the number of nodes with which a node $s$ needs to exchange information in order 
to decide whether to deactivate or not.
}
%
%In fact, the set $\mathcal{N}_{S_A} (s)\cup \mathcal{N}_{S_A}^{\emptyset} (s)$  can be considerably 
%smaller than the set of all nodes in radio proximity of $s$, especially in dense networks.
%%%CPCP Novella questo e' un MAJOR . Qui e' carino poi pero' leggendo l'algo viene il dubbio che
%%ci sia piu' scambio di overhead e non meno. Parlandone con Simone lui diceva che secondo
%%lui limitare portava a diminuire il numero di iterazioni per convergere. Qui bisogna pensare e scrivere
%%quello che si puo' che sia convincente-due righe- perche' a me la considerazione che non sia
%%cosi' vero che e' un vantaggio dla punto di vista retistico e' venuta in mente e non mi sono convinta
%%del contrario ancora, parliamo con persone pratiche e quindi dobbiamo mettere in evidenza
%%i vantaggi pratici essendo TONS, altrimenti meglio downplay.

%---------------------------------------------------------------
%  THE ALGORITHM
%---------------------------------------------------------------
\section{The algorithm $\alg$}
\label{sec:algo}

$\alg$ is executed in parallel by 
all the sensors of the network. Its execution results in the selection of a subset of sensors to be kept awake while
the others go to sleep, i.e., they are put in a low energy modality or turned off.
$\alg$ also allows a node with adjustable radius that is awake to set its sensing range.
The obtained sensor activation and radius adjustment is used for a time, called \emph{operative time interval}, that lasts until $\alg$ is re-executed.
The operative time interval is not necessarily fixed since $\alg$ execution can be event-driven.%
\footnote{\ An event-driven reconfiguration requires that sensors operating in low power mode can be contacted 
by the sink by means of an interest dissemination. 
Deactivated nodes equipped with a wake-up radio\cite{wakeup}  can be woken up upon need and can 
therefore safely turn off their radio for the whole duration of the operative time interval.
If such extra HW is not available nodes in low power mode must periodically wake up according to  a very low
duty cycle so that changes in the mode of operation of the network can be signaled.}

Each sensor makes the decision about whether to (de)activate and about reducing its radius (if possible) iteratively. 
In order to do so, at each iteration $k$ each node determines its own Voronoi-Laguerre  polygon.
This requires the node to be aware of its one-hop neighbors (nodes it can communicate with directly), their location%
\footnote{\ This information may be obtained through extra hardware such as GPS, if available, or through one of the many 
localization schemes recently proposed.  
%%CPCP Simone aggiungi riferimento al capitolo di libro del Savvides e di Srivastava
%%su localization o ad eventuali survey piu' recenti ADD REFERENCE HERE AS IN THE 
} and their sensing radius.
%
%%%  Check if this information is enough for a node to compute the tessellation. 
%%What kind of info is passes, and to whom?
%
The iteration is then composed by two phases. During the first phase nodes with fixed radius 
decide whether to go to sleep or not.
In the second phase, the nodes with adjustable radius perform their  radius reduction.
Each node $s_i$ bases its decision on a parameter 
\marginpar{it was: $\alpha^{(k)} _i \in [0,1]$, }
$\alpha^{(k)} _i \in (0,1]$, 
which  depends on the energy gain  that the sensor will achieve by either going to sleep or by reducing its sensing radius.
This parameter is used differently depending on whether a node has a fixed or an adjustable radius.
Specifically, a node $s_{i}$ with fixed radius will go to sleep with probability $\alpha _i^{(k)}$ provided that there are neighboring nodes that are awake and {\color{black}redundantly cover its sensing circle}.
\marginpar{NNN}%
If $s_i$ has an adjustable radius it will reduce it by the fraction $\alpha _i^{(k)}$ of the maximum radius reduction that does 
not 
{\color{black} alter the coverage of its responsibility region}.
%leave the farthest point 
%%CPCP changed was farthest vertex
%of its responsibility region uncovered.
%
As we will prove in Section~\ref{sec:properties}, the iterative execution of the two phases 
leads to a network configuration in which there is no redundant fixed sensor and it is not possible to further 
reduce the radius of any adjustable sensor without creating new coverage holes.

\subsection{$\alg$ in details}
\label{ssec:details}

\subsubsection{Initialization}

$\alg$ is described by Algorithms~\ref{alg:selective_activation_FIXED} and~\ref{alg:selective_activation_ACTIVE}, for nodes with fixed and adjustable radius, respectively.
At the start of $\alg$ operations, each sensor sets the iteration counter $k$ and the value of its sensing radius (the maximum value in the case of sensors with adjustable radius). The flag ${\texttt{decision\_made}}$ is set to \texttt{false} indicating that the node is undecided.
The node remains awake and undecided until in one of the iterations it makes a final decision on the value of its sensing radius to be used till a new $\alg$ execution.

Initialization also includes the setting of a timer needed for protocol operations.

%---------------------------------------------------------------------
%%%COMPUTING THE PARAMETER ALPHA%%%%%%%%%
%---------------------------------------------------------------------
\subsubsection{Computing $\alpha _i^{(k)}$}
\label{sec:alpha}

{\color{black}
Consider the $k$-th iteration of $\alg$.
Let $S^{(k)}_\texttt{undecided}
\subseteq
S_A^{(k)}$ be the set of sensors that  have not made their final configuration decision, where 
 $S_A^{(k)}$ is the set of sensors that are still awake.
Similarly, $S^{(k)}_\texttt{decided}=S_A^{(k)}\setminus S^{(k)}_\texttt{undecided}$ is the set of sensors that are still awake and have already made their configuration decision.

Consider $s_i \in S^{(k)}_\texttt{undecided}$. Let 
${\mathscr{L}}^{(k)}(s_i)$  be the subset of $S^{(k)}_\texttt{undecided}$ including $s_i$ and all the undecided sensors that  are either Voronoi-Laguerre neighbors 
of $s_i$
or have a null polygon and their sensing circle intersects $\mathscr{C}_i$: ${\mathscr{L}}^{(k)}(s_i) = S^{(k)}_\texttt{undecided} \cap (\mathcal{N}_{S_A^{(k)}} (s_i)\cup \mathcal{N}_{S_A^{(k)}}^{\emptyset} (s_i) \cup \{ s_i \} ) $.
\marginpar{NNN: nota che ho aggiunto $s_i$ alla def del set ${\mathscr{L}}^{(k)}(s_i)$ altrimenti il valore del gain non \'e tra 0 e 1. Controllare che non fosse gi\'a negli N.}

Let also ${\mathscr{D}}^{(k)}(s_i) = \mathcal{S}_\texttt{decided}^{(k)}    \cap (\mathcal{N}_{S_A^{(k)}} (s_i)\cup \mathcal{N}_{S_A^{(k)}}^{\emptyset} (s_i) ) $
be the subset of the sensors that have already made their decision  and are either 
Voronoi-Laguerre neighbors of $s_i$ or have a null polygon and overlap the sensing circle $\mathscr{C}_i$.
}

The computation of the parameter $\alpha _i^{(k)}$ depends on the comparison {\color{black} between} $s_i$ and the nodes in 
${\mathscr{L}}^{(k)}(s_i)$ with respect to the decrease in energy consumption that is achievable through sensing radius reduction while 
ensuring coverage.
The comparison is motivated by the fact that these nodes are those that still have the chance to reduce their 
sensing radius and consequently their energy expenditure. 
The value of $\alpha _i^{(k)}$ should be higher for a node $s_i$ when choosing it for sensing radius 
reduction or for going to sleep leads to a better performance gain than choosing the other nodes in the neighborhood.

{\color{black}
The criterion we propose to compute $\alpha _i^{(k)}$ is based on the
 \emph{energy gain}, defined as the amount of energy that a sensor can save by reducing its sensing radius 
%to the farthest vertex
to the farthest point of the responsibility region
 (in case of sensors with adjustable radius) or by going to sleep (case of sensors with fixed sensing radius).

We recall that  $E_\texttt{sensing}()$ is the energy expenditure per unit time due to sensing, defined  in Equation \ref{eq:e_active_sensing}.
\marginpar{Cambiato questo pezzetto}
{\color{black}The energy gain of sensor $s_i$ in the $k$-th iteration is defined as $\Delta{E}_i^{(k)}= E_\texttt{sensing}(r^\texttt{fixed}_i)$  for sensors with fixed sensing radius. For sensors with adjustable sensing radius, it is either $\Delta{E}_i^{(k)}=E_\texttt{sensing}(r_i^{(k-1)})$ for sensors which have a null or an uncovered polygon, or  $\Delta{E}_i^{(k)}= E_\texttt{sensing}(r_i^{(k-1)})- E_\texttt{sensing}(d_E(s_i,f(\overline{V}^{(k)}(\mathscr{C}_{i}))))$, otherwise. Here 
$\overline{V}^{(k)}(\mathscr{C}_{i})=    V^{(k)}(\mathscr{C}_i) \setminus \cup_{{s_j} \in
{\mathscr{D}}^{(k)}(s_i)  } \mathscr{C}_j$.
}

The energy gain criterion sets the value of $\alpha_i^{(k)}$ as follows:
%\begin{equation}
%\alpha_i^{(k)} =
%\frac{ \Delta E_i^{(k)}-\Delta E^{\texttt{min }(k)}_i   }{\Delta E^{\texttt{max }(k)}_i-\Delta E^{\texttt{min }(k)}_i}.
%\label{eq:alfa}\end{equation}
%
\marginpar{NNN: ho cambiato questa def, mettendo l'$\alpha_\texttt{min}$ perch\'e altrimenti \'e possibile costruire situazioni in cui converge solo il primo nodo della lista e la convergenza degli altri partirebbe solo dopo la fine di questo, che per\'o non accadrebbe in tempo finito (cio\'e nemmeno lo start dei secondi). Immagina una scacchiera dove un nodo con alfa = 0 ha solo vicini con alfa=1 e viceversa. Se quelli con 1 ci mettono infinito, quelli con zero non partono mai. }
\begin{equation}
\alpha_i^{(k)} =
\max \left\{
\frac{ \Delta E_i^{(k)}-\Delta E^{\texttt{min }(k)}_i   }{\Delta E^{\texttt{max }(k)}_i-\Delta E^{\texttt{min }(k)}_i}, \alpha_\texttt{min}\right\},
\label{eq:alfa}\end{equation}
where 
the parameter $\alpha_\texttt{min}$ is an arbitrarily small constant, such that $0<\alpha_\texttt{min} \ll 1$,}
$\Delta{E}^{\texttt{max }(k)}_i= \max_{s_j \in {\mathscr{L}}^{(k)}(s_i)} \Delta E_j^{(k)}$ is the maximum achievable gain in the neighborhood of $s_i$ and $\Delta{E}^{\texttt{min }(k)}_i= \min_{s_j \in {\mathscr{L}}^{(k)}(s_i)} \Delta E_j^{(k)}$ is its minimum value.
{\color{black}
If $\Delta{E}^{\texttt{max }(k)}_i=\Delta{E}^{\texttt{min }(k)}_i$ we consider $\alpha_i^{(k)} =1$.
According to Eq. \ref{eq:alfa}, the more a node $s_i$ allows energy savings the higher is the probability that it is selected for going to sleep if $s_i$ is a fixed sensor, or the higher is the  reduction of sensing radius that is allowed if $s_i$ is an adjustable sensor.
{\color{black}
 This setting of $\alpha_\texttt{min}$ ensures that even the sensor with smallest potential energy gain 
 can make a decision that improves its energy expenditure. 
}

}

%%CPCP changed a bit what follows
{\color{black}
The energy gain criterion has been compared by means of extensive simulations
with several others, including one based on the 
 node residual energy
 and one based on an estimate of the node expected lifetime.
In all the scenarios the energy gain criterion showed superior performance.
Therefore, we will focus only on such a criterion for the remainder of the paper.
}

%-------------------------------------------------
%SARA   FIXED -----------------------------
%-------------------------------------------------

\subsubsection{$\alg$ for sensors with fixed sensing radius}
\label{sec:fixed_sara}
\marginpar{{\color{black}NNN: mi sembra che in questa sezione non ci sia niente da cambiare riguardo al caso di poligono completamente scoperto, infatti si parla semplicemente di verificare se il nodo � "redundant" non si dice come viene fatta questa verifica.}}
At the beginning of $\alg$ operations, all the sensors with fixed radius are awake and undecided.
Let us consider the $k$-th iterative step of $\alg$ ($k$-th execution of the $\texttt{while}$ cycle in 
Algorithm~\ref{alg:selective_activation_FIXED}).
The set of sensors that are still awake at the $k$-th iteration is referred to as 
$S_A^{(k)}=S^{(k)}_\texttt{fixed} \cup S^{(k)}_\texttt{adjustable}$.

Each undecided sensor $s_i \in  S^{(k)}_\texttt{fixed}$ performs an information exchange with its neighbors 
that are still undecided 
 to gather information regarding their radius and  position\footnote{
 {\color{black} It is not necessary to exchange information with the sensors that have already made their configuration 
decisions. 
%%CPCP added what follows
Also, the node location is communicated at the start of each SARA execution and only if the node
location has changed. }
 }.
With this information, $s_i$ is able  to construct its Voronoi-Laguerre polygon $V(\mathscr C_{i}^{(k)})$ and to 
determine the set
$\mathcal{N}_{S_A^{(k)}}(s_i)$.

%%CPCP changed what below. Novella please check
Node $s_i$ then informs its neighbors with which it has a sensing overlap  {\color{black} about the nullity of its polygon}.  
This information allows its neighbors to compute their sets $\mathcal{N}_{S_A^{(k)}}^\emptyset$.
Each node then evaluates  
 its redundancy status (according to Corollary \ref{th:test}).

% A sensor $s_i$ is redundant in the following cases. 
% \begin{enumerate} 
% \item 
% The polygon  $V({\mathscr C}_{i}^{(k)})$ is not covered by any sensor (indeed, this implies that the sensing circle ${\mathscr C}_{i}^{(k)}$ is 
% completely overlapped by other sensors because, if there were a point in $\mathscr{C}_i$ with no overlap with other sensors, this point would belong to $V{(\mathscr C}_{i}^{(k)})$ due to Theorem \ref{th:lag_coverage}).
%\item  
% The polygon  $V({\mathscr C}_{i}^{(k)})$ is completely  covered by other nodes in 
%%%CPCP changed
%$\mathcal{N}_{S_A^{(k)}}(s_i) \cup \mathcal{N}_{S_A^{(k)}}^\emptyset(s_i)$
% \end{enumerate}

% by sending a {\tt redundancy information message}. 
%
%%CPCP Nove c'erano sicuramente moolte inesattezze, check se ora corretto
If $s_i$ is not redundant at the $k$-th iteration, it cannot become redundant in any of the successive 
iterations because $\alg$ in each iteration can only reduce the number of sensors that can cover an area.
%%CPCP check if only Voronoi-Laguerre, it seems to me others don't care
Therefore, in the case of non redundancy, $s_i$ communicates this to 
the neighbors with a sensing overlap (sending a $\texttt{turn-on}$ message), ends 
the decision phase (setting the $\texttt{decision\_made}$ 
flag to $\texttt{true}$), and stays awake.

%%CP changed below
If sensor $s_i$ is redundant it communicates {\color{black} its potential energy gain
to the nodes in 
$\mathcal{N}_{S_A^{(k)}}(s_i) \cup \mathcal{N}_{S_A^{(k)}}^\emptyset(s_i)$.

Nodes with a null polygon also send their potential energy gain to all
their neighbors with sensing overlap.
Each node is  then able to construct the set ${\mathscr{L}}^{(k)}(s_i)$ and compute $\alpha _i^{(k)}$.}
{\color{black}
The calculus of $\alpha_i^{(k)}$ is executed by running the function $\texttt{get\_alpha}$ described in 
Algorithm \ref{alg:selective_activation_FUNCTIONS}.
}

{\footnotesize
\begin{algorithm}[!h]
\caption{Algorithm $\alg$ for fixed sensors}
\label{alg:selective_activation_FIXED}
%%\ls{1}
{\footnotesize
\newcommand{\algo}{\textbf{Algorithm }}
\algo $\alg$ executed by node $s_i$ \\
\SetKwBlock{With}{}{}
\hspace{0.5cm}
\parbox{10cm}{
\SetKw{init}{Initialization:\\}
\init
$k=0$\;
Back-off interval = $[0,t_\texttt{max}^\texttt{backoff} ]$\;
{$r_{i}^{(k)}=r_i^\texttt{fixed}$\;}
decision\_made=false\;
Exchange position information with neighbors\;
\vspace{0.3cm}
 {\bf Iterative Voronoi-Laguerre diagram construction:\\}
 \While{$!\texttt{decision\_made}$}
{
%%{
Exchange info on radius   with neighbors\; 
Construct the VorLag polygon $V^{(k)}({\mathscr C}_{i})$\; 
Exchange info  
on null polygons\;

 {\color{black} Evaluate redundancy and energy gain\;
\eIf{$s_i$ is not redundant}
{
\tcp{Case of fixed sensors that need to stay awake}
Send \texttt{turn-on} message\;
decision\_made=true\;
Stay awake\;
} 
{
\tcp{Case of redundant fixed sensor}
{\color{black}  Exchange info on energy gain}\; }
Build set $\mathscr{L}^{(k)}(s_i)$\;
$\alpha_i^{(k)} = $\FuncSty{get\_alpha($\mathscr{L}^{(k)}(s_i)$)}\;
Choose a random instant $t_i^* \in [0,  t_\texttt{max}^\texttt{backoff}]$\;
\While{$t<t_i^*$}{
Listen to update messages from the neighborhood\;
}
\eIf{$s_i$ is not redundant anymore}%cond
{
Send  \texttt{turn-on} message\;
decision\_made=true\;
Stay awake; 
}%end if !rredundant
{

With probability $\alpha^{(k)}_i$
\With{
Send  \texttt{turn-off} message\;
decision\_made=true\;
Go to sleep\;
}
}
} %fine else di if s_i is not redundant
$k=k+1$;

} %end while not decision made
} %end of parbox
} %end footnote size
\end{algorithm}
}

{\color{black}
{\small
\begin{algorithm}
{\footnotesize
%\caption{Functions used by $\alg$}
\caption{Function to compute parameter $\alpha _i$}
\label{alg:selective_activation_FUNCTIONS}
%\ls{1}

\newcommand{\func}{\textbf{Function }}
\func \FuncSty{get\_alpha(${\mathscr{L}}^{(k)}(s_i)$)
}\\
\hspace{0.5cm}
\parbox{10cm}{
{
Set $\Delta{E}^{\texttt{max }(k)}_i= \max_{s_j \in {\mathscr{L}}^{(k)}(s_i)} \Delta E_j^{(k)}$ and \\ 
$ \textbf{ }\textbf{ }\textbf{ }\textbf{ } \Delta{E}^{\texttt{min }(k)}_i= \min_{s_j \in {\mathscr{L}}^{(k)}(s_i)} \Delta E_j^{(k)}$\;
{\color{black}$\alpha_i^{(k)}=\max \left\{ \frac{ \Delta E_i^{(k)} -\Delta E^{\texttt{min }(k)}_i(n)  }{\Delta E^{\texttt{max }(k)}_i-\Delta E^{\texttt{min }(k)}_i} , \alpha_\texttt{min}\right\}$\;}
{\bf return }$\alpha_i^{(k)}$\;
}
}}
\end{algorithm}
}}
Since more than one sensor may decide to turn {\color{black}themselves}\marginpar{was:itself} off at the same iteration, possibly 
leaving coverage holes, we introduce a simple back-off scheme to avoid conflicting decisions.
More precisely, given a back-off interval $t^\texttt{backoff}_\texttt{max}$, each sensor $s_i$ chooses a random instant $t_i^* \in [0, t^\texttt{backoff}_\texttt{max}]$, hereafter called {\em backoff timeout}. It then waits for a time $t_i^*$, during which it considers all the messages received from the nodes in radio proximity that may make $s_i$ redundant.

After the expiration of the backoff timeout $t_i^*$, the sensor $s_i$
verifies if it is still redundant or not.
If it is not redundant anymore, $s_i$ decides to stay awake
and sets the $\texttt{decision\_made}$ flag to $\texttt{true}$.
It then communicates this decision to its neighbors by sending them a $\texttt{turn-on}$ message.
\marginpar{Rimosso: lo fanno solo gli adjustable ed \'e spiegato l\'i.\\
Upon receiving this message, the neighbors of $s_i$ can assume that $\mathscr{C}_i$ is covered.
Therefore, they can subtract this area from the overall AoI when computing the fraction of their own polygon that needs to be covered.
}

If instead $s_i$ is still redundant, it goes to sleep with probability
$\alpha^{(k)}_i$. In the case the node goes to sleep, it sets the $\texttt{decision\_made}$ flag to $\texttt{true}$ and
communicates its decision by sending  a $\texttt{turn-off}$ message.

Notice that a redundant sensor with fixed sensing radius does not necessarily go to sleep  at the first iteration.
Therefore, the execution of a single iteration of the algorithm does not eliminate the existing redundancy completely.
Nevertheless, at each iteration the sensors with higher priority are the ones that more likely will go to sleep. 
The other redundant sensors will eventually  either go to sleep or become non-redundant in one of the 
subsequent iterations depending on the decisions of their neighbors.

%-----------------------------------------
%------------ SARA   ADJUSTABLE 
%-----------------------------------------
\subsubsection{$\alg$ for sensors with adjustable sensing radius}
\label{sec:sara_adjustable}

All sensors with adjustable radius start executing $\alg$ by setting their radii to their maximum value.
They are also all undecided.
As before, we consider the generic $k$-th iteration of $\alg$ ($k$-th execution of the $\texttt{while}$ 
cycle in Algorithm~\ref{alg:selective_activation_ACTIVE}).
%
%%%CPCP mi sembra che non serva qui ridondante cancellato
%Each sensor $s_i \in S_\texttt{adjustable}$ communicates with its neighbors to disseminate and gather 
%information about radius and position.
%fine CPCP
At each algorithm iteration, the radius reduction decision at node $s_i \in S_\texttt{adjustable}^{(k)}$ is made 
after the back-off phase of its neighbors in $S_\texttt{fixed}^{(k)}$.  %
At the end of such a phase, every sensor $s_i \in S^{(k)}_\texttt{adjustable}$ updates its Voronoi-Laguerre 
polygon $V^{(k)}(\mathscr{C}_i)$, updates its information for computing $\alpha_{i}^{(k)}$
if any of its fixed radius neighbor {\color{black} turned itself}  off during the back-off, 
and determines the sets 
${\mathscr{L}}^{(k)}(s_i)$ and ${\mathscr{D}}^{(k)}(s_i)$.
In this way, the sensor $s_i$ has the necessary information to calculate the maximum radius reduction
that does not create coverage holes. 
Notice that in this calculus, the sensors belonging to the two sets 
$S^{(k)}_\texttt{decided}$ and $S^{(k)}_\texttt{undecided}$ play a different role
since the sensors in $S_\texttt{decided}^{(k)}$ will no longer change their configuration for the current execution of $\alg$, therefore their sensing circles can be considered definitely covered and can be subtracted from the responsibility region of those sensors that still have to make their configuration decision.
This is the reason why the maximum radius reduction for $s_i$ is computed as the one that {\color{black}does not alter}\marginpar{rimosso: guarantees,
dato che c'\'e il caso di poligono tutto scoperto} the coverage of the region $
 \overline{V}^{(k)}(\mathscr{C}_{i})=V^{(k)}(\mathscr{C}_i)\setminus \cup_{{s_j} \in
\mathscr{D}^{(k)}(s_i)
} \mathscr{C}_j$.

\marginpar{Qui ho un po' deviato da quanto concordato, si era detto che nel caso completely uncovered i nodi si possono spegnere immediatamente, qui invece gareggiano. Il motivo \'e che anche altri vicini possono avere il poligono completely uncovered, e quindi potrebbero spegnersi pure loro cambiando la forma del poligono di $s_i$. Alla fine questo processo termina sempre con uno spegnimento di $s_i$ ma questa cosa andrebbe dimostrata, perci\'o non ho affrontato il problema.}

{\color{black}
The 
minimum extent  of $s_i$ sensing radius reduction $\overline{d}_i^{(k)}$ is based on Corollaries \ref{co:3.1}, \ref{co:3.2}, and \ref{th:test}. 
If $s_i$ is not able to cover any point of $\overline{V}^{(k)}(\mathscr{C}_{i}))$ then $\overline{d}_i^{(k)}=0$ (due to Corollary  \ref{co:3.1}). If $s_i$ only partially covers its polygon (
$d_{E}(s_{i},c(\overline{V}^{(k)}(\mathscr{C}_{i})))<r_{i}^{(k)} < d_{E}(s_{i},f(\overline{V}^{(k)}(\mathscr{C}_{i})))  $ where  $c(\overline{V}^{(k)}(\mathscr{C}_{i}))$ is the closest point of 
 $\overline{V}^{(k)}(\mathscr C_{i})$ from $s_i$
)
then $\overline{d}_i^{(k)}=r_i^{(k)}$ (the radius does not change, as determined by Corollary \ref{co:3.2}). Finally, if $s_i$ completely covers its polygon, 
  $\overline{d}_i^{(k)}$ is set to  $d_E(s_i,f(\overline{V}^{(k)}(\mathscr{C}_{i})))$, that is the Euclidean distance between $s_i$ and the farthest point of $\overline{V}^{(k)}(\mathscr C_{i})$.
  }

The sensor $s_i$, whose radius at the $k$-th iteration is $r_i^{(k)}$, will then reduce its radius to an intermediate value in the range $[\overline{d}_i^{(k)},r_i^{(k)}]$, whose position is determined by the priority value $\alpha _i^{(k)}$.
Therefore $s_i$ calculates the new value of its radius $r_i^{(k+1)}$ as
$ r_i^{(k+1)}= r_i^{(k)} -  \alpha_i^{(k)} \cdot (  r_i^{(k)} -   \overline{d}_i^{(k)}  )      $.

%%%CPCP It seems to me what follows can be deleted as it is already clear what we do
%More precisely, the sensor $s_i$ reduces its sensing radius to the minimum value (the distance
%of the farthest vertex of $\overline{V}^{(k)}(\mathscr{C}_{i})$ from $s_{i}$) only if $s_i$ is 
%the sensor with the lowest expected lifetime 
%%%CPCP quanto sopra mi sembra tra l'altro sbagliato, un mix di due criteri?
%in the set ${\mathscr{L}}(s_i)$, otherwise it decreases its radius 
%in proportion
%to the difference between the maximum energy gain among the sensors in ${\mathscr{L}}^{(k)}(s_i)$ and
% its own  energy gain, as indicated in Algorithm \ref{alg:selective_activation_ACTIVE}. 
 %
 Each sensor belonging to $S_\texttt{adjustable}$ that reduces its radius affects the potential decisions of its Voronoi-Laguerre neighbors, so the process is iterated until no further reductions are possible, because either 
%%%CPCP Novella: un dubbio ma sono strict farthest vertex OPPURE strict farthest points
%%questo e' rimasto un po' da chiarire come notazione anche nella discussione dei loose
a strict farthest vertex is on the boundary of the sensing circle, or the Voronoi-Laguerre polygon of the sensor gradually became null, and the sensor is put to sleep.

 {\footnotesize
 \begin{algorithm}[!h]
 \caption{Algorithm $\alg$ for adjustable sensors}
 \label{alg:selective_activation_ACTIVE} 
\ls{1}
{\footnotesize
 \newcommand{\algo}{\textbf{Algorithm }}
 \algo $\alg$ executed by node $s_i$ \\
\parbox{10cm}{
 \tcp{before starting the next operative time interval,
 the sensor $s_i$ works with the radius  \\ determined at the previous execution of $\alg$}
\SetKw{init}{Initialization:\\}
 \init
 $k=0$\;
 Back-off interval = $[0,t_\texttt{max}^\texttt{backoff} ]$\;
 {$r_{i}^{(k)}=r_i^\texttt{max}$\;}
 \texttt{decision\_made}=\texttt{false}\;
 Exchange position information with neighbors\;
 \vspace{0.3cm}
 
  {\bf Iterative Voronoi-Laguerre diagram construction:\\}
  \While{$!\texttt{decision\_made}$}
  {
 Exchange info on radius with neighbors\;
   Construct the VorLag polygon $V^{(k)}(\mathscr C_{i})$\;
   Exchange redundancy/polygon nullity information messages and potential energy gain\;
   \While{$t<t_\texttt{max}^\texttt{backoff}$}{
   listen to update messages from the fixed nodes in the neighborhood\;
   %update info about the nodes  in $\mathcal{N}(s_i)$ and $\mathcal{N}^\emptyset(s_i)$ \;
   }
   Update the VorLag polygon $V^{(k)}(\mathscr C_{i})$\;
% Calculate the sets   $\mathcal{N}(s_i)$ and $\mathcal{N}^\emptyset(s_i)$ \;
 Build sets $\mathscr{L}^{(k)}(s_i)$ and ${\mathscr{D}}^{(k)}(s_i)$\;
% Let $\mathcal{N}^\texttt{fixed}(s_i)= \{s_j
  %\textrm{: } s_j \in ((\mathcal{N}(s_i) \cup \mathcal{N}^\emptyset(s_i)) \cap
   %\mathcal{S}_\texttt{fixed})$\\ $\textrm{\hspace{2.4cm}}\wedge s_j \textrm{ is awake}\}$\;
% 
  Let $\overline{V}^{(k)}(\mathscr{C}_{i})=
  V(\mathscr{C}_{i}) \setminus \cup_{{s_j} \in
{\mathscr{D}}^{(k)}(s_i)} \mathscr{C}_j $\;
{\color{black}
  Let $f(\overline{V}^{(k)}(\mathscr{C}_{i}))$ be the farthest point of 
 $\overline{V}^{(k)}(\mathscr C_{i})$ from $s_i$\;
 Let $c(\overline{V}^{(k)}(\mathscr{C}_{i}))$ be the closest point of 
 $\overline{V}^{(k)}(\mathscr C_{i})$ from $s_i$\;

\eIf{  $(d_{E}(s_{i},c(\overline{V}^{(k)}(\mathscr{C}_{i})))<r_{i}^{(k)} < d_{E}(s_{i},f(\overline{V}^{(k)}(\mathscr{C}_{i})))  ) 
\vee\\ (f(\overline{V}^{(k)}(\mathscr{C}_{i}))$
 is a strict farthest ) $\vee (r_i^{(k)} = 0))$}
 {
 \tcp{reached minimum radius}
 $\texttt{decision\_made}=\texttt{true}$\;}
 { 
\eIf{$r_{i}^{(k)}< d_{E}(s_{i},c(\overline{V}^{(k)}(\mathscr{C}_{i})))$}
  {
 \tcp{completely uncovered polygon}
$ \overline{d}_i^{(k)}=0$\;
  }
{ 
  %\tcp{ $(r_{i}^{(k)}\geq d_{E}(s_{i},f(\overline{V}^{(k)}(\mathscr{C}_{i})))  )}
 
  $\overline{d}_i^{(k)}=d_{E}(s_{i},f(\overline{V}^{(k)}(\mathscr C_{i})))$\;
  }
 $\alpha_i = get\_alpha(\mathscr{L}(s_i))$\;
$r_i^{(k+1)}= r_i^{(k)} -  \alpha_i^{(k)} (  r_i^{(k)} -   \overline{d}_i^{(k)}  ) $\;
 $k=k+1$\;
 
 }
  }%end_while
 }%end_parbox
 }
 
 \eIf{$r_{i}^{(k)}=0$}
  {
 \tcp{null {\color{black} or completely uncovered} polygon}
  go to sleep\;
}
{
Adjust the sensing radius to $r_{i}^{(k)}$\;
}
} %end footnotesize
 \end{algorithm}
}

%%%%%%%%%%%%%%%%%%
%  PROPERTIES
%%%%%%%%%%%%%%%%%%%%%%%%%%%%%
\section{Properties of $\alg$} \label{sec:properties}

{\color{black}
The execution of $\alg$  on  a set of sensors $\mathcal{S}$ leads to a final configuration 
%%CPCP slightly changed
that will be hereby called {\em cover set}.
In the following we will shortly denote with $S_\texttt{SARA}$  such a cover set, where $S_\texttt{SARA}$ is a set of awake sensors with their radius configuration decided by $\alg$.
The following theorem 
shows that the cover set calculated by $\alg$ provides the same coverage as the starting configuration  
(the one where all sensors are active at maximum radius).
%
%
%\medskip
\marginpar{COVERAGE EQUIVALENCE}
%ENUNCIATO%
\begin{theorem}\label{th:correctness}
{\bf{\color{black}(Coverage equivalence)}} Consider a set of adjustable and fixed sensors $\mathcal{S}=\mathcal{S}_\texttt{adjustable} \cup \mathcal{S}_\texttt{fixed}$. 
Let  $\mathscr{A} \subseteq AoI$ be the area that the sensors in $\mathcal{S}$ are able to cover if they are all active and the adjustable sensors work at
their maximum radius.
Let $S_\texttt{SARA}$ be the cover set calculated by  $\alg$. The coverage extension of $S_\texttt{SARA}$ is equal to $\mathscr{A}$.
\end{theorem}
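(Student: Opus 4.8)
The plan is to prove the two inclusions separately. The inclusion of the coverage of $S_\texttt{SARA}$ into $\mathscr{A}$ is immediate: $\alg$ never increases a sensing radius and never re-awakens a sleeping node, so at every instant the union of the awake sensing circles is contained in the union obtained at maximum radii, which by definition is $\mathscr{A}$. Hence the whole content of the theorem is the reverse inclusion $\mathscr{A}\subseteq \mathrm{cov}(S_\texttt{SARA})$, namely that no point is ever lost. I would establish this through a coverage-preservation invariant together with an induction on the iteration index $k$. Letting $\mathscr{A}^{(k)}$ denote the area covered by the awake sensors with the radii they hold at the start of iteration $k$, the claim is $\mathscr{A}^{(k)}=\mathscr{A}$ for all $k$. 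The base case $k=0$ holds by definition, since all sensors are then awake at maximum radius. Because $\alg$ is guaranteed to terminate (Theorem~\ref{le:optimalVariable}), the invariant applied at the final iteration yields $\mathrm{cov}(S_\texttt{SARA})=\mathscr{A}$, which combined with the trivial inclusion gives equality.

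For the inductive step I would treat the two phases of an iteration separately. In the fixed-sensor phase a node $s_i$ goes to sleep only after it has been declared \emph{redundant}, i.e. $\mathscr{C}_i\subseteq\bigcup_{j\neq i}\mathscr{C}_j$ over the awake set; by Corollary~\ref{th:test} this test is correctly decided using only the neighbors in $\mathcal{N}_{S_A^{(k)}}(s_i)\cup\mathcal{N}_{S_A^{(k)}}^{\emptyset}(s_i)$, so deactivating $s_i$ \emph{in isolation} loses no coverage. The delicate point is that several redundant fixed sensors may decide concurrently, and I would resolve it by invoking the back-off: after its timeout $s_i$ re-evaluates redundancy in light of the \texttt{turn-off} messages already received and remains awake as soon as a neighbor's departure has made it non-redundant. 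Thus the set of nodes that actually sleep in one iteration is such that each of their circles stays covered by a surviving awake sensor, and $\mathscr{A}^{(k)}$ is preserved.

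For the adjustable-sensor phase I would lean on Theorem~\ref{th:lag_coverage}, by which every covered point of $V^{(k)}(\mathscr{C}_i)$ is covered by its generator $s_i$. A sensor subtracts from its responsibility region only the circles of \emph{decided} neighbors, $\overline{V}^{(k)}(\mathscr{C}_i)=V^{(k)}(\mathscr{C}_i)\setminus\bigcup_{s_j\in\mathscr{D}^{(k)}(s_i)}\mathscr{C}_j$, and reduces its radius to at most $\overline{d}_i^{(k)}=d_E(s_i,f(\overline{V}^{(k)}(\mathscr{C}_i)))$. Since $f(\overline{V}^{(k)}(\mathscr{C}_i))$ is by definition the point of $\overline{V}^{(k)}(\mathscr{C}_i)$ farthest from $s_i$, the reduced circle still contains all of $\overline{V}^{(k)}(\mathscr{C}_i)$; the complementary part $V^{(k)}(\mathscr{C}_i)\setminus\overline{V}^{(k)}(\mathscr{C}_i)$ lies inside circles of decided sensors, whose configuration is frozen for the current $\alg$ execution, and so remains covered as well. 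Corollaries~\ref{co:3.1} and~\ref{co:3.2} dispatch the degenerate cases, giving $\overline{d}_i^{(k)}=0$ for a null or completely uncovered polygon and no reduction for a partially covered one. Relying on decided neighbors only in $\overline{V}^{(k)}$ is precisely what makes the concurrent reductions safe, since an undecided neighbor may itself be shrinking and must not be counted upon.

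I expect the main obstacle to be exactly this concurrency within a single iteration. The residual difficulty is the rare configuration in which a boundary farthest vertex is \emph{loose} for two or more generators at once: a simultaneous reduction there could open a hole. I would close this by appealing to the decision-serialization scheme for loose farthest vertices described in Section~\ref{sec:skippable} and the Appendix, which forces such reductions to occur one at a time so that each is individually coverage-preserving. Once every elementary action in every iteration is shown to preserve $\mathscr{A}^{(k)}$, the invariant follows by induction, and together with the trivial inclusion it establishes $\mathrm{cov}(S_\texttt{SARA})=\mathscr{A}$.
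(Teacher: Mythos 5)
Your proposal is correct and follows essentially the same route as the paper's own proof: an induction on the iteration index in which coverage is shown to be preserved polygon by polygon, with fixed sensors handled via the back-off/redundancy argument and adjustable sensors via exactly the paper's decomposition of $V^{(k)}(\mathscr{C}_i)$ into $\overline{V}^{(k)}(\mathscr{C}_i)$ (still covered by the reduced circle) and its complement (covered by the frozen circles of decided sensors), both resting on Theorem~\ref{th:lag_coverage}. The only additions are cosmetic: you state the trivial inclusion explicitly and invoke the loose-farthest serialization scheme, which the paper leaves implicit in this proof.
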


%PROOF$
\begin{proof}
Let us denote with $S^{(k)}_\texttt{SARA}$ the cover set determined by $\alg$ at the $k$-th iteration, with $S^{(0)}_\texttt{SARA}=S$.
Let us also denote with $\mathscr{A}^{(k)}\subseteq AoI$ the portion of the $AoI$ that is covered by $S^{(k)}_\texttt{SARA}$, 
therefore 
$$\mathscr{A}^{(k)}=\cup_{s_j \in S^{(k)}_\texttt{SARA}} \mathscr{C}^{(k)}_j.$$

The Voronoi-Laguerre diagram of $S^{(k)}_\texttt{SARA}$ creates a partition of the $AoI$.
%DON'T DELETE :
%$$ \cup_{s_i \in S^{(k)}_\texttt{SARA}} V(\mathscr{C}^{(k)}_i)=AoI \textrm{ } \wedge\textrm{ } V(\mathscr{C}^{(k)}_i) \cap V(\mathscr{C}^%{(k)}_j)=\emptyset,\textrm{ } \forall s_i,\textrm{ } s_j \in S^{((k))}_\texttt{SARA} .$$
Therefore,
%In
in order to prove that the coverage extension does not decrease after the algorithm execution, it is enough to
prove that, at each iteration, the coverage of each polygon is preserved, that is:

\begin{equation}
V(\mathscr{C}^{(k)}_i)   \cap \mathscr{A}^{(k)}
  \subseteq 
   \mathscr{A}^{(k+1)}, \forall s_i \in S^{(k)}_\texttt{SARA}.
\label{eq:coverage_preserved}
\end{equation}

Regarding fixed sensors, $\alg$ allows  them to go to sleep one at a time and only if their polygon is already covered by other sensors, so if one of them decides to go to sleep the coverage of its polygon does not decrease, thus guaranteeing that Eq. \ref{eq:coverage_preserved} is trivially verified for fixed sensors.

For what concerns the case of adjustable sensors, let us consider any sensor $s_i$ still active in the $k$-th iteration.
Theorem
\ref{th:lag_coverage} affirms that the covered area of $V({\mathscr C}^{(k)}_i)$ is
all covered by $s_i$. 
This means that, for any $s_i \in S^{(k)}_\texttt{SARA}$ and for any iteration $k$:
 $$V(\mathscr{C}^{(k)}_i)   \cap \mathscr{A}^{(k)} \stackeq{ Th. \ref{th:lag_coverage}} V(\mathscr{C}^{(k)}_i)   \cap  \mathscr{C}^{(k)}_i.$$
 Therefore in order to prove Eq. \ref{eq:coverage_preserved}, it is sufficient to prove that 
 
 \begin{equation}
V(\mathscr{C}^{(k)}_i)   \cap  \mathscr{C}^{(k)}_i
  \subseteq 
   \mathscr{A}^{(k+1)}, \forall s_i \in S^{(k)}_\texttt{SARA} \cap S_\texttt{adjustable}.
\label{eq:coverage_preserved_bis}
\end{equation}

{\color{black}
 $\alg$ reduces the radius of an adjustable sensor $s_i$ to a value such that the coverage of the region 
$\overline{V}(\mathscr{C}_i^{(k)})=V(\mathscr{C}_i^{(k)}) \setminus
(\cup_{s_j \in  \mathscr{D}_i^{(k)}      } \mathscr{C}_j^{(k)})$ is not altered.}

By the definition of $\mathscr{D}_i^{(k)}$, sensors belonging to $\mathscr{D}_i^{(k)}$ are such that their sensing circles do not change in the following iterations, therefore if 
 $s_j  \in \mathscr{D}_i^{(k)}$ then $\mathscr{C}_j^{(k)}=\mathscr{C}_j^{(k+1)}$.

Let us consider a further partition of $V(\mathscr{C}_i^{(k)})$ in the following two subsets: 

$V_{i,k}^1\triangleq  \overline{V}(\mathscr{C}_i^{(k)})=V(\mathscr{C}_i^{(k)}) \setminus
(\cup_{s_j \in  \mathscr{D}_i^{(k)}      } \mathscr{C}_j^{(k)})$, and $V_{i,k}^2 \triangleq  V(\mathscr{C}_i^{(k)} )\setminus  \overline{V}(\mathscr{C}_i^{(k)}) = 
V(\mathscr{C}_i^{(k)} )\cap
( \cup_{s_j \in  \mathscr{D}_i^{(k)}      } \mathscr{C}_j^{(k)})$.

We will now prove
that Eq. \ref{eq:coverage_preserved_bis} is verified  by separately  considering the two subsets $V_{i,k}^1$ and $V_{i,k}^2$.
Let us first consider $V_{i,k}^1$.
$$V_{i,k}^1 \cap \mathscr{C}_i^{(k)} \stackeq{ {\texttt{SARA}}} V_{i,k}^1 \cap \mathscr{C}_i^{(k+1)}  \subseteq \mathscr{C}_i^{(k+1)} \subseteq  \mathscr{A}^{(k+1)}.$$

 %the now need to prove that $V_1 \cap $\mathscr{C}_i^{(k)} \subseteq \mathscr{A}^{(k+1)}$ and that $V_2$ .

We now show that the same property holds for $V_{i,k}^2$.

$$V_{i,k}^2 \cap \mathscr{C}_i^{(k)} \subseteq   V_{i,k}^2  = V(\mathscr{C}_i^{(k)}) \cap (\cup_{s_j \in  \mathscr{D}_i^{(k)} } \mathscr{C}_j^{(k)}        )   \stackeq{ {\texttt{Def of } \mathscr{D}_i^{(k)}}}$$
$$
= V(\mathscr{C}_i^{(k)}) \cap (\cup_{s_j \in  \mathscr{D}_i^{(k)} } \mathscr{C}_j^{(k+1)} ) \subseteq \cup_{s_j \in  \mathscr{D}_i^{(k)} } \mathscr{C}_j^{(k+1)} \subseteq \mathscr{A}^{(k+1)}.$$

Since $V( \mathscr{C}_i ^{(k)}) \cap {\mathscr{C}_i^{(k)}} = (V_{i,k}^1 \cup V_{i,k}^2) \cap {\mathscr{C}_i^{(k)}}  \subseteq  \mathscr{A}^{(k+1)}$,
Eq.  \ref{eq:coverage_preserved} is verified.

%%%%CPCP da aggiungere qualcosa di questo tipo
%If sensor $s_i \in \mathcal{S}_\texttt{adjustable}$ generates a loose
%farthest it behaves in the same way until its farthest vertex and its neighbors' ones are not boundary.
%In all these iterations eq.\ref{equation_it} holds. Sensor $s_i$ 
%and its neighbors with a loose farthest then, one by one, consider whether they can farther reduce
%their radius. When $s_i$ becomes the head of the pecking order 
%it reduces its radius to a value which ensures  coverage of
%the responsibility region defined by eq.\ref{eq3}.
%Therefore, the following holds:
%$$\mathscr{A}^{(k)}=\cup_{s_j \in S^{(k)}_\texttt{SARA}} \mathscr{C}^{(k)}_j.$$

%
%%%CPCP questo potrebbe essere non corretto in latex check
%where $$\mathscr{A}^{(k)_{$\$i}} =\cup_{s_j \in S^{(k), j!=i}_\texttt{SARA}} \mathscr{C}^{(k)}_j.$$
%\begin{equation} 
%\label{equation_it2}
%$$ \mathscr{A}^{(k)} =\cup_{s_j \in S^{(k)}_\texttt{SARA}} \mathscr{C}^{(k)}_j=
%\mathscr{A}^{(k)_{$\$i}} \cup ({\mathscr C}^{(k)}_i  
%\setminus
%\mathscr{A}^{(k)_{$\$i}})=
%%{\stackrel{\footnotesize{Th. \ref{th:lag_coverage}} }  {=} }
%\stackeq{ \textrm{SARA}}}
%\mathscr{A}^{(k+1)_{$\$i}} \cup ({\mathscr C}^{(k)}_i  
%\setminus
%\mathscr{A}^{(k)_{$\$i}})
%\subseteq 
%%%CPCP qui non so come fare in Latex sopra l'inclusione dovremmo scrivere eq. 3
%\mathscr{A}^{(k+1)_{$\$i}} \cup 
%{\mathscr C}^{(k+1)}_i =\mathscr{A}^{(k+1)}$$
%\end{equation}
%%%CPCP attenzione la notazione non e' proprio completamente allineata tra eq. 3 e questa parte 
%%%va controllato bene.
\end{proof}

\reversemarginpar

\normalmarginpar
%\medskip 

\marginpar{CONVERGENZA SOLO ADJUSTABLE }
\begin{theorem}\label{le:optimalVariable}
{\bf (Convergence in the case of adjustable sensors)}
Given a set $\mathcal{S}=\mathcal{S}_\texttt{adjustable}$ of only adjustable sensors, under the execution of  $\alg$,
each sensor will converge to a final configuration decision.
\end{theorem}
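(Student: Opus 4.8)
The plan is to regard the radius of each adjustable sensor as a real sequence $\{r_i^{(k)}\}_{k\ge 0}$ evolving under the update rule $r_i^{(k+1)}= r_i^{(k)}-\alpha_i^{(k)}\bigl(r_i^{(k)}-\overline{d}_i^{(k)}\bigr)$, to prove that every such sequence converges, and then to identify the common limit as a genuine fixed point of the algorithm, which is exactly what ``a final configuration decision'' means. First I would show that each sequence is non-increasing and bounded below. Boundedness below by $0$ is immediate. For monotonicity I note that a reduction is executed only in the branch where $s_i$ either completely covers its residual polygon $\overline{V}^{(k)}(\mathscr{C}_i)$ --- so that the farthest point $f(\overline{V}^{(k)}(\mathscr{C}_i))$ lies inside $\mathscr{C}_i$ and hence $\overline{d}_i^{(k)}=d_E(s_i,f(\overline{V}^{(k)}(\mathscr{C}_i)))\le r_i^{(k)}$ --- or has a completely uncovered polygon, where $\overline{d}_i^{(k)}=0\le r_i^{(k)}$. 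Since $\alpha_i^{(k)}\in(0,1]$ and $r_i^{(k)}-\overline{d}_i^{(k)}\ge 0$, the updated value $r_i^{(k+1)}$ lies in $[\overline{d}_i^{(k)},r_i^{(k)}]$, so the radius never increases.

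By the monotone convergence theorem each $r_i^{(k)}$ converges to some limit $r_i^\star\in[0,r_i^\texttt{max}]$. To show the dynamics actually settle rather than merely that the numbers converge, I would telescope the per-step decrements: $\sum_{k\ge 0}\bigl(r_i^{(k)}-r_i^{(k+1)}\bigr)=r_i^{(0)}-r_i^\star\le r_i^\texttt{max}$ is finite. Because each decrement equals $\alpha_i^{(k)}\bigl(r_i^{(k)}-\overline{d}_i^{(k)}\bigr)\ge \alpha_\texttt{min}\bigl(r_i^{(k)}-\overline{d}_i^{(k)}\bigr)$ with $\alpha_\texttt{min}>0$, finiteness of the sum forces the gap $r_i^{(k)}-\overline{d}_i^{(k)}\to 0$. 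Hence in the limit each still-active sensor's radius coincides with the distance to the farthest point of its residual responsibility region: no further reduction is possible and the sensor has settled on its final radius, while any sensor whose residual polygon collapses reaches $r_i^\star=0$ and goes to sleep.

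The role of $\alpha_\texttt{min}$ is exactly what makes the limiting configuration a bona fide fixed point: were $\alpha_i^{(k)}$ permitted to vanish, a sensor that perpetually has the smallest energy gain in its neighborhood would freeze at a radius strictly above $\overline{d}_i^{(k)}$, converging to a value that is \emph{not} stable and possibly stalling neighbors that depend on its reduction, so the uniform floor $\alpha_\texttt{min}>0$ is indispensable. I expect the main obstacle to be the coupling of the system: $\overline{d}_i^{(k)}$ depends on the continually changing polygons and radii of the Voronoi-Laguerre neighbors of $s_i$, so the sequences are not independent and no single sensor can be analyzed in isolation. The saving observation is that the monotonicity of each individual radius holds regardless of what the neighbors do --- every admissible update is a non-increase --- which decouples the convergence-of-values argument from the interaction; the coupling re-enters only in the more delicate verification that the vanishing-gap limit is a fixed point shared consistently across the network, and this is precisely what the uniform lower bound $\alpha_\texttt{min}$ secures.
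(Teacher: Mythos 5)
Your core convergence argument matches the paper's: radii are non-increasing and bounded below, hence convergent, and since every executed decrement is at least $\alpha_\texttt{min}\bigl(r_i^{(k)}-\overline{d}_i^{(k)}\bigr)$ with $\alpha_\texttt{min}>0$, finiteness of the telescoped sum forces $r_i^{(k)}-\overline{d}_i^{(k)}\to 0$. (Your telescoping is in fact slightly cleaner than the paper's version of this step, which passes to limits inside the recurrence without first justifying that $\lim_k \alpha_i^{(k)}$ and $\lim_k \overline{d}_i^{(k)}$ exist.) The paper's case split --- polygon completely covered, partially covered (immediate termination by Corollary \ref{co:3.2}), completely uncovered (geometric decay of the radius to $0$) --- is covered implicitly by your argument, so none of that is the problem.

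The genuine gap is in your final identification step: the claim that once the radius coincides with the distance to the farthest point of the residual region, ``no further reduction is possible and the sensor has settled on its final radius.'' This is exactly what fails when the limiting boundary farthest vertex is \emph{loose} rather than \emph{strict}, in the paper's terminology. In that case the sensor \emph{can} shrink below the farthest-vertex distance without creating a coverage hole, and the algorithm is designed to do so through the decision serialization scheme for loose farthest vertices: the sensor subtracts from its polygon the circles of the other sensors generating the common farthest vertex and reduces its radius to cover only the farthest point of what remains. So the limit produced by your monotone argument need not be a fixed point of the algorithm, and the convergence analysis must be restarted after each such loose-farthest reduction. The paper closes this loop with an argument you are missing entirely: after a loose-farthest reduction, $s_i$ can never again generate a loose farthest with the same set of neighbors (that would require some sensing radius to increase, which $\alg$ forbids), and since $s_i$ has only finitely many neighbors, only finitely many loose-farthest episodes can occur; hence the sensor eventually reaches a strict-farthest (or partial-coverage, or zero-radius) configuration, which genuinely is final. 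Without this finiteness argument, identifying your monotone limit with the sensor's ``final configuration decision'' is unjustified --- and, incidentally, the same omission would break the paper's later Lemma \ref{lemma:farthestWithoutEpsilon} and the Pareto optimality theorem, both of which rely on the final configuration having \emph{strict} boundary farthest vertices.
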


 \begin{proof}
 
 Consider the adjustable sensor $s_i \in \mathcal{S}$, positioned in $\mathrm{C}_i$.
  Let $r_i^{(k)}$ be its sensing radius at the  $k$-th iteration of $\alg$, and let 
 $\mathscr{C}^{(k)}_i$  and $V(\mathscr{C}^{(k)}_i)$ be its sensing circle and its Voronoi-Laguerre polygon, respectively.
 We distinguish three cases:
 (1)  $V(\mathscr{C}_i^{(k)})$ is completely covered (notice that this case includes the situation of null polygons which can be considered a degeneration of non null polygons), (2) $V(\mathscr{C}_i^{(k)})$ is only partially covered
and (3) $V(\mathscr{C}_i^{(k)})$ is not covered  (neither by $s_i$ nor by any other sensor, due to Theorem \ref{th:lag_coverage}).

{\em Convergence in case (1).}
Theorem \ref{th:lag_coverage} ensures that $V(\mathscr{C}_i^{(k)})$ is completely covered by $s_i$.
Since $\alg$ preserves coverage (for Theorem \ref{th:correctness}), the new polygon and its farthest point  will also be covered by $s_i$ at any successive iteration of $\alg$.
We recall that $\overline{V}(\mathscr{C}^{(k)}_i)=V(\mathscr{C}_i^{(k)} )\setminus
( \cup_{s_j \in  \mathscr{D}_i^{(k)}      } \mathscr{C}_j^{(k)})$. We define 
$\overline{d}_i^{(k)}=d_E(s_i,f(\overline{V}(\mathscr{C}^{(k)}_i)))$ and $d_i^{(k)}=d_E(s_i,f(V(\mathscr{C}^{(k)}_i)))$. As $\overline{V}(\mathscr{C}^{(k)}_i) \subseteq V(\mathscr{C}_i^{(k)} ) \subseteq \mathscr{C}_i^{(k)}$
the following holds:
\begin{equation}
0 \leq \overline{d}^{(k)}_i \leq d^{(k)}_i \leq r_i^{(k)}.
\label{eq:comparison2}
\end{equation}

Since $r_i^{(k)}$ is strictly decreasing and non-negative, when $k \rightarrow \infty$, it converges to a value $R_i\geq 0$.
$\alg$ sets the radius of $s_i$ for the next iteration as: $r_i^{(k+1)}=r_i^{(k)}-\alpha_i^{(k)} \cdot (r^{(k)}_i-\overline{d}^{(k)}_i )$, where  
$\alpha_i^k \in (0,1]$.
It follows that 
$R_i=R_i - \lim_{k \rightarrow \infty}  \alpha_i^k  \cdot (R_i -  \lim_{k \rightarrow \infty}   \overline{d}^{(k)}_i)$.
As $\alpha_i^k > \alpha_\texttt{min}$ is strictly positive and lower than 1, then 
$  \lim_{k \rightarrow \infty}   \overline{d}^{(k)}_i=     R_i$.

The convergence of $\lim_{k \rightarrow \infty}  d^{(k)}_i$ follows, due to Equation \ref{eq:comparison2}, by applying the comparison criterion.
This means that the radius of $s_i$ converges to  the minimum value  to cover the farthest vertex of its polygon, 
which is a {\em boundary farthest configuration}.

If such a boundary  farthest vertex  is strict, then $s_i$ terminates its execution of $\alg$.

Otherwise, the adoption of the {\em serialization scheme for loose farthest vertices} discussed in 
Section \ref{sec:skippable} ensures that  all the sensors with loose vertices will perform their additional radius reduction one at a time. 
After this radius reduction, $s_i$ will never generate again a loose farthest with 
the same neighbors (as this would require an increase in the sensing range of at least one sensor, which is not allowed by $\alg$).
Since there is a finite number of neighbor sensors that can generate a loose farthest with $s_i$, then $s_i$ will eventually reach a strict farthest situation and will exit. \marginpar{NNN: Qui siamo un po' superficiali nella spiegazione del fatto che non si pu\'o avere di nuovo un farthest con gli stessi vicini, secondo voi a un rev. pignolo non gli viene il dubbio che se tutti si riducono si pu\'o verificare di nuovo? Magari se lo disegna e non capisce che sta disegnando un buco di copertura... boh, ditemi voi. }

{\em Convergence in case (2).}
In this case, as the coverage of the polygon is only partial, the sensor  cannot reduce its radius (due to Corollary \ref{co:3.2}) and $\alg$ immediately terminates.

{\em Convergence in case (3).}
Consider $k=0$.
In case (3)  $V(\mathscr{C}^{(0)}_i) \cap \mathscr{C}^{(0)}_i =\emptyset$.  At the successive iterations, the polygon of $s_i$ can only be altered by the radius reductions performed by  the neighbors of $s_i$ and $s_i$ itself. 

As the polygon $V(\mathscr{C}^{(0)}_i)$ is not covered, the polygons of the Voronoi-Laguerre neighbors of $s_i$ are either partially covered or completely uncovered, because they share an edge with $V(\mathscr{C}^{(0)}_i)$. 
A radius reduction of a neighbor with completely uncovered polygon may result in an extension of the polygon of $s_i$ with new uncovered zones. By contrast, 
the neighbors of $s_i$ which partially cover their polygons will not change their radius. 
Therefore,  for any iteration $k > 0$, $V(\mathscr{C}^{(k)}_i) \cap \mathscr{C}^{(k)}_i = \emptyset$, that is a polygon which is initially uncovered will remain uncovered.

Hence, for a sensor $s_i$ being in case (3), $\overline{d}_i^{(k)}=0$, $\forall k \geq 0$.
This implies that $r_i^{(k+1)}=(1-\alpha_i^{(k)})\cdot r_i^{(k)} \leq (1-\alpha_\texttt{min})\cdot r_i^{(k)}$, $\forall k \geq 0$.
Therefore
$\lim_{k \rightarrow \infty} r_i^{(k)} \leq \lim_{k \rightarrow \infty}  (1-\alpha_\texttt{min})^k
\cdot r_i^{(0)}=0$, proving that the sensor $s_i$ converges to a final configuration in which it will be switched off\footnote{Notice that, although $s_i$ knows from the beginning that its decision will be to turn off, it still executes the algorithm iteratively in order not to alter the decision priority established by the energy gain criterion.}.

 \end{proof}
}
 \marginpar{TERMINAZIONE DEI FISSI }
 {\color{black}
 \begin{theorem}\label{le:optimalFixed}
 {\bf (Termination in the case of fixed sensors)}
 Given a set $\mathcal{S}=\mathcal{S}_\texttt{fixed}$ of only fixed, $\alg$ puts to sleep all the redundant sensors in a finite time. \end{theorem}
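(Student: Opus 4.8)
The plan is to reduce the global statement to a per-sensor termination argument, exploiting two structural facts already available. First, every decision taken by a fixed sensor is irrevocable: once \texttt{decision\_made} is set to \texttt{true} the sensor stays awake or sleeps for the whole operative interval, so the set $S^{(k)}_\texttt{undecided}$ of still-undecided awake sensors is non-increasing in $k$, and no sensor ever re-enters it. Second, non-redundancy is preserved: as observed in Section~\ref{sec:fixed_sara}, once a sensor is found non-redundant it can never become redundant again, because each iteration only removes sensors and hence only shrinks the covered area. Consequently, any sensor that is still undecided at the beginning of an iteration $k\geq 1$ must be redundant at that moment, and the only ways it can leave the undecided set are by being declared non-redundant (then it stays awake) or by drawing the turn-off coin successfully.

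Next I would bound the number of iterations a single sensor $s_i$ can remain undecided. By Corollary~\ref{th:test} the redundancy test is decidable from the neighbourhood $\mathcal{N}_{S_A^{(k)}}(s_i)\cup\mathcal{N}_{S_A^{(k)}}^{\emptyset}(s_i)$, so the decision at each iteration is well defined. If at some iteration $s_i$ becomes non-redundant, it decides immediately; otherwise it is redundant at every iteration it survives and goes to sleep with probability $\alpha_i^{(k)}$. The decisive quantitative ingredient is the floor $\alpha_i^{(k)}\geq\alpha_\texttt{min}>0$ enforced by Equation~\ref{eq:alfa}: conditioned on the past, $s_i$ fails to sleep in a redundant iteration with probability at most $1-\alpha_\texttt{min}$, whence by iterated conditioning
\begin{equation}
\Pr[\,s_i \text{ undecided after } m \text{ iterations}\,]\leq (1-\alpha_\texttt{min})^m \xrightarrow[m\to\infty]{} 0.
\label{eq:geombound}
\end{equation}
Hence $s_i$ reaches a final decision in a finite number of iterations with probability one, and its expected decision time is at most $1/\alpha_\texttt{min}$.

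To conclude I would assemble the per-sensor bounds. Since $|\mathcal{S}|=N$ is finite and each sensor decides almost surely in finite time, the whole execution stops at iteration $T=\max_i T_i$, which is finite with probability one and satisfies $\mathbb{E}[T]\leq\sum_i \mathbb{E}[T_i]\leq N/\alpha_\texttt{min}$. It then remains to verify that the final awake configuration contains no redundant sensor, i.e. that \emph{all} redundant sensors have actually been put to sleep. A sensor remains awake only because it was non-redundant at the instant it decided; by the preservation of non-redundancy, the subsequent turn-offs of other sensors cannot make it redundant again, so every awake sensor is irredundant at termination. The back-off serialization makes this consistent, as a sensor that becomes non-redundant during its waiting window abandons the turn-off and stays awake, and coverage is never lost by Theorem~\ref{th:correctness}.

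I expect the main obstacle to be the non-monotonicity of the redundancy status: unlike the adjustable case, a fixed sensor may alternate between being redundant and, after a neighbour sleeps, non-redundant, and an entire iteration may elapse in which no sensor changes state because all redundant sensors flip the ``stay'' outcome. The argument therefore cannot be purely deterministic; the crucial point is the uniform lower bound $\alpha_\texttt{min}$ on the turn-off probability, which prevents the per-iteration turn-off probabilities from degenerating to zero and yields the geometric tail in~\eqref{eq:geombound}. Care must also be taken to read ``finite time'' in the almost-sure (equivalently, finite-expectation) sense rather than as a deterministic worst-case bound.
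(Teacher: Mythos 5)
Your probabilistic argument is internally sound, and your closing step (using the preservation of non-redundancy to certify that every sensor still awake at termination is non-redundant) is exactly the right way to finish. But it takes a genuinely different route from the paper's proof, and your meta-claim that ``the argument cannot be purely deterministic'' is mistaken: the paper's proof \emph{is} purely deterministic. The observation you missed is hidden in the normalization of Equation~\ref{eq:alfa}: at every iteration, consider the redundant undecided sensor $s^{*}$ whose energy gain $\Delta E_{s^{*}}^{(k)}$ is maximum over the whole network. Since $s^{*}$ belongs to its own comparison set $\mathscr{L}^{(k)}(s^{*})$, its gain coincides with $\Delta E^{\texttt{max }(k)}_{s^{*}}$, so its priority is exactly $\alpha_{s^{*}}^{(k)}=1$, not merely $\geq\alpha_\texttt{min}$. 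Hence $s^{*}$ goes to sleep with probability one; and if instead it has ceased to be redundant by the end of its back-off window, that can only be because some other redundant sensor turned itself off during that window (within one iteration of the fixed-only case, the awake set changes only through turn-offs, and only redundant sensors turn off). Either way at least one redundant sensor is eliminated at every iteration, so the scenario you flag as the main obstacle --- an iteration in which every redundant sensor draws the ``stay'' outcome --- cannot occur, and termination follows deterministically in at most $N$ iterations. Comparing the two: your route buys robustness (an almost-sure termination and an expected-time bound of order $N/\alpha_\texttt{min}$ that would survive any priority rule bounded below by $\alpha_\texttt{min}$, even one under which no sensor ever attains $\alpha=1$), while the paper's route exploits the exact structure of the priority formula to obtain a worst-case deterministic bound, which is both stronger and simpler, and is the natural reading of ``finite time'' in the statement.
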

 \begin{proof}
 At the $k$-th  iteration of $\alg$, every fixed sensor determines whether it is redundant or not.
 If it is not redundant it immediately ends its execution with the decision to stay awake.
 If instead   it is redundant
 it turns itself off with probability $\alpha _i$ (see Algorithm  \ref{alg:selective_activation_FIXED}).
 At every iteration $k$ of the algorithm, there is at least one sensor $s_i$ (namely the one with maximum value of $\Delta{E}_i$) whose value of $\alpha_i^{(k)}$ 
 is equal to 1 and therefore has probability 1 to go to sleep.
 It follows that at each iteration at least one redundant sensor turns itself off (although in practice many sensors go to sleep at each iteration, as shown in Section \ref{sec:results}). Hence, in a finite number of steps all redundant fixed sensors will go to sleep.
 \end{proof}

 }

{\color{black}
%\medskip
\marginpar{CONVERGENZA NEL CASO MIXED }
 \begin{theorem} ({\bf Convergence of $\alg$ in the general scenario})\label{le:convergence}
 Given a set $\mathcal{S}=\mathcal{S}_\texttt{adjustable} \cup \mathcal{S}_\texttt{fixed}$ of both adjustable and fixed   sensors,  under the execution of $\alg$, each sensor converge to a final configuration decision.
 \label{th:conv_mixed}
 \end{theorem}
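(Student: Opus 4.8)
The plan is to reduce the general case to the two regimes already settled, namely Theorem~\ref{le:optimalFixed} for fixed sensors and Theorem~\ref{le:optimalVariable} for adjustable sensors, by first showing that the fixed sensors all reach a decision after a finite number of iterations and then observing that the residual dynamics involves only adjustable sensors. Concretely, I would prove that there is a (random but almost surely finite) iteration $K$ after which $S^{(k)}_\texttt{fixed}$ no longer changes, and then argue that from iteration $K$ on the permanently awake fixed circles can be treated as static, decided contributions, so that the convergence argument of Theorem~\ref{le:optimalVariable} applies to the still-undecided adjustable sensors.

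The enabling observation is a monotonicity lemma: in every iteration $\alg$ either lets a fixed sensor go to sleep or shrinks the radius of an adjustable sensor, and both operations are coverage-reducing. Hence, for any fixed sensor $s_i$, the portion of $\mathscr{C}_i$ covered by the \emph{other} awake sensors is non-increasing in $k$. Two consequences follow. First, once $s_i$ is non-redundant it stays non-redundant, and $\alg$ has it immediately set \texttt{decision\_made}=\texttt{true} and stay awake; therefore any fixed sensor still undecided at iteration $k$ is necessarily redundant at that iteration. Second, by Algorithm~\ref{alg:selective_activation_FIXED} an undecided redundant fixed sensor goes to sleep with probability $\alpha_i^{(k)}\geq \alpha_\texttt{min}>0$, so the probability that it stays undecided for $m$ consecutive iterations is at most $(1-\alpha_\texttt{min})^m\to 0$. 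Since $\mathcal{S}_\texttt{fixed}$ is finite and a decided fixed sensor never reverts, all fixed sensors are decided after finitely many iterations with probability one, which yields the desired $K$. This replaces the deterministic ``one sleep per iteration'' argument of Theorem~\ref{le:optimalFixed}, which no longer applies verbatim: in the mixed neighborhood $\mathscr{L}^{(k)}(s_i)$ the maximum energy gain may be attained by an adjustable sensor, so no redundant fixed sensor need have $\alpha_i^{(k)}=1$, and it is precisely the floor $\alpha_\texttt{min}$ that rescues finite termination.

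For $k\geq K$ the only undecided sensors are adjustable, while the decided fixed sensors are awake with a radius that no longer changes and hence contribute a constant covered region; they therefore play exactly the role of decided sensors, entering the sets $\mathscr{D}^{(k)}(s_i)$ and being subtracted from the responsibility regions $\overline{V}^{(k)}(\mathscr{C}_i)$ in the computation of the adjustable radius reductions. The three-case analysis of Theorem~\ref{le:optimalVariable} (completely covered, partially covered, and uncovered polygon), together with the serialization scheme for loose farthest vertices, then applies unchanged and forces every adjustable sensor to converge, which (combined with Theorem~\ref{th:correctness} for coverage preservation) completes the proof. I expect the main obstacle to be exactly the coupling during the transient $k<K$: a radius reduction can turn a redundant fixed sensor non-redundant, and a sleeping fixed sensor can enlarge the Voronoi-Laguerre polygons of its adjustable neighbours. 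The coverage-monotonicity lemma is what decouples the two processes, and the delicate point to nail down is that the fixed sensors must be shown to finish in genuinely finite time even though the convergence of the adjustable radii guaranteed by Theorem~\ref{le:optimalVariable} is only asymptotic.
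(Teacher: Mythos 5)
Your proof is correct, but it takes a genuinely different and more detailed route than the paper's. The paper's own proof is essentially a two-line reduction: it states that convergence ``easily descends'' from Theorems~\ref{le:optimalVariable} and~\ref{le:optimalFixed}, and then devotes its remaining text to exactly the obstacle you isolated --- in a mixed neighborhood the maximum energy gain may be attained by an adjustable sensor, so no redundant fixed sensor is guaranteed to have $\alpha_i^{(k)}=1$ --- concluding from this only that the fixed class is ``slowed down'' and deliberately weakening the claim to convergence, without asserting finite termination of anything. You instead repair the broken argument: your coverage-monotonicity lemma (other sensors' circles only shrink or vanish, so non-redundancy is absorbing --- a fact the paper itself uses informally in Section~\ref{sec:algo}) decouples the two classes, and the floor $\alpha_\texttt{min}$ in Eq.~\ref{eq:alfa} gives the geometric bound $(1-\alpha_\texttt{min})^m$ on the probability that a redundant fixed sensor stays undecided, hence almost-surely finite termination of the whole fixed class; after that random iteration $K$ the decided fixed circles are static members of the sets $\mathscr{D}^{(k)}(\cdot)$ and the three-case analysis of Theorem~\ref{le:optimalVariable} applies to the tail of the iteration, with Theorem~\ref{th:correctness} guaranteeing coverage throughout. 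What the paper's route buys is brevity and an explicit acknowledgment of what is lost in the mixed case; what your route buys is strictly more: a rigorous decoupling of the two dynamics, plus a probabilistic termination statement for the fixed class that the paper's caveat leaves open, and which is in fact needed if one wants to invoke Theorem~\ref{le:optimalVariable} as a black box rather than re-running its argument alongside undecided fixed sensors. The only point worth tightening is your phrase ``applies unchanged'': Theorem~\ref{le:optimalVariable} is stated for a pure adjustable set started at maximum radii, so you should add one sentence noting that its per-sensor case analysis and monotone-radius argument are insensitive to the initial configuration and to whether decided neighbors are fixed or adjustable, which is what lets you apply it from iteration $K$ onward.
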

 
 \begin{proof}
  The convergence of $\alg$
  easily descends from the Theorems~\ref{le:optimalVariable} and~\ref{le:optimalFixed}.
  
 It has to be noted that although the presence of fixed sensors does not alter the convergence property of the adjustable sensors, the opposite is not true.
 In fact, the presence of adjustable sensors in the mix alters the behavior of the fixed sensors as it is no longer guaranteed that at every iteration $k$ there will be a redundant fixed sensor
  that will turn itself off.
  Although it is still true that there will be at least one sensor $s_i^{(k)}$ in $\mathcal{S}$ with $\alpha_i^{(k)}=1$, this sensor may belong to the adjustable class.
  Therefore, the convergence speed of the fixed class is slowed down by the presence of the adjustable sensors\footnote{This is because we want the two classes of sensors to reduce their radius in parallel without
favoring a given class. If, due to a particular operative setting, one of the
two classes should have a higher priority in making configuration decisions, this can be handled
by redefining accordingly the priority parameter $\alpha_i^{(k)}$.} .
 For this reason this theorem only affirms the convergence and not the termination of $\alg$ in the mixed scenario, as in the case of only adjustable sensors. 
  \end{proof}
  }
  {\color{black} 
 %{\bf Observation on the convergence rate of the mixed case\\}

% 
% 
{\color{black}
%\medskip
\marginpar{Su questo pezzo aveva messo mano Tom, lo lascio nei commenti e metto quello indicato da te.
Ho modificato qualche parola e ho aggiunto il commento che lasciavi in sospeso.
Dettagli dopo non ne troveranno molti di pi\'u dato che Tom ci ha fatto eliminare gli exp sul $K$.}

% Theorem \ref{le:convergence} states the convergence of $\alg$ in the mixed scenario.
% It should be noticed that due to the behavior of the sensors of the adjustable class,  the algorithm  may take  infinitesimal
%  steps to converge.
% As we discuss in Section \ref{sec:results}, the termination actually occurs very quickly.
% Nevertheless, in order to ensure the theoretical termination of the algorithm in a 
% finite number of steps, an upper limit $K$ can be set on the number of iterations ({\em faster termination condition}).

Theorem \ref{le:convergence} states the convergence of $\alg$ in the mixed scenario.
The question
is how to ensure that convergence does not take too long: The
adjustable sensors might 
reduce their radius of
an infinitesimal step at each iteration.
In order to ensure the theoretical termination of the algorithm in a finite number of 
steps we can set an upper limit $K$ on the number of iterations ({\em faster termination
condition}).  Despite convergence might theoretically take quite long
time we have observed that no more than 20 iterations are sufficient to achieve termination of the $95\%$ of sensors.
Setting a value of $K$ as low as $20$ has a negligible impact on the performance of $\alg$, but has the advantage to ensure a very fast termination of the algorithm execution.

}

 \marginpar{Rimosso
 il corollario che dice che se si fissa il nr di iterazioni allora il nr di iterazioni � finito :-)}
% \begin{corollary} ({\bf Termination})
% The algorithm $\alg$ has guaranteed termination if 
%  the number of iterations is limited to a value $K$.
% \end{corollary}

The following Lemma \ref{lemma:farthestWithoutEpsilon} analyzes the property of the cover set obtained after the execution of $\alg$ 
focusing in particular on the polygons generated by the adjustable sensors.

%\medskip
\reversemarginpar
\marginpar{NELLA CONFIG FINALE GLI ADJUSTABLE SONO TUTTI AL BOUNDARY FARTHEST}
\normalmarginpar
\begin{lemma}
\label{lemma:farthestWithoutEpsilon}
{\bf(Properties of the cover set)} Consider a mixed set  of adjustable and fixed sensors  $\mathcal{S}=\mathcal{S}_\texttt{adjustable}\cup\mathcal{S}_\texttt{fixed}$. Consider the cover set $S_\texttt{SARA}$ obtained after the execution of $\alg$ on $\mathcal{S}$. 
If $s_i \in S_\texttt{SARA} \cap \mathcal{S}_\texttt{adjustable}$ either $s_i$ partially covers its polygon $V({\mathscr C}_i)$, or  its farthest vertex $f(V({\mathscr C}_i))$ 
is a strict boundary farthest vertex. 
\end{lemma}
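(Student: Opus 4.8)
The plan is to analyze the final configuration of an arbitrary awake adjustable sensor $s_i \in S_\texttt{SARA} \cap \mathcal{S}_\texttt{adjustable}$ directly, splitting into cases according to how much of its final polygon $V(\mathscr{C}_i)$ the sensor covers. Since $s_i$ belongs to the cover set, it is awake and has made its decision, so its final radius satisfies $r_i > 0$. First I would rule out the possibility that $s_i$ covers no point of $V(\mathscr{C}_i)$: by Corollary~\ref{co:3.1} such a sensor is redundant, and by case~(3) of Theorem~\ref{le:optimalVariable} an uncovered polygon forces $r_i^{(k)} \to 0$, so $s_i$ would be put to sleep, contradicting $s_i \in S_\texttt{SARA}$. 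Hence $s_i$ covers at least one point of $V(\mathscr{C}_i)$, and only two possibilities remain. If $s_i$ covers its polygon only partially, the first alternative of the lemma holds immediately; this is also consistent with termination, since by Corollary~\ref{co:3.2} a sensor that partially covers its polygon cannot shrink its radius without opening a hole, which is exactly the partial-coverage termination branch of Algorithm~\ref{alg:selective_activation_ACTIVE}.

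It then remains to treat the case where $s_i$ completely covers $V(\mathscr{C}_i)$, where I must establish that $f(V(\mathscr{C}_i))$ is a \emph{strict boundary} farthest vertex. I would argue the boundary property first: if the farthest vertex were strictly interior, i.e. $d_E(s_i, f(V(\mathscr{C}_i))) < r_i$, then the reduction rule $r_i^{(k+1)} = r_i^{(k)} - \alpha_i^{(k)}(r_i^{(k)} - \overline{d}_i^{(k)})$ with $\alpha_i^{(k)} \ge \alpha_\texttt{min} > 0$ would still strictly decrease the radius, so $s_i$ could not have terminated. This is precisely the boundary-farthest configuration reached in case~(1) of Theorem~\ref{le:optimalVariable}. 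For strictness I would invoke the serialization scheme for loose farthest vertices from Section~\ref{sec:skippable}: were the boundary farthest merely loose, that scheme would enable a further radius reduction of $s_i$, again contradicting the fact that $s_i$ has made its final decision. Therefore the farthest vertex must be strict, yielding the second alternative.

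The main obstacle is the complete-coverage case, and specifically transferring statements phrased in terms of the algorithm's working region $\overline{V}(\mathscr{C}_i) = V(\mathscr{C}_i) \setminus \cup_{s_j \in \mathscr{D}(s_i)} \mathscr{C}_j$ to the full polygon $V(\mathscr{C}_i)$ named in the statement. Here I would observe that in the final configuration all neighbors are decided, so the subtracted circles belong to sensors whose coverage is locked in, and the classification of the boundary farthest as strict or loose is unaffected by removing regions that are covered by those neighbors; this lets me read the termination condition back on $f(V(\mathscr{C}_i))$. Finally, since the lemma concerns the mixed setting while Theorem~\ref{le:optimalVariable} is stated for purely adjustable sensors, I would lean on Theorem~\ref{le:convergence}, which reduces the asymptotic behavior of the adjustable sensors to exactly the three-case analysis of Theorem~\ref{le:optimalVariable} once the fixed sensors have settled their state, so the same dichotomy applies unchanged.
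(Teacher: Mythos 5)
Your case split and your ingredients (the exit conditions of Algorithm~\ref{alg:selective_activation_ACTIVE}, Corollaries~\ref{co:3.1} and~\ref{co:3.2}, the serialization scheme of Section~\ref{sec:skippable}) are the right raw material, but there is a genuine gap: you evaluate the algorithm's termination tests against the polygon $V(\mathscr{C}_i)$ \emph{of the final configuration}, whereas the algorithm only ever evaluated them on the polygon $s_i$ had \emph{at the instant it set} \texttt{decision\_made} to \texttt{true}. These are different objects. After $s_i$ exits, its neighbors keep executing $\alg$: adjustable neighbors keep shrinking and may switch off, fixed neighbors may go to sleep, and every such event pushes the Voronoi--Laguerre edges of $s_i$ outward, enlarging $V(\mathscr{C}_i)$. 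Consequently, ``if the final farthest vertex were interior (resp.\ loose), $s_i$ could not have terminated'' is a non sequitur: $s_i$ never re-checks anything on the final polygon, and the serialization scheme cannot fire for a vertex that becomes loose only after $s_i$ has stopped participating. The same objection hits your opening step: case~(3) of Theorem~\ref{le:optimalVariable} concerns a polygon that is uncovered \emph{during} the iterations, not a polygon that might acquire uncovered points after $s_i$ has exited.

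What is missing is precisely the content of the paper's proof: a monotonicity-in-time argument showing that the exit-time property survives all post-exit events. The paper fixes the exit time $T_{s_i}$, notes that the exit conditions give the stated dichotomy at that instant, and then shows that the two kinds of later events (radius reductions and deactivations of other nodes) cannot destroy it: coverage is preserved (Theorem~\ref{th:correctness}), and every covered point of a polygon is covered by its generator (Theorem~\ref{th:lag_coverage}), so as long as the enlarged polygon of $s_i$ stays covered it remains inside the circle of radius $r_i^{(k)}$; hence a boundary farthest vertex at time $T_{s_i}$ is still a boundary farthest vertex at the end of the execution, and it remains strict because the coverage contributed by neighbors around it can only decrease afterwards (points that enter the polygon uncovered simply put $s_i$ into the ``partially covers'' branch of the statement, which is harmless). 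Your paragraph on $\overline{V}(\mathscr{C}_i)$ versus $V(\mathscr{C}_i)$ addresses a real but secondary discrepancy; without the temporal transfer argument the proof does not go through.
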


\reversemarginpar
\begin{proof}

Let $s_i$ exit $\alg$ at time $T_{s_i}$ with its radius set to $r_{i}^{k} > 0$\footnote{Notice that the case $r_i^k = 0$ is exluded because $s_i$ belongs to the cover set $S_\texttt{SARA}$}.
According to Algorithm
\ref{alg:selective_activation_ACTIVE}
$s_i$ terminated $\alg$ execution either because its polygon is not completely covered
or because it has reached a strict boundary farthest configuration.
We now show that changes in the sensing coverage of other nodes $s_j$
which occur at a time $T>T_{s_i}$ cannot change this property. As this is obvious for sensors which partially cover their polygons, let us consider the case of $s_i$ completely covering its polygon.

Two types of events can
occur after $T_{s_i}$ which affect sensor $s_i$ responsibility region:
1) other adjustable sensors $s_j$ reduce their radius,
2) fixed or adjustable sensors are turned off.
Both these events may result in an increase of sensor $s_i$ responsibility region.
However, since $s_i$ radius cannot change ($s_i$ has exited),
since the reduction of other nodes radius preserves coverage
(Theorem \ref{th:correctness})
and since  if a point P is covered it is covered by the node to which responsibility
region it belongs (theorem \ref{th:lag_coverage})
it derives that $s_i$ responsibility region stays within the circle centered in $s_i$
and with radius equal to $r_{i}^{k}$.
Therefore, each boundary farthest point at time $T_{s_i}$ is still a boundary farthest
at the end of $\alg$ execution.
\end{proof}

  According to $\alg$, each sensor pursues an individual utility that is 
  to reduce its power consumption and at the same time to do its best  to cover the AoI.
  In terms of this utility function, 
  the cover set $S_\texttt{SARA}$ obtained by $\alg$ starting from $\mathcal{S}$, is Pareto optimal.
  In fact, it is not possible to increase the utility of a single sensor (i.e., by reducing the sensing range of an adjustable sensor or turning off a fixed one) without decreasing the utility (i.e., increasing the sensing range of an adjustable sensor or turning on a fixed one that was previously sleeping) of at least another device in the network.

 \marginpar{PARETO OPTIMALITY}
 \begin{theorem}
 \label{th:optimalWithoutEpsilon}
 {\bf (Pareto optimality)} Given a set $\mathcal{S}=\mathcal{S}_\texttt{adjustable} \cup \mathcal{S}_\texttt{fixed}$ of  sensors, after the 
 execution of $\alg$  (without the faster termination condition), the produced cover set  $S_\texttt{SARA}$ is Pareto optimal.
 \end{theorem}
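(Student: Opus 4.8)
The plan is to argue by contradiction, exploiting the coverage--equivalence guarantee of Theorem~\ref{th:correctness} together with the structural characterisation of the cover set. Suppose $S_\texttt{SARA}$ were not Pareto optimal. Then there would exist an alternative configuration $C'$ of the same physical sensors such that: (a)~$C'$ still covers the area $\mathscr{A}$; (b)~no device is worse off, i.e.\ every adjustable sensor has in $C'$ a radius no larger than in $S_\texttt{SARA}$ and every sensor asleep in $S_\texttt{SARA}$ remains asleep (no radius is increased and no sleeping node is woken up); and (c)~at least one device is strictly better off, i.e.\ some adjustable sensor strictly shrinks its radius or some awake fixed sensor is turned off. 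Let $s_i$ be such a strictly improving device. The whole argument then reduces to showing that in $S_\texttt{SARA}$ the sensor $s_i$ uniquely covers some point of $\mathscr{A}$, so that its improvement opens a coverage hole that, by~(b), no other device can fill, contradicting~(a).

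First I would dispose of the case $s_i\in\mathcal{S}_\texttt{fixed}$. Since $s_i$ is awake in $S_\texttt{SARA}$, the algorithm kept it awake only because it was found non-redundant, i.e.\ $\mathscr{C}_i\not\subseteq\bigcup_{j\neq i}\mathscr{C}_j$ in the final configuration (Corollaries~\ref{co:3.1} and~\ref{th:test} are exactly the redundancy test, restricted to the Voronoi-Laguerre neighbourhood). Hence there is a point $P\in\mathscr{C}_i\subseteq\mathscr{A}$ covered by no other sensor at its $S_\texttt{SARA}$ setting. Turning $s_i$ off while, by~(b), leaving every other radius unchanged or smaller leaves $P$ uncovered in $C'$, contradicting~(a).

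Next I would treat the case $s_i\in\mathcal{S}_\texttt{adjustable}$ with $r_i>0$, invoking Lemma~\ref{lemma:farthestWithoutEpsilon}, which applies precisely because the faster termination condition is switched off: in $S_\texttt{SARA}$ either $s_i$ only partially covers $V(\mathscr{C}_i)$, or its farthest vertex is a strict boundary farthest. In the partial-coverage case, Corollary~\ref{co:3.2} supplies an arc $\ell$ on $\partial\mathscr{C}_i\cap V(\mathscr{C}_i)$ whose non-edge points are covered \emph{only} by $s_i$; shrinking $r_i$ uncovers such a point, and by~(b) no other sensor can recover it, a hole. In the strict-boundary-farthest case the definition of \emph{strict} states verbatim that $r_i$ cannot be decreased without leaving a coverage hole relative to the current configuration of the remaining sensors; since~(b) only decreases those radii, the hole persists. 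Either way~(a) is violated, so the contradiction is complete and $S_\texttt{SARA}$ is Pareto optimal.

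The delicate point, and the step I expect to require the most care, is the justification that the point lost when $s_i$ improves is covered by \emph{no} other device, not merely by no Voronoi-Laguerre neighbour of $s_i$: this global statement is exactly what the reach of Theorem~\ref{th:lag_coverage} (propagated through Corollaries~\ref{co:3.2} and~\ref{th:test} and through Lemma~\ref{lemma:farthestWithoutEpsilon}) buys us, and it is what lets the monotonicity hypothesis~(b) rule out any repair by the other sensors. A secondary subtlety is that several devices may improve simultaneously in $C'$; this causes no trouble because the uniquely-covered witness point of $s_i$ is, by the cited results, uncovered by all other sensors already at their $S_\texttt{SARA}$ radii, and those radii only decrease under~(b).
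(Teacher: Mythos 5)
Your proof is correct and follows essentially the same route as the paper's: the same case split (fixed versus adjustable, and within adjustable the partial-coverage versus strict-boundary-farthest dichotomy), resting on the same pillars --- non-redundancy of awake fixed sensors, Lemma~\ref{lemma:farthestWithoutEpsilon}, Corollary~\ref{co:3.2}, and the definition of a strict farthest vertex. The only difference is presentational: you formalize Pareto dominance as an explicit alternative configuration $C'$ and argue by contradiction via a uniquely-covered witness point, also handling simultaneous improvements of several devices, details the paper's terser single-deviation argument leaves implicit (and which its monotone coverage structure makes equivalent).
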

 
 \begin{proof}
  In order to prove the Pareto optimality of $\alg$ we need to show that there is no action that could improve the utility of a single sensor, i.e. a sensing radius reduction or the deactivation of a device, without reducing the coverage achieved by $S_\texttt{SARA}$.
  
  This property is true for fixed sensors, since all 
  redundant fixed sensors will eventually turn themselves off according to the back-off scheme provided by $\alg$. This trivially derives from Theorems \ref{le:optimalFixed} and \ref{th:conv_mixed}.
  
  In the case of adjustable sensors, consider $s_i \in \mathcal{S}_\texttt{adjustable} $,  Theorem \ref{le:optimalVariable} states that
 under the execution of $\alg$ $s_i$ will eventually reach a final configuration decision, while 
 Lemma \ref{lemma:farthestWithoutEpsilon}
gives a characterization of the final solution, affirming that 
if $s_i$ completely covers its polygon, $s_i$ is in a strict boundary farthest vertex configuration
whereas if $s_i$ covers its polygon only partially, Corollary \ref{co:3.2} proves that in this case $s_i$ cannot reduce its radius without affecting coverage. 

%
%
%In order to prove that $s_i$ cannot decrease its radius without reducing the coverage with respect to $\mathscr{A}$ we need to prove that
%there is at least one point (and an arbitrarily small surrounding of it) at the boundary of $\mathscr{C}_i$ that is not covered by any other sensor in $S_\texttt{SARA}$.
%
%
%
%
%\marginpar{CONTROLLARE questo pezzo, \'e un po' contorto secondo me ma non mi \'e venuta nessuna idea di come semplificare questa prova}
%By contradiction, let us assume that such a point does not exist, and that all the points at distance $\epsilon$ (arbitrarily small) from the boundary of $\mathscr{C}_i$  and internal to $\mathscr{C}_i$ are covered by at least another sensor in $S_\texttt{SARA}$.
%Let $s_x$ by any of these sensors. As $\epsilon$ is arbitrarily small, the circle $\mathscr{C}_x$ must be tangential to 
%$\mathscr{C}_i$.
%Nevertheless, it can easily be shown by applying the Laguerre notion of distance of Eq. \ref{eq:lag_dist}
%that tangential circles create polygons that are on the opposite sides of their axes.
%By repeating the above reasoning for all the candidate sensors $s_x$ in $S_\texttt{SARA}\setminus \{s_i\}$ that may cover the boundary
%of $\mathscr{C}_i$ it is easy to see that the area $V(\mathscr{C}_i) \setminus \mathscr{C}_i$ would not belong to  the 
%polygon $V(\mathscr{C}_i)$, which is a contradiction.

 \end{proof}

 Pareto optimality is a {\em necessary condition for global optimality}.
 Unfortunately, the Pareto optimality of the cover set  does not have implications in terms of quality of the solution to the lifetime problem, as there are infinite Pareto optimal solutions.
 Nevertheless,  by adopting an energy-aware policy, $\alg$ is able to choose a cover set 
 among all the possible Pareto-optimal ones, which reduces the energy consumption of the network and  prolongs its lifetime,
 as experimentally shown in Section \ref{sec:results}.

%------------------------------------------------------------------------------
% TWO PREVIOUS APPROACHES
%------------------------------------------------------------------------------

\section{Two recently proposed selective activation and radius adaptation algorithms}
\label{sec:two_approaches}
% \marginpar{TLP:Give more emphasis to the fact that $\dlm$ and $\gup$ do not respect the optimality condition.
% This should be said in the introduction also.}

To the best of our knowledge there is no prior work in the literature that addresses the problem of selective activation and  sensing radius adaptation in a general applicative scenario combining fixed sensors and sensors
endowed with variable sensing capabilities.
Moreover, previous works rarely consider device heterogeneity. For these reasons,  we compare $\alg$ to the Distributed Lifetime Maximization ($\dlm$) algorithm \cite{Kaskebar2009} which is designed to work with 
fixed radius sensor and to the Variable Radii Connected Sensor Cover ($\gup$) algorithm \cite{Zou2009}
 which is designed to work only with devices that can adjust their sensing radius.
The choice of these two algorithms is motivated by the performance analysis carried out by the same authors which shows that $\dlm$ and $\gup$ achieve better performance to previous schemes proposed in the same class.

In this section we give a short description of $\dlm$ and $\gup$ and of our extensions to adapt them for a 
general scenario.
We also discuss why they do not provide Pareto optimal solutions.

%
% The first approach solves the problem of $k$-covering the AoI, namely, the problem of monitoring each point of the AoI with $k$ sensors.
% %
% It is based on the simple observation that if each point where two sensing circles intersect is $k$-covered, then the whole AoI is.
% %
% Complete coverage is obtained iteratively activating sensors according to an ordered list of nodes that are in radio proximity.
% %
% The list is sorted based on the current energy of the nodes and on the number of intersection points that they can cover.
% %
% At each iteration, the head sensor of the list activates

% itself, and the sensors now covered by other awake sensors are removed from the list (they will not activate theeselves).
% %

$\dlm$ 
addresses the problem of activating a subset of sensors so that each point of the AoI is
monitored by at least $k$ sensors
\footnote{For the sake of simplicity, in this paper we do not address the problem of $k$-coverage. Hence in all our experiments, detailed in Section \ref{sec:results}, we assume that all the algorithms  work with $k=1$.}.
$\dlm$ considers the case of heterogeneous sensors with fixed sensing radii.
The authors call  {\em intersection point} any point where two 
sensing circles intersect with each other and observe that 
if each intersection point is $k$-covered, then the whole AoI is $k$-covered. 
$\dlm$ is a round based algorithm. At each round, maximum coverage is obtained by iteratively waking up sensors according to an ordered list of nodes that are in radio proximity.
 The list is sorted on the basis of the energy consumed by the nodes and of the number of intersection points that they can cover. Such a list provides the priority order for the iterative waking up of the sensors in a neighborhood. At each iteration, the sensors whose sensing range is already $k$-covered by other already awake  sensors are removed from the list (they will not wake up themselves).
We refer to \cite{Kaskebar2009} for the details of the algorithm.

We extend  $\dlm$ to the case of sensors with adjustable sensing radii 
by considering the devices with variable radii as if they were fixed.
This means that each sensor, independently of the class to which it belongs, will either
wake up (i.e., operate at maximum transmission radius) or go to sleep.
As $\dlm$ is not designed to deal with variable radii devices, this variant is introduced only to show that to apply $\dlm$ to a more general setting requires non trivial changes. 

\medskip
% The $\dlm$ approach aims at guaranteeing a $k$-cover of the AoI. It performs a selective activation of devices that are not able to adjust the sensing radii,
% hence we will use this algorithm for performance comparisons only in this operative setting.
% The algorithm is based on the observation that, in order to $k$-cover the AoI, each intersection point
% (where an intersection point is one of the two points where two sensing circles intersect) must be $k$-covered.
% $\dlm$ is a round based algorithm. At the beginning of each round, the sensors perform a decision phase during which each sensor decides whether
% to stay awake or go to sleep for the current round. Such a decision is made on the basis of a priority value which takes into account both the energy availability and the number
% of intersection points that each sensor can potentially cover.
% According to DLM, the decision phase is composed by a set of iterations. At each algorithm iteration, the sensor with the lowest priority in a neighbourhood activates itself. Sensors which became redundant after the activation of some neighbours, decide to go to sleep.

$\gup$ explicitly addresses
the problem of $k$-covering the AoI with sensors with adjustable radii (both transmission and sensing radii).

$\gup$ makes use of Voronoi diagrams to determine which sensors are completely redundant. It then reduces  the radius of each sensor to the minimum necessary to cover the farthest point of its Voronoi polygon. For each redundant sensor $s$, $\gup$ calculates the energy benefit obtained by putting it to sleep. This benefit is compared to the additional energy expenditure that the neighbors of $s$  would incur to enlarge their radius with respect to their minimum setting (i.e. the one needed to cover their Voronoi polygon) so as to cover the Voronoi polygon of $s$ on its behalf.
We refer the reader to \cite{Zou2009} for more details on $\gup$.

We extend the use of $\gup$ to the case of sensors with fixed radii. In the case of fixed sensors $\gup$ 
only operates the waking up/putting to sleep decisions, while the rules to reduce sensor radius
are disabled. 
The purpose of this variant is to show how trivial extensions of $\gup$  
perform in a more general scenario than the one for which it is designed.

%\begin{figure}
%\begin{center}
%% use packages: array
%\vspace{2cm}
%\begin{tabular}[c]{ccccccc}
%\hspace{-1cm}{\includegraphics[width=0.17\textwidth]{immagini/non_optimality/s1}}  & 
%\hspace{0.9cm} 
%& 
%{\includegraphics[width=0.17\textwidth]{immagini/non_optimality/s2}} & \hspace{0.9cm} 
%& 
%{\includegraphics[width=0.17\textwidth]{immagini/non_optimality/s4}} \\
%(a) & \hspace{0.9cm} 
%&
%\hspace{1.1cm} (b) & \hspace{0.9cm} 
%& 
%\hspace{1cm}(c) 
%\end{tabular}
%\end{center}
% \caption{About Pareto optimality. Initial configuration (a). Selective activation with $\dlm$ (b), $\gup$ (c), and $\alg$.
%The nodes with double circle are awake, while the others are sleeping.
%}
%\label{fig:redundancies_2}
% \end{figure}
 
 \begin{figure}
\begin{center}
% use packages: array
\begin{tabular}{ccc}
{\includegraphics[width=0.3\textwidth]{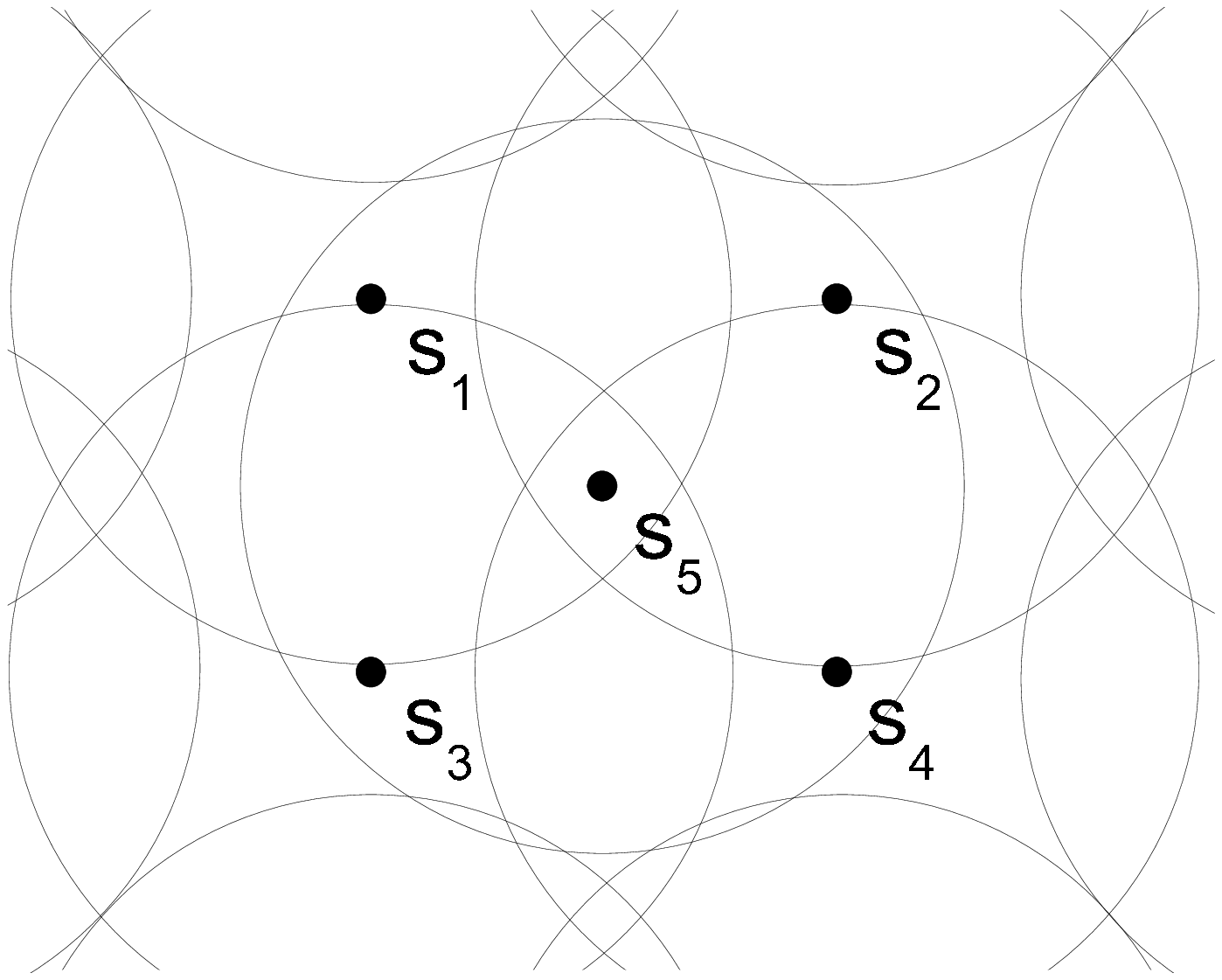}} & 
{\includegraphics[width=0.3\textwidth]{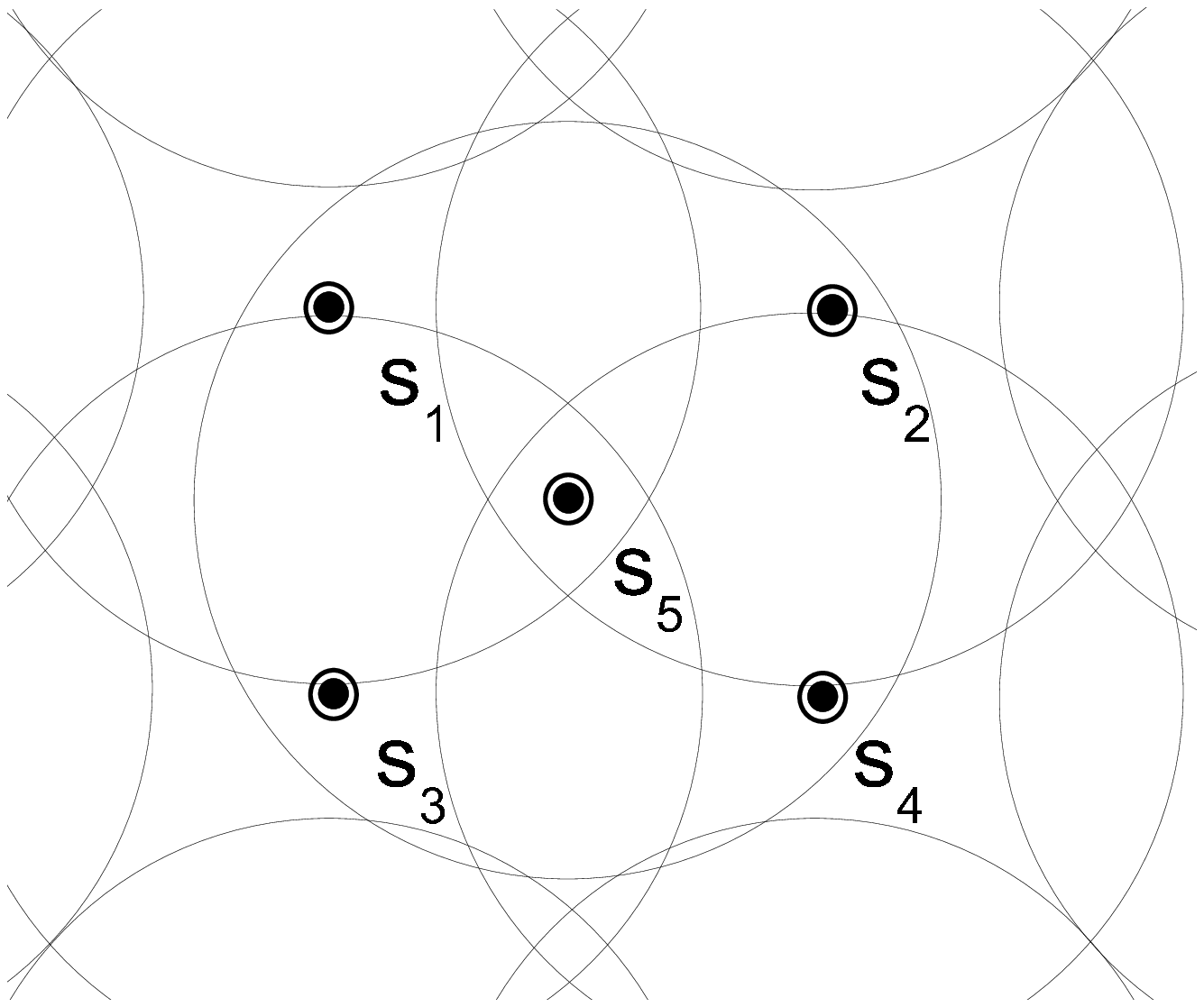}} & 
{\includegraphics[width=0.3\textwidth]{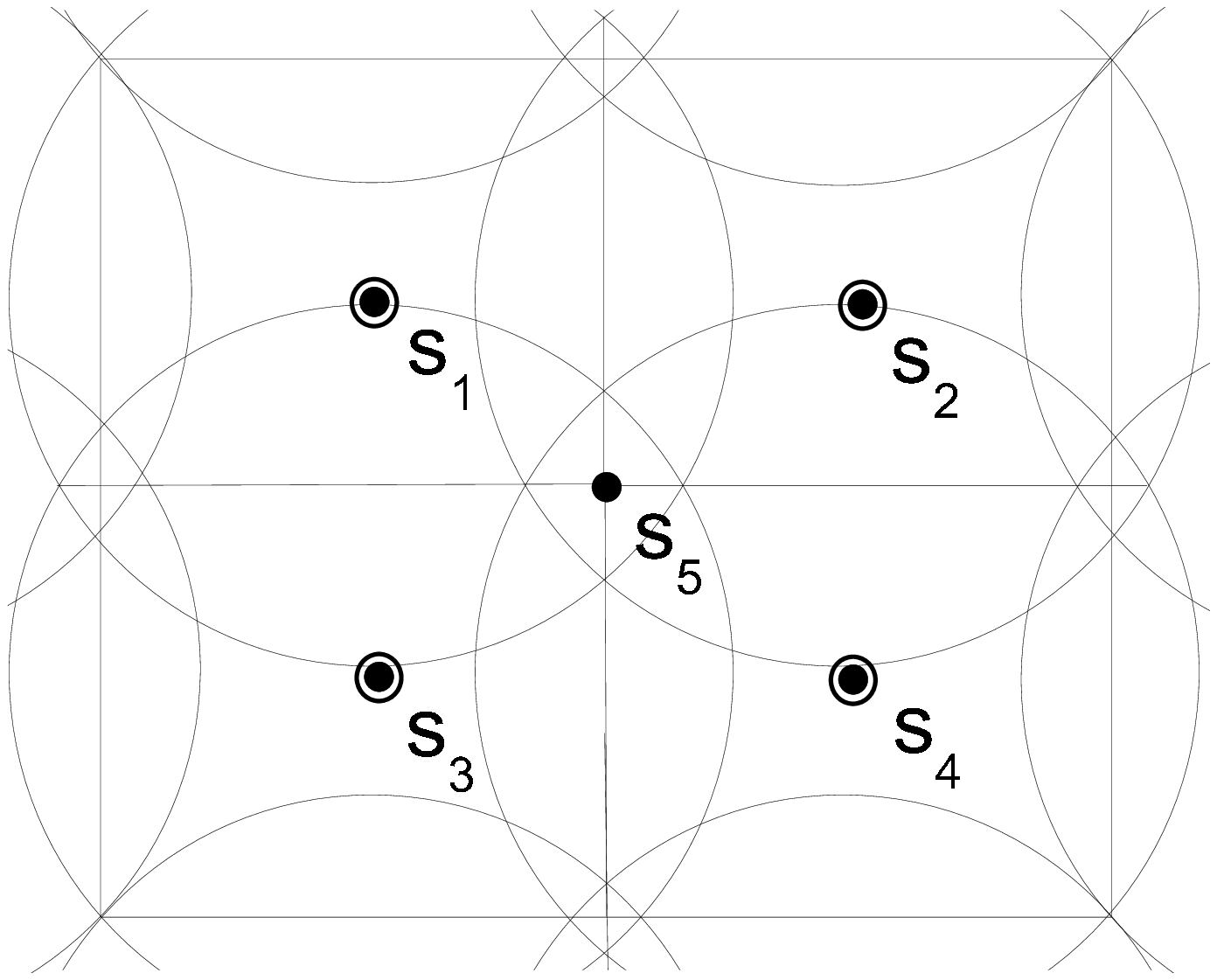}} \\
(a) & (b) & (c) 
\end{tabular}
\end{center}
 \caption{About Pareto optimality. Initial configuration (a). Selective activation with $\dlm$ (b) and $\alg$ (c). The nodes with double circle are awake, while the other ones are sleeping.
}
\label{fig:redundancies_2}
 \end{figure}

Unlike our approach, both $\dlm$ and $\gup$ do not meet the necessary condition for optimality discussed in 
Section \ref{sec:properties}.
This is explained in Figures \ref{fig:redundancies_2} and \ref{fig:redundancies_3}.

 % The algorithm $\dlm$ iteratively activates the sensors in a neighborhood on the basis of
% a priority order.
%based on the energy availability and potential coverage of some key points in the AoI (the intersection between the sensing circles of other devices).
% Depending on the activation order,  some already awake sensors may become redundant after the
% activation of other sensors in their neighborhood, whose presence is necessary to ensure the coverage completeness.
% This phenomenon is shown in Figure \ref{fig:redundancies_2}.
Figure \ref{fig:redundancies_2}(a) represents an initial configuration with fixed sensors.
Observe that sensors $s_1$, $s_2$, $s_3$ and $s_4$
must be awake to ensure a  complete coverage of the AoI,
  as they cover portions of the AoI that cannot
be covered by any other sensor in the network.
According to $\dlm$, if the energy available to sensor $s_5$ is sufficiently high,
 $s_5$ can be the first sensor
to be woken up in its neighborhood. Once awake, it stays awake
despite the waking up of the other four sensors makes $s_5$ unnecessary (see
Figure \ref{fig:redundancies_2}(b)).

Under the same initial setting $\alg$ would not activate $s_5$,
as the backoff policy ensures the sleeping of all redundant sensors. This is shown in Figure \ref{fig:redundancies_2}(c).

% its Voronoi-Laguerre polygon would gradually become null in a few steps of $\alg$

\begin{figure}
\begin{center}
\begin{tabular}{ccc}
{\includegraphics[width=0.3\textwidth]{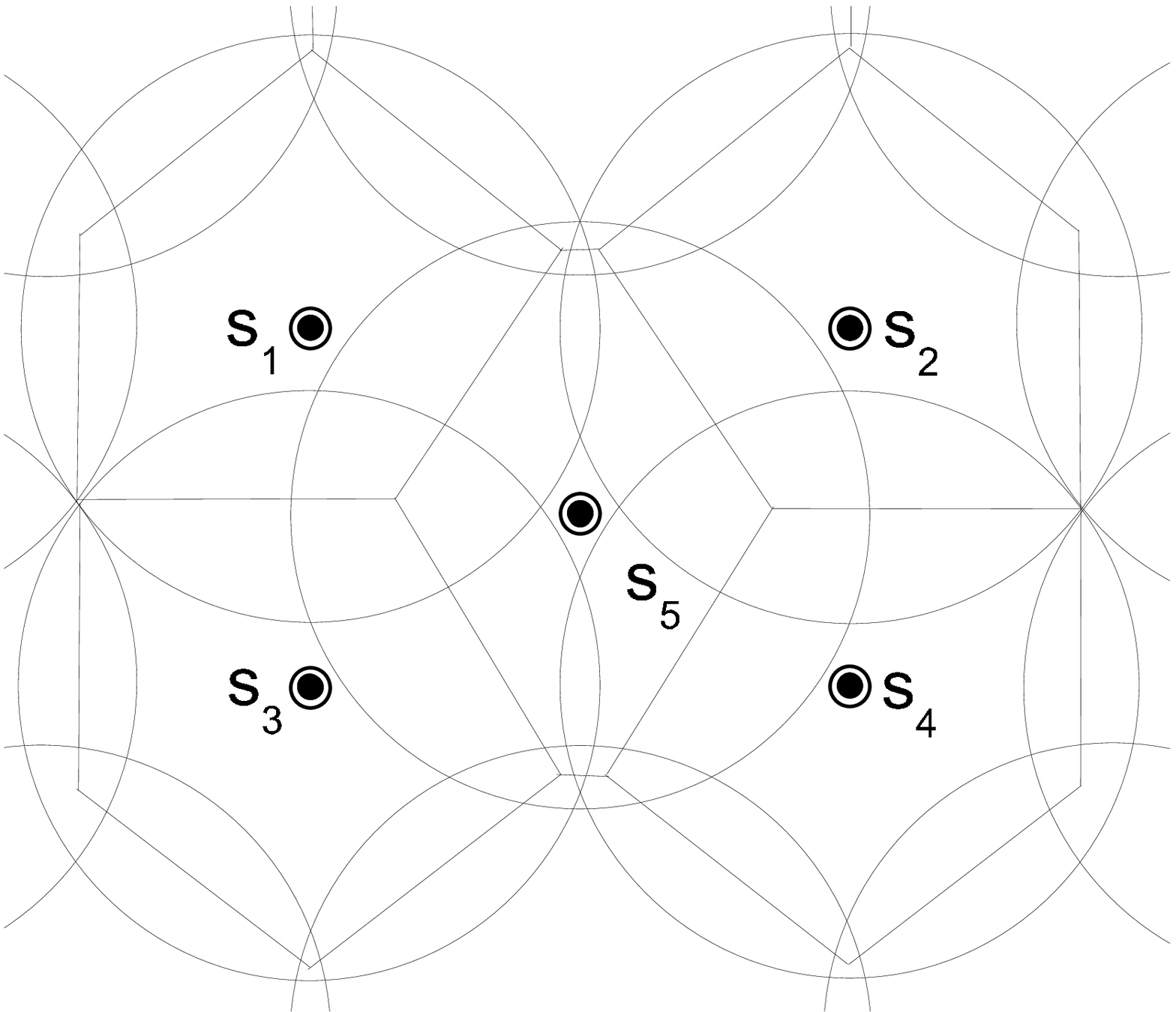}}  & 
{\includegraphics[width=0.3\textwidth]{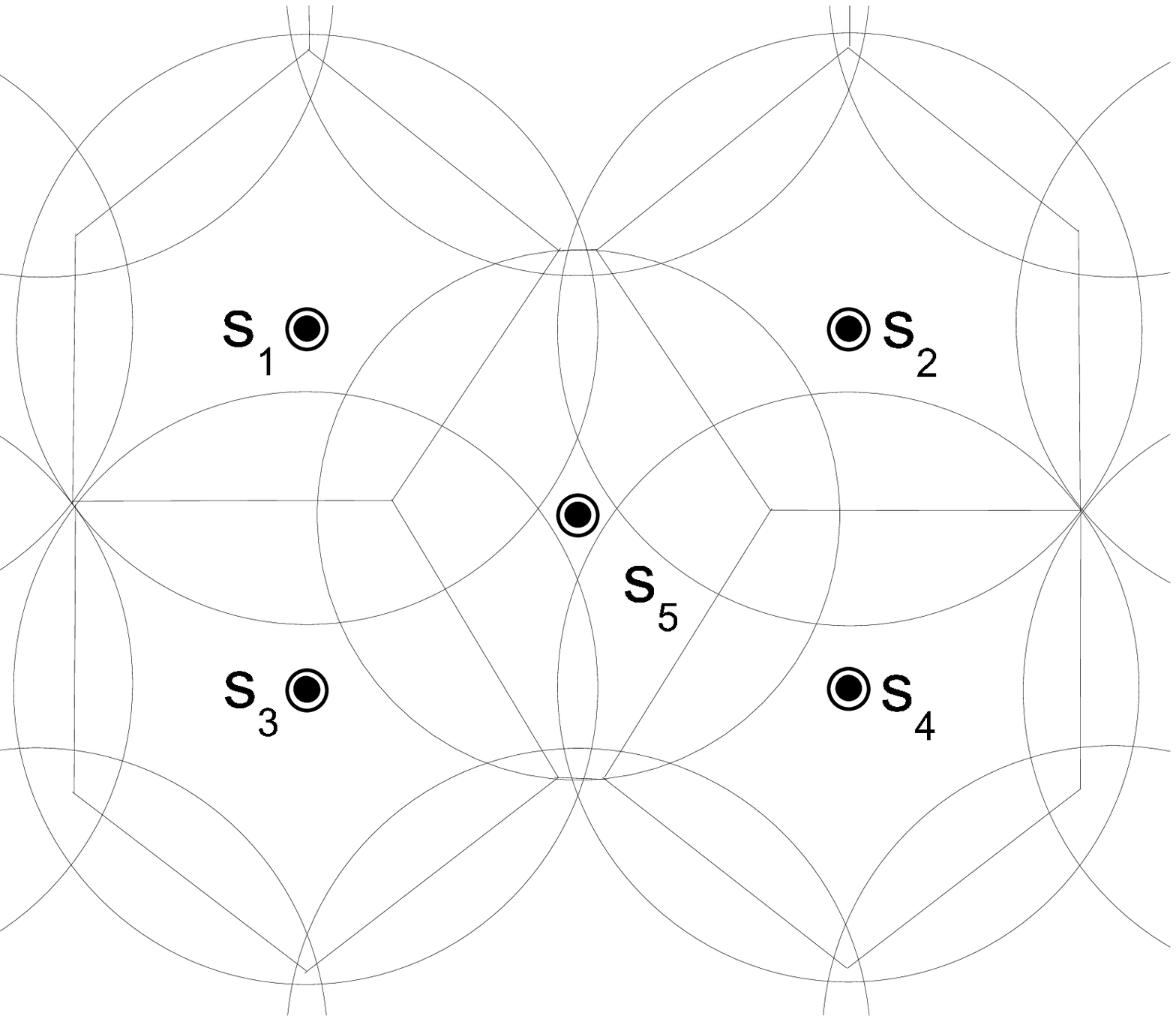}} &
{\includegraphics[width=0.3\textwidth]{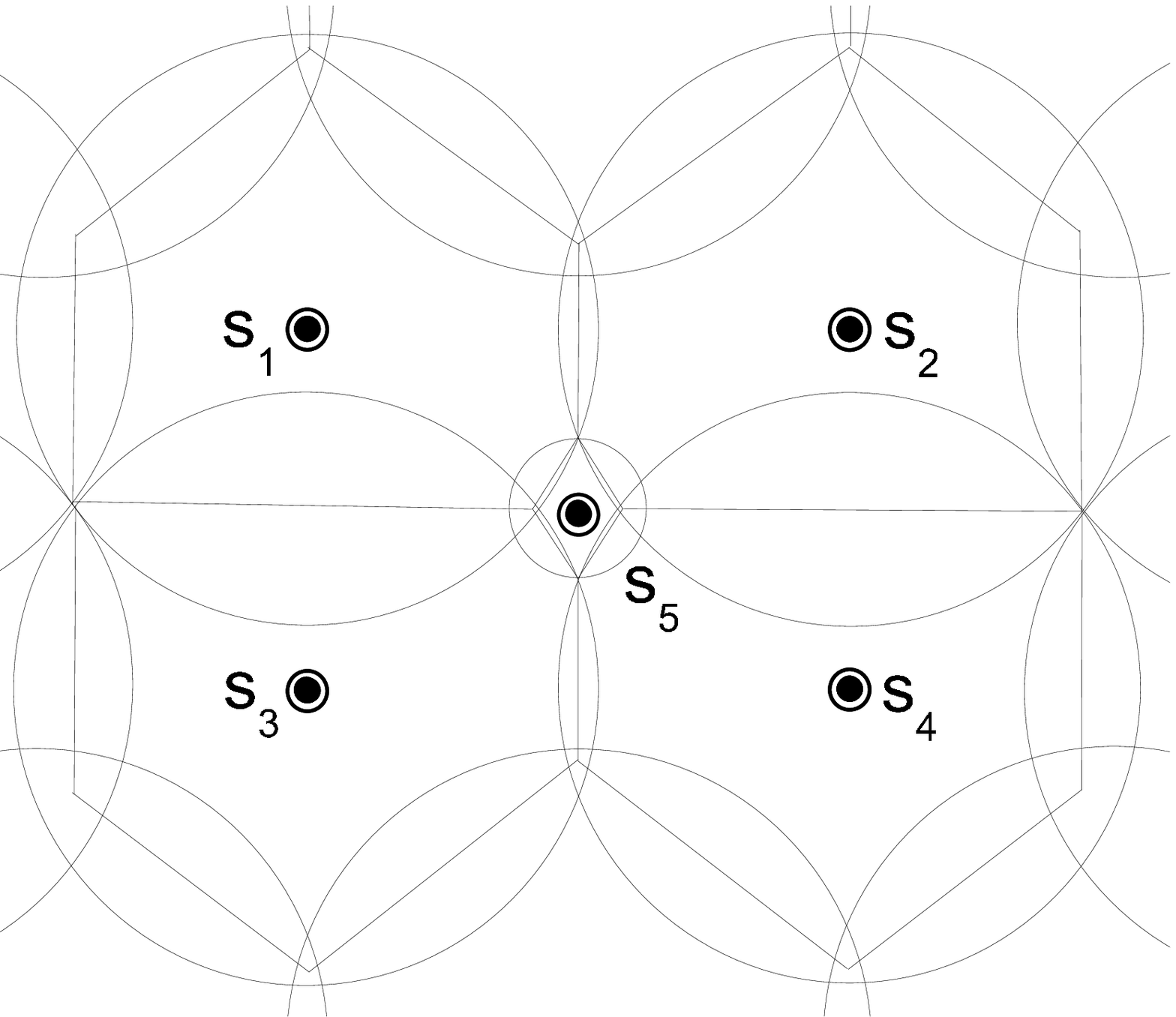}} \\
(a) & (b) & (c) 
\end{tabular}
\end{center}
 \caption{About Pareto optimality. Initial configuration (a). Selective activation with $\gup$ (b) and $\alg$ (c).
}
\label{fig:redundancies_3}
 \end{figure}

Figure \ref{fig:redundancies_3} displays a scenario with adjustable sensors.
Figure  \ref{fig:redundancies_3}(a) shows the initial configuration where all sensors are awake and work at their maximum radius. The figure also highlights the Voronoi
diagram of the considered sensors. In this example all sensors ($s_1$, $s_2$, $s_3$, $s_4$ and $s_5$) are 
needed to achieve full coverege.
Sensors  $s_1$, $s_2$, $s_3$ and $s_4$ cannot reduce their radius as their uniquely covered zone reaches 
the boundary of their sensing circle. Sensor $s_5$, instead, can significanlty reduce its radius without 
affecting coverage.

According to $\gup$ each sensor sets its radius to the distance from it to the farthest vertex of its Voronoi polygon. 
Therefore, $s_5$ reduces its radius as shown in Figure \ref{fig:redundancies_3}(b). Since no sensor  can be 
put to sleep, this is the final configuration achieved by $\gup$. Nevertheless, sensor $s_5$ can still significantly 
reduce its radius. By iteratively adjusting the radius of $s_5$,  $\alg$ reaches a Pareto optimal configuration, 
where the radius of $s_5$ is set to the minimum value that does not leave a coverage hole, as shown in Figure 
\ref{fig:redundancies_3}(c).

We conclude this subsection by underlying that if $\dlm$ and $\gup$ are not properly extended as discussed above,
%%%QUI: COMMENTO DI CUI TENER CONTO
%\marginpar{Si dovrebbe forse spiegare meglio perchè anche gli algo per i resizable prima
%o poi si trovano a operare su raggi diversi, ma partono dalla stessa capacit�  nominale.}
$\gup$ cannot be used in the case of non adjustable radii and, vice-versa, $\dlm$ cannot be applied to the 
case of variable radii.
Our algorithm, instead, works in both the operative settings. Moreover, our algorithm is also
able to work in a mixed scenario  characterized by both sensors with adjustable and fixed radii, even in the
presence of heterogeneous devices, showing an impressive versatility.
In the performance evaluation section we will also show that $\alg$ achieves significant performance
improvements over the other two schemes in all operative settings.

We summarize the features of the three schemes in Table
\ref{tab:appli}.

\begin{table}[ht]
\centering
\begin{tabular}{|c|c|c|c|c|c|c|} \hline
& \multicolumn{2}{|c|}{Fixed type} & \multicolumn{2}{|c|}{Adjustable type} & \multicolumn{2}{|c|}{Both types}\\ \hline
& Hom. & Het. & Hom. & Het. & Hom. & Het.\\ \hline
$\dlm$ & Y & Y & N & N & N & N\\ \hline
$\gup$ & N & N & Y & N & N & N\\ \hline
$\alg$ & Y & Y & Y & Y & Y & Y\\ \hline
\end{tabular}
\caption{Scenarios where the considered algorithms are applicable.}
\label{tab:appli}
\end{table}

To give a fair performance comparison, in Section \ref{sec:results}
we  compare $\alg$ to $\dlm$ and $\gup$ in their restrictive operative settings and then we  extend their 
use to the   general applicative scenario where devices belong to both the two classes of sensors with fixed and 
adjustable radii and are heterogeneous in their
sensing capabilities.

%------------------------------------------------------------------------------
% EXPERIMENTAL RESULTS
%------------------------------------------------------------------------------

\section{Experimental results} \label{sec:results}

%%%%%%%%%%%%%%%%%
\subsection{Experimental setting}

In all the experiments we use the following setting. The AoI is a square shaped region of 80m $\times$  80m. 
 We adopt the Telos \cite{Telos} communication cost model. Concerning the sensing model of sensors with adjustable radius
we consider the cost of six Maxbotix sonar devices \cite{sonar} with different orientations, working at 2Hz. 
We adopted the cubic law of energy cost ($c=3$ in Equation \ref{eq:e_active_sensing}) 
with respect to the sensing radius.

According to these models, each sensor has a transmission range of 30m. The battery capacity is 1840 mAh and  sensors are 
endowed with an initial energy that is uniformly distributed in the interval (0, 1840mAh].
The length of the operative time interval between two successive executions of the algorithm $\alg$ and $\dlm$ is set to 
24h which is equal to 1.5\% of the total time a sensor can remain awake. Notice that the algorithm $\gup$, as defined in 
\cite{Zou2009}, reconfigures the network every time
a sensor has exhausted its available energy.

Regarding the setting of the sensing radius, it varies from one applicative scenario to  the other.
We consider all the applicative scenarios described in Table \ref{tab:appli}.

The algorithms were implemented by using the Wireless module of the OPNET modeler software \cite{opnet}.

%%%%%%%%%%%%%%%%%%%%%%%
\subsection{Choice of the reduction criterion } 
\label{sec:alpha_evaluation}

Before we give the comparative performance evaluation between $\alg$ and other previous works, we show an extract of the many experiments that motivated our choice in the formulation of the priority decision parameter $\alpha$ described in Subsection \ref{sec:alpha}.

Notice that all the properties of $\alg$ that we demonstrated in Section \ref{sec:properties} hold no matter which is the formulation of the parameter $\alpha_i^{(k)}$.

In particular, we proved that, independently of the particular choice for the setting of $\alpha_i^{(k)}$, the algorithm $\alg$ guarantees that the solution will be Pareto-optimal,
therefore no sensor will be able to decrease its energy consumption by turning itself off or reducing its sensing radius, without requiring other sensors to increase their energy expenditure to compensate the coverage loss deriving from the decision of the first sensor.

The order in which the sensors operate their decisions is determined by the particular choice for the formulation of the decision priority $\alpha_i^{(k)}$ and has a direct impact on the 
solution, i.e. different (all Pareto-optimal) solutions are obtained executing the algorithm by giving sensors different priorities.
Nevertheless, it is clear that the network lifetime of different Pareto optimal solutions can vary significantly.

Since the setting of  $\alpha_i^{(k)}$ influences the policy decisions at a local level only, it is not completely intuitive to determine the effects of such local decisions on
a global performance metric such as the network lifetime.
Therefore we considered several possible formulation of the decision priority parameter.

First we can consider the {\em residual energy} of the devices, and gave higher priority to sensors with 
lower residual energy in making turning-off or radius reduction decisions, therefore 
$\alpha_i^{(k) \texttt{residual\_energy} }=  \frac{\max_{j \in \mathscr{L}^{(k)}}(s_i) E^{(n)}_\texttt{available}(s_j)-E^{(n)}_\texttt{available}(s_i)}{ \max_{j \in \mathscr{L}^{(k)}}(s_i) E^{(n)}_\texttt{available}(s_j)-\min_{j \in \mathscr{L}^{(k)}}(s_i) E^{(n)}_\texttt{available}(s_j)}$.

Nevertheless this criterion revealed that by turning off the sensors with lower energy
(or significantly reducing their responsibility region) would cause other sensors (those with large residual energy) to consume an arbitrary large amount of energy, possibly making the alive sensors end up with much lower energy than the turned off sensor would have had
if it were kept awake.
This is the reason why this criterion performs worse than others as we show in Figure \ref{fig:alpha}.

Then, we can consider  a formulation of the parameter $\alpha_i^{(k)}$ which results in a priority setting based on the {\em expected residual lifetime} of individual sensors.

The expected residual lifetime (in number of equally sized operative time intervals) $\hat{L}_i(k)$ of the sensor $s_i$ at the $k$-th iteration of the algorithm  is calculated as the ratio between the
currently available energy $E_\texttt{available}(s_i)$ and the energy consumption per operative time interval with the currently calculated radius, i.e. $\hat{L}_i(k)=\frac{E_\texttt{available}(s_i)}{E_\texttt{active\_sensing}(r_i^{(k)})}$.

The residual lifetime criterion consists therefore in setting the value of $\alpha_i^{(k)}$ as 
$$\alpha_i^{(k) \texttt{residual\_lifetime}}=\frac{L^{(k)}_\texttt{max}(n)-\hat{L}^{(k)}_i(n)}{L^{(k)}_\texttt{max}(n)-L^{(k)}_\texttt{min}(n)}$$

Although this setting of $\alpha_i^{(k)}$ is superior to the previous ones in all the considered scenarios, it still tends to favor the improvement of the lifetime of single sensors with respect to the utility of the global network.
 In particular there are still some situations in which 
the algorithm gives higher priority to turning off some sensors with smaller residual lifetime at the expense of sensors with larger residual lifetime that in this way are forced to work longer, spending more energy than all the others in the neighborhood.

We experimented other formulations of $\alpha_i^{(k)}$ that we do not discuss in this paper for the sake of brevity. 
We experimentally obtained the best results by setting $\alpha_i^{(k)}$ as we described in Subsection \ref{sec:alpha}, giving higher priority to the decisions that 
lead to a better energy gain.

In Figures \ref{fig:alpha}(a) and (b) we considered 
a scenario with 900 homogeneous sensors with adjustable radius, whose sensing range was allowed to vary in the interval $[2m, 6m]$.
Figure 
\ref{fig:alpha} (a) shows a comparison of the residual energy 
obtained by the three criteria, while Figure \ref{fig:alpha}(b) shows how
the criterion that we chose guarantees a longer lifetime than the other ones.

%%%%%%%%%%%%%%%%%%%%%%%%%%
\subsection{Impact of the faster termination condition}

In this subsection we analyze the impact of the {\em faster termination condition} introduced in Section \ref{sec:properties}, namely of the configuration of the maximum number of iterations $K$ that $\alg$ is allowed to execute at the beginning of each operative time interval.
 We recall that the algorithm $\alg$ is guaranteed to converge, but theoretically it may do so in an infinite number of smaller and smaller steps. Although in the experiments we never encountered a scenario
where $\alg$ did not terminate in a finite number of steps, we introduced the condition for faster termination, by imposing an upper bound $K$ on 
the number of algorithm iterations at each operative time interval.

The experiments shown in Figure  \ref{fig:sensitivity_K} (a-d)
are made in a scenario with 900 sensors with adjustable sensing radii ranging from 2m to 6m.
These experiments highlight that even by setting $K$ to a small value
(e.g. 20) the algorithm $\alg$ shows the same performance of the unbounded case in terms of active, sleeping, and dead nodes (Figure \ref{fig:sensitivity_K} (a-b-c)) and of network coverage (Figure \ref{fig:sensitivity_K}(d)).

%MINIPAGE ATTEMPT

\begin{figure}
\begin{minipage}[b]{0.32\textwidth} % A minipage that covers half the page
\centering
\begin{tabular}{c}
\hspace{-0.5cm}\includegraphics[width = 0.7\textwidth,angle=-90]{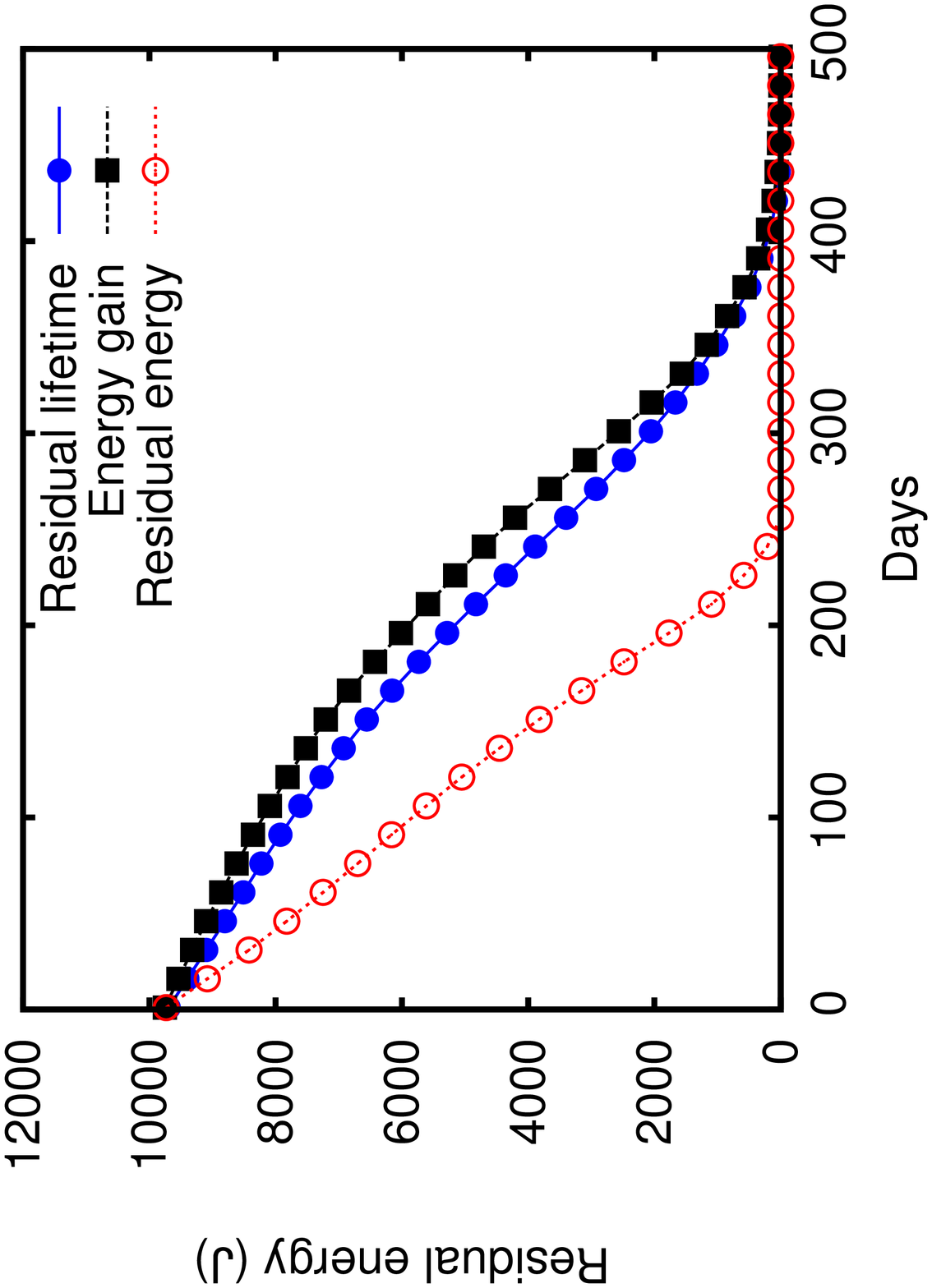}
\\
\hspace{-0.5cm}{\footnotesize{(a)}}\\
\hspace{-0.5cm}\includegraphics[width = 0.7\textwidth, angle=-90]{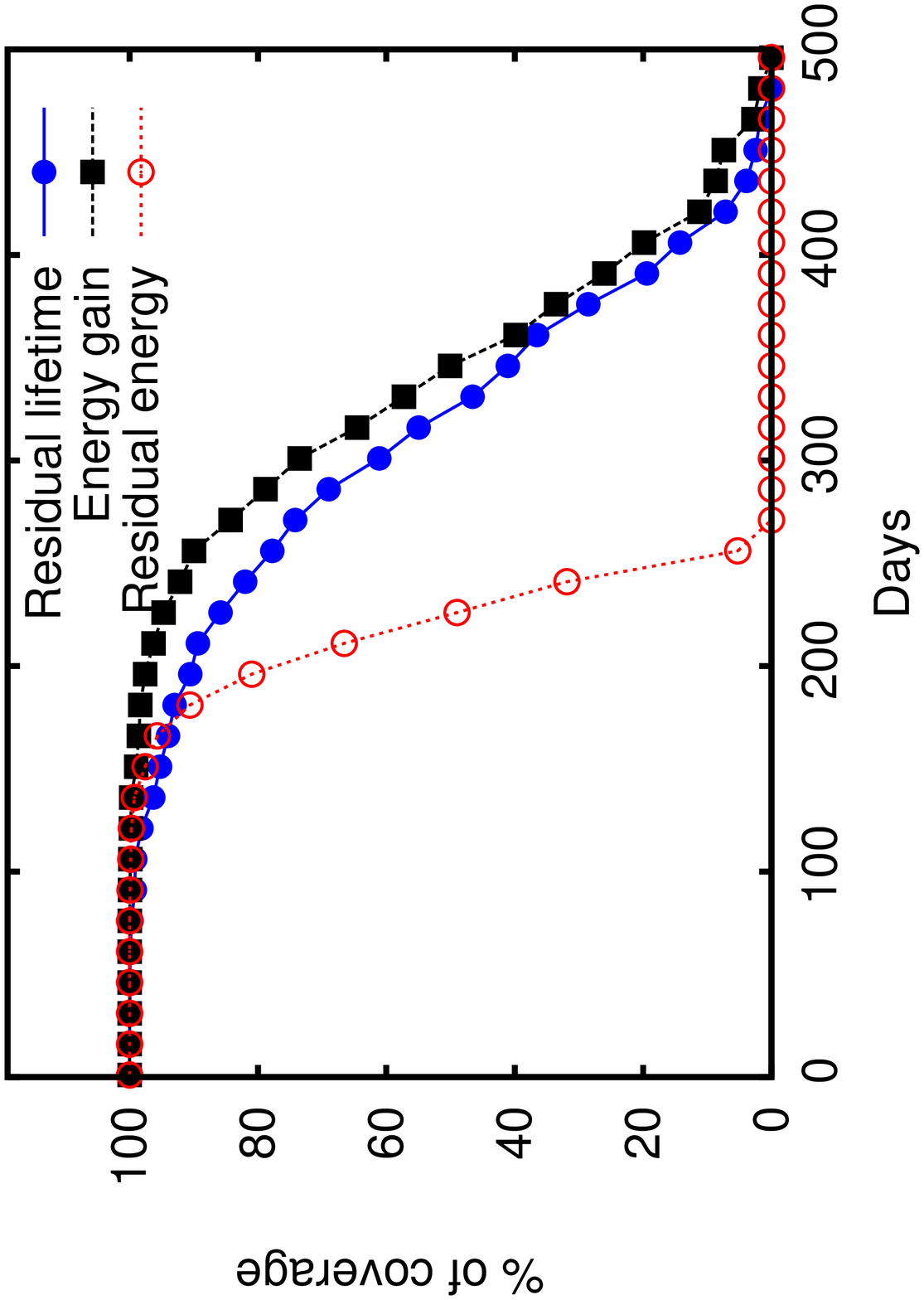}\\
\hspace{-0.5cm}{\footnotesize{(b)}}
\end{tabular}
 \caption{Average residual energy (a) and coverage of $\alg$ under different formulations of the decision priority parameter. }
\label{fig:alpha}
\end{minipage}
\hspace{0.5cm} % To get a little bit of space between the figures
\begin{minipage}[b]{0.62\textwidth}
\begin{tabular}{cc}
\hspace{-0.5cm}\includegraphics[width = 0.36\textwidth,angle=-90]{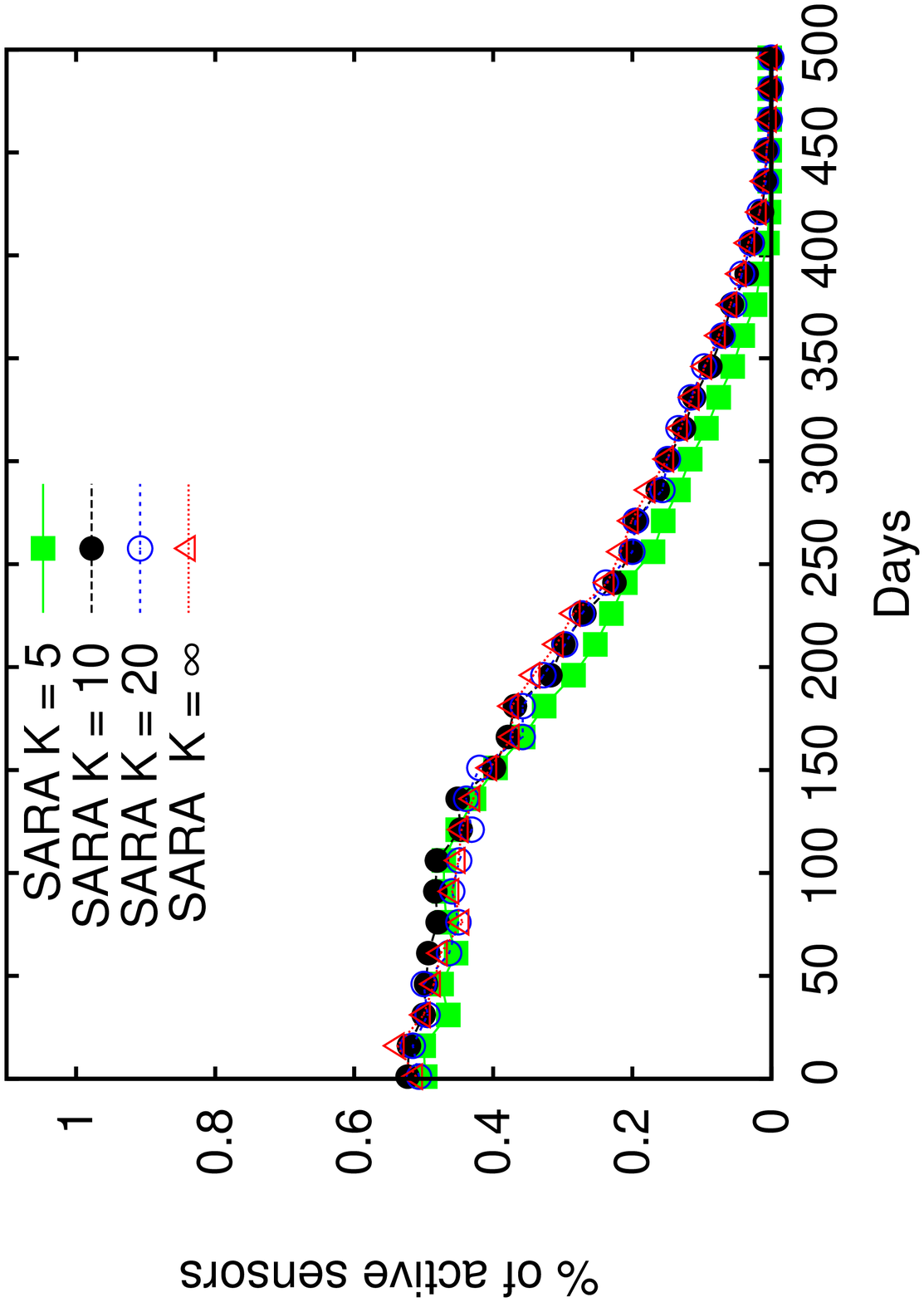}
&
\hspace{-0.5cm}\includegraphics[width = 0.36\textwidth,angle=-90]{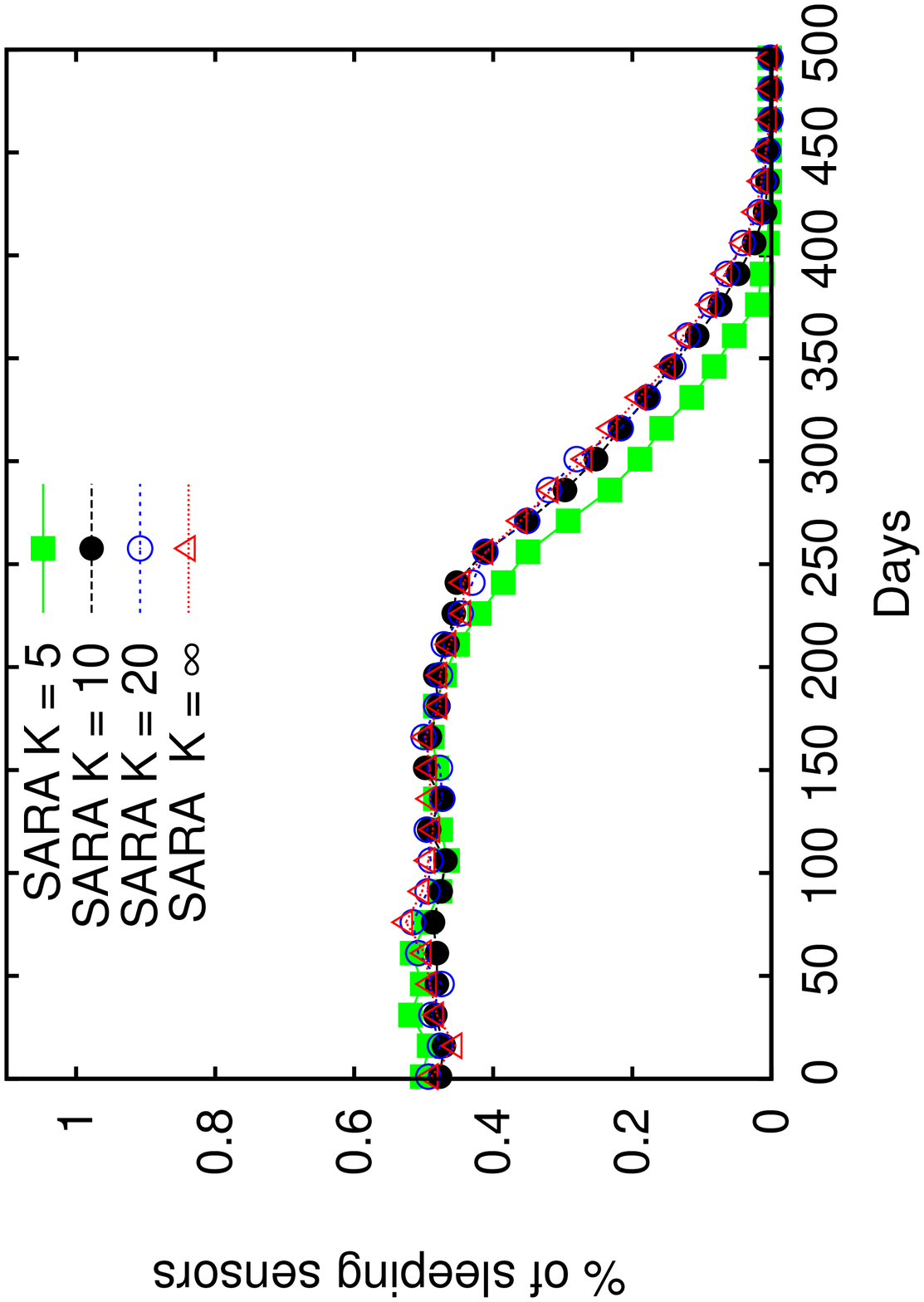}
\\
\hspace{-0.5cm}{\footnotesize{(a)}}&{\footnotesize{(b)}}\\
\hspace{-0.5cm}\includegraphics[width = 0.36\textwidth,angle=-90]{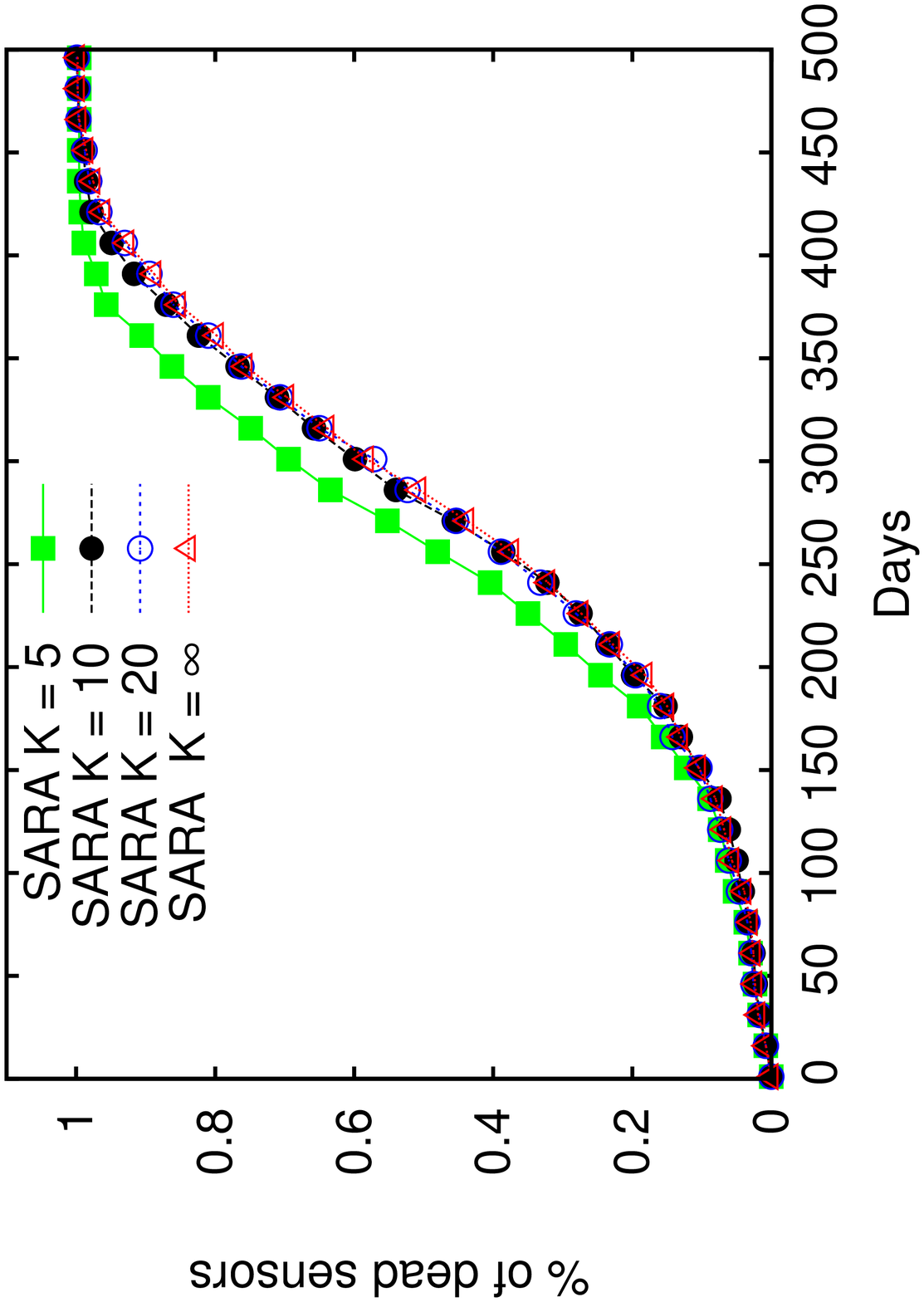}
&
\hspace{-0.5cm}\includegraphics[width = 0.36\textwidth, angle=-90]{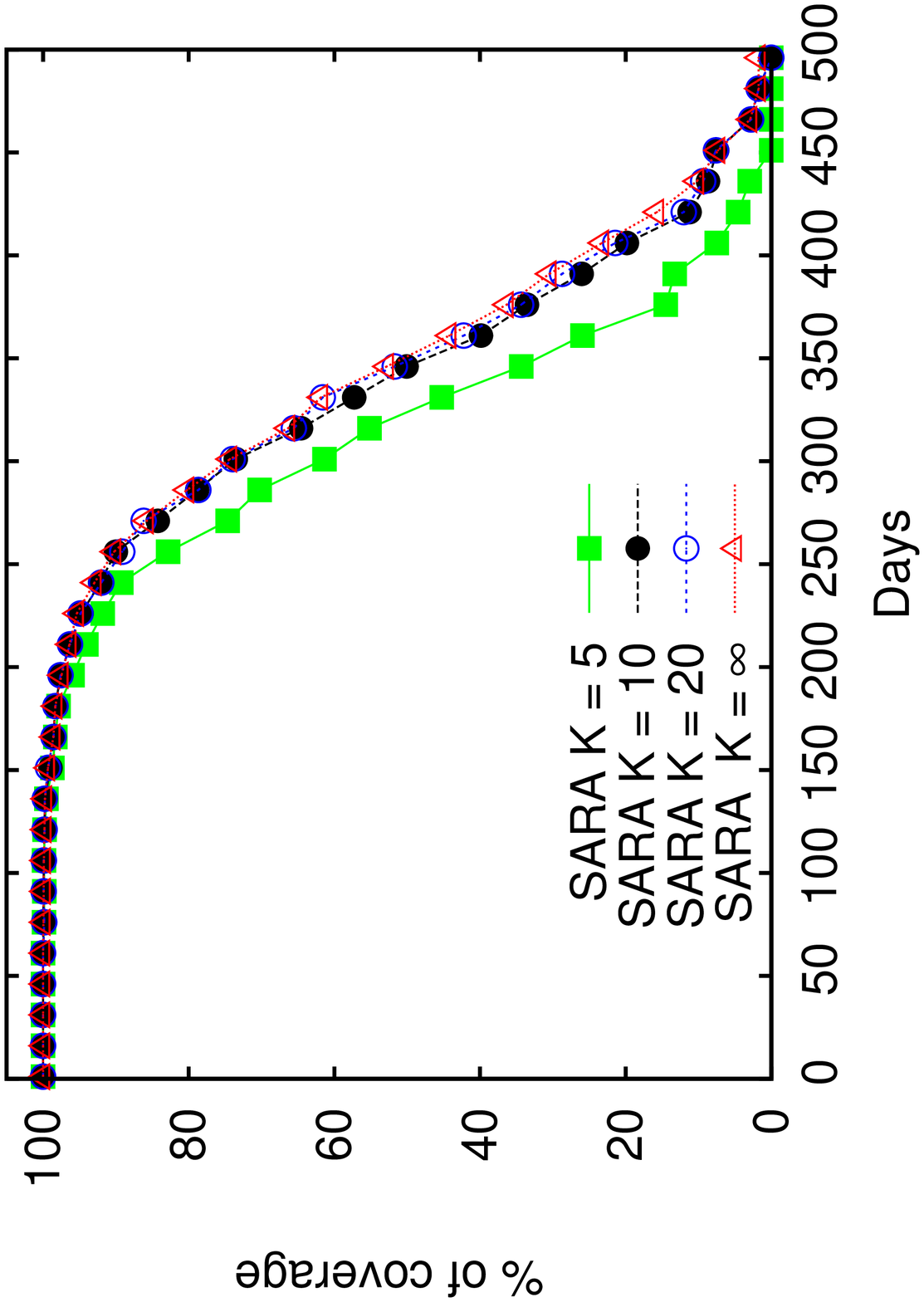}
\\
\hspace{-0.5cm}{\footnotesize{(c)}}&{\footnotesize{(d)}}
\end{tabular}
%\end{table}
 \caption{Performance of $\alg$ under different settings of the maximum number of iterations $K$ (faster termination condition). Percentage of active sensors (a), sleeping sensors (b), dead sensors (c), percentage of coverage of the AoI (d). }
\label{fig:sensitivity_K}
\end{minipage}
\end{figure}

%END OF MINIPAGE

% \begin{figure}
% \includegraphics[width = 0.3\textwidth, angle=-90]{resizable/homogeneous/iterations/coverage}
% \caption{Coverage achieved by Sara under different setting of $K$}
% \label{cov_iterations}
% \label{fig:coverage_k}
% \end{figure}
% 
% \begin{figure}
% \includegraphics[width = 0.3\textwidth,angle=-90]{resizable/homogeneous/iterations/active_nodes_wrt_total}
% \caption{\% of active sensors  under different setting of $K$}
% \label{fig:active_k}
% \end{figure}
% 
% \begin{figure}
% \includegraphics[width = 0.3\textwidth,angle=-90]{resizable/homogeneous/iterations/dead_nodes_wrt_total}
% \caption{\% of dead sensors  under different setting of $K$}
% \label{fig:dead_k}
% \end{figure}

% \begin{figure}
% \includegraphics[width = 0.3\textwidth,angle=-90]{resizable/homogeneous/iterations/sleeping_nodes_wrt_total}
% \caption{\% of sleeping sensors  under different setting of $K$}
% \label{fig:sleeping_k}
% \end{figure}

%%%%%%%%%%%%%%%%%%%%%%%%%%%
\subsection{Adjustable sensors: Homogeneous setting}
\label{sec:homo_var}

This section is devoted to a comparative analysis of the performance of $\alg$, $\dlm$ and $\gup$ in a scenario with only sensors with equal capabilities to adjust their sensing range.
As in the experiments of the previous sections, we considered 900 sensors whose range varied in the interval $[2m,6m]$.
It should be noted that this scenario is not the most general, but it is the one for which $\gup$ was specifically designed.
Therefore, it is reasonable to expect that $\gup$ show a good performance.
Nevertheless, the experiments highlight that even in this case,
the algorithm $\alg$ performs better.
Indeed thanks to the device homogeneity, the algorithm $\gup$
 is able to work in this scenario without creating coverage holes, but it is not able to fully 
exploit the adaptability of the sensing range as $\alg$ does, thanks to the use of Voronoi diagrams in the Laguerre geometry. 

In the following experiments we consider the modified version of $\dlm$
described in Section \ref{sec:two_approaches} in order to apply it to the scenario with adjustable sensing radii.
As we have already argued, this modified version is introduced in these experiments to highlight that this algorithm
cannot be trivially extended to the general scenario without a significant loss in performance.

Figure \ref{fig:time_VAR_HOM}(a) shows how the coverage achieved by the three algorithms decreases with time.
The loss in coverage with $\dlm$ is much faster than with $\gup$ and $\alg$, evidencing its inapplicability to this operative scenario.
In this case $\alg$ performs better than $\gup$. 
For instance, in correspondence to day 350, $\alg$ is still capable to cover about twice the 
extension of the area covered by $\gup$.
This evidences the capability of $\alg$ to prolong the network lifetime, when this is formulated as the time within which the network is still capable to cover
a given percentage $x$ of the AoI, independently of the value of $x$.

Figure \ref{fig:time_VAR_HOM}(b) and (c) show how the percentage of active and sleeping sensors varies with time. These percentages are calculated with respect to the whole set of available sensors. 
It should be noted that although $\dlm$ activates a very small percentage of the available sensors, it is penalized by the fact that the radius of the active sensors cannot be modulated by the algorithm (Figure \ref{fig:time_VAR_HOM}(e)). Thence the energy consumption per sensor is very high, as demonstrated by Figure 
 \ref{fig:time_VAR_HOM}(f) which shows how small is the residual energy under $\dlm$ after few operative time intervals, resulting in a very high percentage of dead sensors 
 \footnote{Hereby we refer to {\em dead sensor} as to devices which have exhausted their available energy.}
Notice that under $\dlm$ the number of active sensors shows a peak after about 50 operative intervals for a twofold reason.
On the one hand the greedy nature of the algorithm $\dlm$ results in the activation of redundant sensors, as it gives higher priority to the activation of sensors which have consumed less energy in the previous operative intervals.  
On the other hand, while at the first intervals, the algorithm is able to select the best suitable sensors to cover the AoI, after the death of several sensors some regions can only be covered by sensors which cause larger overlaps.

The algorithms $\alg$ and $\gup$  are able to modulate the sensing radius of the active sensors so as to reduce the coverage overlaps and save energy. Therefore, with respect to $\dlm$, more sensors are activated (Figure \ref{fig:time_VAR_HOM}(b)) working with lower sensing radius (Figure \ref{fig:time_VAR_HOM}(e)). 
This permits to the algorithms $\gup$ and $\alg$ to preserve more energy than $\dlm$ (\ref{fig:time_VAR_HOM}(f)). Notice that,  $\alg$ activates a higher number of sensors with smaller radius than $\gup$, thus being able to prolong the network lifetime by reducing  the amount of consumed energy.

\begin{figure}[h]
\centering
%\begin{table}
\begin{tabular}{ccc}
\hspace{-0.5cm}\includegraphics[width = 0.22\textwidth, angle=-90]{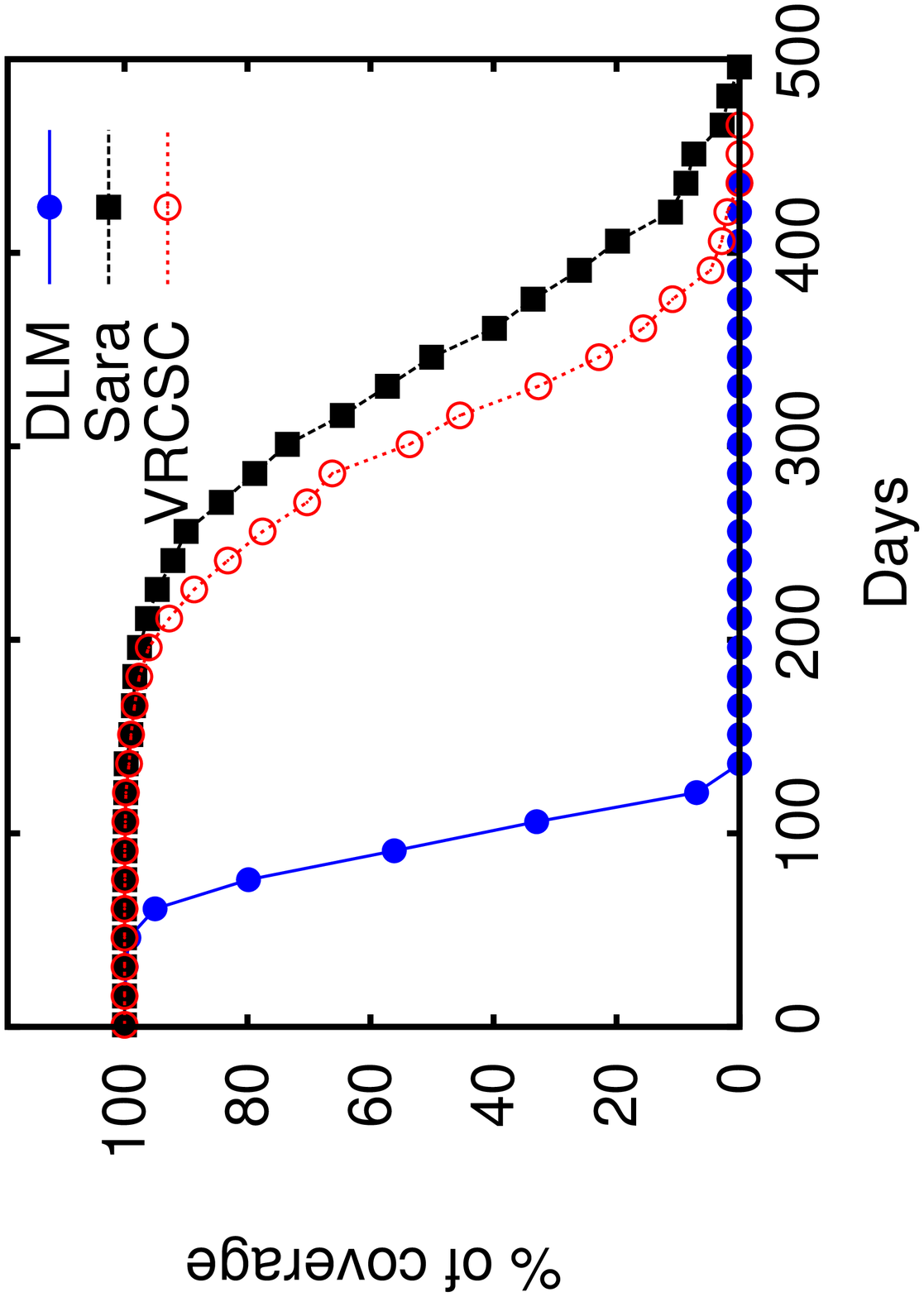} &
\includegraphics[width = 0.22\textwidth,angle=-90]{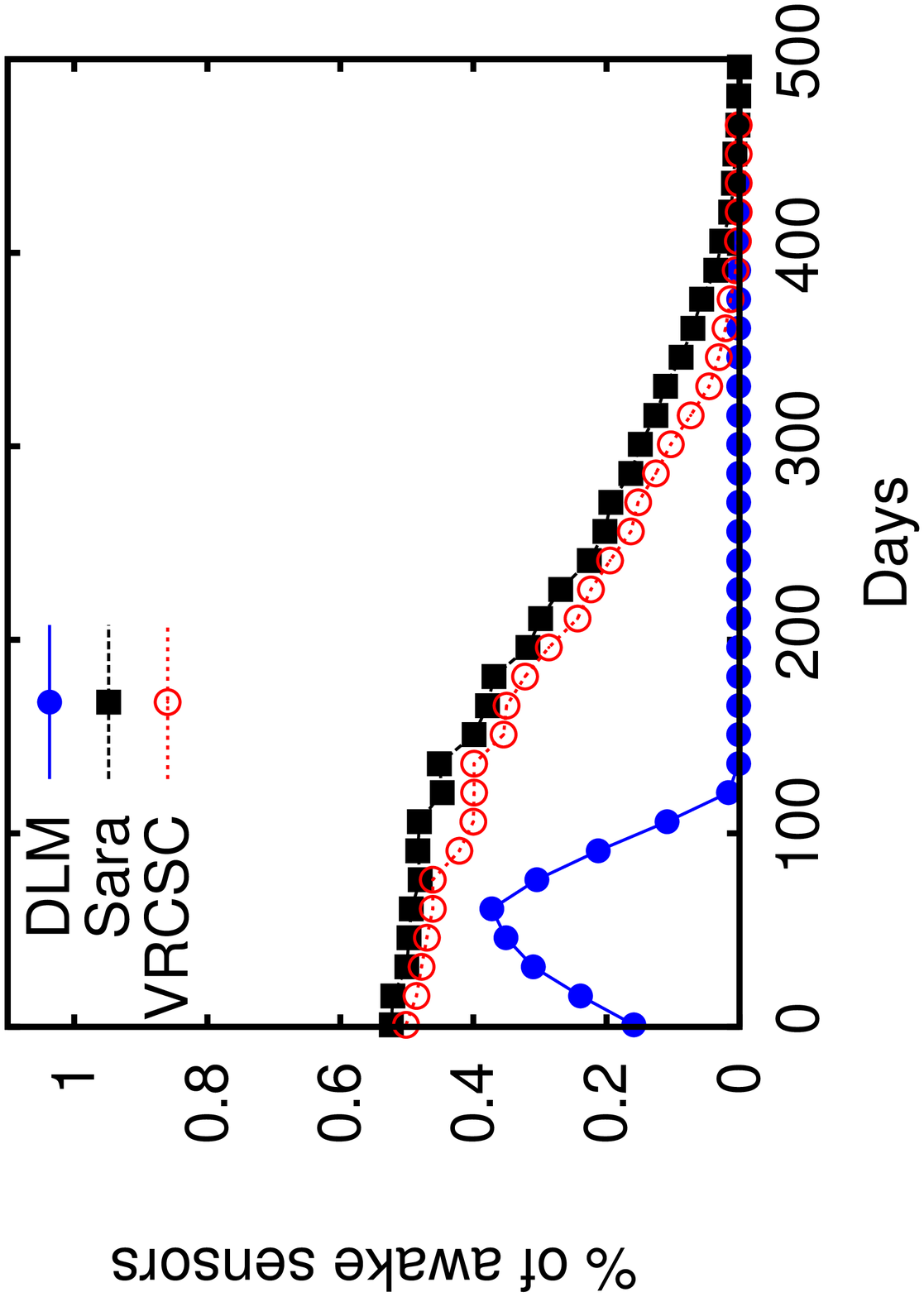}&
\includegraphics[width = 0.22\textwidth,angle=-90]{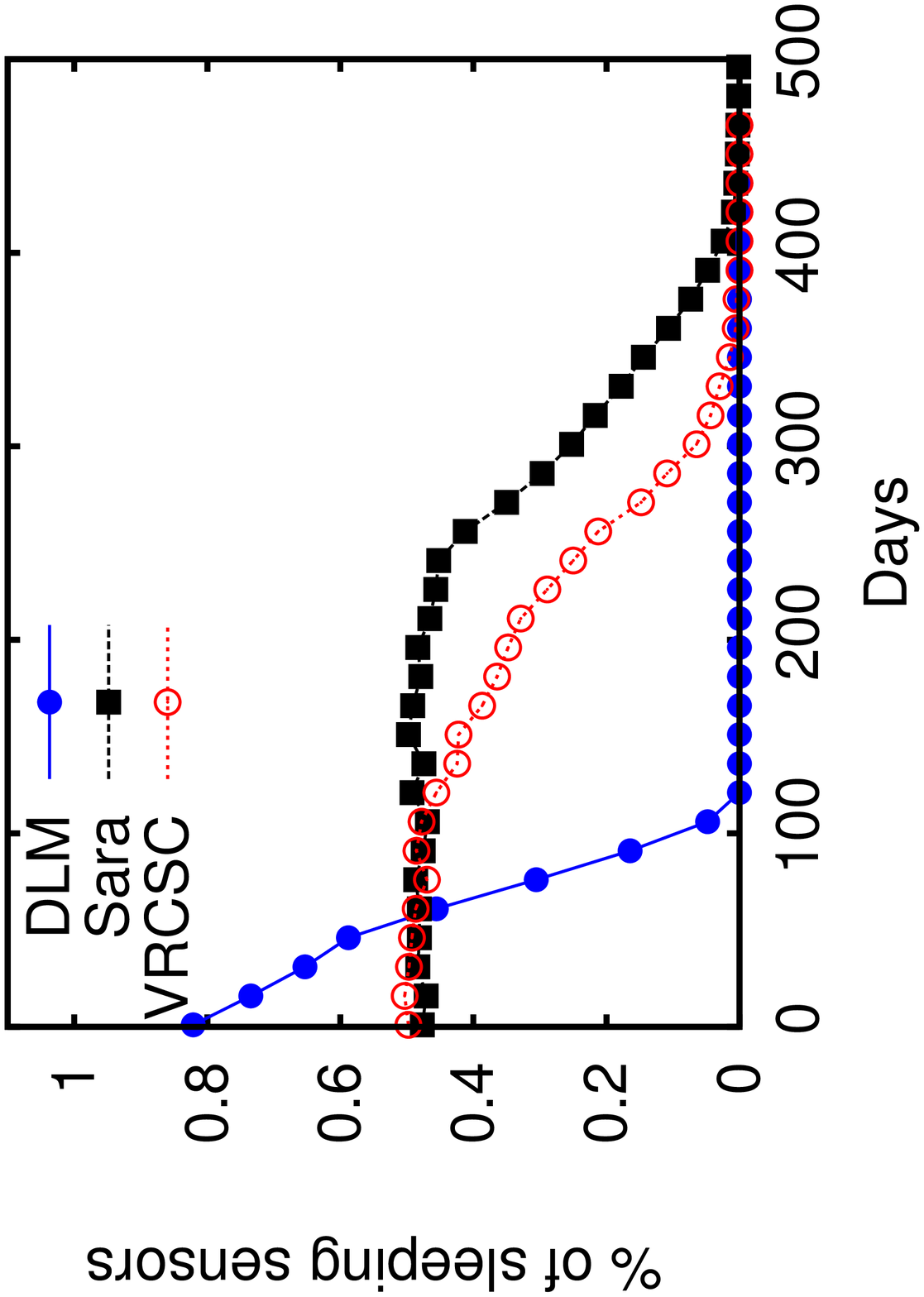}\\
\hspace{-0.5cm}{\footnotesize{(a)}}&{\footnotesize{(b)}}&{\footnotesize{(c)}}\\
\hspace{-0.5cm}\includegraphics[width = 0.22\textwidth,angle=-90]{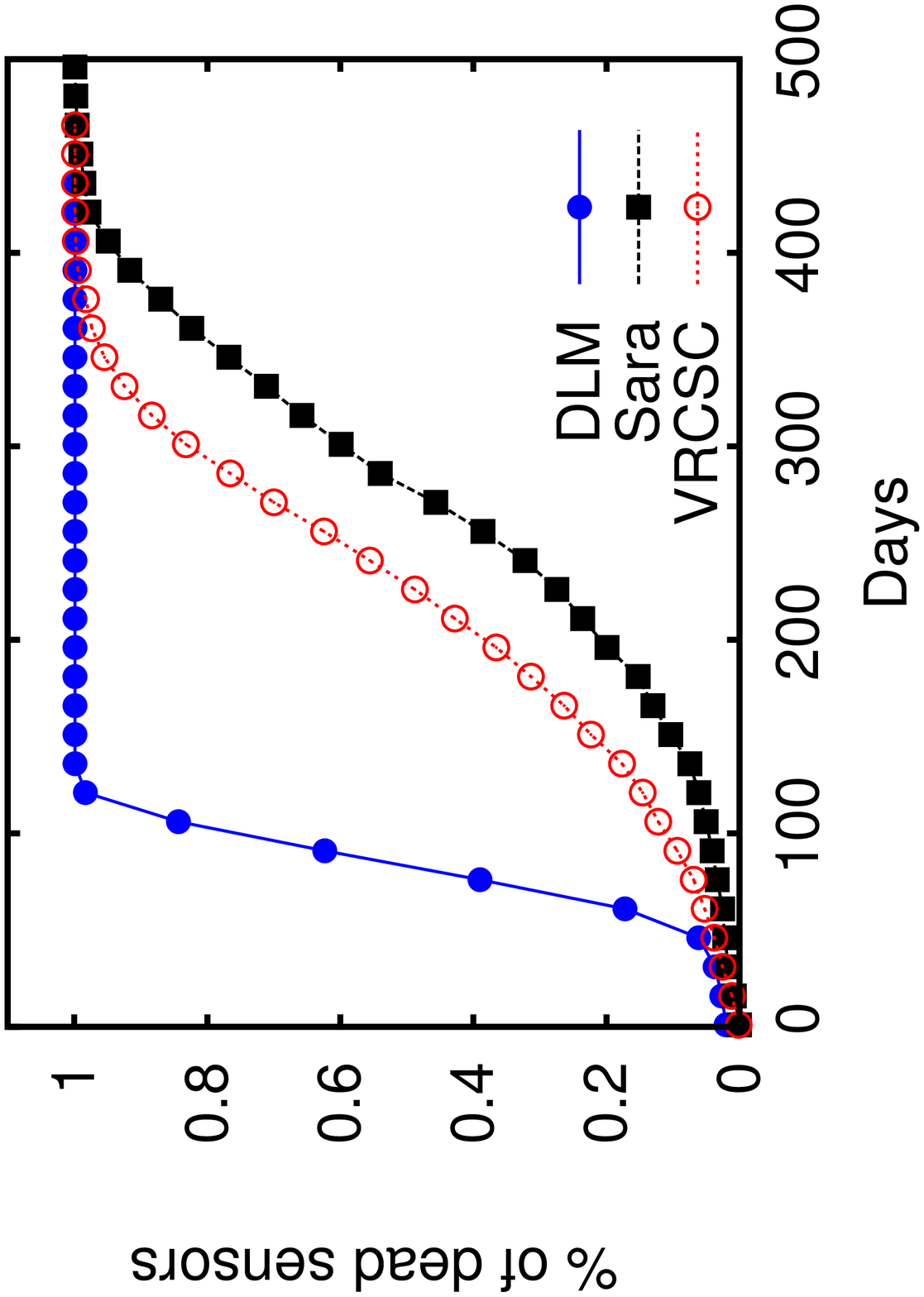}
&
\includegraphics[width = 0.22\textwidth,angle=-90]{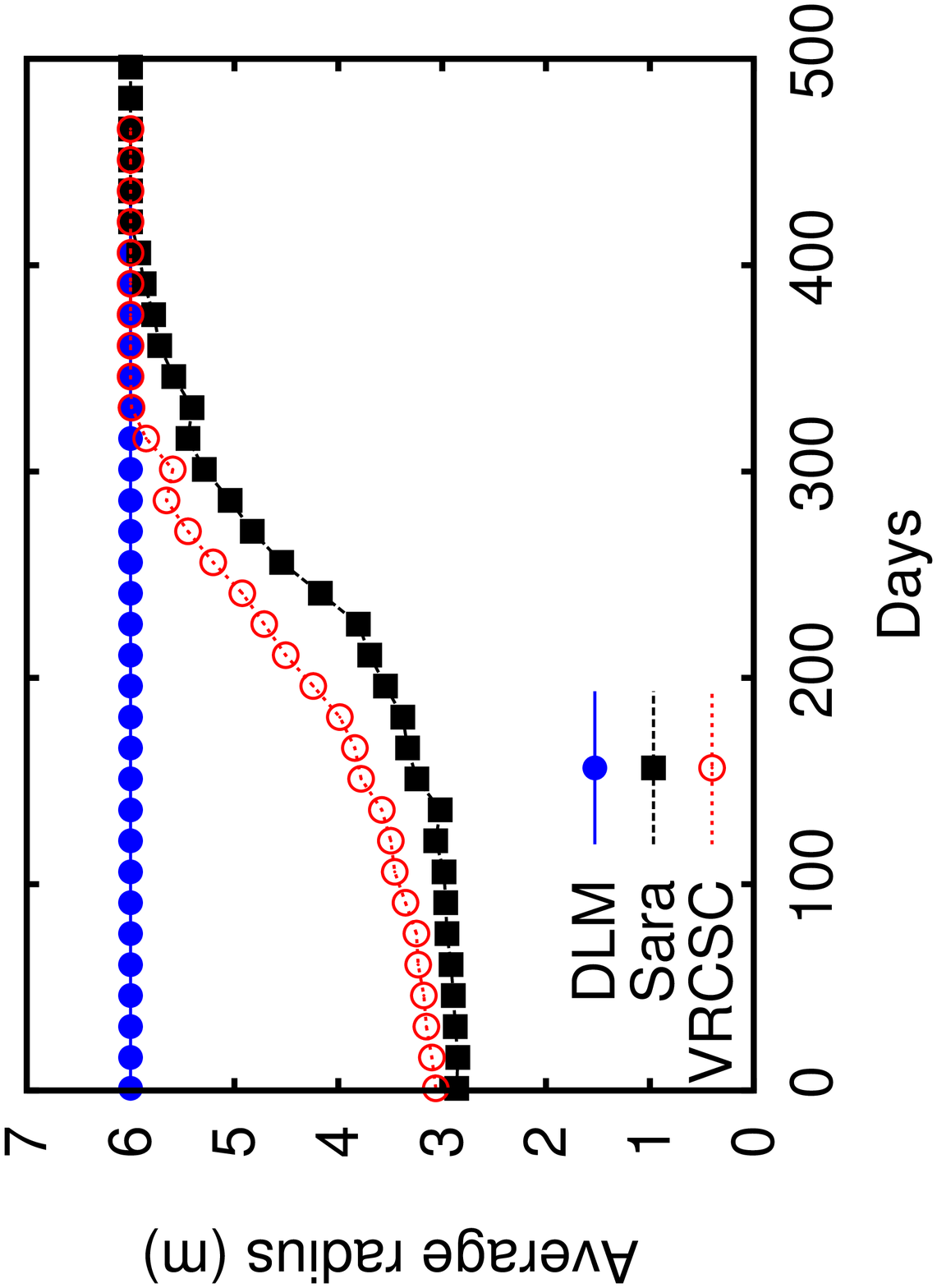}
&
\includegraphics[width = 0.22\textwidth,angle=-90]{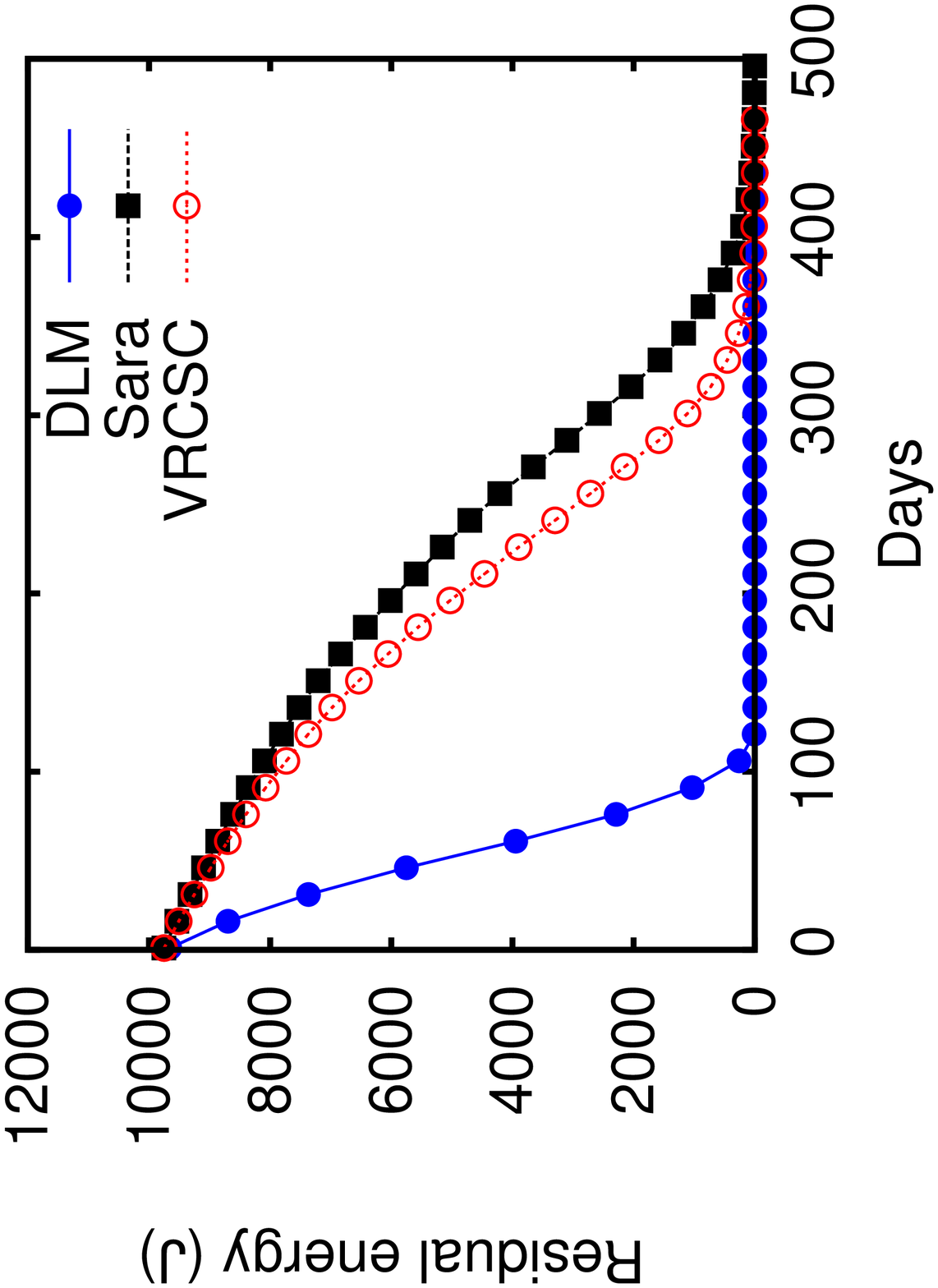}
\\
\hspace{-0.5cm}{\footnotesize{(d)}}&{\footnotesize{(e)}}&{\footnotesize{(f)}}\\
\end{tabular}
%\end{table}
 \caption{Adjustable sensors: homogeneous setting. Comparative analysis of $\alg$, $\dlm$ and $\gup$. Percentage of coverage (a), active sensors (b), sleeping sensors (c), sensors with no residual energy (d).
 Average radius of the awake sensors (e) and average residual energy per sensor (f). Scenario with 900 equally equipped sensors.}
\label{fig:time_VAR_HOM}
\end{figure}

We now evaluate the benefits of the three algorithms in terms of lifetime improvements.
Figure \ref{fig:nr_VAR_HOM}(a) shows  the time when the algorithms are no longer able to achieve a coverage higher than the 80$\%$ of the area of interest by varying the number of available sensors.

\begin{figure}
\begin{center}
\begin{tabular}[c]{ccc}
\hspace{-0.5cm}
  \includegraphics[width = 0.22\textwidth,angle=-90]{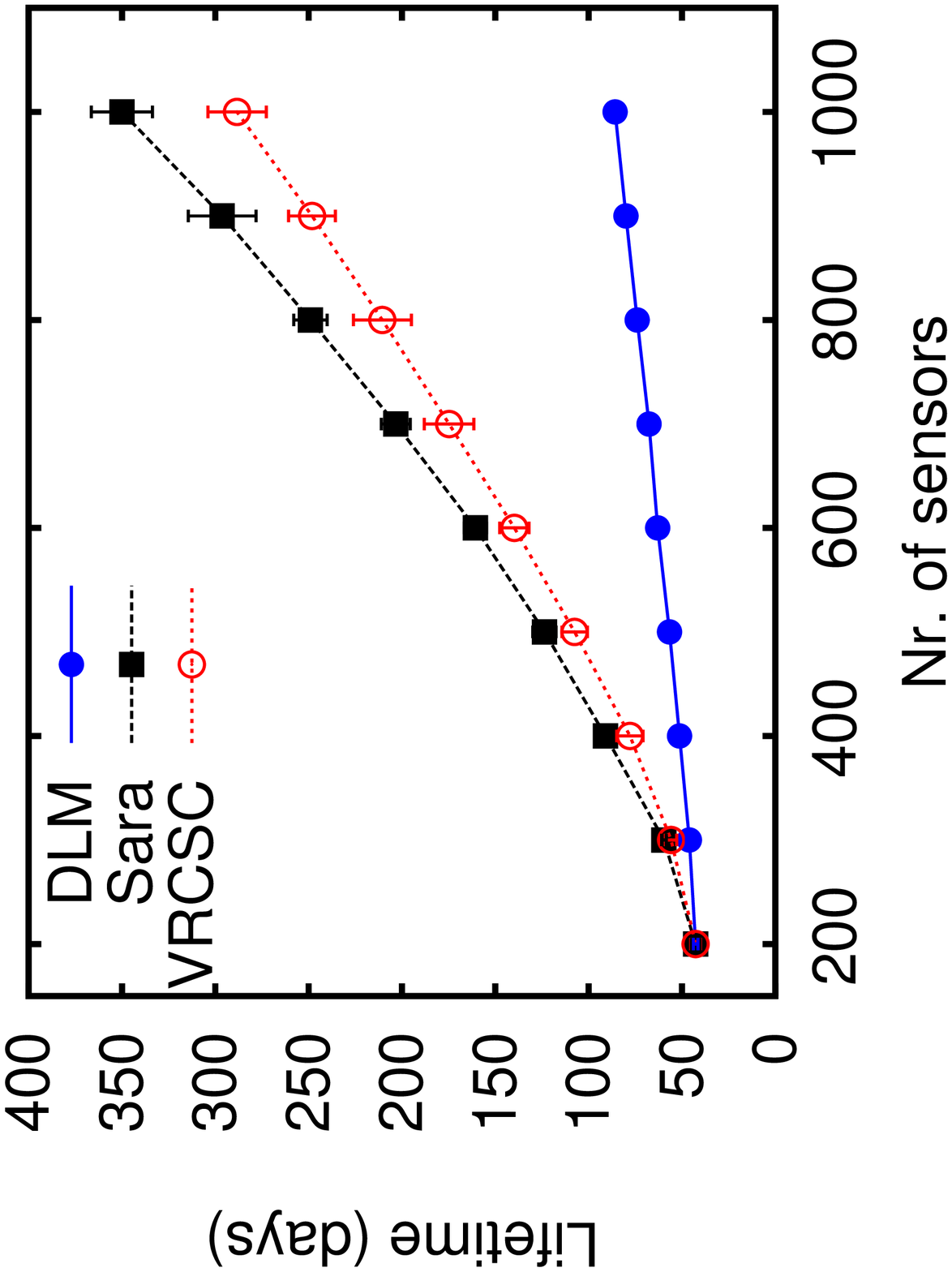}
  &{\includegraphics[width=0.22\textwidth,angle=-90]{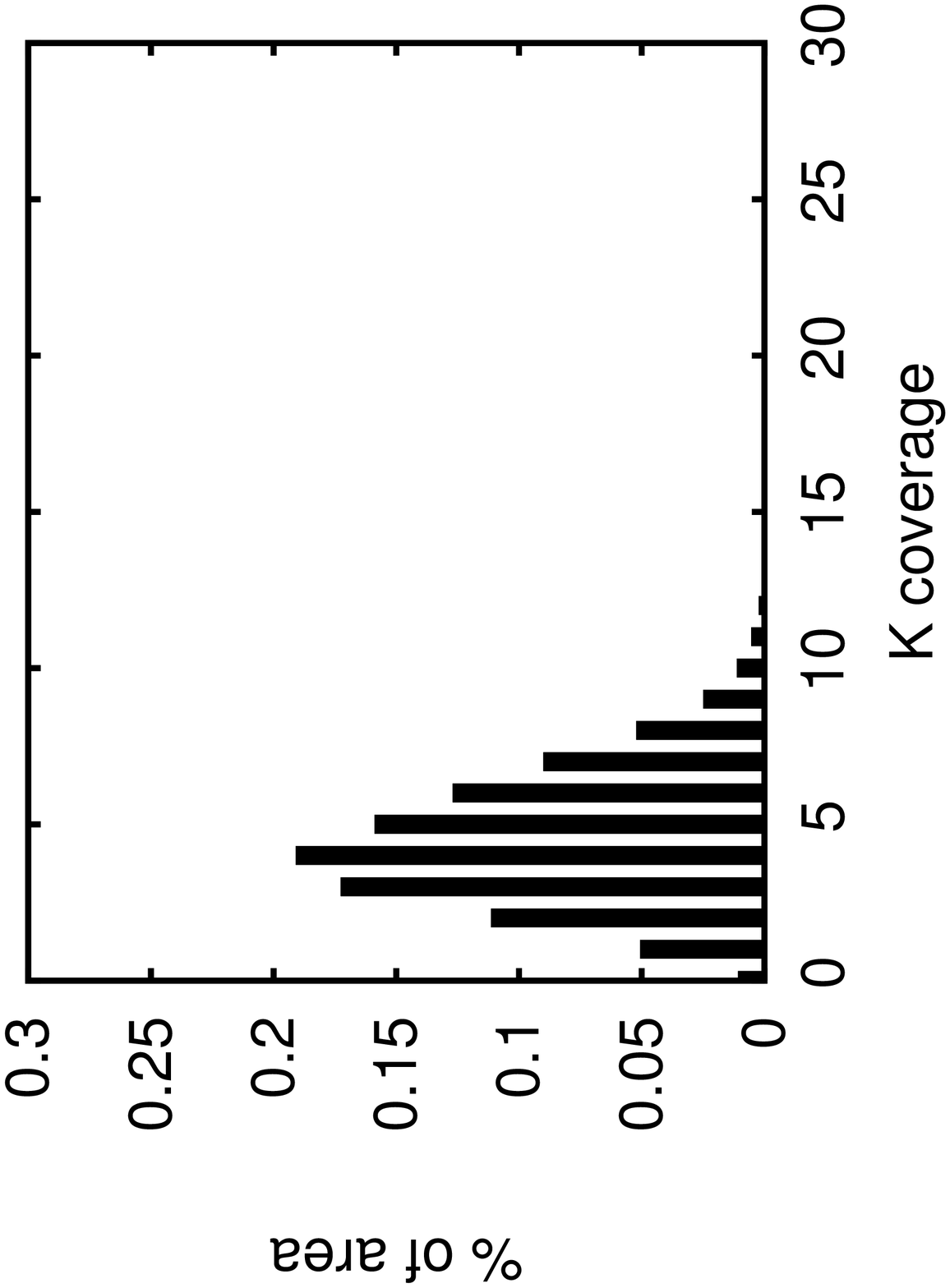}}
  &  {\includegraphics[width=0.22\textwidth,angle=-90]{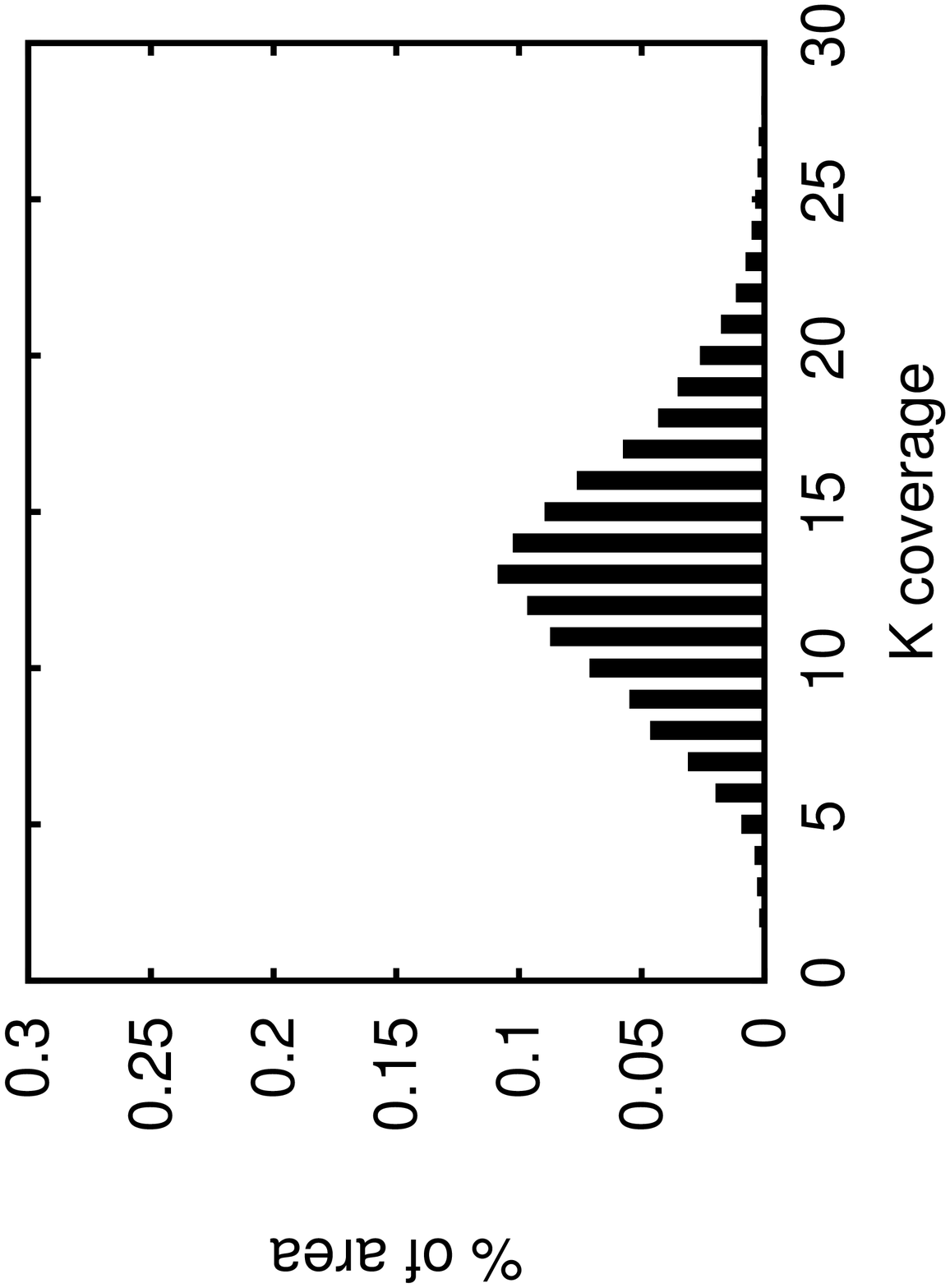}}
    \\
{\footnotesize{(a)}}&{\footnotesize{(b)}}&{\footnotesize{(c)}}
\end{tabular}
\end{center}
 \caption{{Adjustable sensors: homogeneous setting. Lifetime (a) and distribution of the sensors over an area of interest of 80m x 80m with a 
random deployment of 300 sensors (b) and 900 equally equipped  sensors (c).}}
\label{fig:nr_VAR_HOM}
 \end{figure}

 Figure \ref{fig:nr_VAR_HOM}(b) shows  the distribution of the sensors over the AoI  in the operative scenario with 300 sensors.
 This figure evidences that, despite the general redundancy, 
 a significant portion of the area of interest is either uncovered or covered by only few sensors.
 These sensors deplete their energy faster than others no matter which algorithm is in use.
This implies that with a tolerance of only 20$\%$ (due to the definition of lifetime as the time at which more than 20\% of coverage is lost) 
the three algorithms cannot do much to improve the network 
lifetime. 

For this reason, in this applicative scenario,
the lifetime of $\dlm$, $\gup$ and $\alg$ is about the same, as seen in Figure \ref{fig:nr_VAR_HOM}(a) when the number of available sensors is 300.
By contrast, Figure \ref{fig:nr_VAR_HOM}(c) shows the distribution of the sensors over the AoI  in the operative scenario with 900 sensors.
It shows that due to this higher density, there is more room for the algorithms to improve the network lifetime
by means of selective activation and radius reduction, as also evidence by Figure  \ref{fig:nr_VAR_HOM}(a)  when the number of available sensors is 900.
 The algorithm $\alg$ outperforms the other two by achieving a longer lifetime
being able to perform a more efficient activation policy. In particular, although this scenario is the most favorable to the algorithm $\gup$, $\alg$ is able to always achieve a longer lifetime. For instance, when the number of sensors is 1000, the algorithm $\alg$ achieves and increase of 20\% in the network lifetime with respect to $\gup$ (350 days for $\alg$ versus 290 days for $\gup$).

\begin{figure}
\begin{center}
\begin{tabular}[c]{ccc}
\hspace{-0.5cm}
  \includegraphics[width = 0.22\textwidth,angle=-90]{resizable/homogeneous/lifetime/lifetime_80}
  &{\includegraphics[width=0.22\textwidth,angle=-90]{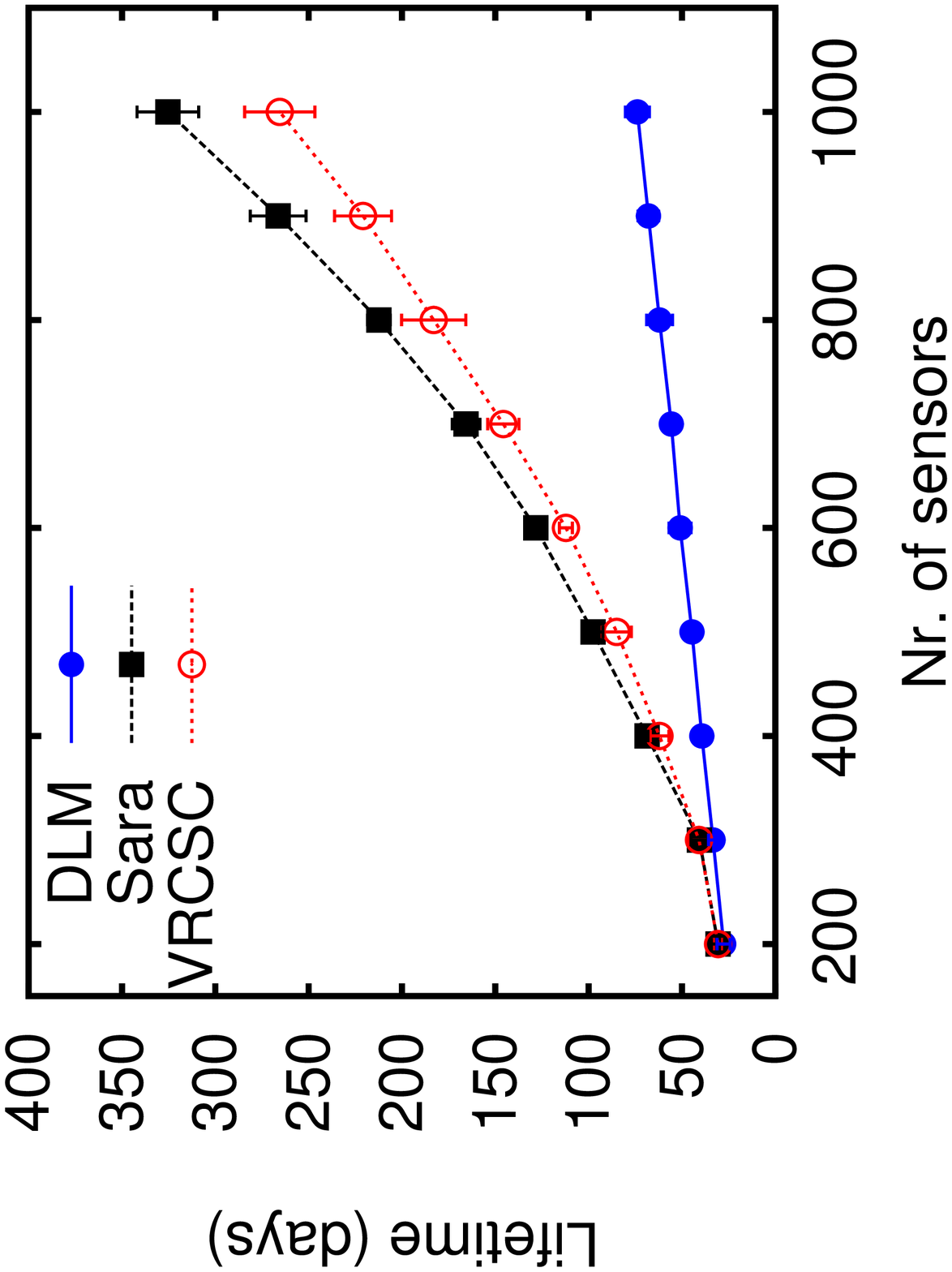}}
  &  {\includegraphics[width=0.22\textwidth,angle=-90]{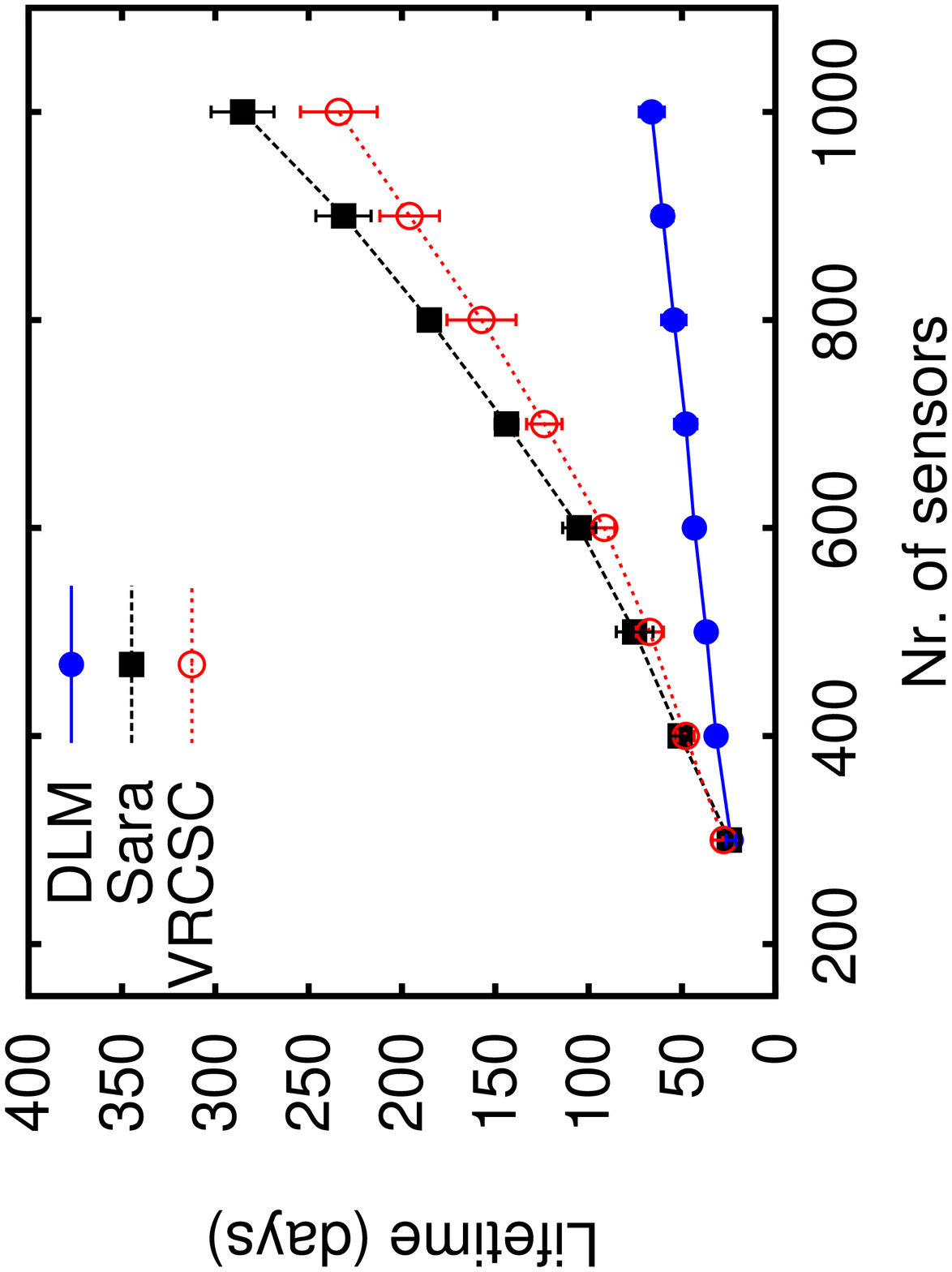}}
    \\
{\footnotesize{(a)}}&{\footnotesize{(b)}}&{\footnotesize{(c)}}
\end{tabular}
\end{center}
 \caption{{Adjustable sensors: homogeneous setting. Lifetime achieved by the three algorithms expressed as the time after which the algorithm is no longer capable to 
 cover more that 80\% (a), 90\% (b) and 95\% (c) of the AoI.}}
\label{fig:lifetimeVAR_HOM}
 \end{figure}
 
 In Figure \ref{fig:lifetimeVAR_HOM}
 we compare the algorithms in terms of network lifetime by increasing the number of deployed sensors. We consider the time at which the coverage of the AoI goes  below the 80\% (a), 90\% (b) and 95\%(c).
Notice that, even if our algorithm does not specifically target a particular notion of lifetime, it outperforms the other two also under other possible definitions of lifetime.

%---------------------------------------------------------------------------------------
%EXP ONLY VARIABLE HETEROGENEOUS
%---------------------------------------------------------------------------------------
\subsection{Adjustable sensors: heterogeneous setting}
In this section we analyze a scenario with only sensors with adjustable radius. Differently from the setting of the previous experiments,
we now consider the case of sensors unequally equipped.
In particular the 50\% of the available sensors is capable to adjust its radius up to a value of 6m, whereas the remaining  50\% can only reach a
sensing radius 3m long.

\begin{figure}
\begin{center}
\begin{tabular}[c]{ccc}
\hspace{-0.5cm}
  \includegraphics[width = 0.22\textwidth,angle=-90]{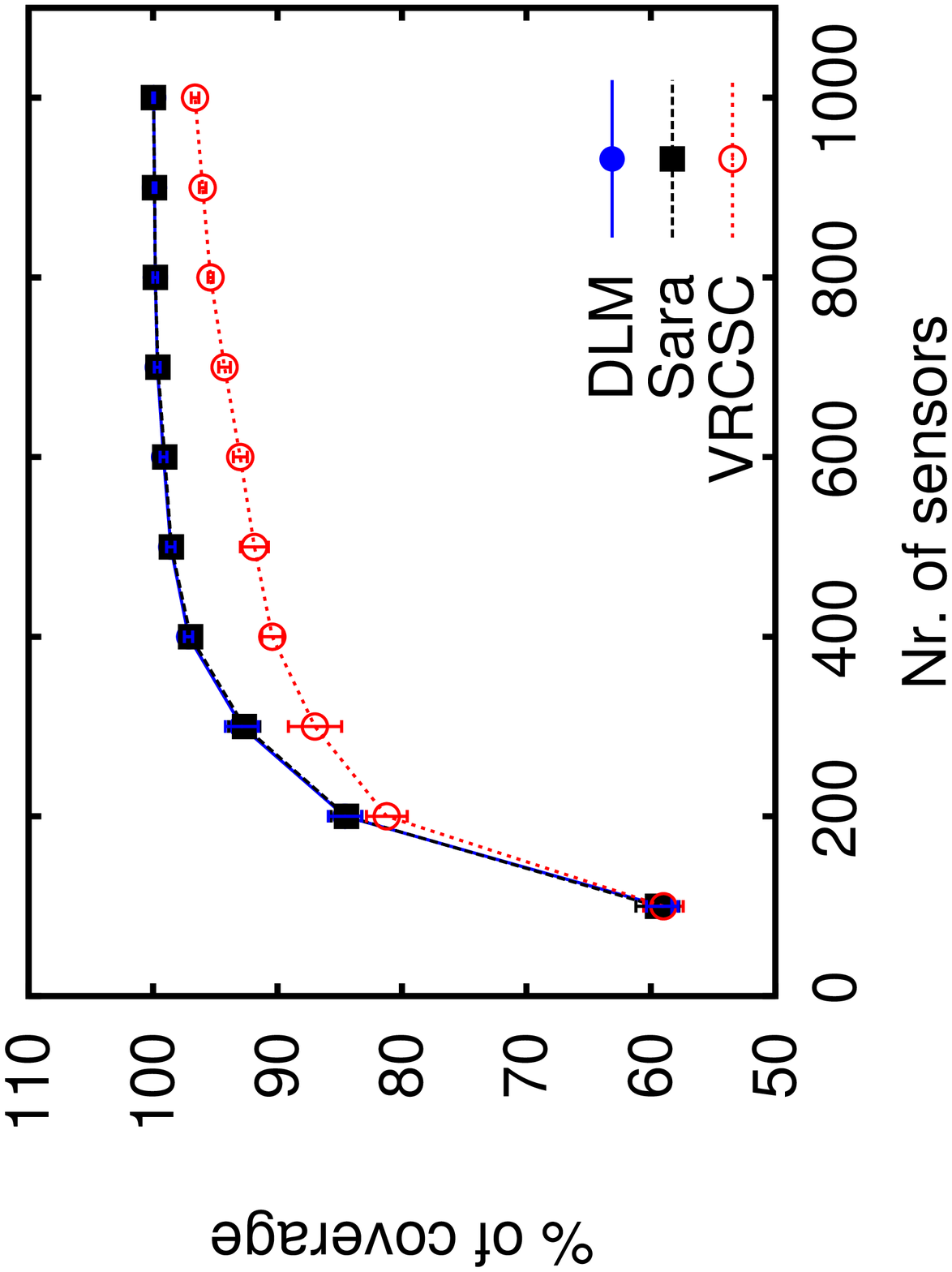}
    &\includegraphics[width = 0.22\textwidth,angle=-90]{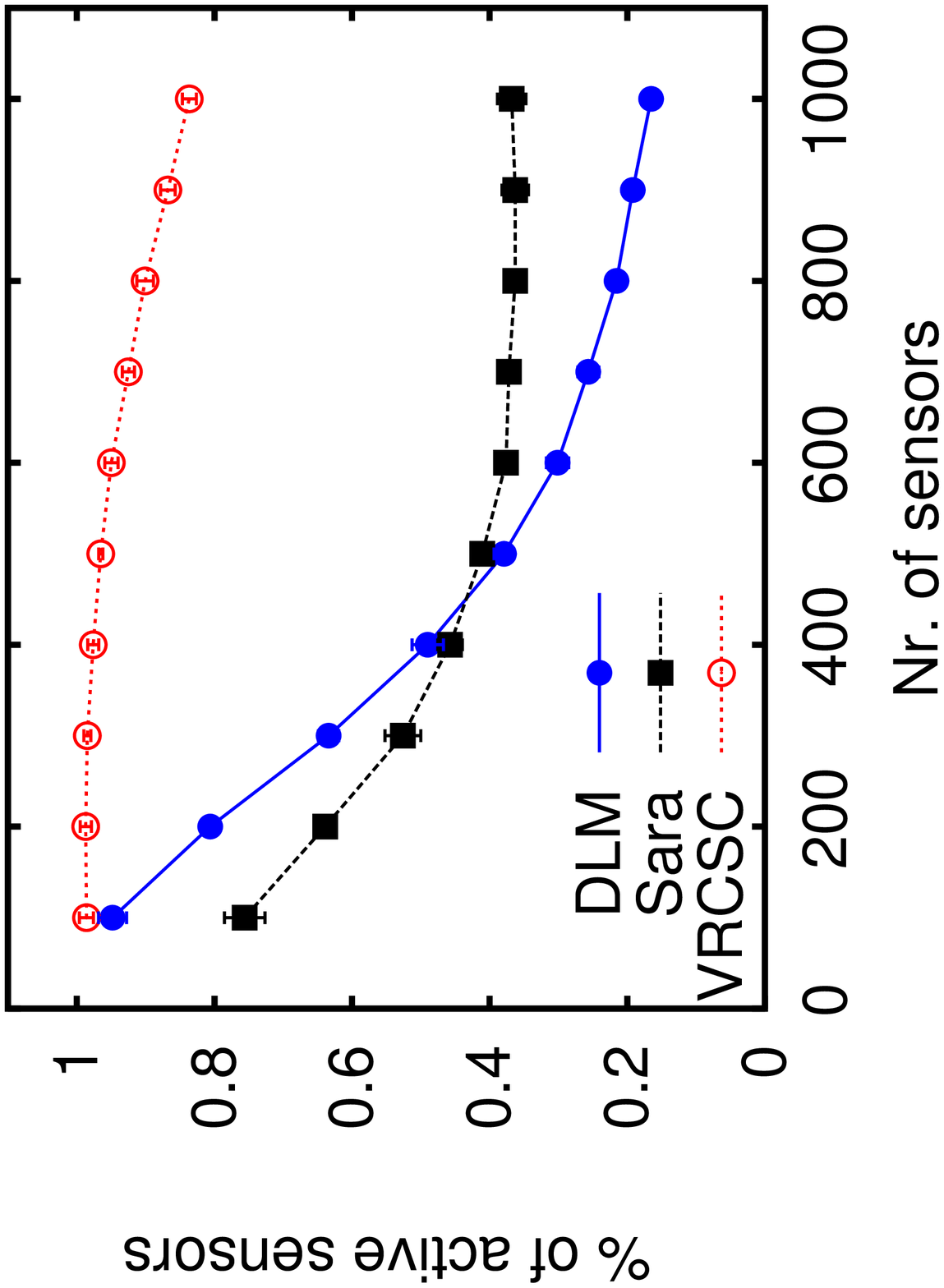}
  &  \includegraphics[width = 0.22\textwidth,angle=-90]{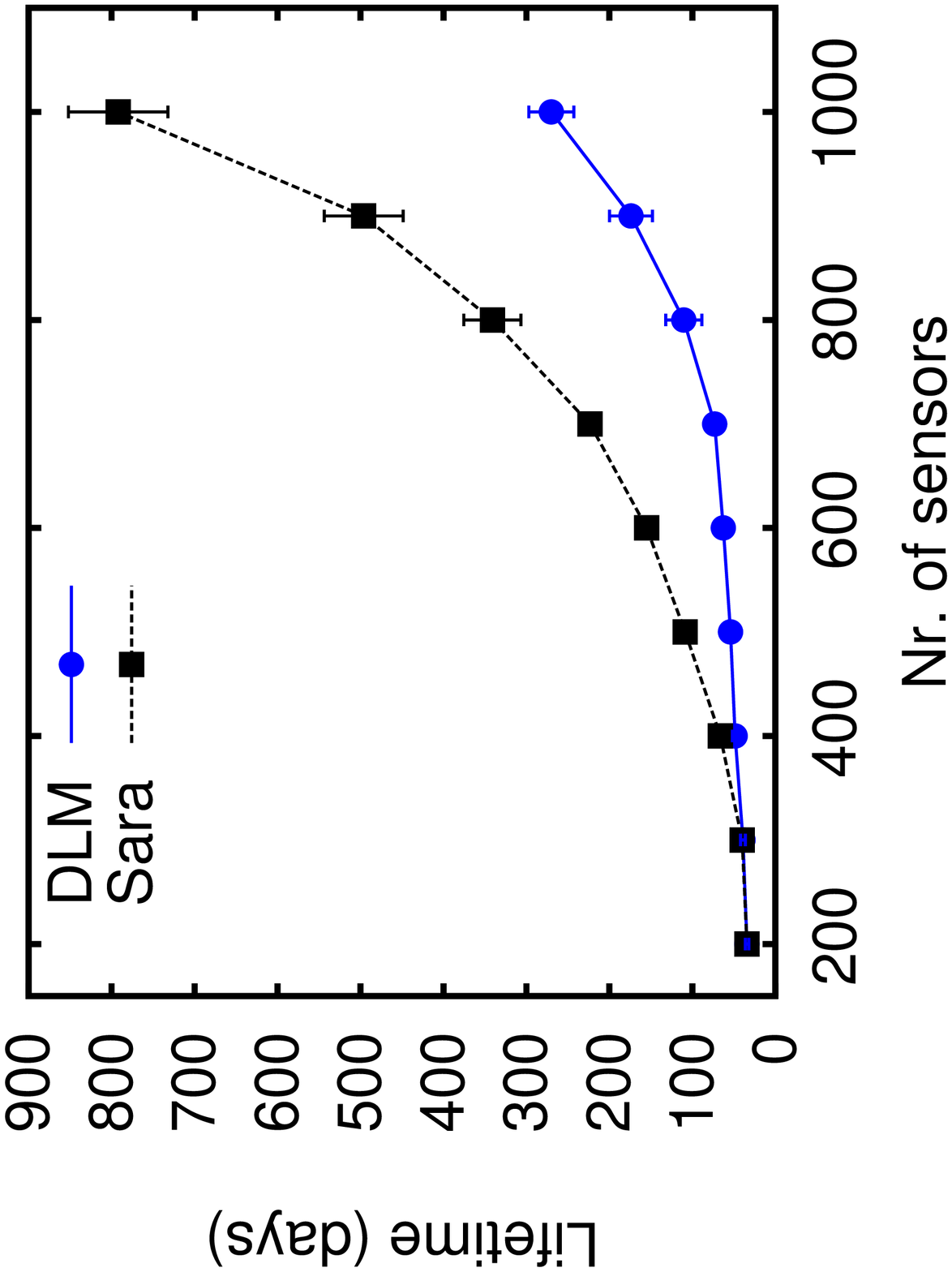}
    \\
\hspace{-0.5cm}{\footnotesize{(a)}}&{\footnotesize{(b)}}&{\footnotesize{(c)}}
\end{tabular}
\end{center}
 \caption{{Adjustable sensors: heteoregeneous setting. Coverage (a), percentage of active sensors (b), and lifetime of the network (c) in the 
operative scenario with {\bf heterogeneous} sensors with {\bf adjustable} sensing radius.}}
\label{fig:nr_VAR_HET}
 \end{figure}

In Figure \ref{fig:nr_VAR_HET}(a) we show the coverage achieved by the set of active sensors after the configuration obtained by running the algorithms.
Notice that in this case we selected the very first execution of the algorithms, 
hence the number of dead sensors is zero for all of them.
Despite the high energy availability of all the sensors,
the algorithm $\gup$ is not able to guarantee the complete coverage of the AoI at any time, even though it activates a very high percentage of sensors as shown in Figure \ref{fig:nr_VAR_HET}(b).
This is due to the way it governs the radius configuration decisions on the basis of Voronoi diagrams.
As we have already mentioned the use of Voronoi diagrams is correct only in the case of homogeneous sensing radii, while in the case of heterogeneous
setting it is necessary to model the coverage responsibility regions of the devices in the Laguerre metric space. 
%For this reason in the following set of experiments (see Figure \ref{fig:time_VAR_HET}) we do not include the algorithm $\gup$.
On the contrary both $\alg$ and $\dlm$ are able to achieve the maximum coverage extent with a significantly 
lower percentage of active sensors.

 Figure \ref{fig:nr_VAR_HET}(b) highlights that, when working with a low number of available sensors, $\dlm$ activates a large fraction of redundant sensors having a small radius. In fact, being $\dlm$ based on a priority criterion which takes into account the number of intersection points when making activation decisions, when the number of sensors is so small, it is not able to give more priority to the sensors which contribute a better coverage.
 When the initial density of the available sensors is higher, the number of intersection points is more significant in reflecting the coverage that each sensor is capable to contribute to, resulting in a higher number of active sensors with larger sensing range.
 
% For the incapability of $\gup$ to ensure the achievement of the maximum possible coverage, we do not consider this algorithm
% in the comparison of the  achieved network lifetime.
%Indeed, since $\gup$ may initially work with lower coverage than the other algorithms this 
%performance comparison cannot be fair because it may
%show either a zero lifetime or even a longer lifetime than $\dlm$ and $\alg$ 
%depending on the initial deployment.

Figure \ref{fig:nr_VAR_HET}(c) shows the network lifetime achieved by $\dlm$ and $\alg$ varying the number of available sensors
 in the heterogeneous setting. Although $\dlm$ is able to work in a heterogeneous scenario  with sensors having different sensing ranges,
it cannot exploit the device capability to adjust their radius.
Therefore the lifetime under $\dlm$  is much shorter than under $\alg$.
In particular, when the number of sensors is 900, the lifetime of $\alg$ is almost twice the lifetime of $\dlm$.

\begin{figure}
\begin{center}
\begin{tabular}[c]{ccc}
\hspace{-0.5cm}
  \includegraphics[width = 0.22\textwidth,angle=-90]{resizable/heterogeneous/lifetime/lifetime_80}
  &{\includegraphics[width=0.22\textwidth,angle=-90]{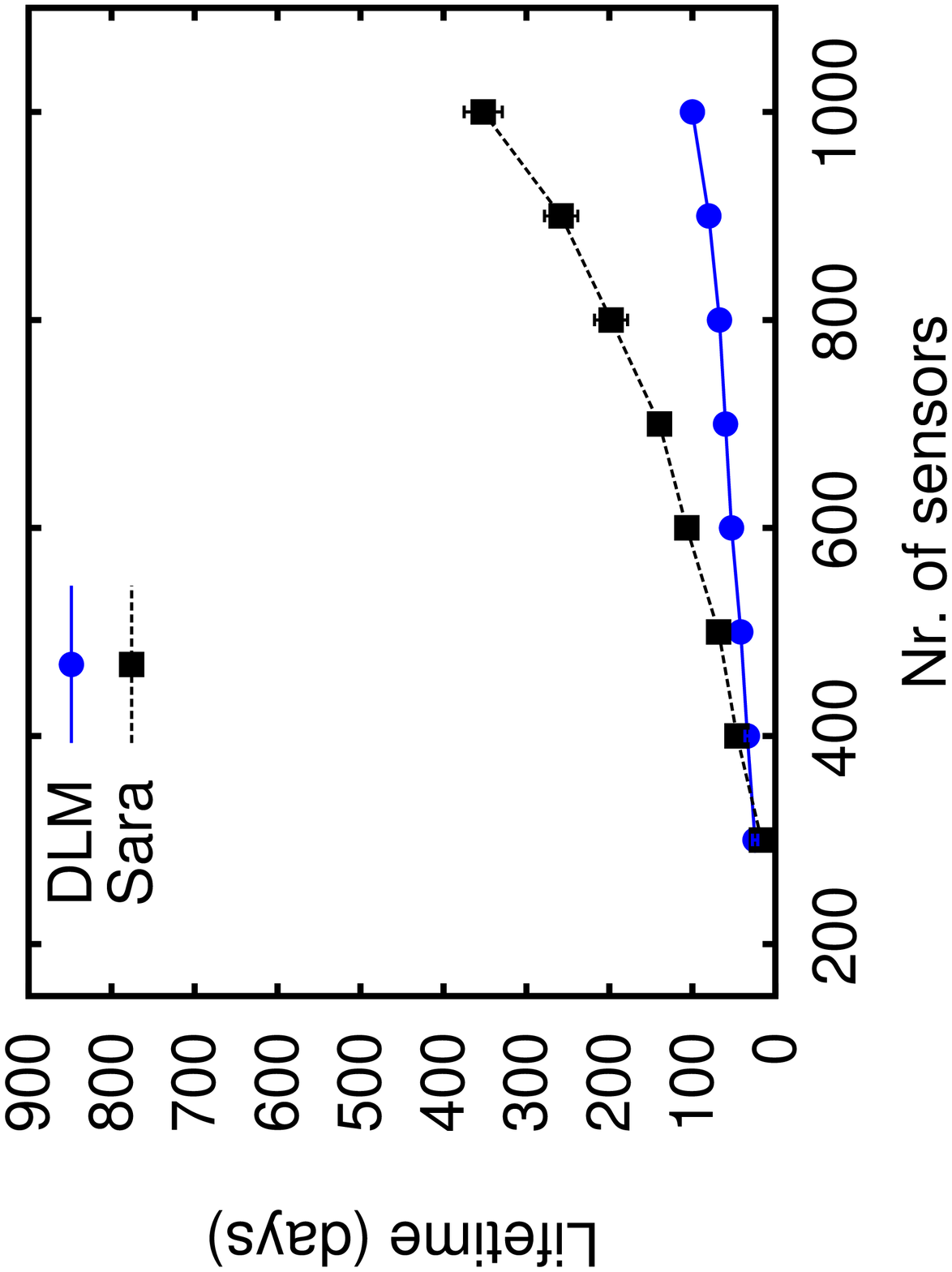}}
  &  {\includegraphics[width=0.22\textwidth,angle=-90]{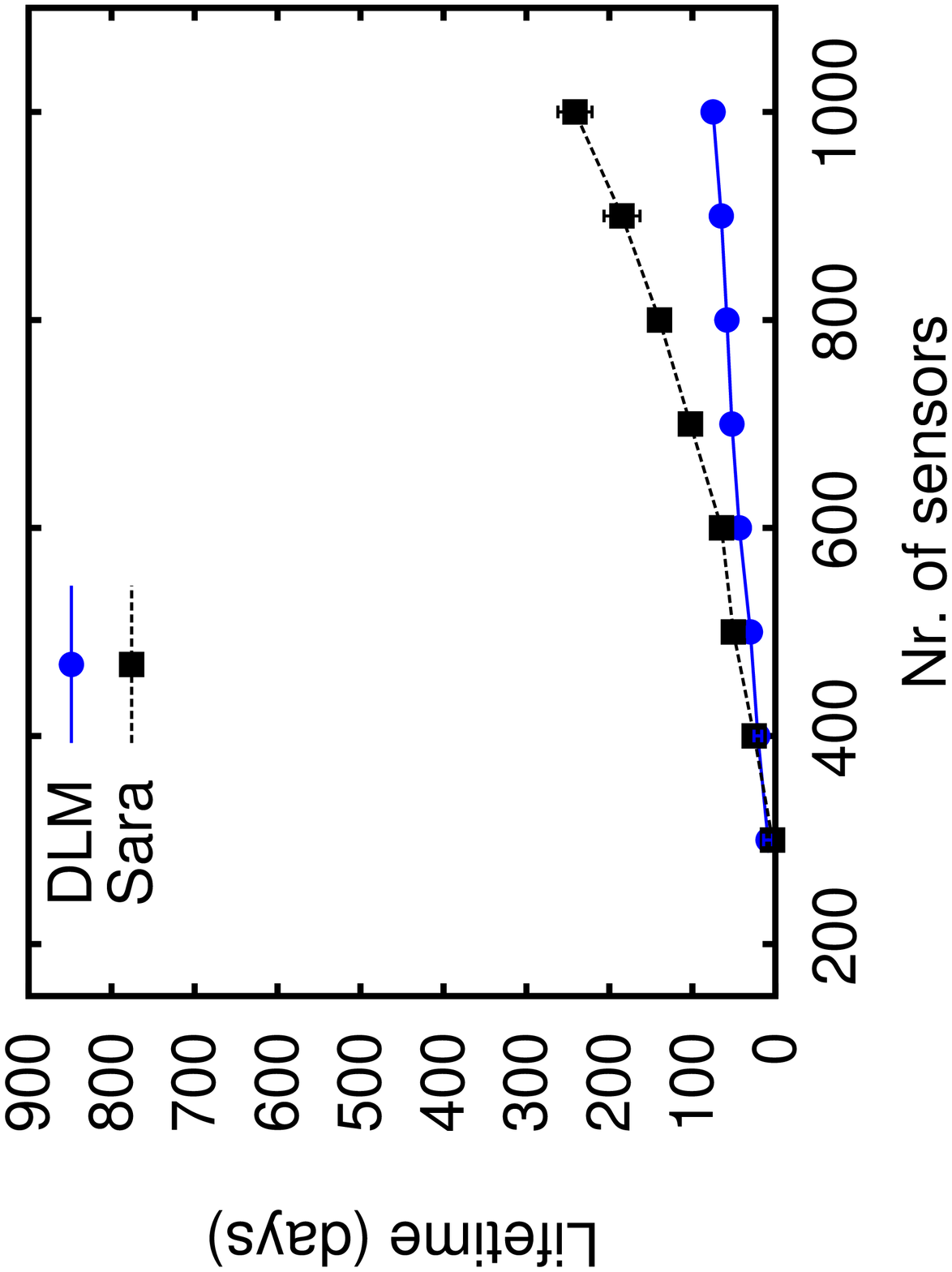}}
    \\
{\footnotesize{(a)}}&{\footnotesize{(b)}}&{\footnotesize{(c)}}
\end{tabular}
\end{center}
 \caption{{Adjustable sensors: heterogeneous setting. Lifetime achieved by $\alg$ and $\dlm$ expressed as the time after which the algorithm is no longer capable to 
 cover more that 80\% (a), 90\% (b) and 95\% (c) of the AoI.}}
\label{fig:lifetimeVAR_HET}
 \end{figure}
 
 In Figure \ref{fig:lifetimeVAR_HET}
 we compare the algorithms in terms of network lifetime by increasing the number of deployed sensors. We consider the time at which the coverage of the AoI goes  below the 80\% (a), 90\% (b) and 95\%(c).
Even under other possible definition of network lifetime, $\alg$ outperforms $\dlm$.

Notice that the network lifetime has a non linear dependence on the number of available sensors, as it increases more than linearly.
This is due to the non-linear dependence of the energy consumption with respect to the sensing range.
Indeed, the more sensors can be activated at small radius, the lower is the energy consumption and the longer is the lifetime.

\subsection{Fixed sensors: heterogeneous setting}

In this section we consider a scenario where sensors
have a fixed sensing radius. We focus on the case where sensors have heterogeneous
sensing capabilities: Half of the sensors have a sensing radius of $3$m while the other
half have a sensing radius of $6$m.
This is the scenario for which $\dlm$ was specifically designed.
%%CPCP check sopra
In this setting $\gup$  is not able to guarantee maximum coverage in case of
sensor heterogeneity.
Therefore we will display only results
for $\dlm$ and $\alg$.

The experiments show that $\alg$ outperforms $\dlm$ in terms of percentage of the AoI covered
over time (Figure \ref{fig:time_FIX_HET}(a)) and results into a lower number of dead sensors over time (Figure \ref{fig:time_FIX_HET}(c)).
The percentage of awake sensors,
displayed in Figure \ref{fig:time_FIX_HET}(b), shows a similar trend (for the same reason) than that discussed
in Section \ref{sec:homo_var}. $\dlm$ experiences a higher number of awake sensors than $\alg$ during the first $120$ days.
%%CPCP check number of days
As a consequence, the number of sensors which are put to sleep (obtained as a complement to 1 of the sum of awake and dead sensors)
will be much lower than in $\alg$.
When time increases the reduced number of awake sensors in $\dlm$ reflects the
 high
number of dead nodes, and consequently the poor coverage performance.
These observations explain the fact that $\alg$ experiences longer network lifetimes than $\dlm$.
This improvement is as high as twofold (Figure  \ref{fig:time_FIX_HET}(f)).

%{\color{red}
Figure \ref{fig:time_FIX_HET}(d) and (e)
shows the percentage of awake sensors with large and small radius under the execution of $\dlm$ and $\alg$, respectively.
{\color{black}It is interesting to note that initially, under $\dlm$, the majority of awake sensors have large radius. Nevertheless, after
very few operative time intervals, nodes with large radius quickly deplete their energy, and after day 100, $\dlm$ can only work with sensors having
 small radius.
On the contrary,} $\alg$ is able to successfully exploit device heterogeneity
from the beginning, by activating sensors with large and small radius
in different percentages, on the basis of coverage requirements.
As a consequence, only at about day 200 $\alg$ works with only sensors having small radius.
For this reason the peak in Figure \ref{fig:time_FIX_HET}(b) in the number
of active sensors is located on the right with respect to the one of $\dlm$.

%\begin{figure}[h]
%\centering
%%\begin{table}
%\begin{tabular}{ccc}
%\hspace{-0.5cm}\includegraphics[width = 0.22\textwidth, angle=-90]{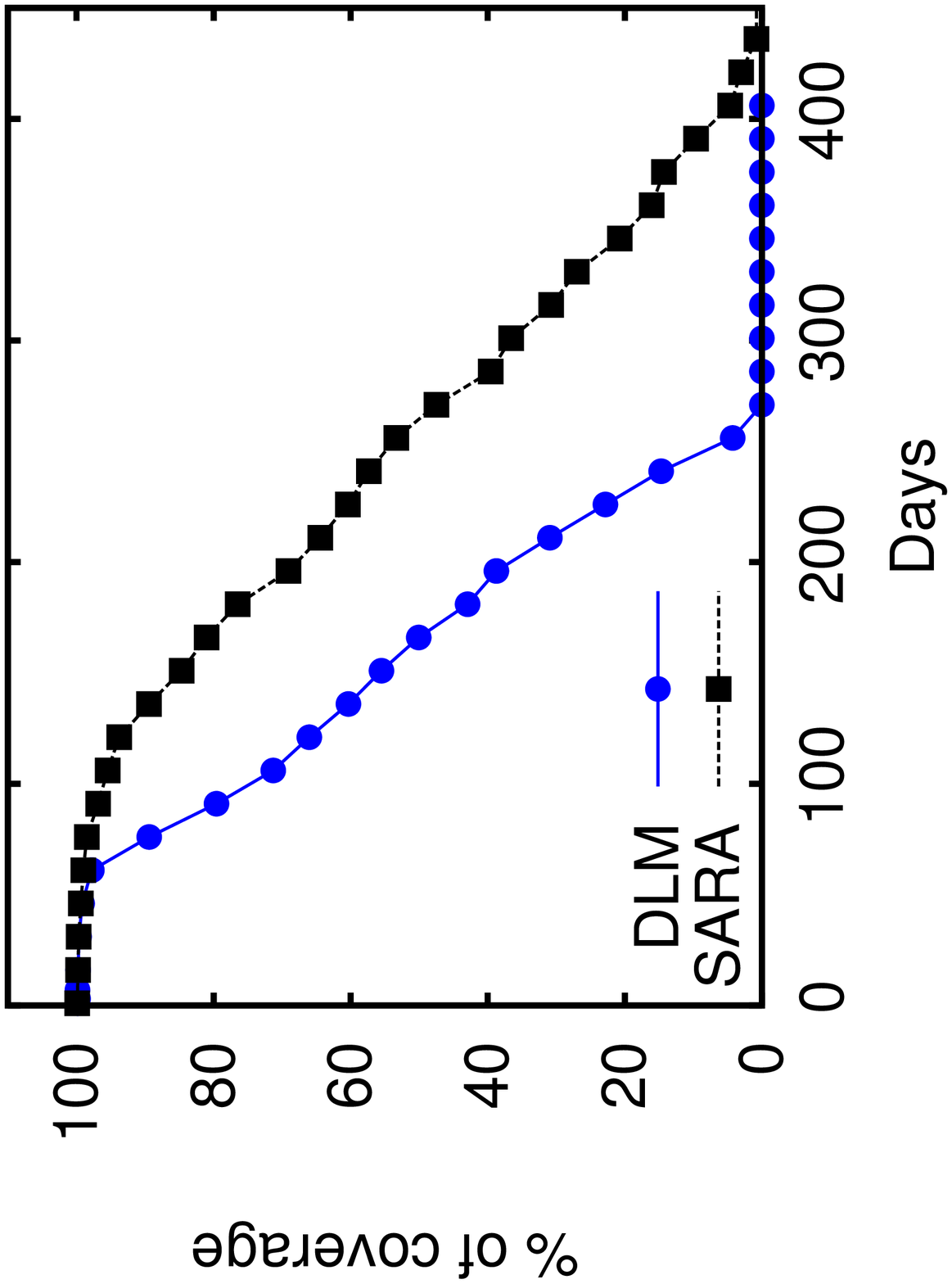} &
%\includegraphics[width = 0.22\textwidth,angle=-90]{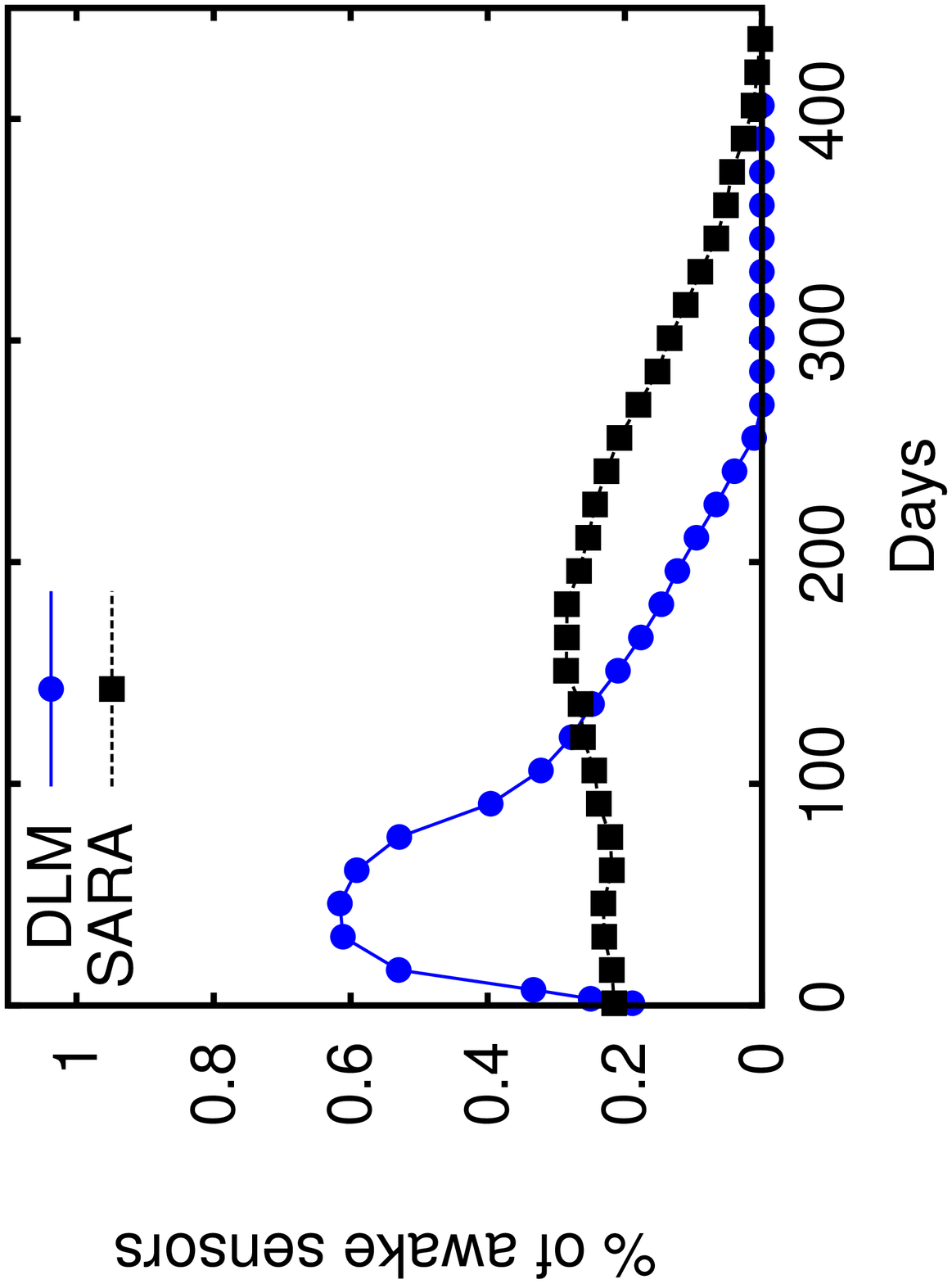}&
%\includegraphics[width = 0.22\textwidth,angle=-90]{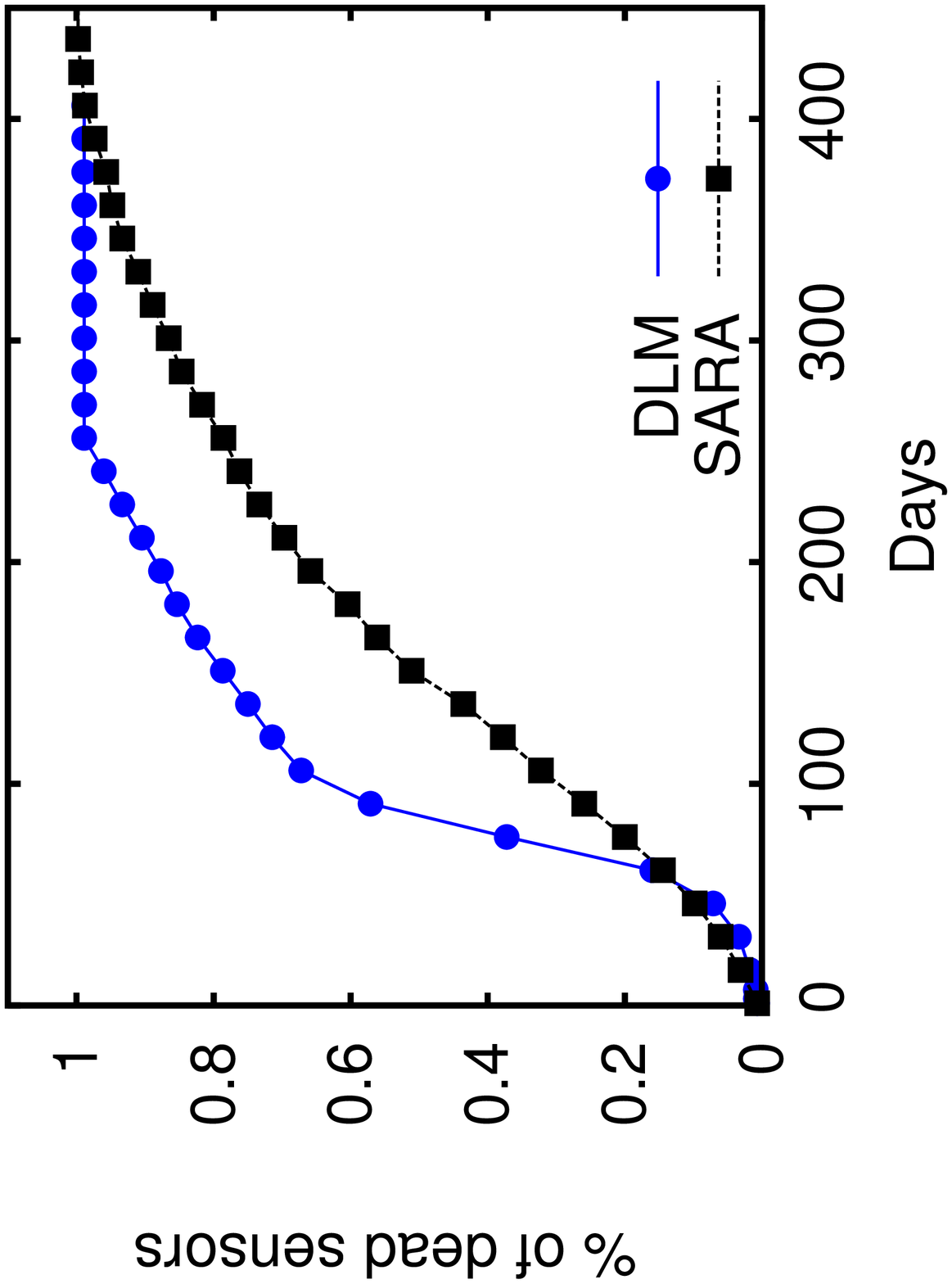}\\
%\hspace{-0.5cm}{\footnotesize{(a)}}&{\footnotesize{(b)}}&{\footnotesize{(c)}}\\
%\hspace{-0.5cm}
%\includegraphics[width = 0.22\textwidth,angle=-90]{}
%&
%\includegraphics[width = 0.22\textwidth,angle=-90]{}
%  &  
%\includegraphics[width = 0.22\textwidth,angle=-90]{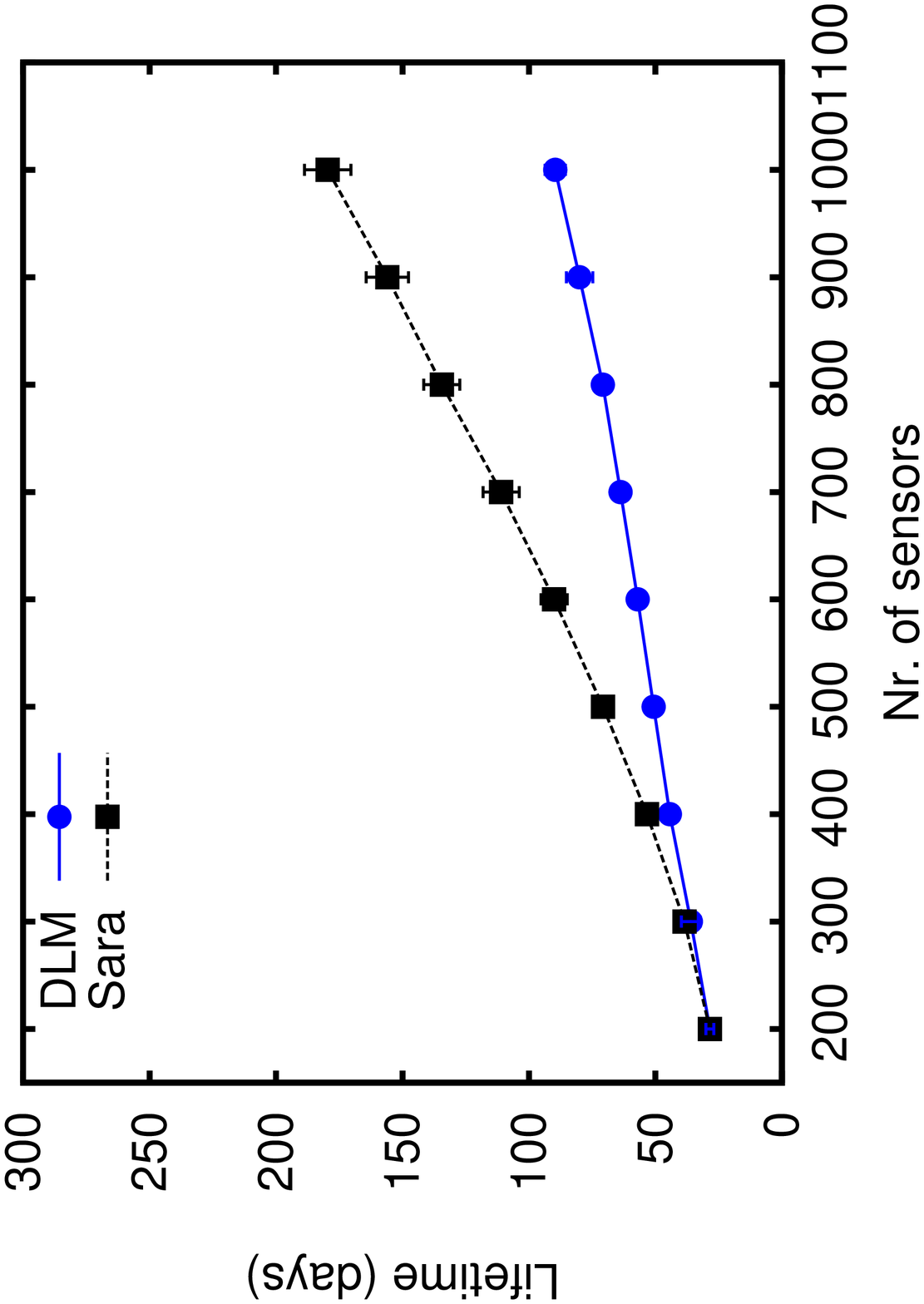}
%    \\
%\hspace{-0.5cm}{\footnotesize{(d)}}&{\footnotesize{(e)}}&{\footnotesize{(f)}}\\
%\end{tabular}
%%\end{table}
% \caption{Fixed sensors: heterogeneous setting. Comparative analysis of $\alg$, $\dlm$ and $\gup$. Percentage of coverage (a), 
%active sensors (b),  sensors with no residual energy (c).
% Composition of the set of active sensors under $\alg$ (d) and $\dlm$ (e). 
%Network lifetime (time until the 80\% of the AoI remains covered).
%Scenario with 900 sensors. }
%\label{fig:time_FIX_HET}
%\end{figure}

\begin{figure}[h]
\centering
%\begin{table}
\begin{tabular}{ccc}
\hspace{-0.5cm}\includegraphics[width = 0.22\textwidth, angle=-90]{fixed/900/coverage} &
\includegraphics[width = 0.22\textwidth,angle=-90]{fixed/900/active_nodes_wrt_total}&
\includegraphics[width = 0.22\textwidth,angle=-90]{fixed/900/dead_nodes}\\
\hspace{-0.5cm}{\footnotesize{(a)}}&{\footnotesize{(b)}}&{\footnotesize{(c)}}\\
\hspace{-0.5cm}
\includegraphics[width = 0.22\textwidth,angle=-90]{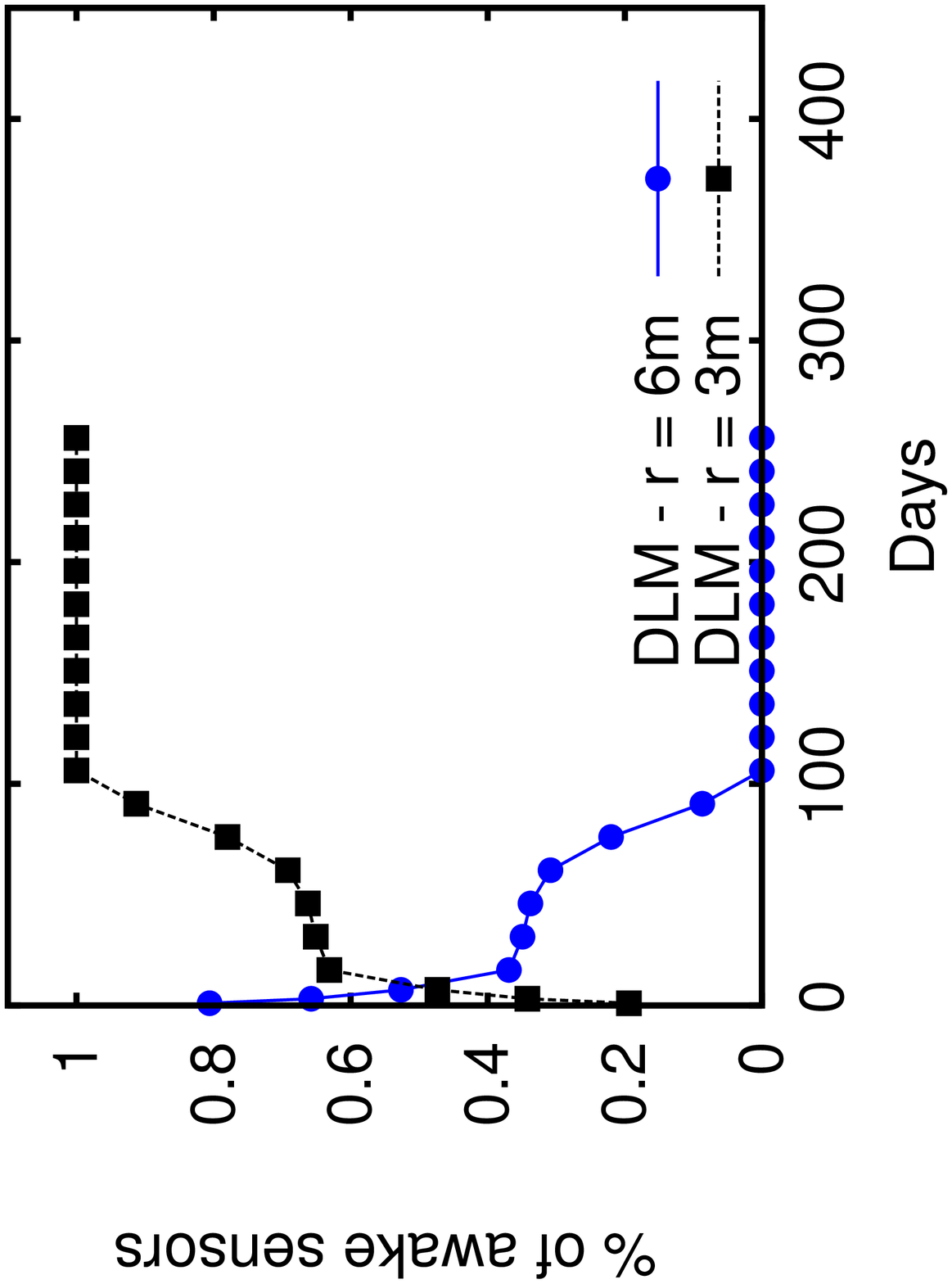}
&
\includegraphics[width = 0.22\textwidth,angle=-90]{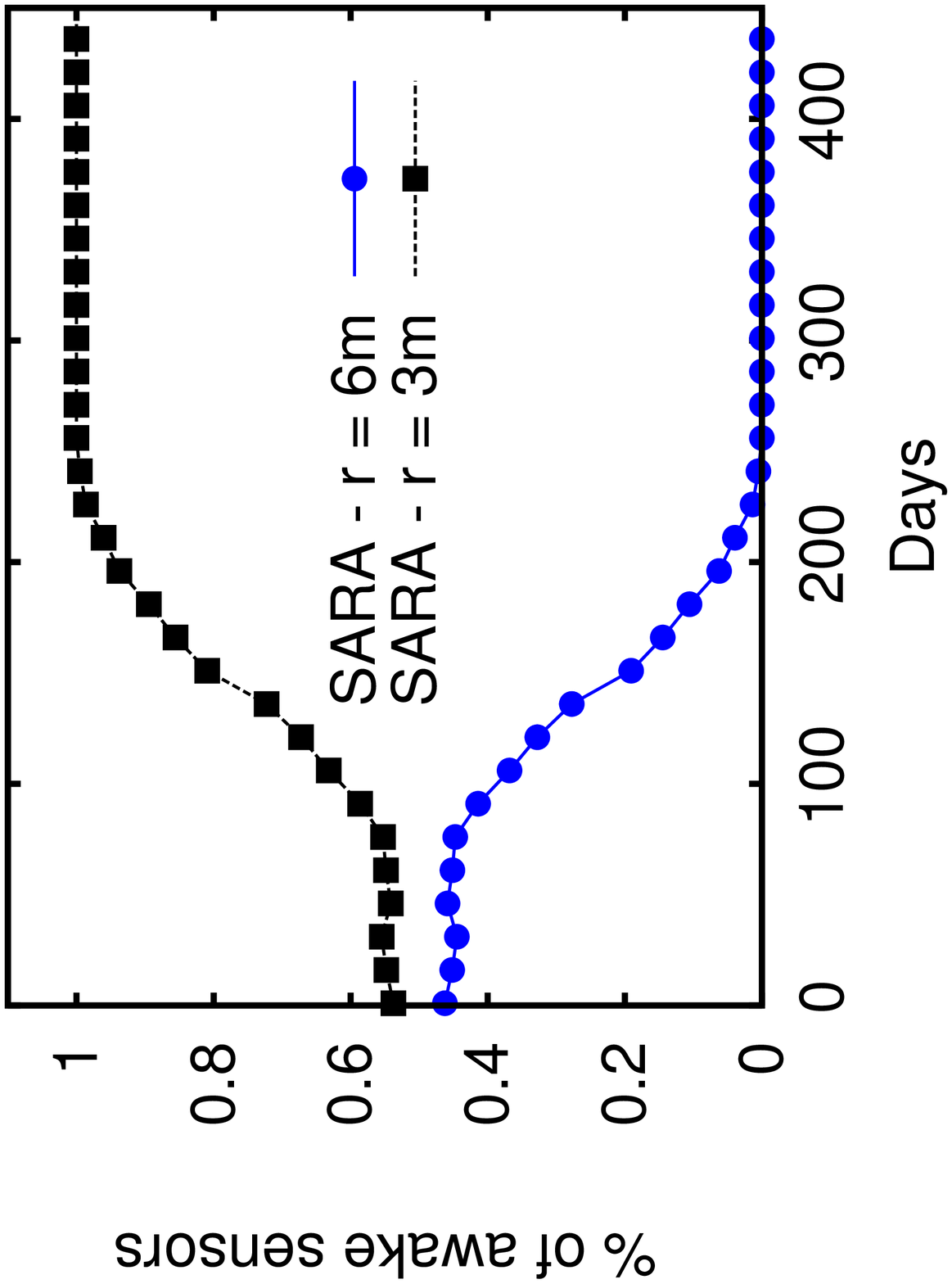}
  &
\includegraphics[width = 0.22\textwidth,angle=-90]{fixed/lifetime/lifetime_80}
    \\
\hspace{-0.5cm}{\footnotesize{(d)}}&{\footnotesize{(e)}}&{\footnotesize{(f)}}\\
\end{tabular}
%\end{table}
 \caption{Fixed sensors, heterogeneous setting. Scenario with 900 sensors. Comparative analysis of $\alg$ and $\dlm$. Percentage of AoI covered (a),
percentage of awake sensors (b),  percentage of dead sensors (c).
 Composition of the set of awake sensors under $\dlm$ (d) and $\alg$ (e).
Network lifetime (f).
 }
\label{fig:time_FIX_HET}
\end{figure}

\begin{figure}
\begin{center}
\begin{tabular}[c]{ccc}
\hspace{-0.5cm}
  \includegraphics[width = 0.22\textwidth,angle=-90]{fixed/lifetime/lifetime_80}
  &{\includegraphics[width=0.22\textwidth,angle=-90]{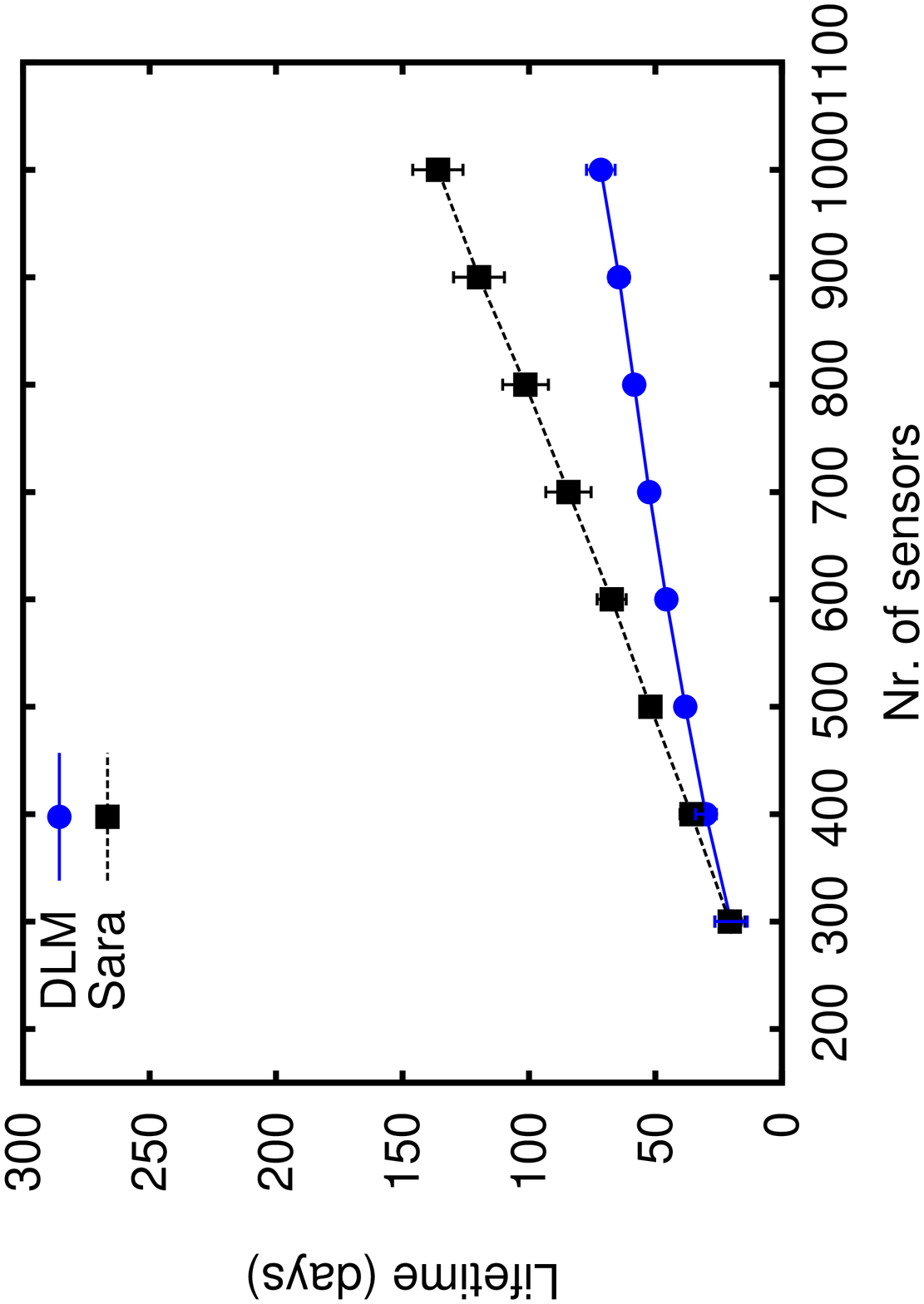}}
  &  {\includegraphics[width=0.22\textwidth,angle=-90]{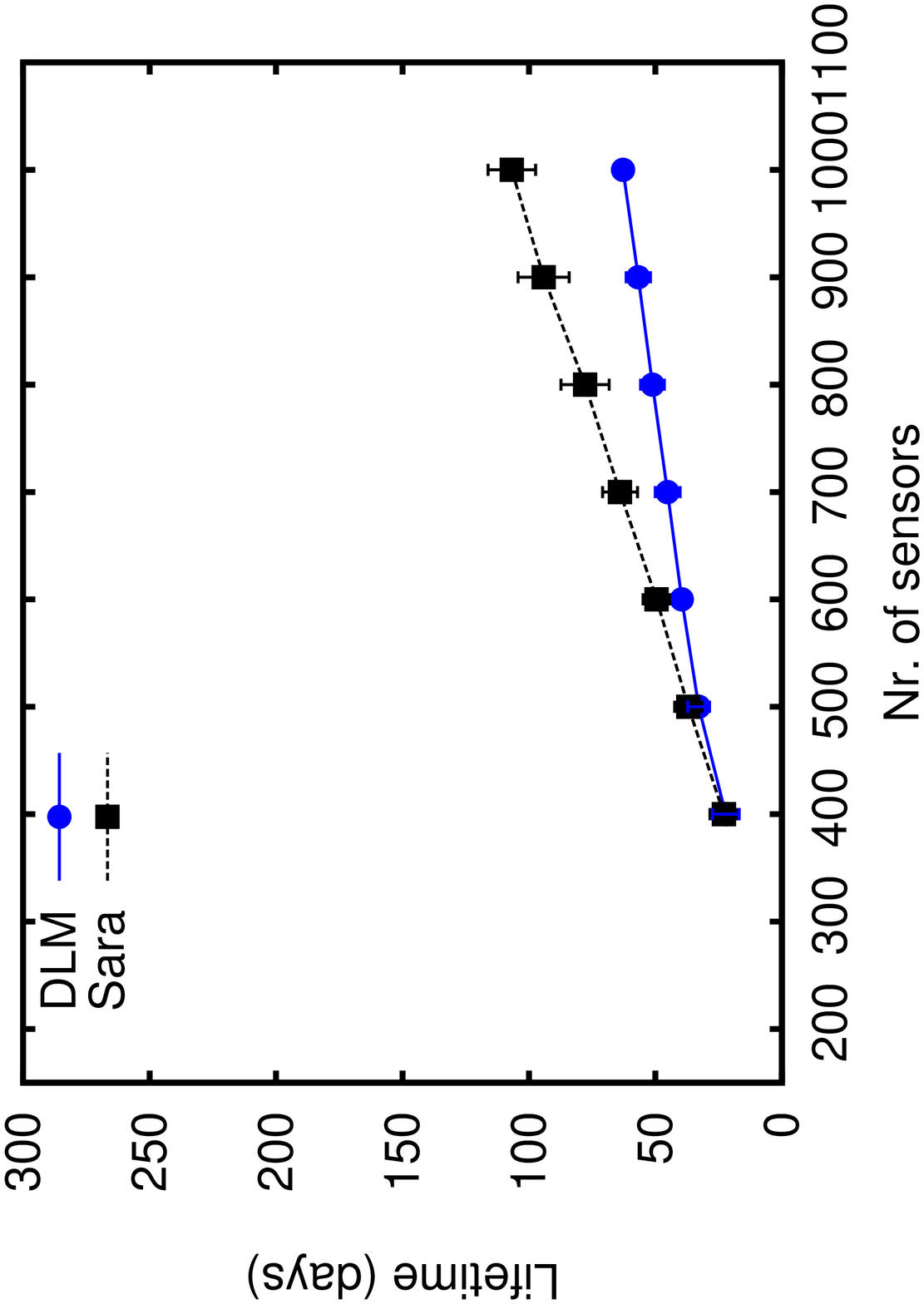}}
    \\
{\footnotesize{(a)}}&{\footnotesize{(b)}}&{\footnotesize{(c)}}
\end{tabular}
\end{center}
 \caption{{Fixed sensors: heterogeneous setting. Lifetime achieved by the three algorithms expressed as the time after which the algorithm is no longer capable to 
 cover more that 80\% (a), 90\% (b) and 95\% (c) of the AoI.}}
\label{fig:lifetimeFIX_HOM}
 \end{figure}
 
 In Figure \ref{fig:lifetimeFIX_HOM}
 we compare the algorithms $\alg$ and $\dlm$ in terms of network lifetime by increasing the number of deployed sensors. We consider the time at which the coverage of the AoI goes  below the 80\% (a), 90\% (b) and 95\%(c).
Notice that, in Figure \ref{fig:lifetimeFIX_HOM}(c) the point corresponding to the deployment of 300 sensors is missing, because even if all the sensors were kept awake, this amount of
sensors would not be sufficient to cover  the 95\% of the AoI.  
Even in this case, although our algorithm does not specifically address a particular notion of lifetime, it outperforms $\dlm$ also under other possible lifetime requirements.

%-----------------------------------------------------------------------------------
% MIXED SENSORS --------------------------------------------------------------------
%-----------------------------------------------------------------------------------

\subsection{Mixed sensors: homogeneous setting}

We consider the most general applicative scenario, with sensors belonging to both classes of fixed and adjustable sensing radius,
we refer to a scenario with 900 uniformly deployed sensors.
The set of available sensors is composed by 50\% of fixed sensors with sensing radius equal to 6m
and 50\% of adjustable sensors with a sensing radius which varies 
in the interval [2m,6m].
Notices that this scenario is considered homogeneous because all the sensors are able to reach the same maximum extension of the
sensing range, no matter which class they belong to.

Figure \ref{fig:time_MIX_HOM}(a) shows the percentage of the AoI that the algorithm $\alg$ is able to cover  
as the time increases.
The figure also shows the percentage of the AoI that is covered by the only sensors with adjustable radius, and by those with fixed radius separately.
It is worth noting that at the first operative time intervals $\alg$ privileges the sensors with adjustable range in the active set, as also detailed in Figure 
 \ref{fig:time_MIX_HOM}(b). 
This is due to the 
higher flexibility of the solution that can be obtained using this class of devices. 
As time progresses, the adjustable sensors that have been used extensively in the previous intervals begin to deplete their energy, thence $\alg$ requires more fixed sensors to be included in the active set.
It should be noted also that, see Figure \ref{fig:time_MIX_HOM}(a), the percentage of dead sensors is about the same for the sensors of the two classes.
This is due to the fact that, in this homogeneous setting, as long as a fixed sensor is activated, it consumes energy at the same rate of an adjustable sensor working at maximum sensing range. 
While this behavior was expected in the case of the algorithms $\dlm$ and $\gup$, as they do not distinguish the two classes, this has to
be considered a nice property for the algorithm $\alg$ as it evidences its capability to do the best with the two classes, exploiting their energy when possible
in an equal manner.
The Figure \ref{fig:time_MIX_HOM}(d) shows the composition of the set of sleeping sensors, which is a complement to the values of figure (b) and (c).

\begin{figure}[h]
\centering
%\begin{table}
\begin{tabular}{ccc}
\hspace{-0.5cm}\includegraphics[width = 0.22\textwidth, angle=-90]{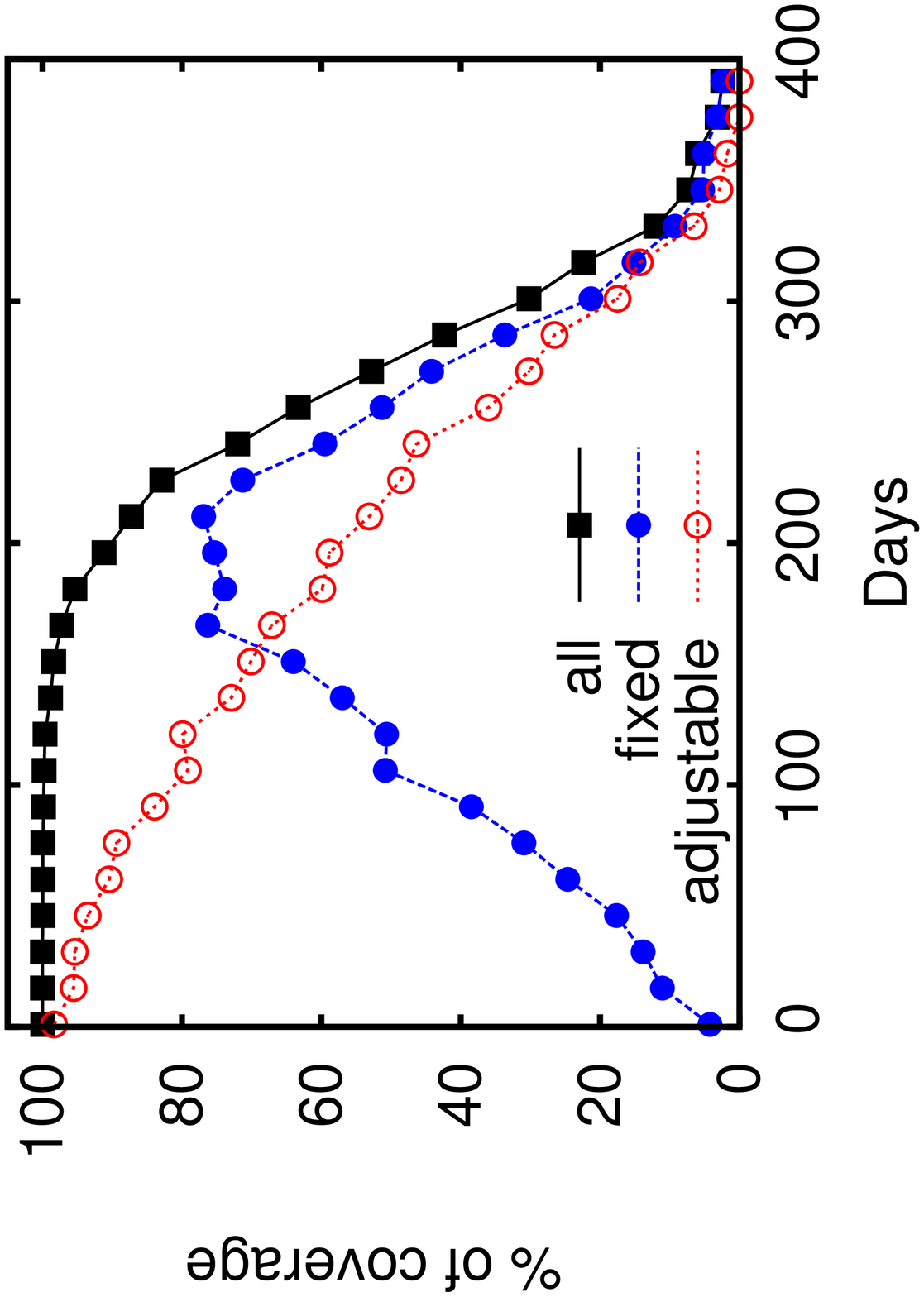} &
\includegraphics[width = 0.22\textwidth,angle=-90]{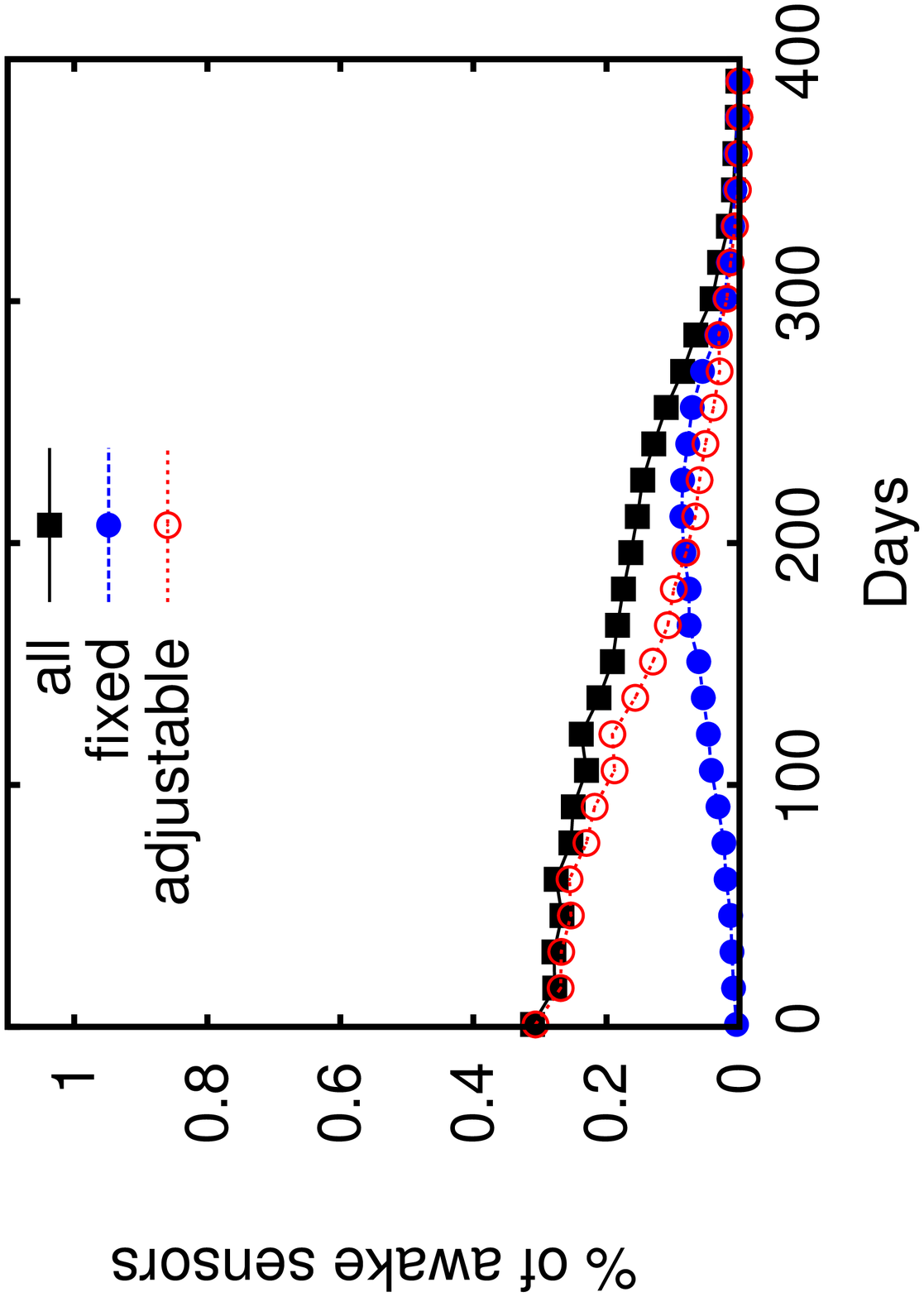}&
\includegraphics[width = 0.22\textwidth,angle=-90]{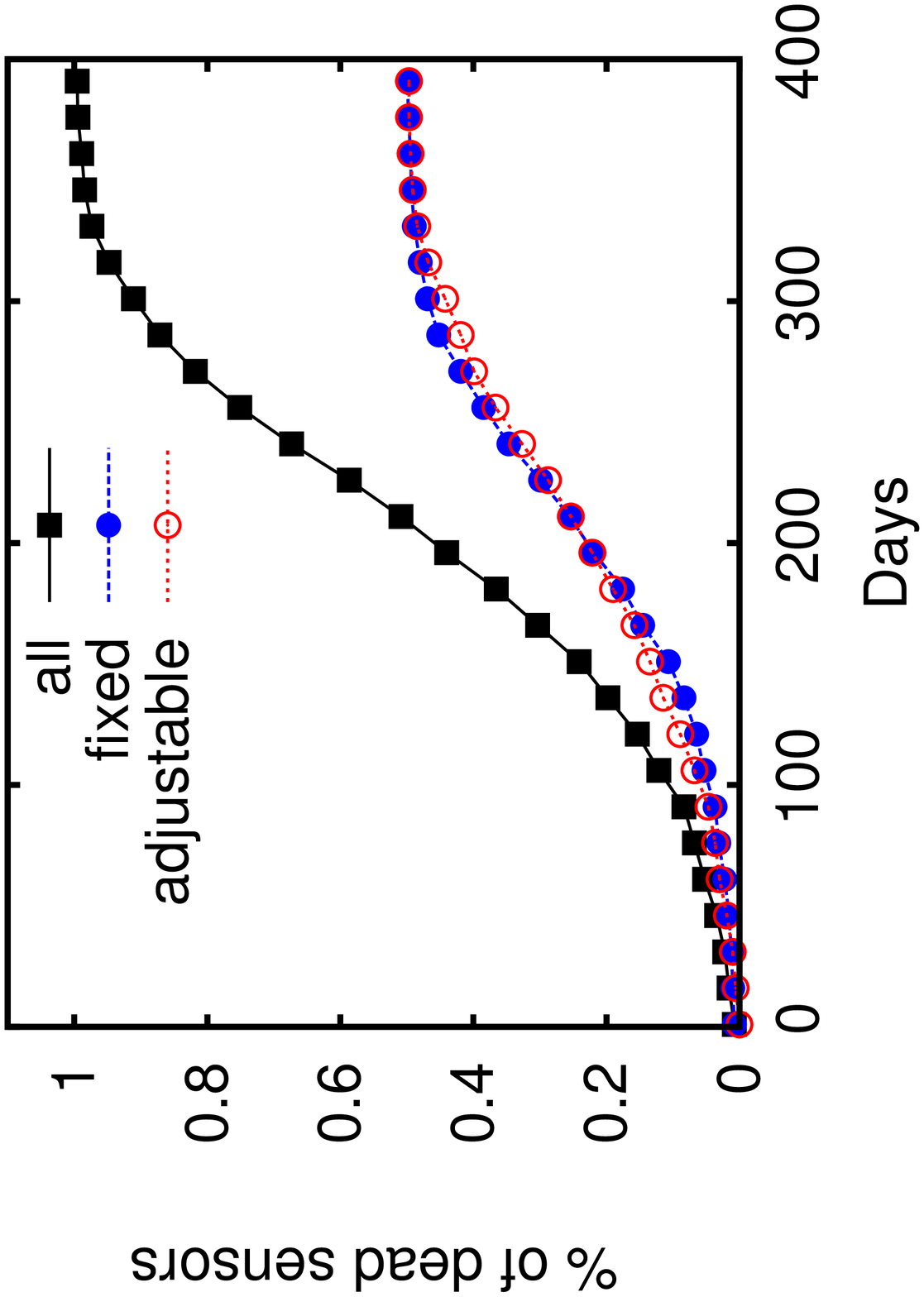}
\\
\hspace{-0.5cm}{\footnotesize{(a)}}&{\footnotesize{(b)}}&{\footnotesize{(c)}}\\
\hspace{-0.5cm}
\includegraphics[width = 0.22\textwidth,angle=-90]{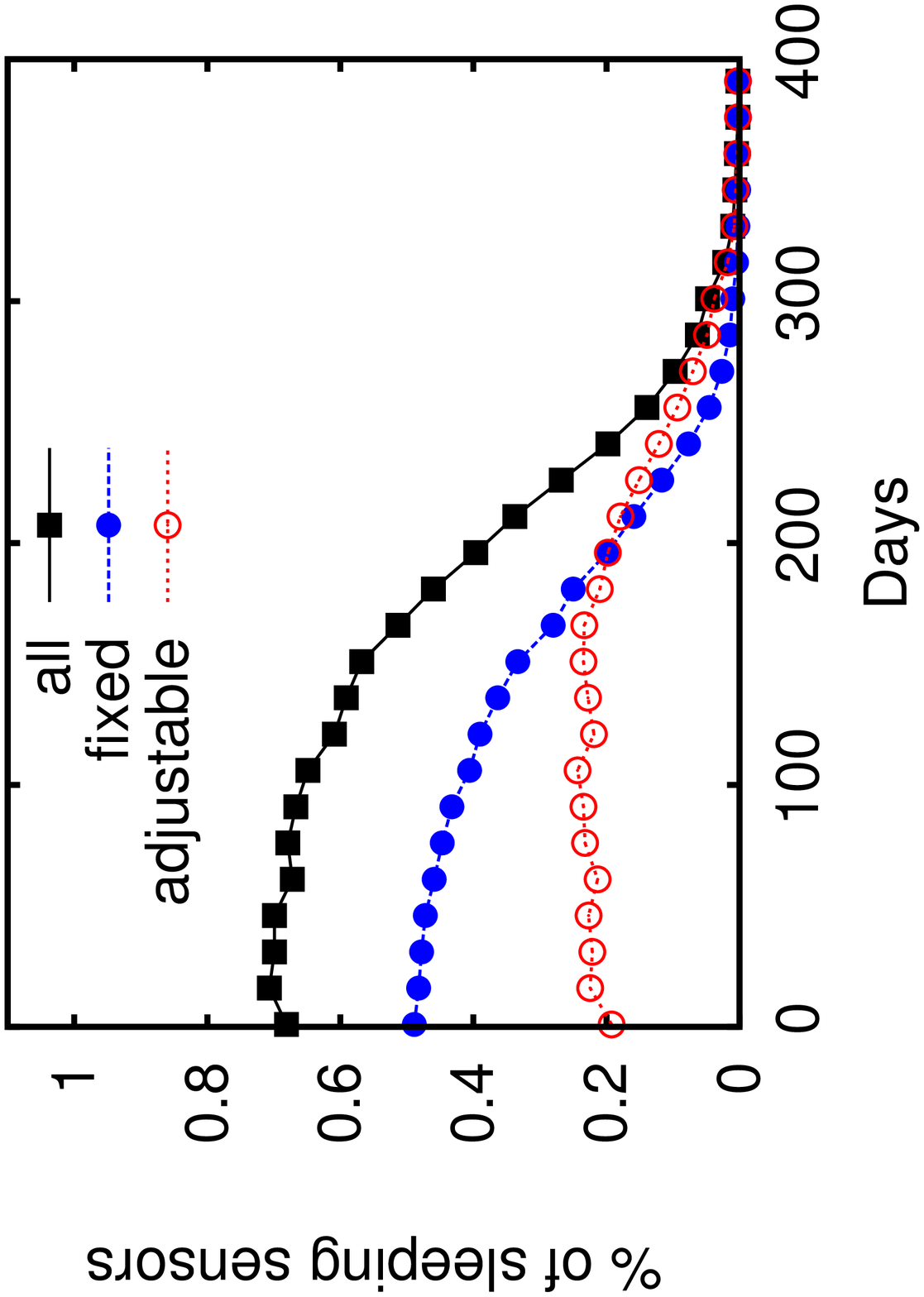}
  &
\includegraphics[width = 0.22\textwidth,angle=-90]{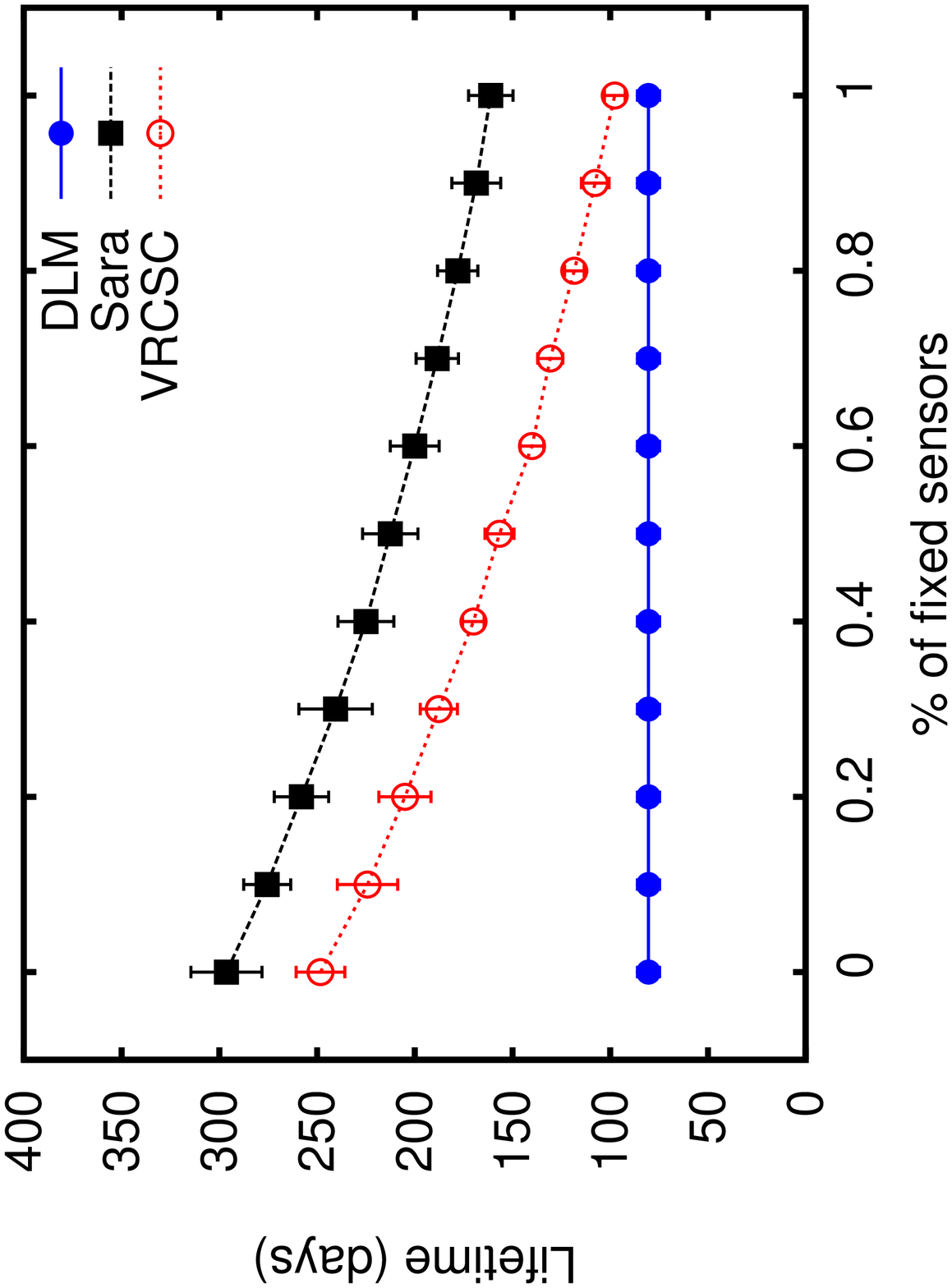}&
\includegraphics[width = 0.22\textwidth,angle=-90]{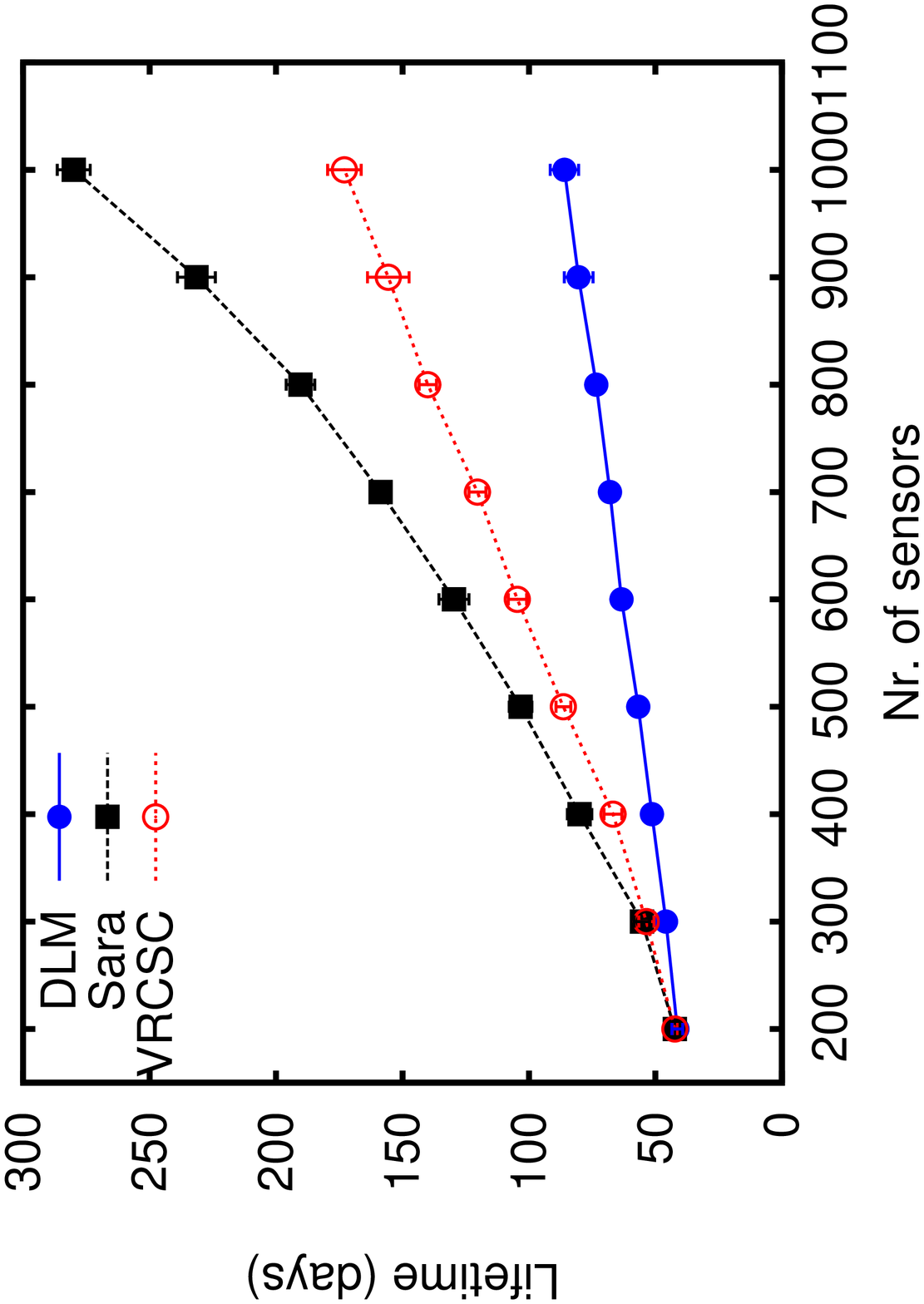}  
    \\
\hspace{-0.5cm}{\footnotesize{(d)}}&{\footnotesize{(e)}}&{\footnotesize{(f)}}\\
\end{tabular}
%\end{table}
 \caption{Mixed sensors: homogeneous scenario.
 The maximum radius of adjustable sensors and the radius of fixed sensors are 6m. 
 Case of 900 sensors, 50 \% fixed and 50 \% 
adjustable range: coverage (a), 
active (b),  dead (c), and sleeping (d) sensors.
Lifetime of the network by varying the percentage of  fixed sensors with respect to total (e).
Lifetime of the network by varying the number of available sensors (f). 
}
\label{fig:time_MIX_HOM}
\end{figure}

%da qui confronto tra i tre
We now comparatively analize the behavior of the algorithms $\gup$ and $\dlm$ with respect to $\alg$.

Thanks to the device homogeneity, the algorithm $\gup$ in its modified version introduced in Section \ref{sec:two_approaches},
 is able to work in this scenario without creating coverage holes. 
 Nevertheless, the presence of fixed sensors in the available set, compromises the capability of $\gup$ to correctly determine the maximum extent
 of the radius reduction to be adopted by sensors with adjustable range.

The algorithm $\dlm$, in its modified version instead does not find more difficulties when dealing with this scenario than 
those encountered in addressing the case of only adjustable sensors, as it treats every sensor as if it were fixed.

The Figure  \ref{fig:time_MIX_HOM}(e) 
shows the lifetime of the network when the percentage of fixed sensors in the available set increases.
Not surprisingly the performance of $\dlm$ is not affected by this increase as it treats the two classes alike.
Both $\gup$ and $\alg$ show a decreasing behavior of the network lifetime due to the decreasing flexibility of the network.
Indeed it is intuitive that by increasing the percentage of fixed sensors, in the homogeneous case, 
we are significantly reducing the set of possible solutions that can be reached by any algorithm.
Nevertheless $\alg$ is less affected by this phenomenon as it can exploit the capability of the two classes of sensors 
with more specifically tailored decisions.

 All the above considerations justify the significant improvement in terms  of lifetime achieved by $\alg$ with respect to $\dlm$ and $\gup$.
 To highlight this difference we now consider an experiment conducted by varying the number of available sensors, with a set composed of 50\% of sensors
 with adjustable radius, and 50\% with fixed radius. The Figure \ref{fig:time_MIX_HOM}(e) illustrates the behavior of the three algorithms in this setting.
{\color{black}For instance, when the number of sensors is 1000, $\alg$ achieves a lifetime of 280 days, whereas $\gup$ reaches 170 days, and $\dlm$ only 80 days.
}

%%%MORE LIFETIME
\begin{figure}
\begin{center}
\begin{tabular}[c]{ccc}
\hspace{-0.5cm}
  \includegraphics[width = 0.22\textwidth,angle=-90]{mixed/homogeneous/lifetime/50-50/lifetime_80}
  &{\includegraphics[width=0.22\textwidth,angle=-90]{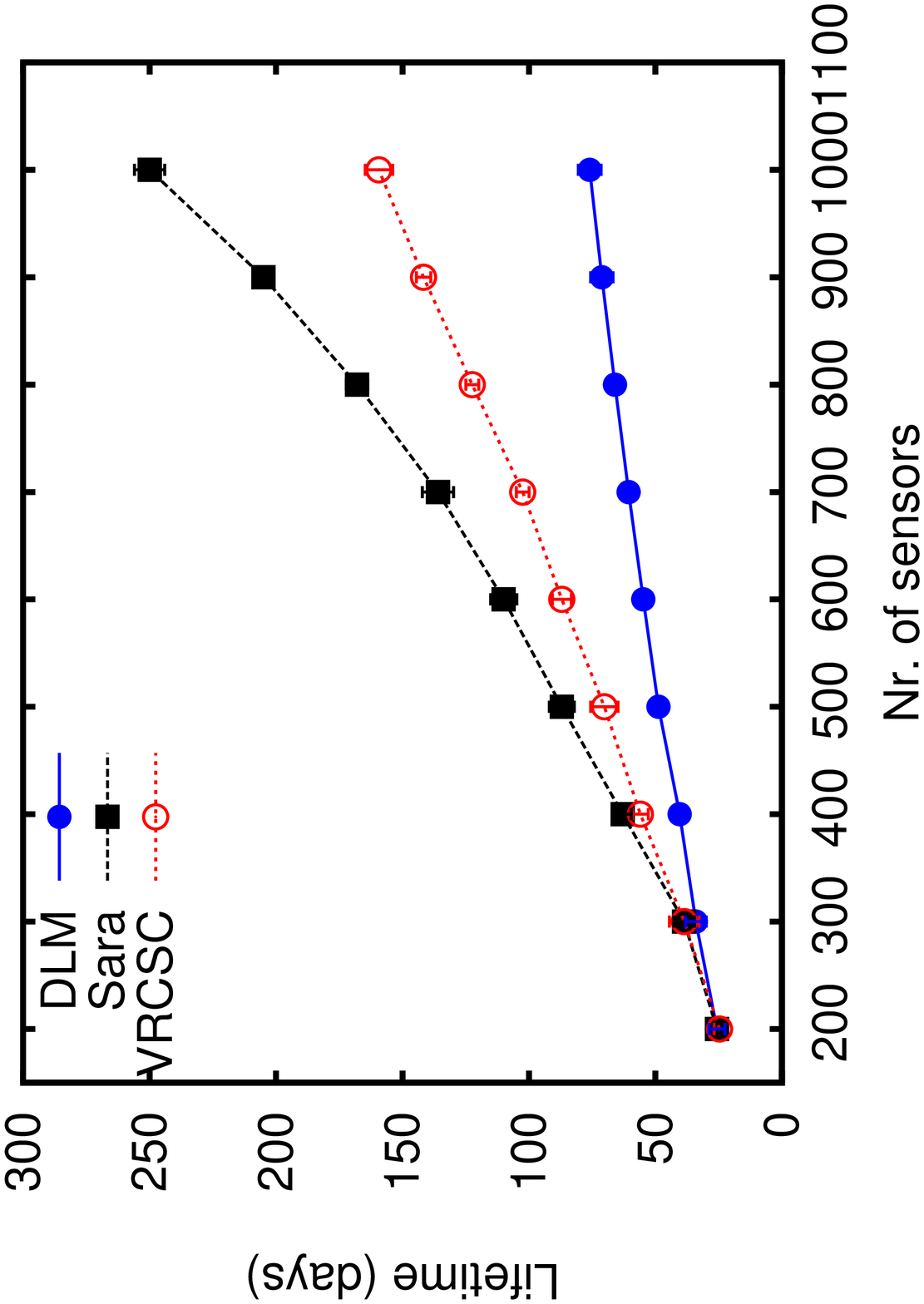}}
  &  {\includegraphics[width=0.22\textwidth,angle=-90]{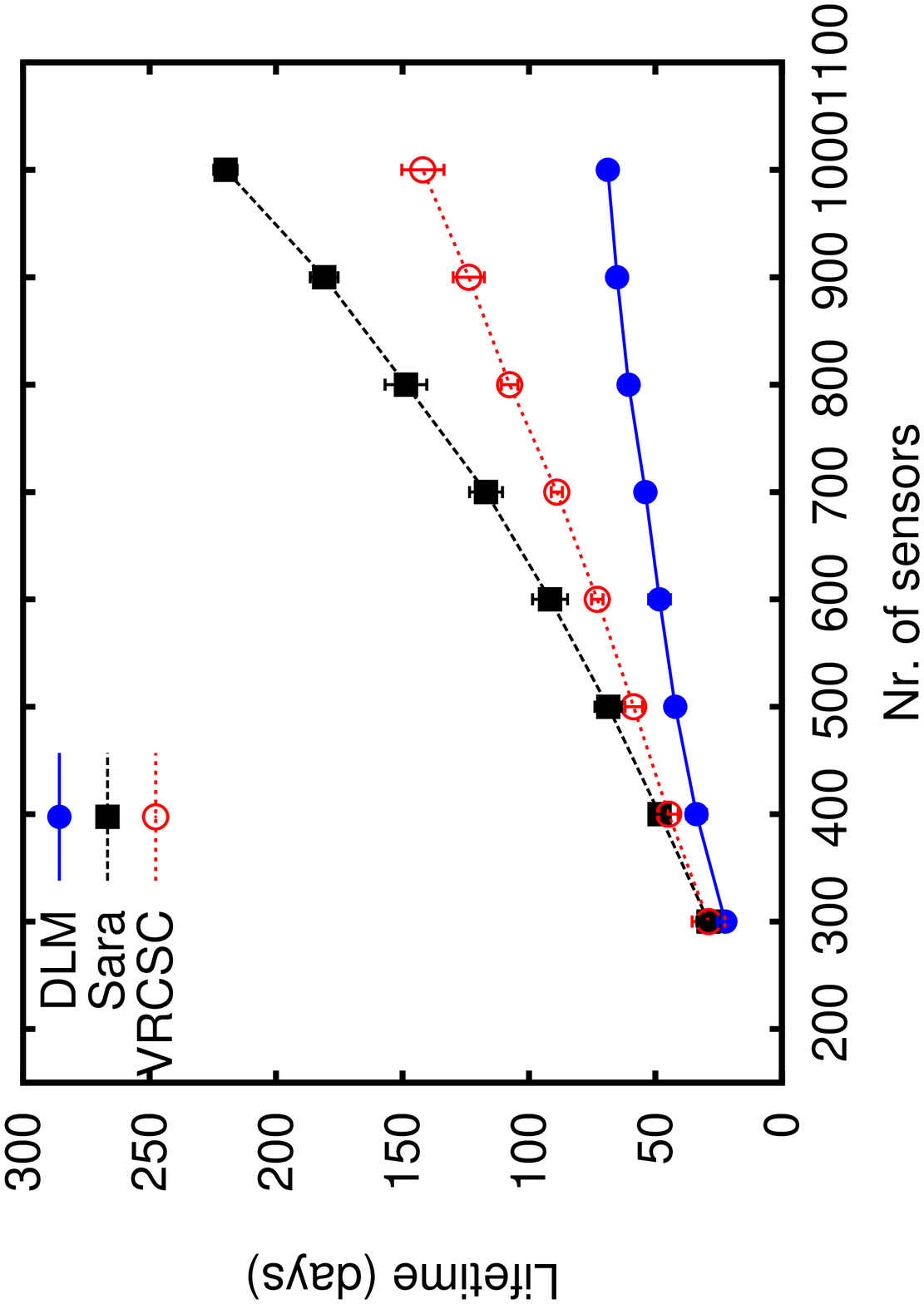}}
    \\
{\footnotesize{(a)}}&{\footnotesize{(b)}}&{\footnotesize{(c)}}
\end{tabular}
\end{center}
 \caption{{Mixed sensors: homogeneous setting. Lifetime achieved by the three algorithms expressed as the time after which the algorithm is no longer capable to 
 cover more that 80\% (a), 90\% (b) and 95\% (c) of the AoI.}}
\label{fig:lifetimeMIX_HOM}
 \end{figure}
 
 In Figure \ref{fig:lifetimeMIX_HOM}
 we compare the algorithms $\alg$, $\gup$ and $\dlm$ in terms of network lifetime by increasing the number of deployed sensors. We consider the time at which the coverage of the AoI goes  below the 80\% (a), 90\% (b) and 95\%(c).
Even in this case, although $\alg$ does not specifically address a particular notion of lifetime, it outperforms the other two also under other possible lifetime requirements.

\begin{figure}
\begin{center}
\begin{tabular}[c]{ccc}
\hspace{-0.5cm}
  \includegraphics[width = 0.22\textwidth,angle=-90]{mixed/homogeneous/lifetime/perc_fixed/lifetime_80}
  &{\includegraphics[width=0.22\textwidth,angle=-90]{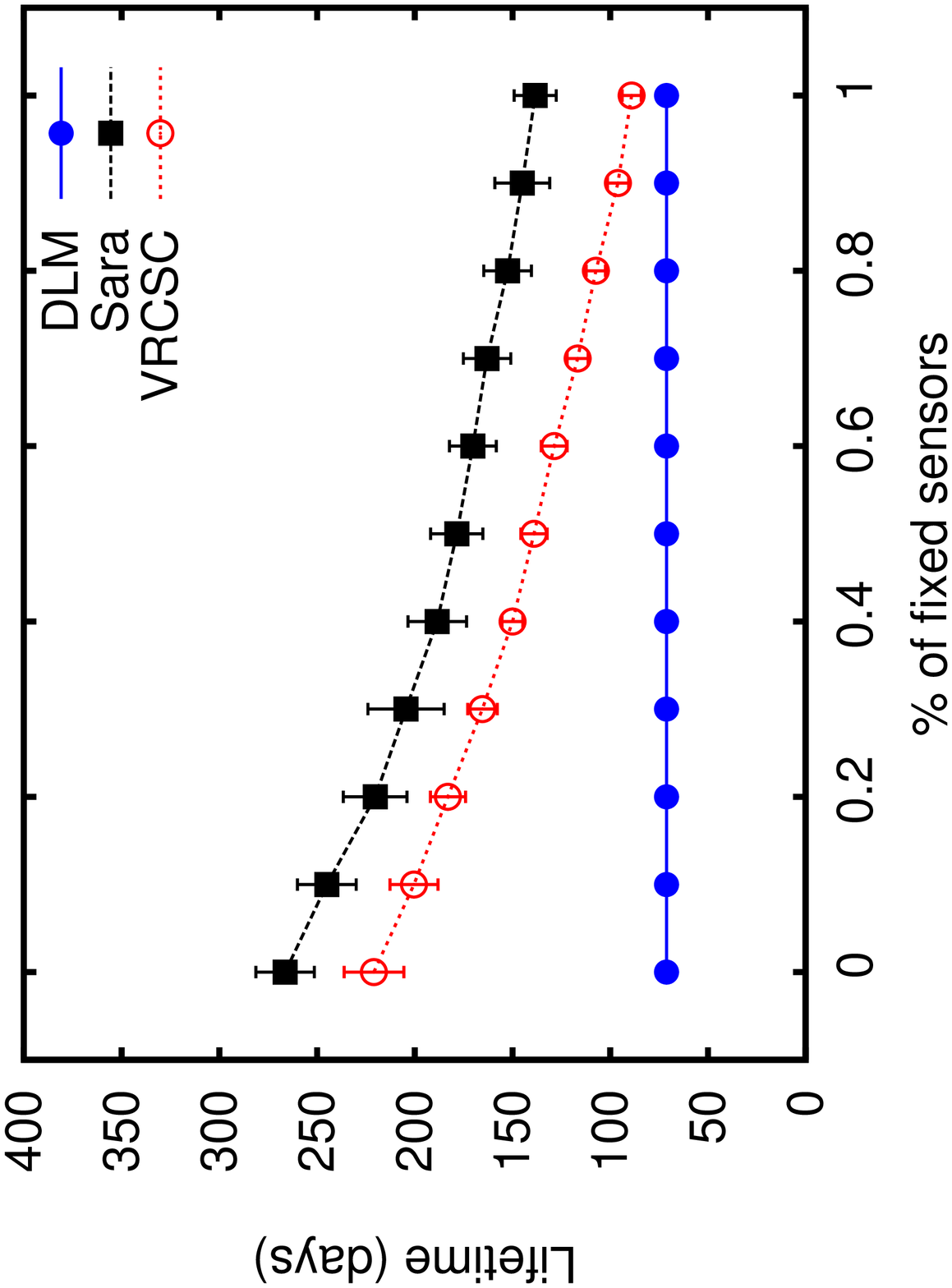}}
  &  {\includegraphics[width=0.22\textwidth,angle=-90]{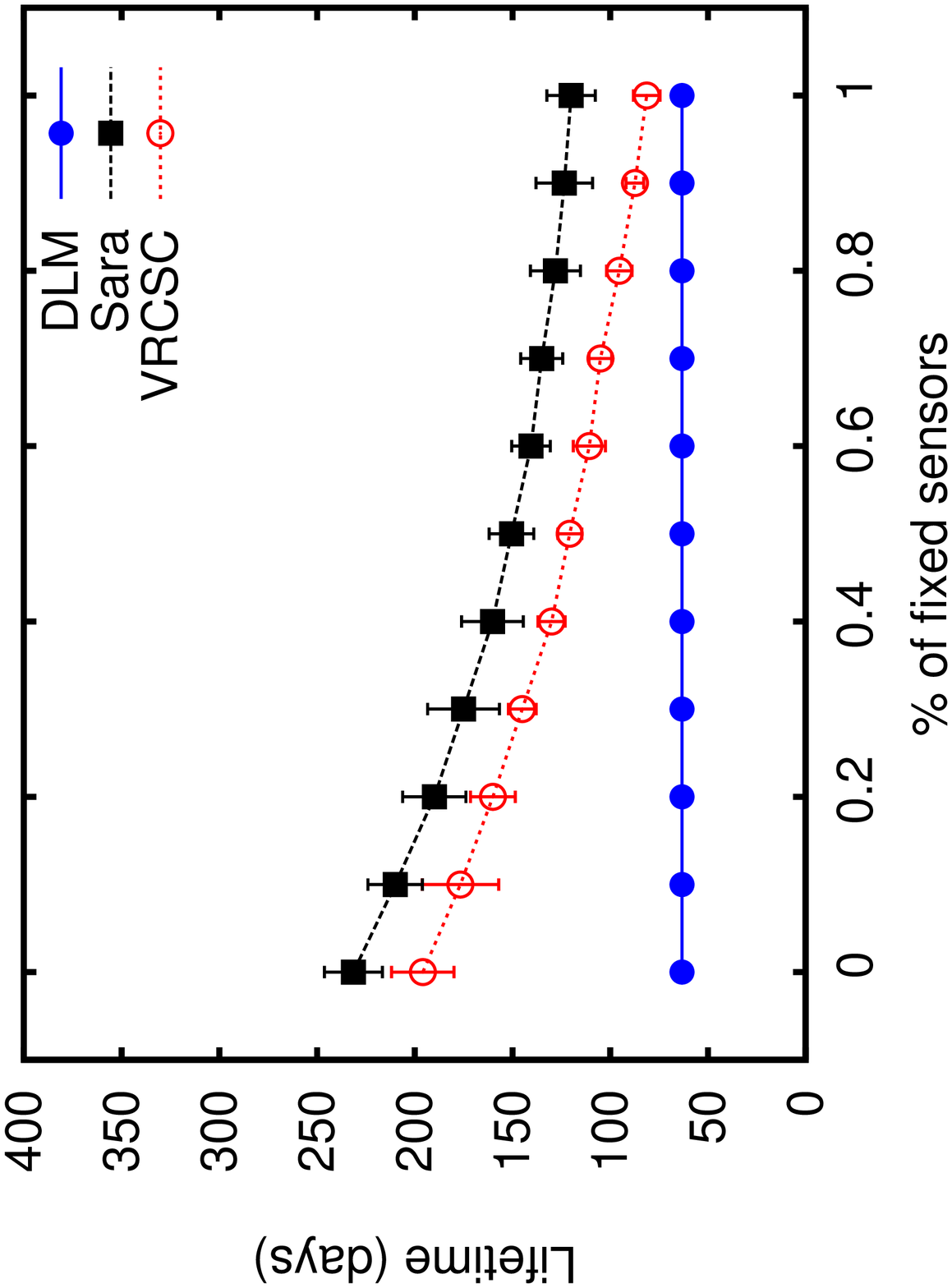}}
    \\
{\footnotesize{(a)}}&{\footnotesize{(b)}}&{\footnotesize{(c)}}
\end{tabular}
\end{center}
 \caption{Mixed sensors: homogeneous setting. Lifetime of the network by varying the percentage of  fixed sensors with respect to total. Lifetime achieved by the three algorithms expressed as the time after which the algorithm is no longer capable to  cover more that 80\% (a), 90\% (b) and 95\% (c) of the AoI.}
\label{fig:lifetimeMIX_HOM_PERC_FIX}
 \end{figure}
 Figure \ref{fig:lifetimeMIX_HOM_PERC_FIX} shows the performance of the three algorithms by varying the percentage of fixed sensors in the available set when 900 sensors are deployed. The figure highlights that even under other possible definitions of lifetime $\alg$ outperforms $\dlm$ and $\gup$.

%%%%% MIXED HETEROGENEOUS

\subsection{Mixed sensors: heterogeneous setting}
In this latter subsection, we consider the most general applicative scenario, where sensors belong to both classes and where the two classes have heterogeneous sensing capabilities.
In particular, the radius of fixed sensors is 3m long, while the radius of adjustable sensors varies in the interval [2m, 6m].

The qualitative analysis of the results shown in Figures
\ref{fig:time_MIX_HET}(a-f) is analogous to the case of the homogeneous setting.
 Nevertheless, Figure \ref{fig:time_MIX_HET}(c) highlights that, in this case, the set of dead sensors is composed by a higher fraction of adjustable sensors with respect to the homogeneous case.
This is due to the fact that we are considering fixed sensors with lower range, that implies for this class a lower energy consumption rate with respect to the
homogeneous case, resulting in a higher residual energy for the fixed sensors, as shown in Figure \ref{fig:time_MIX_HET}(e).
Notice that, as in all the heterogeneous cases treated in this paper, we did not analyze the behavior of $\gup$ in this scenario, as the comparison cannot
be fair, because $\gup$ does not succeed in completing the coverage of the AoI.
Finally, in Figure \ref{fig:time_MIX_HET}(f) we show that, as expected, the lifetime achieved by $\alg$ is significantly longer than under $\dlm$.
{\color{black}For instance, when the number of sensors is 1000, $\alg$ achieves a lifetime of about 750 days, while $\dlm$ is only capable to last no more than 270 days.}

\begin{figure}[h]
\centering
%\begin{table}
\begin{tabular}{ccc}
\hspace{-0.5cm}\includegraphics[width = 0.22\textwidth, angle=-90]{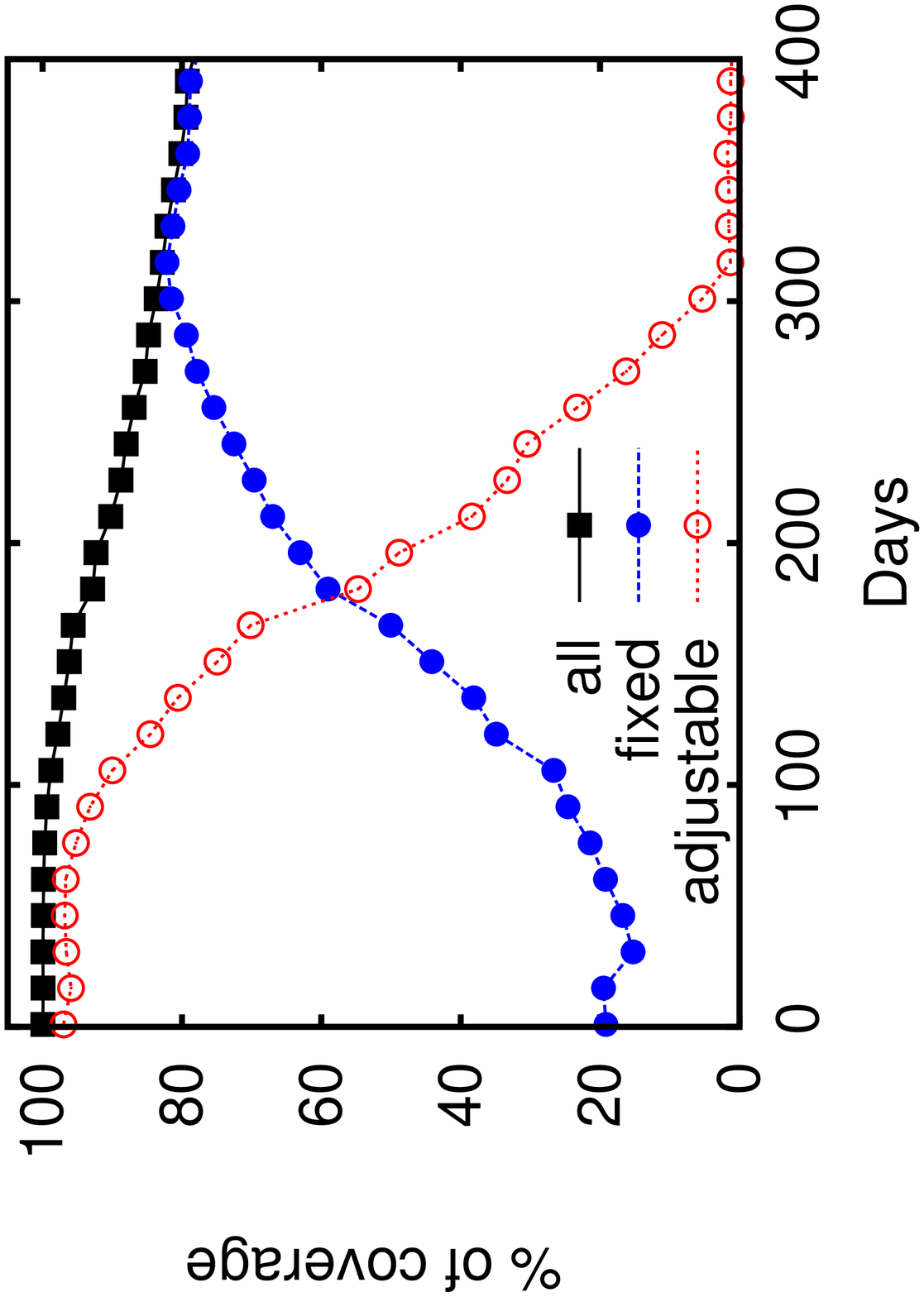} &
\includegraphics[width = 0.22\textwidth,angle=-90]{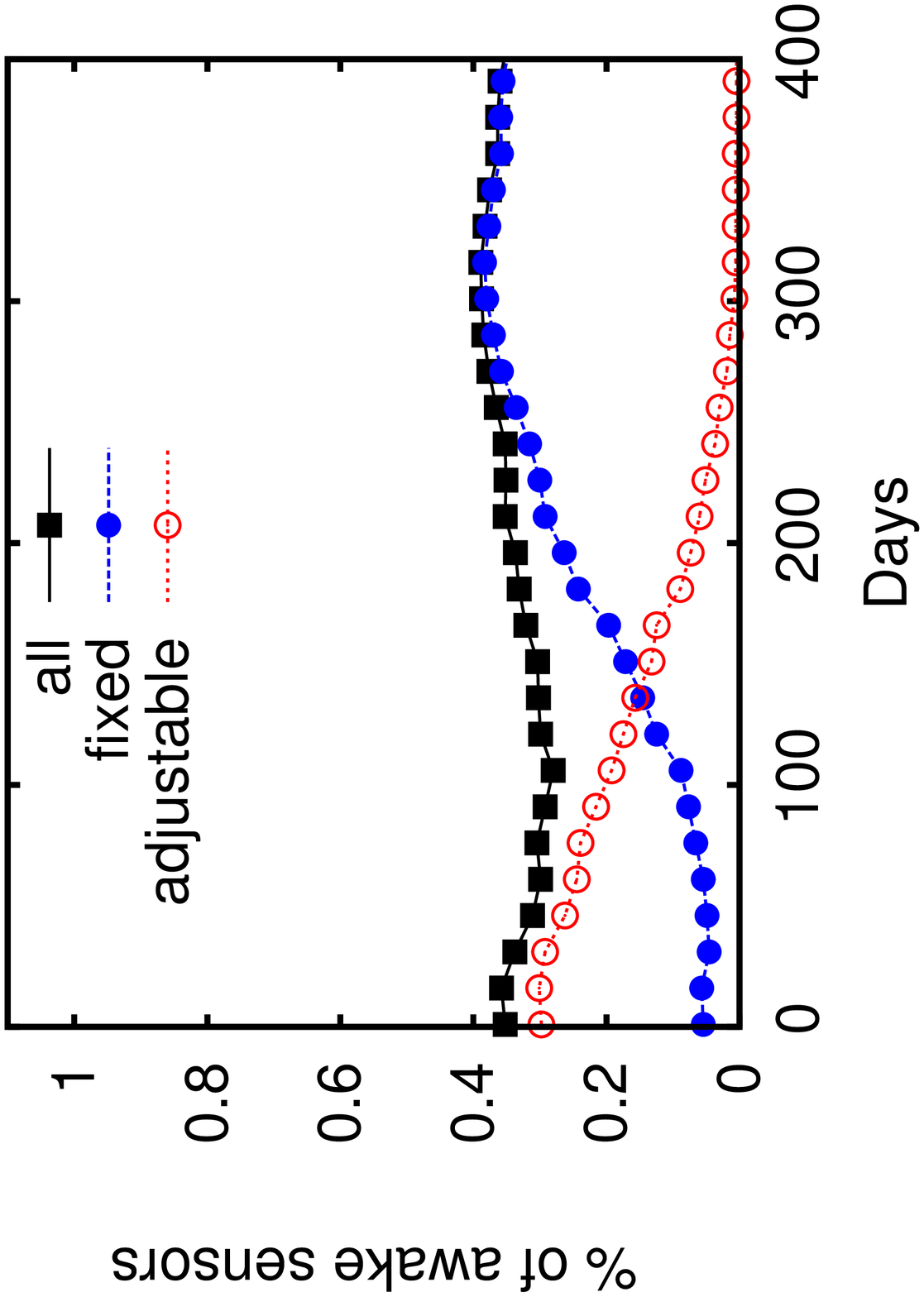}&
\includegraphics[width = 0.22\textwidth,angle=-90]{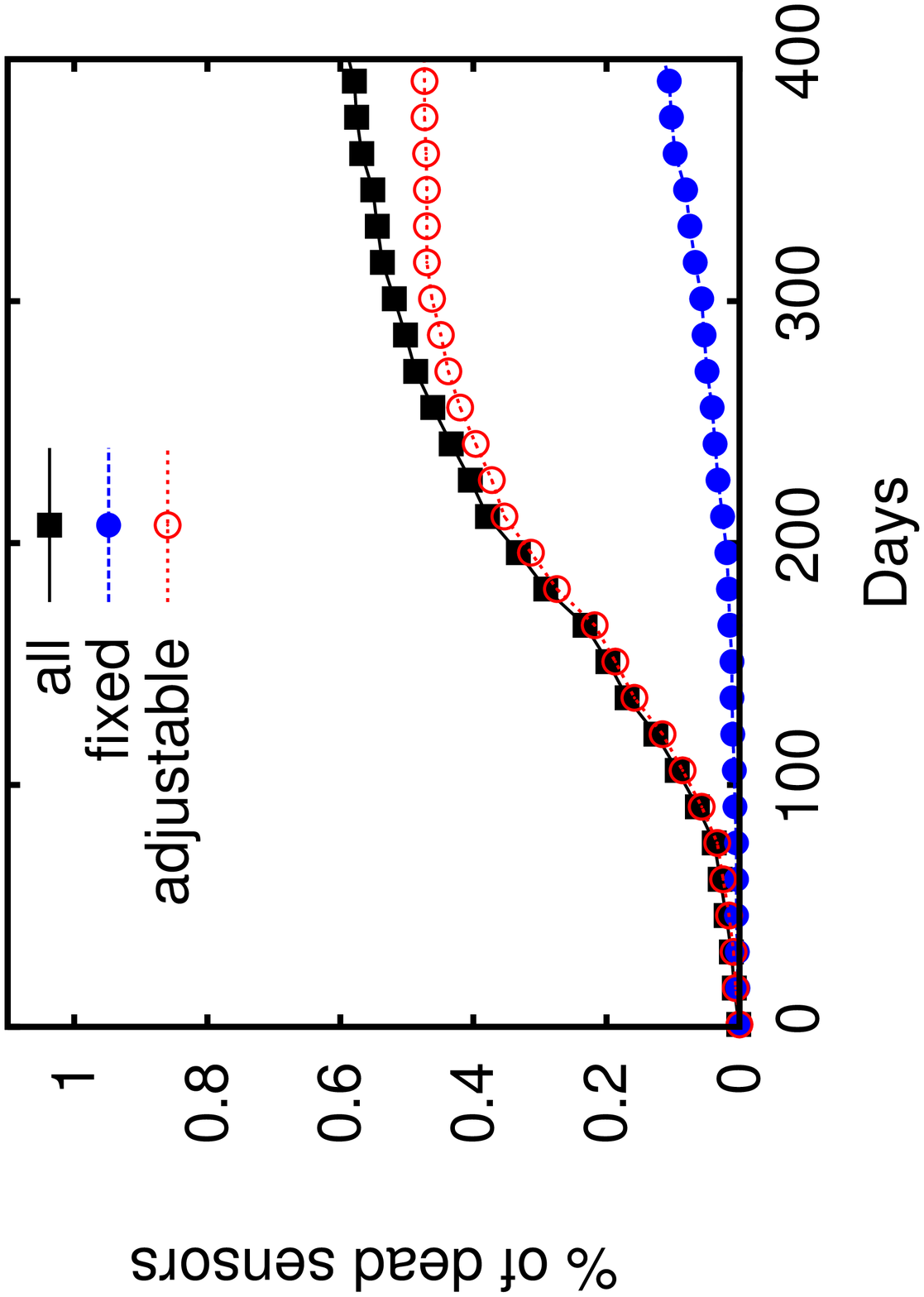}
\\
\hspace{-0.5cm}{\footnotesize{(a)}}&{\footnotesize{(b)}}&{\footnotesize{(c)}}\\
\hspace{-0.5cm}
\includegraphics[width = 0.22\textwidth,angle=-90]{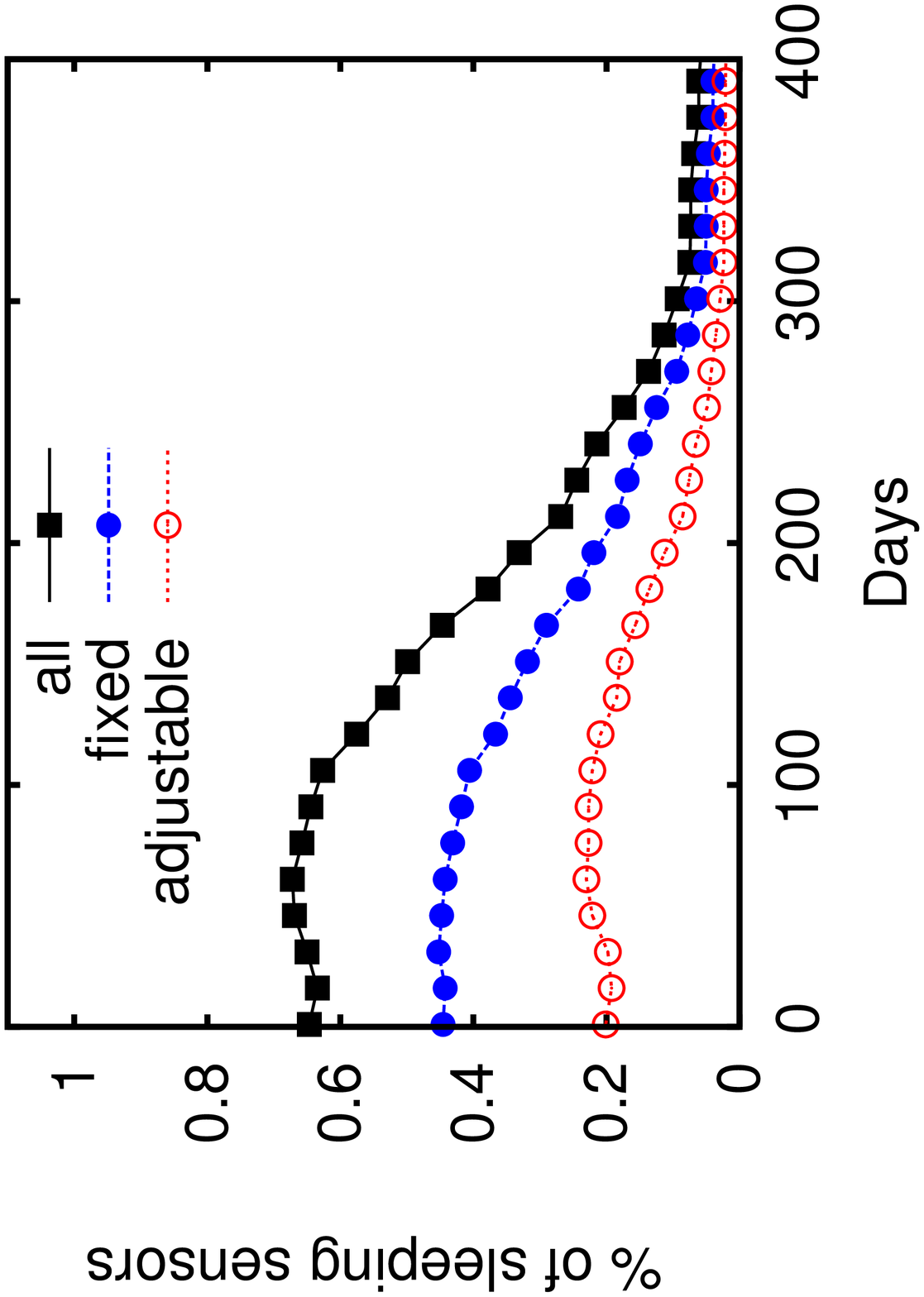}
  &
\includegraphics[width = 0.22\textwidth,angle=-90]{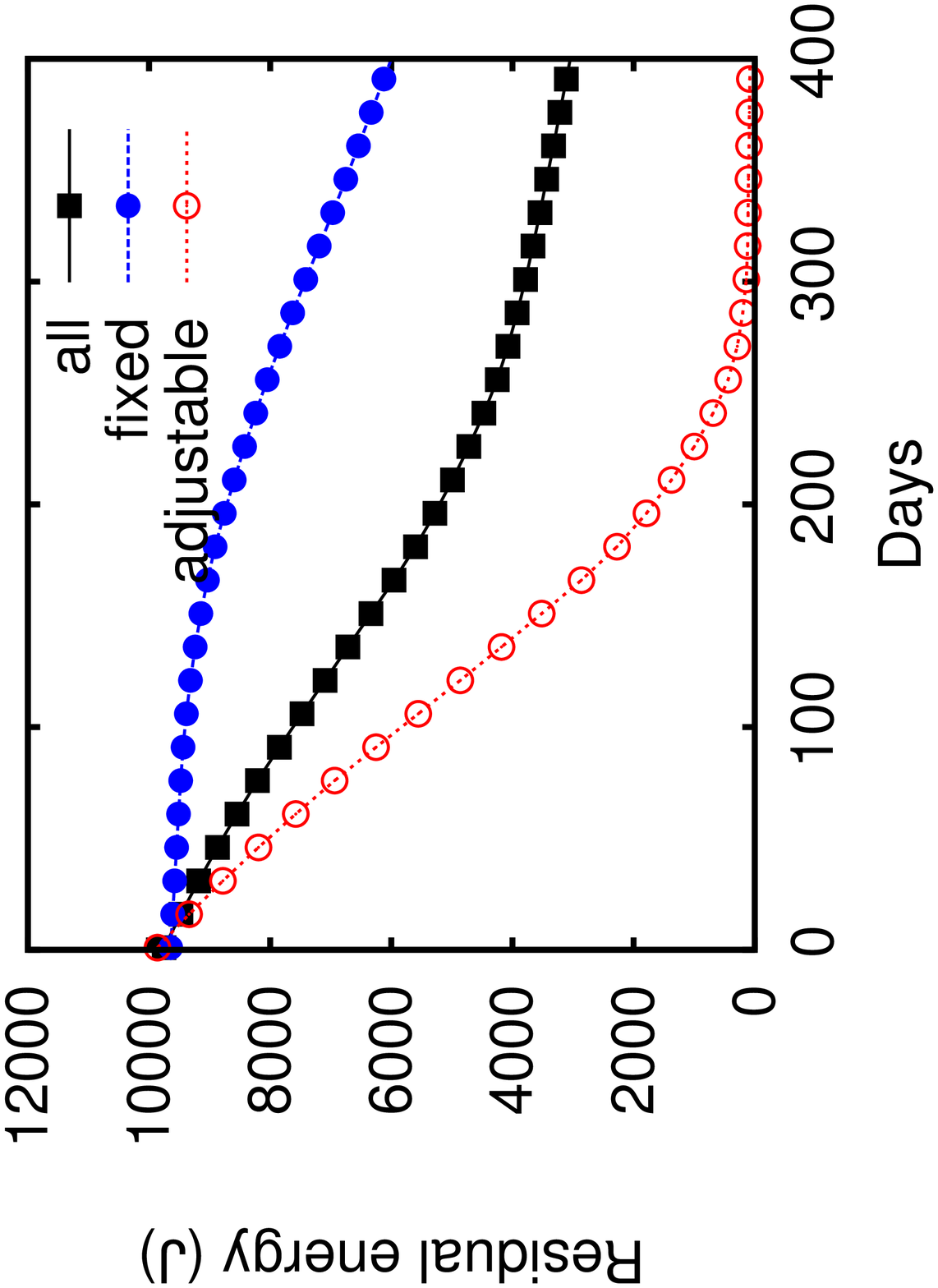} &
\includegraphics[width = 0.22\textwidth,angle=-90]{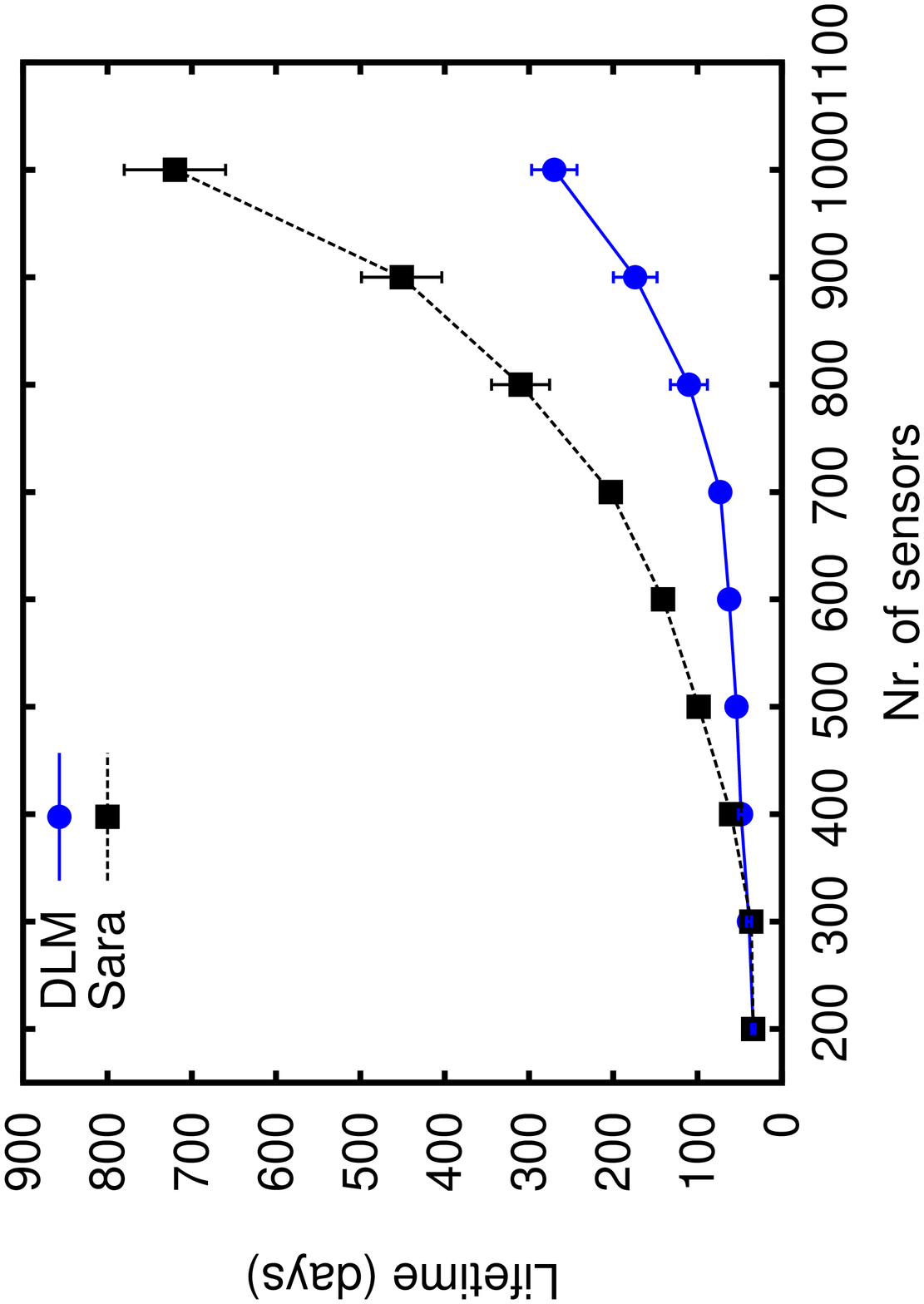} 
    \\
\hspace{-0.5cm}{\footnotesize{(d)}}&{\footnotesize{(e)}}&{\footnotesize{(f)}}\\
\end{tabular}
%\end{table}
 \caption{Percentage of coverage (a), 
active (b),  dead (c), sleeping (d) sensors and residual energy (e) in a scenario with 900 {\bf heterogeneously} equipped sensors of {\bf both classes} of devices (50 \% with fixed and 50 \%  with adjustable sensing range).  Lifetime of the network when varying the number of sensors (50 \% of each class).}
\label{fig:time_MIX_HET}
\end{figure}

Notice that in this heterogeneous setting, it does not make sense to analyze the performance of the algorithms when the percentage of the two classes of sensors varies.
This is because the fixed sensors have different  sensing capabilities than the maximum for adjustable sensors.
Therefore, by varying the composition of the mix we would alter the coverage capability of the network.

%%% More LIFETIME
\begin{figure}
\begin{center}
\begin{tabular}[c]{ccc}
\hspace{-0.5cm}
  \includegraphics[width = 0.22\textwidth,angle=-90]{mixed/heterogeneous/lifetime/50-50/lifetime_80}
  &{\includegraphics[width=0.22\textwidth,angle=-90]{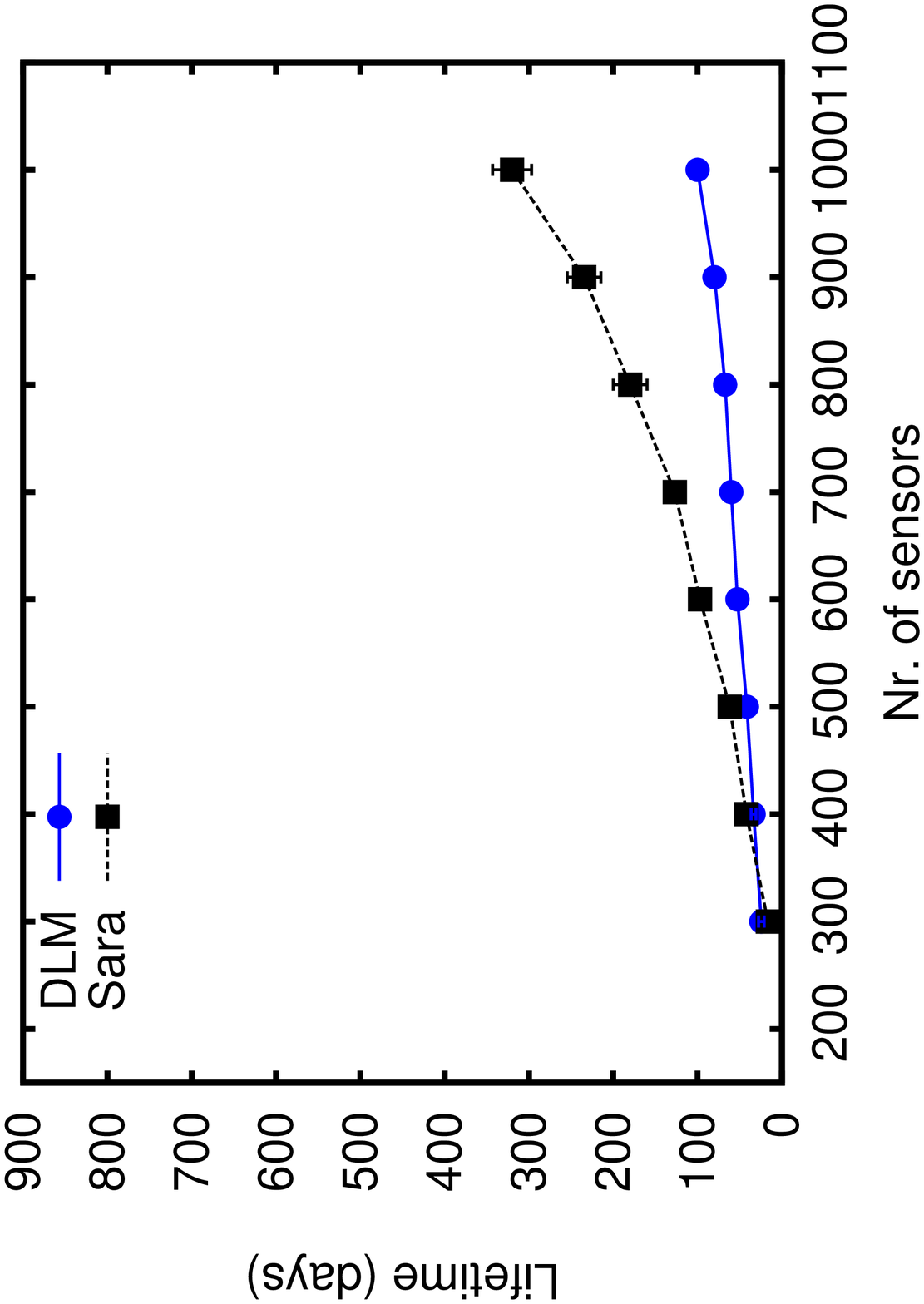}}
  &  {\includegraphics[width=0.22\textwidth,angle=-90]{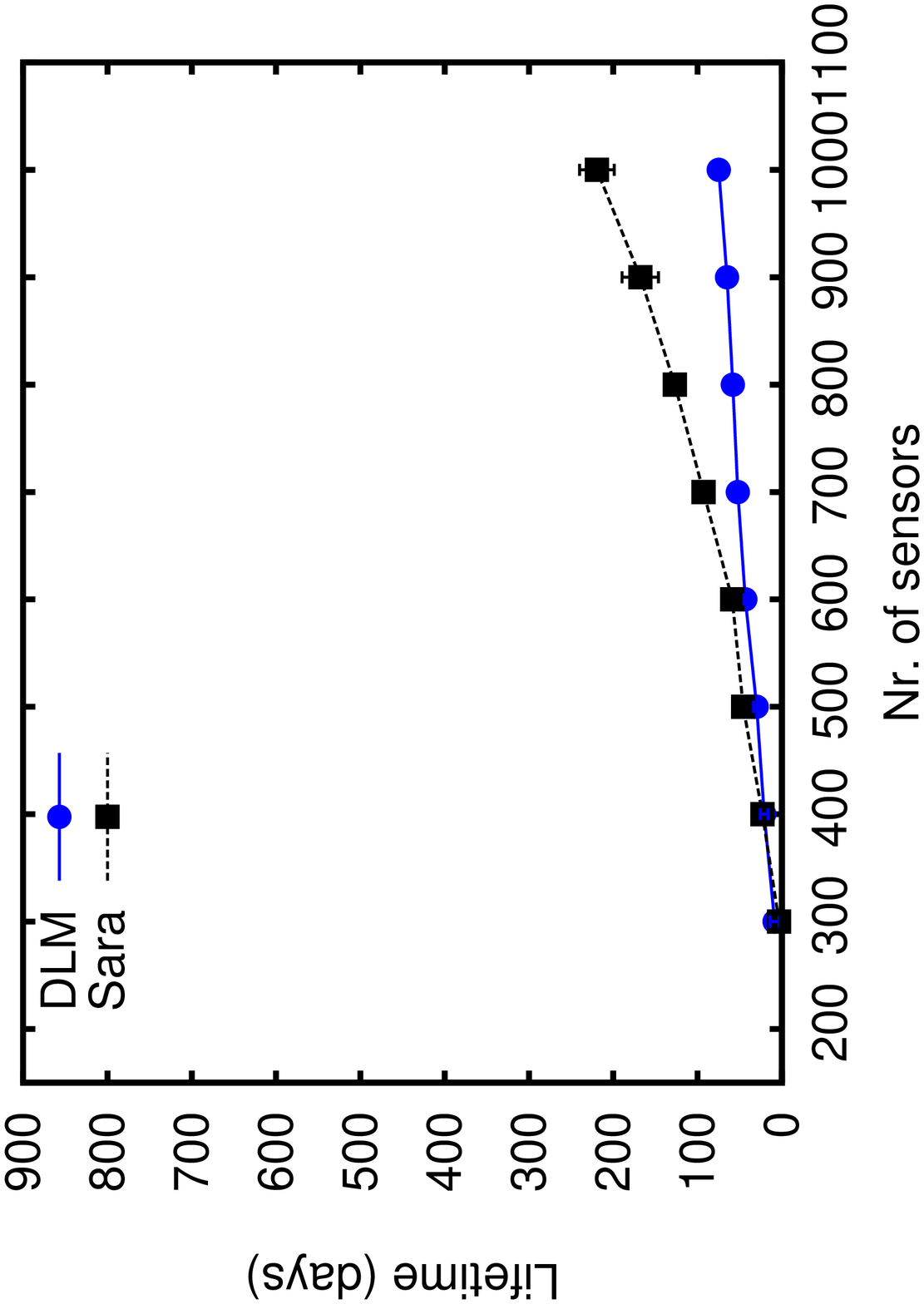}}
    \\
{\footnotesize{(a)}}&{\footnotesize{(b)}}&{\footnotesize{(c)}}
\end{tabular}
\end{center}
 \caption{{Mixed sensors: heterogeneous setting. Lifetime achieved by the three algorithms expressed as the time after which the algorithm is no longer capable to 
 cover more that 80\% (a), 90\% (b) and 95\% (c) of the AoI.}}
\label{fig:lifetimeMIX_HET}
 \end{figure}
 
 In Figure \ref{fig:lifetimeMIX_HET}
 we compare the algorithms $\alg$ and $\dlm$ in terms of network lifetime by increasing the number of deployed sensors. We consider the time at which the coverage of the AoI goes  below the 80\% (a), 90\% (b) and 95\%(c).
Even in this case, although $\alg$ does not specifically address a particular notion of lifetime, it outperforms $\dlm$ also under other possible lifetime requirements.

\section{Related works} \label{sec:related_works}

The problem of exploiting network redundancies to prolong the network lifetime
has been largely investigated in the literature so far.
Depending on the application requirements, the approach to
the problem may vary significantly.
For example, some works only aim at guaranteeing network connectivity,
the SPAN \cite{Chen2002} and ASCENT \cite{Cerpa2004} protocols just to mention the most acknowledged,
without considering coverage issues.
Due to space limitations, in this section we only
consider the works dealing with the problem of completely covering an area of
interest and we refer the reader to the work \cite{TLPsurvey} from Rawaihy et al. for
a survey of sensor scheduling policies in several other applicative
scenarios.

The PEAS protocol proposed by Ye et al. in \cite{YePEAS2003} was designed to address both
coverage and connectivity at the same time.
According to this protocol only a subset of nodes stay awake while the
others are put to sleep.
A sleeping node occasionally wakes up to determine
the presence of coverage holes in its proximity and make activation decisions accordingly.
This approach does not ensure complete coverage, as coverage holes cannot be discovered until a nearby sleeping
sensor wakes up.
Another randomized algorithm is proposed by Xiao et al. in
\cite{Xiao2010}. Different sets of sensors work alternatively according to a probabilistic scheduling.
The authors study the performance of the proposed approach in terms of coverage extension and detection delay.
Differently from the works in \cite{YePEAS2003,Xiao2010}, our approach aims at
ensuring the coverage completeness as long as the available sensors have enough energy.

Xing et al. \cite{Xing2005}  propose the
protocol CPP to achieve $k$-coverage of an area of interest while maintaining the network connectivity..
They  address both coverage and connectivity, and in particular they define
an operative setting in which the  the former implies the latter, namely
when the transmission radius is at least twice the sensing range.
They also provide necessary and sufficient conditions for an area to be $k$-covered.
The authors point out that the network lifetime achieved by their algorithm does not linearly scale
with the number of sensing nodes, due to the higher energy consumption related to periodic beacon messages.
Our work addresses the same operative setting (with $k=1$) with a more aggressive scheduling policy
by resorting to the Laguerre metric space rather than to the Euclidean one, thus allowing a
better scalability.
The geometric analysis made in \cite{Xing2005} is at the basis of several subsequent works, such as the one from
Kaskebar et al. \cite{Kaskebar2009} that we study in more detail in section \ref{sec:two_approaches}.

In the work \cite{Cardei2005} by Cardei, Du et al., the sensor nodes are divided into disjoint sets, such that  at a
specific time only one sensor set is responsible for sensing
the targets, while the sensors of the other sets are kept in a low
power mode.
The sets are scheduled in a round robin manner and  operate for equal time intervals.
The authors prove that finding the maximum number of disjoint sets
is an NP-complete problem. For this reason they propose the use of
a heuristic approach to calculate the set covers on the basis of  a mixed integer programming model.
The main drawback of this approach is that it is  centralized, which is not desirable in a sensor network environment.
The constraint of having disjoint set covers operating for equal time intervals is relaxed in the work \cite{Cardei_infocom_2005} by
Cardei, Thai et al., and two heuristics are proposed, one using linear programming and the other using a greedy approach.

In \cite{Funke2007}, Funke et al. consider the  problem
of selecting a set of awake sensors of minimum cardinality
so that sensing coverage and network connectivity are
maintained. The authors analyze the performance of a
greedy solution for complete coverage showing that
it achieves an approximation factor no better than
$\Omega(\log n)$, where $n$ is the number of sensor nodes. For this reason, the authors also
present algorithms that provide approximate coverage while
the number of nodes selected is a constant factor far from the
optimal solution.

The same problem is addressed by Tian et al. in \cite{Tian2002} and by Bulut et al. in \cite{Bulut2008}.
These works considered
the coverage problem aiming at activating only a minimal number of
sensors and letting the others conserve their energy in a low
power mode. Each sensor periodically
evaluates its sensing area to determine whether it is also covered
by other sensors. Once a sensor has determined its redundancy, it
can deactivate itself. Since several sensors may determine that
they can go to sleep at the same time, a  back-off based policy is
proposed to prevent collisions and impose a unique order of
deactivation. These proposals are similar to the way our algorithm
eliminates the redundancies in the case of sensors endowed with
fixed sensing capabilities. Nevertheless, the way we give priority
to sensors having higher overlaps is completely different, as it
is based on a more refined evaluation of the coverage diagram of
the network deployment.

None of the aforementioned works addresses the problem of continuously covering an area of interest with some or all sensors being
able to modulate their sensing ranges as we do in this paper.
This operative setting, but with discrete coverage targets, is analyzed by Cardei, Wu et al. in \cite{Cardei2006}.
 The proposed solution is based on non-disjoint set cover scheduling. The
approach is centralized and the problem is proved to be
NP-complete.
For this reason the authors provide two heuristics, both centralized and
distributed.

A limitation of all the above mentioned methods, is that they cannot
be dynamically reconfigured to accommodate different density
requirements, being them time-varying or position dependent.
Nevertheless, our approach proves to be very versatile in this
operative scenario, and is also robust to
network heterogeneity and diverse energy availability and
harvesting capacity.

An approach that is able to take account of event dynamics is the
one proposed in \cite{He2009} by He et al., where the scheduling policy is based on a probabilistic
technique. Nevertheless, this work assumes a Boolean sensing model and
does not address
the case of non uniform device capabilities and energy
availability. Furthermore it does not address the case of sensors
endowed with adjustable sensing ranges.

Two recent works by Kaskebar et al. \cite{Kaskebar2009}, and by Zou et al. \cite{Zou2009},
propose the algorithms $\dlm$ and $\gup$, respectively.
These algorithms
are described in deeper details in Section \ref{sec:two_approaches} where we also make performance comparisons with our proposal.

\section{Related work} \label{sec:related_works}

The problem of exploiting network redundancies to prolong the network lifetime
has been largely investigated in the literature so far.
Depending on the application requirements, the approach to
the problem may vary significantly.
For example, some works only aim at guaranteeing network connectivity,
the SPAN \cite{Chen2002} and ASCENT \cite{Cerpa2004} protocols just to mention the most acknowledged,
without considering coverage issues.
Due to space limitations, in this section we only
consider the works dealing with the problem of completely covering an area of
interest and we refer the reader to the work \cite{TLPsurvey} from Rawaihy et al. for
a survey of sensor scheduling policies in several other applicative
scenarios.

The PEAS protocol proposed by Ye et al. in \cite{YePEAS2003} was designed to address both
coverage and connectivity at the same time.
According to this protocol only a subset of nodes stay awake while the
others are put to sleep.
A sleeping node occasionally wakes up to determine
the presence of coverage holes in its proximity and make activation decisions accordingly.
This approach does not ensure complete coverage, as coverage holes cannot be discovered until a nearby sleeping
sensor wakes up.
Another randomized algorithm is proposed by Xiao et al. in
\cite{Xiao2010}. Different sets of sensors work alternatively according to a probabilistic scheduling.
The authors study the performance of the proposed approach in terms of coverage extension and detection delay.
Differently from the works in \cite{YePEAS2003,Xiao2010}, our approach aims at
ensuring the coverage completeness as long as the available sensors have enough energy.

Xing et al. \cite{Xing2005}  propose the
protocol CPP to achieve $k$-coverage of an area of interest while maintaining the network connectivity.
They  address both coverage and connectivity, and in particular they define
an operative setting in which the  the former implies the latter, namely
when the transmission radius is at least twice the sensing range.
They also provide necessary and sufficient conditions for an area to be $k$-covered.
The authors point out that the network lifetime achieved by their algorithm does not linearly scale
with the number of sensing nodes, due to the higher energy consumption related to periodic beacon messages.
Our work addresses the same operative setting (with $k=1$) with a more aggressive scheduling policy
by resorting to the Laguerre metric space rather than to the Euclidean one, thus allowing a
better scalability.
The geometric analysis made in \cite{Xing2005} is at the basis of several subsequent works, such as the one from
Kasbekar et al. \cite{Kaskebar2009} that we study in more detail in section \ref{sec:two_approaches}.

In the work \cite{Cardei2005} by Cardei, Du et al., the sensor nodes are divided into disjoint sets, such that  at a
specific time only one sensor set is responsible for sensing
the targets, while the sensors of the other sets are kept in a low
power mode.
The sets are scheduled in a round robin manner and  operate for equal time intervals.
The authors prove that finding the maximum number of disjoint sets
is an NP-complete problem. For this reason they propose the use of
a heuristic approach to calculate the set covers on the basis of  a mixed integer programming model.
The main drawback of this approach is that it is  centralized, which is not desirable in a sensor network environment.
The constraint of having disjoint set covers operating for equal time intervals is relaxed in the work \cite{Cardei_infocom_2005} by
Cardei, Thai et al., and two heuristics are proposed, one using linear programming and the other using a greedy approach.

In \cite{Funke2007}, Funke et al. consider the  problem
of selecting a set of awake sensors of minimum cardinality
so that sensing coverage and network connectivity are
maintained. The authors analyze the performance of a
greedy solution for complete coverage showing that
it achieves an approximation factor no better than
$\Omega(\log n)$, where $n$ is the number of sensor nodes. For this reason, the authors also
present algorithms that provide approximate coverage while
the number of nodes selected is a constant factor far from the
optimal solution.

The same problem is addressed by Tian et al. in \cite{Tian2002} and by Bulut et al. in \cite{Bulut2008}.
These works considered
the coverage problem aiming at activating only a minimal number of
sensors and letting the others conserve their energy in a low
power mode. Each sensor periodically
evaluates its sensing area to determine whether it is also covered
by other sensors. Once a sensor has determined its redundancy, it
can deactivate itself. Since several sensors may determine that
they can go to sleep at the same time, a  back-off based policy is
proposed to prevent collisions and impose a unique order of
deactivation. These proposals are similar to the way our algorithm
eliminates the redundancies in the case of sensors endowed with
fixed sensing capabilities. Nevertheless, the way we give priority
to sensors having higher overlaps is completely different, as it
is based on a more refined evaluation of the coverage diagram of
the network deployment.

None of the aforementioned works addresses the problem of continuously covering an area of interest with some or all sensors being
able to modulate their sensing ranges as we do in this paper.
This operative setting, but with discrete coverage targets, is analyzed by Cardei, Wu et al. in \cite{Cardei2006}.
 The proposed solution is based on non-disjoint set cover scheduling. The
approach is centralized and the problem is proved to be
NP-complete.
For this reason the authors provide two heuristics, both centralized and
distributed.

A limitation of all the above mentioned methods, is that they cannot
be dynamically reconfigured to accommodate different density
requirements, being them time-varying or position dependent.
Nevertheless, our approach proves to be very versatile in this
operative scenario, and is also robust to
network heterogeneity and diverse energy availability and
harvesting capacity.

An approach that is able to take account of event dynamics is the
one proposed in \cite{He2009} by He et al., where the scheduling policy is based on a probabilistic
technique. Nevertheless, this work assumes a Boolean sensing model and
does not address
the case of non uniform device capabilities and energy
availability. Furthermore it does not address the case of sensors
endowed with adjustable sensing ranges.

Two recent works by Kasbekar et al. \cite{Kaskebar2009}, and by Zou et al. \cite{Zou2009},
propose the algorithms $\dlm$ and $\gup$, respectively.
These algorithms
are described in deeper details in Section \ref{sec:two_approaches} where we also make performance comparisons with our proposal.

{\color{black}
\marginpar{TIZ: Moved from the Algo section. Find a suitable place}
A policy based on the setting of a back-off period for
putting redundant devices to sleep was proposed in \cite{Tian2002,Bulut2008}.
Unlike these previous proposals, we are able to set the sleep
priority of the individual devices on the basis of the parameter $\alpha$ which can be defined
according to specific application goals.
Furthermore, the mentioned proposals do not deal with the case of heterogeneous networks
with the contemporary presence of sensors with fixed and adjustable sensing capabilities.
}

%%%%%%%%%%%%%%%%%%%%
%%%
%%% CONCLUSIONS
%%%
%%%%%%%%%%%%%%%%%%%%%%
\section{Conclusions} \label{sec:conclusions}
 We 
 proposed a new algorithm for prolonging the lifetime of a heterogeneous wireless sensor network (WSN) through selective Sensor Activation and sensing Radius Adaptation ($\alg$).
 Our approach to joint sensor activation and radio adaptation is very general, and is the first to be applicable to scenarios with devices with adjustable and fixed sensing ranges (heterogeneous WSNs).
 In particular we focus on networks where some devices are able to adjust their sensing range so as to decrease the energy consumption.
 The proposed algorithm is based on a model of the coverage problem which uses  Voronoi-Laguerre diagrams. This model  allows to explicitly take account of device
 heterogeneity.
 We prove the convergence, termination and the Pareto-optimality of our approach. 
 The proposed algorithm achieves longer lifetime and higher coverage than previous solutions in all the considered scenarios.

\section{Appendix}

\marginpar{Rimosso il lemma che diceva che un boundary \'e boundary per tre. Questo lemma era ripetuto come parte del lemma 5.1 gi\'a presente
nella sezione sulle propriet\'a. Ho spezzato il lemma 5.1 cos� i lemmi 10.1 e 5.1 sono entrambi nella sezione sulle propriet� in una forma ridotta pi\'u semplice.}
\subsection*{Details about loose boundary farthest vertices}

To complete our geometrical analysis, we now detail the general methodology according to which
a sensor $s$ can determine whether a boundary farthest vertex $F=f(V(\mathscr{C}))$
of its polygon  is strict or loose. 

{\color{black}
%This methodology can be adopted to perform a test on the nature of any boundary farthest vertex.
%This test is necessary for a sensor that converged to a farthest vertex, to determine if the reached configuration can be considered definitive (strict farthest) or not (loose farthest), and communicate this to the neighborhood.
%In order to perform this test, a sensor must have reached a configuration of boundary farthest vertex.

 In order to illustrate the methodology, let us consider the example of Fig. \ref{fig:strict_and_loose}(b).
 The sensors $s$, $s_i$,  $s_k$ and $s_l$ generate a common boundary farthest vertex  $F=f(V(\mathscr{C}))$. 
 This vertex is a strict boundary farthest for all the generating sensors with the exception of $s$.
 Indeed, as also shown in Fig. \ref{fig:loose_radius_reduction}, the sensor $s$ can still reduce its sensing radius without leaving a coverage hole.
 
The edges of the polygon $V({\mathscr C})$ that intersect in $F$ are generated by intersecting the circles $\mathscr{C}$ and $\mathscr{C}_l$ and 
the circles $\mathscr{C}$ and $\mathscr{C}_k$, where $s_l$ and $s_k$ are the Voronoi-Laguerre neighbors that with $s$ generate the common farthest $F$.
 The circles $\mathscr{C}_l$ and $\mathscr{C}_k$ also intersect each other in the point $F'$ that we call  {\em opposite farthest with respect to $s$}.

Notice that if $F$ is a loose farthest vertex, then $F'$ must be internal to the angle formed by the VorLag axes generating the boundary farthest $F$ and  on the side
of  $V(\mathscr{C})$.
Indeed, if $f(V(\mathscr{C}))$ is loose, then
it exists a finite value $\delta$ such that every point at distance less than $\delta$ from $F$ and internal to $V(\mathscr{C})$ is covered by at least another sensor.
\marginpar{La footnote \'e davvero informale ma non mi sembra il caso di dettagliare ulteriormente, forse piuttosto taglierei
la questione dei null. In realt� nessun cerchio pu� contenere quello che per altri � un boundary farthest, non importa che sia null, questo cerchio genererebbe uno suo poligono.}
As a consequence of Theorem \ref{th:test}, the points of the $\delta$-surrounding of $F$ internal to $V(\mathscr{C})$ are also covered  by a Voronoi-Laguerre neighbor
\footnote{The theorem also mentions sensors with null polygons, but the points around a boundary farthest cannot be covered by a null polygon because any sensor added in a position 
where it overlaps a boundary farthest generated by other sensors, would generate its own polygon, containing the point that was a boundary farthest before its addition}.

Therefore let us consider the only Voronoi-Laguerre neighbors of $s$. The neighbors that will be able to cover an area arbitrarily close to $F$ are therefore $s_l$ and $s_k$. 
The only way for these sensor  to avoid leaving any point 1-covered in the surrounding of $F$ is to intersect each other in a point (the opposite farthest $F'$) that is internal
to the axes generating $F$ and on the side of  $V(\mathscr{C})$.

Observe that also the opposite implication holds as, if $F'$ is included in the angle formed by the VorLag axes generating ${f(V(\mathscr C}))$, 
then ${\mathscr C}_l \cup {\mathscr C}_k$ cover both $F$ and any other point at distance less than $\delta$ from it, for some finite value of $\delta$, and hence $F$ is a loose farthest.
}
We can summarize the previous reasonings in the following:

\medskip

{\bf Characterization of loose farthest vertices.}
{\em 
Consider a sensor $s$, and let  the farthest vertex $f(V({\mathscr C}))$ of its Voronoi-Laguerre cell from $s$ be determined as the intersection point of 
the  edges of $V({\mathscr C})$ lying on the axes formed by $s$ and $s_l$ and by $s$ and $s_k$. 
$f(V({\mathscr C}))$ is loose for $s$ if and only if the opposite farthest $F'$ with respect to $s$ lies inside the angle formed by the VorLag axes generating $f(V({\mathscr C}))$ which contains $s$.
}

\medskip

Figure \ref{fig:strict_and_loose} evidences some examples of positions of the circles $\mathscr{C}$, $\mathscr{C}_l$ 
and $\mathscr{C}_k$ that generate  a strict farthest (in (a)), and a loose one (in (b)).  
The figure highlights the position of the farthest vertex $F$
%$f(V(\mathscr{C}))$ 
and of the opposite farthest $F'$.
%$F'_{k,l}$.
%
Notice that the opposite farthest $F'$ can be external to $V(\mathscr{C})$ (on the opposite side of the polygon with respect to $F$)
%$f(V(\mathscr{C}))$) 
and still determine a loose farthest vertex situation.

% A consequence of this theorem is that a sensor $s_i$ is able to determine 
%whether it is redundant by
% considering the only sensors in
% $\mathcal{N}(s_i)$ and  $\mathcal{N}^\emptyset(s_i)$. 
% 
% \begin{corollary} 
%\label{th:lag_overlap}
%Let us consider $n$ circles $\mathscr{C}_i$  with centers $\textrm{C}_i=(x_i, y_i)$ and radii $r_i$, $i=1, \ldots , N$.
%If a point $P$ located inside the circle $\mathscr{C}_i$ is such that $P \in \mathscr{C}_k$ for a given $k \in \{1, \ldots, N\}$,
% then $P$ is also inside the circle of either one of its neighbors or of a sensor with a null polygon, thence 
% $P \in {\cup}_{s_j \in \mathcal{N}(s_i)\cup  \mathcal{N}^\emptyset(s_i) }\mathscr{C}_j$.
%\end{corollary}
%
%Namely, if the sensing range
% of a sensor $s_i$ is overlapped by the sensing range of other sensors, 
%it is certainly overlapped by
% the sensing range of the sensors in $\mathcal{N}(s_i) \cup \mathcal{N}^\emptyset(s_i)$.

\bibliographystyle{acmtrans}
\bibliography{bibliografia}

\begin{thebibliography}{}

\bibitem[\protect\citeauthoryear{Ammari and Das}{Ammari and
  Das}{2008}]{Ammari2008}
{\sc Ammari, H.} {\sc and} {\sc Das, S.} 2008.
\newblock Promoting heterogeneity, mobility, and energy-aware voronoi diagram
  in wireless sensor networks.
\newblock {\em IEEE Trans. on Parallel and Distributed Systems\/}~{\em
  19,\/}~7.

\bibitem[\protect\citeauthoryear{Bartolini, Calamoneri, {La Porta}, Massini,
  and Silvestri}{Bartolini et~al\mbox{.}}{2009}]{Noi_ICNP2009}
{\sc Bartolini, N.}, {\sc Calamoneri, T.}, {\sc {La Porta}, T.}, {\sc Massini,
  A.}, {\sc and} {\sc Silvestri, S.} 2009.
\newblock Autonomous deployment of heterogeneous mobile sensors.
\newblock {\em Proc. of IEEE ICNP\/}.

\bibitem[\protect\citeauthoryear{Basagni, Carosi, and Petrioli}{Basagni
  et~al\mbox{.}}{2008}]{BasagniCP08}
{\sc Basagni, S.}, {\sc Carosi, A.}, {\sc and} {\sc Petrioli, C.} 2008.
\newblock Mobility in wireless sensor networks.
\newblock In {\em Algorithms and Protocols for Wireless Sensor Networks},
  {A.~Boukerche}, Ed. Wiley Series on Parallel and Distributed Computing. John
  Wiley \& Sons, Inc., Hoboken, NJ, Chapter~10, 267--305.

\bibitem[\protect\citeauthoryear{Bough and Santi}{Bough and
  Santi}{2002}]{Blough2002}
{\sc Bough, D.} {\sc and} {\sc Santi, P.} 2002.
\newblock Investigating upper bounds on network lifetime extension for
  cell-based energy conservation techniques in stationary ad hoc networks.
\newblock {\em Proceedings of ACM Mobicom\/}.

\bibitem[\protect\citeauthoryear{Bulut, Wang, and Szymanski}{Bulut
  et~al\mbox{.}}{2008}]{Bulut2008}
{\sc Bulut, E.}, {\sc Wang, Z.}, {\sc and} {\sc Szymanski, B.~K.} 2008.
\newblock Cost-quality tradeoff in cooperative sensor networking.
\newblock {\em Proc. of IEEE International Conference on Communication
  (ICC)\/}, 112--117.

\bibitem[\protect\citeauthoryear{Cardei and Du}{Cardei and
  Du}{2005}]{Cardei2005}
{\sc Cardei, M.} {\sc and} {\sc Du, D.} May 2005.
\newblock Improving wireless sensor network lifetime through power-aware
  organization.
\newblock {\em ACM/Springer Wireless Networks\/}, 333--340.

\bibitem[\protect\citeauthoryear{Cardei, Thai, Li, and Wu}{Cardei
  et~al\mbox{.}}{2005}]{Cardei_infocom_2005}
{\sc Cardei, M.}, {\sc Thai, M.}, {\sc Li, Y.}, {\sc and} {\sc Wu, W.} 2005.
\newblock Energy-efficient target coverage in wireless sensor networks.
\newblock {\em Proc. of INFOCOM\/}.

\bibitem[\protect\citeauthoryear{Cardei, Wu, and Lu}{Cardei
  et~al\mbox{.}}{2006}]{Cardei2006}
{\sc Cardei, M.}, {\sc Wu, J.}, {\sc and} {\sc Lu, M.} 2006.
\newblock Improving network lifetime using sensors with adjustable sensing
  ranges.
\newblock {\em Int. J. Sen. Netw.\/}~{\em 1,\/}~1/2, 41--49.

\bibitem[\protect\citeauthoryear{Cerpa and Estrin}{Cerpa and
  Estrin}{2004}]{Cerpa2004}
{\sc Cerpa, A.} {\sc and} {\sc Estrin, D.} 2004.
\newblock Ascent: Adaptive self-configuring sensor networks topologies.
\newblock {\em IEEE Transactions on Mobile Computing\/}~{\em 3,\/}~3, 272--285.

\bibitem[\protect\citeauthoryear{Chen, Jamieson, Balakrishnan, and
  Morris.}{Chen et~al\mbox{.}}{2002}]{Chen2002}
{\sc Chen, B.}, {\sc Jamieson, K.}, {\sc Balakrishnan, H.}, {\sc and} {\sc
  Morris., R.} 2002.
\newblock Span: an energy-efficient coordination algorithm for topology
  maintenance in ad hoc wireless networks.
\newblock {\em ACM Wireless Networks\/}~{\em 8,\/}~5, 481--494.

\bibitem[\protect\citeauthoryear{Funke, Kesselman, Kuhn, Lotker, and
  Segal}{Funke et~al\mbox{.}}{2007}]{Funke2007}
{\sc Funke, S.}, {\sc Kesselman, A.}, {\sc Kuhn, F.}, {\sc Lotker, Z.}, {\sc
  and} {\sc Segal, M.} 2007.
\newblock Improved approximation algorithms for connected sensor cover.
\newblock {\em ACM/Springer Wireless Networks\/}~{\em 13,\/}~2, 153--164.

\bibitem[\protect\citeauthoryear{Gu and Stankovic}{Gu and
  Stankovic}{2004}]{wakeup}
{\sc Gu, L.} {\sc and} {\sc Stankovic, J.~A.} 2004.
\newblock Radio-triggered wake-up capability for sensor networks.
\newblock {\em Proceedings of the 10th IEEE Real-Time and Embedded Technology
  and Applications Symposium\/}, 27.

\bibitem[\protect\citeauthoryear{He, Chen, Yau, Shao, and Sun}{He
  et~al\mbox{.}}{2009}]{He2009}
{\sc He, S.}, {\sc Chen, J.}, {\sc Yau, D. K.~Y.}, {\sc Shao, H.}, {\sc and}
  {\sc Sun, Y.} 2009.
\newblock Energy-efficient capture of stochastic events by global- and
  local-periodic network coverage.
\newblock {\em ACM Proceedings of MobiHoc\/}.

\bibitem[\protect\citeauthoryear{Imai, Iri, and Murota.}{Imai
  et~al\mbox{.}}{1985}]{IIM85}
{\sc Imai, H.}, {\sc Iri, M.}, {\sc and} {\sc Murota., K.} 1985.
\newblock Voronoi diagram in the {Laguerre} geometry and its applications.
\newblock {\em SIAM J. Comput.\/}~{\em 14,\/}~1, 93--105.

\bibitem[\protect\citeauthoryear{Kansal, Hsu, Zahedi, and Srivastava}{Kansal
  et~al\mbox{.}}{2007}]{KansalHZS07}
{\sc Kansal, A.}, {\sc Hsu, J.}, {\sc Zahedi, S.}, {\sc and} {\sc Srivastava,
  M.~B.} 2007.
\newblock Power management in energy harvesting sensor networks.
\newblock {\em ACM Transactions on Embedded Computing Systems\/}~{\em 6,\/}~4.

\bibitem[\protect\citeauthoryear{Kasbekar, Bejerano, and Sarkar}{Kasbekar
  et~al\mbox{.}}{2009}]{Kaskebar2009}
{\sc Kasbekar, G.~S.}, {\sc Bejerano, Y.}, {\sc and} {\sc Sarkar, S.} 2009.
\newblock Lifetime and coverage guarantees through distributed coordinate-free
  sensor activation.
\newblock {\em Proceedings of ACM Mobicom\/}.

\bibitem[\protect\citeauthoryear{Kompis and Aliwell}{Kompis and
  Aliwell}{2010}]{Kompis2008}
{\sc Kompis, C.} {\sc and} {\sc Aliwell, S.} last accessed on April 14, 2010.
\newblock Energy harvesting technologies to enable wireless and remote sensing.
\newblock {\em Sensors and Instrumentation KTN Action Group Report, 2008,
  London, UK, available at
  http://www.costiskompis.com/pubs/pdf/energyharvesting08.pdf\/}.

\bibitem[\protect\citeauthoryear{{MAXBOTIX sonar datasheets}}{{MAXBOTIX sonar
  datasheets}}{2010}]{sonar}
{\sc {MAXBOTIX sonar datasheets}}. Last accessed on April 14, 2010.
\newblock {\em
  \\http://www.maxbotix.com/uploads/LV-MaxSonar-EZ1-Datasheet.pdf\/}.

\bibitem[\protect\citeauthoryear{Moser, Thiele, Brunelli, and Benini}{Moser
  et~al\mbox{.}}{2010}]{MoserTBB10}
{\sc Moser, C.}, {\sc Thiele, L.}, {\sc Brunelli, D.}, {\sc and} {\sc Benini,
  L.} 2010.
\newblock Adaptive power management for environmentally powered systems.
\newblock {\em IEEE Transactions on Computers\/}~{\em 59,\/}~4, 478--491.

\bibitem[\protect\citeauthoryear{Nakamura, Loureiro, and Frery}{Nakamura
  et~al\mbox{.}}{2007}]{NakamuraLF07}
{\sc Nakamura, E.~F.}, {\sc Loureiro, A. A.~F.}, {\sc and} {\sc Frery, A.~C.}
  2007.
\newblock Information fusion for wireless sensor networks: Methods, models, and
  classifications.
\newblock {\em ACM Computing Surveys\/}~{\em 39,\/}~3 (August).

\bibitem[\protect\citeauthoryear{{OPNET Technologies}}{{OPNET
  Technologies}}{}]{opnet}
{\sc {OPNET Technologies}}.
\newblock http://www.opnet.com/.

\bibitem[\protect\citeauthoryear{{OSIRIS photoelectric sensors}}{{OSIRIS
  photoelectric sensors}}{2010}]{Osi2008}
{\sc {OSIRIS photoelectric sensors}}. Last accessed on January 9, 2010.
\newblock {\em
  http://www.schneider-electric.co.uk/electricity.nsf/automation-control/detec%
tion-equipment/photo-electric-devices\/}.

\bibitem[\protect\citeauthoryear{Pattem, Poduri, and Krishnamachari}{Pattem
  et~al\mbox{.}}{2003}]{Pattem2003}
{\sc Pattem, S.}, {\sc Poduri, S.}, {\sc and} {\sc Krishnamachari, B.} 2003.
\newblock Energy-quality tradeoffs for target tracking in wireless sensor
  networks.
\newblock {\em Proc. of ACM International Conference on Information Processing
  in Sensor Networks (IPSN)\/}.

\bibitem[\protect\citeauthoryear{Polastre, Szewczyk, and Culler}{Polastre
  et~al\mbox{.}}{2005}]{Telos}
{\sc Polastre, J.}, {\sc Szewczyk, R.}, {\sc and} {\sc Culler, D.~E.} 2005.
\newblock Telos: enabling ultra-low power wireless research.
\newblock {\em Proceedings of IEEE IPSN\/}, 364--369.

\bibitem[\protect\citeauthoryear{Rowaihy, Eswaran, Johnson, Verma, Bar-Noy,
  Brown, and Porta}{Rowaihy et~al\mbox{.}}{2007}]{TLPsurvey}
{\sc Rowaihy, H.}, {\sc Eswaran, S.}, {\sc Johnson, M.}, {\sc Verma, D.}, {\sc
  Bar-Noy, A.}, {\sc Brown, T.}, {\sc and} {\sc Porta, T.~L.} 2007.
\newblock A survey of sensor selection schemes in wireless sensor networks.
\newblock {\em Proc. of SPIE, Volume 6562 -- Unattended Ground, Sea, and Air
  Sensor Technologies and Applications IX\/}.

\bibitem[\protect\citeauthoryear{Sharma, Mukherji, Joseph, and Gupta}{Sharma
  et~al\mbox{.}}{2010}]{SharmaMJG10}
{\sc Sharma, V.}, {\sc Mukherji, U.}, {\sc Joseph, V.}, {\sc and} {\sc Gupta,
  S.} 2010.
\newblock Optimal energy management policies for energy harvesting sensor
  nodes.
\newblock {\em IEEE Transactions on Wireless\/}~{\em 9,\/}~4, 1326--1336.

\bibitem[\protect\citeauthoryear{Tian and Georganas}{Tian and
  Georganas}{2002}]{Tian2002}
{\sc Tian, D.} {\sc and} {\sc Georganas, N.~D.} 2002.
\newblock A coverage-preserving node scheduling scheme for large wireless
  sensor networks.
\newblock {\em Proc. of ACM WSNA\/}, 32--41.

\bibitem[\protect\citeauthoryear{Wang, Cao, and {La Porta}}{Wang
  et~al\mbox{.}}{2006}]{LaPorta06}
{\sc Wang, G.}, {\sc Cao, G.}, {\sc and} {\sc {La Porta}, T.} 2006.
\newblock Movement-assisted sensor deployment.
\newblock {\em IEEE Trans. on Mobile Computing\/}~{\em 6}, 640--652.

\bibitem[\protect\citeauthoryear{Xiao, Chen, Wu, Sun, Zhang, Sun, and Liu}{Xiao
  et~al\mbox{.}}{2010}]{Xiao2010}
{\sc Xiao, Y.}, {\sc Chen, H.}, {\sc Wu, K.}, {\sc Sun, B.}, {\sc Zhang, Y.},
  {\sc Sun, X.}, {\sc and} {\sc Liu, C.} 2010.
\newblock Coverage and detection of a randomized scheduling algorithm in
  wireless sensor networks.
\newblock {\em IEEE Transactions on Computers\/}~{\em 59}, 507--521.

\bibitem[\protect\citeauthoryear{Xing, Wang, Zhang, Lu, Pless, and Gill}{Xing
  et~al\mbox{.}}{2005}]{Xing2005}
{\sc Xing, G.}, {\sc Wang, X.}, {\sc Zhang, Y.}, {\sc Lu, C.}, {\sc Pless, R.},
  {\sc and} {\sc Gill, C.} 2005.
\newblock Integrated coverage and connectivity configuration for energy
  conservation in sensor networks.
\newblock {\em ACM Transactions on Sensor Networks (TOSN)\/}~{\em 1,\/}~1,
  36--72.

\bibitem[\protect\citeauthoryear{Ye, Zhong, Cheng, Lu, and Zhang}{Ye
  et~al\mbox{.}}{2003}]{YePEAS2003}
{\sc Ye, F.}, {\sc Zhong, G.}, {\sc Cheng, J.}, {\sc Lu, S.}, {\sc and} {\sc
  Zhang, L.} 2003.
\newblock Peas: A robust energy conserving protocol for long-lived sensor
  networks.
\newblock {\em Proceedings of IEEE ICDCS\/}.

\bibitem[\protect\citeauthoryear{Yick, Mukherjee, and Ghosal}{Yick
  et~al\mbox{.}}{2008}]{YickMG08}
{\sc Yick, J.}, {\sc Mukherjee, B.}, {\sc and} {\sc Ghosal, D.} 2008.
\newblock Wireless sensor networks survey.
\newblock {\em Elsevier Computer Networks\/}~{\em 52,\/}~12, 2292--2330.

\bibitem[\protect\citeauthoryear{Zou, Das, and Gupta}{Zou
  et~al\mbox{.}}{2009}]{Zou2009}
{\sc Zou, Z.}, {\sc Das, S.}, {\sc and} {\sc Gupta, H.} 2009.
\newblock Variable radii connected sensor cover in sensor networks.
\newblock {\em ACM Trans. on Sensor Networks\/}~{\em 5,\/}~1 (February), 8--36.

\end{thebibliography}
\end{document}